\documentclass[a4paper,UKenglish,cleveref, autoref, thm-restate,numberwithinsect]{lipics-v2021}

\hideLIPIcs  

\title{A Factorization Theorem for Forest Algebras} 

\author{Shaull Almagor}{Department of Computer Science, Technion, Israel \and \url{https://shaull.cswp.cs.technion.ac.il/}} {shaull@technion.ac.il}{https://orcid.org/0000-0001-9021-1175}{supported by the ISRAEL SCIENCE FOUNDATION (grant No. 989/22)}

\author{Michaël Cadilhac}{DePaul University, USA}{michael@cadilhac.name}{https://orcid.org/0000-0001-9828-9129}{}

\author{Asaf Shoham}{Department of Computer Science, Technion, Israel} {asaf.shoham@campus.technion.ac.il}{}{}

\authorrunning{S. Almagor, M. Cadilhac, A. Shoham}

\Copyright{Shaull Almagor, Michaël Cadilhac, and Aasf Shoham} 

\ccsdesc[100]{\textcolor{red}{Replace ccsdesc macro with valid one}} 

\keywords{Forest Algebra, Factorization Forest, Simon, Syntactic Monoid} 

\category{} 

\relatedversion{} 

\nolinenumbers 

\EventEditors{John Q. Open and Joan R. Access}
\EventNoEds{2}
\EventLongTitle{42nd Conference on Very Important Topics (CVIT 2016)}
\EventShortTitle{CVIT 2016}
\EventAcronym{CVIT}
\EventYear{2016}
\EventDate{December 24--27, 2016}
\EventLocation{Little Whinging, United Kingdom}
\EventLogo{}
\SeriesVolume{42}
\ArticleNo{23}

\usepackage{amssymb}
\usepackage{amsmath}
\usepackage[disable]{todonotes}
\usepackage{amsthm}
\usepackage{bbold} 
\usepackage{bbm}
\usepackage{mathtools}
\usepackage{subcaption}
\usepackage{soul}
\usepackage[linguistics]{forest}
\hypersetup{
    colorlinks=true, 
    allcolors=blue
}
\usepackage{csquotes}
\usepackage{xcolor}  

\usepackage{graphicx}
\usepackage{xspace}
\usepackage{scalerel}

\newcommand{\unitel}{\mathbb{1}}
\newcommand{\zeroel}{\mathbb{0}} 

\let\phi=\varphi

\newcommand{\tup}[1]{\langle #1 \rangle}
\renewcommand{\phi}{\varphi}

\newcommand{\sqr}{\mathtt{to\_square}}
\newcommand{\unr}{\mathtt{unravel}}

\newcommand{\id}{\mathtt{id}}

\newcommand{\depth}{\mathtt{depth}}

\newcommand{\rdef}{\mathtt{ref}}

 \newcommand{\ns}{\mathtt{NxtCtx}}
 \newcommand{\factor}{\mathtt{Factor}}
 \newcommand{\nf}{\mathtt{NxtFctrs}}
 \newcommand{\nts}{\mathtt{NxtTwoCtx}}
\newcommand{\clnk}{C_{link}}
\newcommand{\lr}{\mathtt{LR}}

\newcommand{\trec}{T_{2\to1}}

\newcommand{\AncEmb}{\mathtt{AncEmb}}

\newcommand{\nds}{\mathtt{Nodes}}
\newcommand{\frst}{\mathtt{forest}}
\newcommand{\ctx}{\mathtt{ctx}}
\newcommand{\type}{\mathtt{type}}

\newcommand{\val}{\mathtt{val}}

\newcommand{\lf}{\mathfrak{L}}
\newcommand{\B}{\mathfrak{B}}
\newcommand{\C}{\mathfrak{C}}
\newcommand{\I}{\mathfrak{I}}

\newcommand{\fd}{\mathtt{FullDec}}
\newcommand{\pd}{\mathtt{PartialDec}}

\newcommand{\dec}{\mathtt{Dec}}

\newcommand{\dom}{\mathtt{Domain}}

\newcommand{\cleq}{\preceq_{\textnormal{pre}}}
\newcommand{\lcrs}{\leq_{LCRS}}

\newcommand{\sing}{\Upsilon}

\newcommand{\hf}{{H_A}}
\newcommand{\vf}{V_A}
                
\newcommand{\hdef}{\tilde H_A}

\newcommand{\bbN}{\mathbb{N}}
\newcommand{\bns}{\bbN^*}
\newcommand{\bnp}{\bbN^+}

\newcommand{\mj}{\mathcal J}
\newcommand{\mr}{\mathcal R}
\newcommand{\mh}{\mathcal H}
\newcommand{\ml}{\mathcal L}

\newcommand{\zsimple}[0]{$\zeroel$-simple\xspace}

\newcommand{\frs}[1]{\mathtt{Forests}(#1)}
\newcommand{\cts}[1]{\mathtt{Contexts}(#1)}
\newcommand{\tds}{\mathtt{TDS}}
\newcommand{\ltds}{\mathtt{LTDS}}
\newcommand{\fds}{\mathtt{FDS}}
\newcommand{\lfds}{\mathtt{LFDS}}

\newcommand{\llfds}[1]{#1\text{-\texttt{LFDS}}}
\newcommand{\lltds}[1]{#1\text{-\texttt{LTDS}}}

\newcommand{\az}{0_\triangle}
\newcommand{\act}{\mathtt{act}}
\newcommand{\inl}{\mathtt{in_l}}
\newcommand{\inr}{\mathtt{in_r}}

\newcommand{\norm}[1]{\left\lVert#1\right\rVert}

\newcommand{\wHole}[1]{\widetilde{#1}}
\newcommand{\tilH}{\wHole{H_A}}
\newcommand{\tilV}{\wHole{V_A}}

\newcommand{\bs}{B_A}

\newcommand{\eqUnr}{\approx}

\newcommand{\tofd}{\mathsf{ForestDom}}
\newcommand{\lab}{\mathsf{lab}}

\newcommand{\equpto}[3]{#1 \mathrel{\equiv_{\!\setminus #2}} #3}

\newcommand{\spa}{\circledast}
\newcommand{\spos}{\mathtt{sq\_pos}}

\newcommand{\width}{\mathtt{Width}}

\newtheorem*{theorem*}{Theorem}

\newcommand{\mc}{\mathtt{V}}

\newcommand{\quo}{\texttt{quo}}

\newcommand{\resp}{resp.\xspace}

\newcommand{\ela}{\ell_a}
\newcommand{\elb}{\ell_b}

\newcommand{\semsymb}[1]{{\boldsymbol{\color{red!75!black} #1}}}
\newcommand{\sela}{\semsymb{\ela}}
\newcommand{\selb}{\semsymb{\elb}}
\newcommand{\sell}{\semsymb{\ell}}

\newcommand{\alaT}{\semsymb{a\ela T}}
\newcommand{\alaF}{\semsymb{a\ela F}}
\newcommand{\albT}{\semsymb{a\elb T}}
\newcommand{\albF}{\semsymb{a\elb F}}
\newcommand{\alT}{\semsymb{a\ell T}}
\newcommand{\blaT}{\semsymb{b\ela T}}
\newcommand{\blaF}{\semsymb{b\ela F}}
\newcommand{\blbT}{\semsymb{b\elb T}}
\newcommand{\blbF}{\semsymb{b\elb F}}
\newcommand{\blT}{\semsymb{b\ell T}}

\newcommand{\latwoT}{\semsymb{\ela 2T}}
\newcommand{\latwoF}{\semsymb{\ela 2F}}
\newcommand{\lbtwoT}{\semsymb{\elb 2T}}
\newcommand{\lbtwoF}{\semsymb{\elb 2F}}

\newcommand{\TwRoot}{\semsymb{T1}}
\newcommand{\TwoRoot}{\semsymb{T2}}
\newcommand{\allell}{\semsymb{\ell 2}}
\newcommand{\sbot}{\semsymb{\bot}}
\newcommand{\sunit}{\semsymb{\epsilon}}

\newcommand{\conta}{\semsymb{f_{a}}}
\newcommand{\contb}{\semsymb{f_{b}}}
\newcommand{\contela}{\semsymb{f_{\ela}}}
\newcommand{\contelb}{\semsymb{f_{\elb}}}
\newcommand{\contel}{\semsymb{f_{\ell}}}

\newcommand{\contbot}{\semsymb{f_{\bot}}}

\newcommand{\plusa}{\semsymb{+a}}
\newcommand{\plusb}{\semsymb{+b}}
\newcommand{\aplus}{\semsymb{a+}}
\newcommand{\bplus}{\semsymb{b+}}
\newcommand{\allellplus}{\allell\semsymb{+}}

\newcommand{\twotrianglesraw}{%
  \tikz[baseline=-0.1ex, x=0.35ex, y=0.37ex]{
    \fill[black]
      (0,0) -- (1,0) -- (0.5,1) -- cycle;
    \fill[gray]
      (0.5,0) -- (1.5,0) -- (1,1) -- cycle;
  }%
}

\newcommand{\twowhitetrianglesraw}{%
  \tikz[baseline=-0.1ex, x=0.3ex, y=0.3ex]{
    \draw[black, line width=0.03ex, fill=white]
      (0,0) -- (1,0) -- (0.5,1) -- cycle;
    \draw[black, line width=0.03ex, fill=white]
      (0.5,0) -- (1.5,0) -- (1,1) -- cycle;
  }%
}

\newcommand{\whitetriangleraw}{%
  \tikz[baseline=-0.1ex, x=0.3ex, y=0.3ex]{
    \draw[black, line width=0.03ex, fill=white]
      (0,0) -- (1,0) -- (0.5,1) -- cycle;
  }%
}

\newcommand{\blacktriangleraw}{%
\tikz[baseline=-0.1ex, x=0.35ex, y=0.37ex]{
    \fill[black]
      (0,0) -- (1,0) -- (0.5,1) -- cycle;
  }%
}

\newcommand{\twotriangles}{%
  \mathord{\ThisStyle{\scalerel*{\twotrianglesraw}{\ensuremath{\SavedStyle\triangle}}}}%
}

\newcommand{\twowhitetriangles}{%
  \mathord{\ThisStyle{\scalerel*{\twowhitetrianglesraw}{\ensuremath{\SavedStyle\triangle}}}}%
}

\newcommand{\mywhitetriangle}{%
  \mathord{\ThisStyle{\scalerel*{\whitetriangleraw}{\ensuremath{\SavedStyle\triangle}}}}%
}

\newcommand{\myblacktriangle}{%
  \mathord{\ThisStyle{\scalerel*{\blacktriangleraw}{\ensuremath{\SavedStyle\triangle}}}}%
}

\newcommand{\TBFt}{{\mywhitetriangle}}
\newcommand{\TBFf}{{\myblacktriangle}}
\newcommand{\TBFtf}{{\twotriangles}}
\newcommand{\TBFtt}{{\twowhitetriangles}}

\newcommand{\fpt}{f_{\TBFt}}
\newcommand{\fpf}{f_{\TBFf}}

\newcommand{\fnot}{f_{\lnot}}

\newcommand{\fand}{f_{\wedge}}

\begin{document}

\maketitle

\begin{abstract}
Simon’s factorization theorem is a celebrated tool in algebraic automata theory, providing bounded-depth decompositions of words with respect to morphisms into finite semigroups.

We develop an analogue of Simon's theorem for \emph{forests} in the setting of forest algebras. In contrast with words, this presents a basic difficulty: recursively factoring a forest requires keeping track of where each subforest ``fits''. This difficulty ripples throughout the proof, and we overcome it by augmenting the free forest algebra and by developing a framework that supports recursive factorization of forests, along with its semantic implications.

Our main result identifies a new semantic restriction on morphisms (called $\mr$-alignment) which intuitively ensures that different ways of cutting a forest remain compatible (in a certain sense) at the semigroup level. 
Under this condition, we prove that every morphism admits decompositions of bounded depth. 
We also prove that without this restriction, there are morphisms for which no bounded-depth decomposition exists (under our notion of decomposition). 
\end{abstract}

\section{Introduction}
\label{sec:abs:intro}

Regular languages can be studied through automata, logic, or algebra, and each
of these viewpoints comes with its own structural tools. In the algebraic study
of word languages, one of the most influential such tools is Simon's
factorization forest theorem~\cite{Simon1990,Colcombet2021}. If one fixes a
morphism $\eta \colon A^+ \to S$ into a finite semigroup, then $\eta$ does not merely assign
to each word an element of $S$; it also constrains how words can be recursively
factored. Informally, Simon's theorem says that every word admits a rooted
decomposition tree of depth bounded solely in terms of the target semigroup. The
leaves are letters, the internal nodes correspond to concatenations of
consecutive factors, and the high-arity nodes arise when all children have the
same idempotent image under $\eta$. In this way, algebraic recognition yields not
only a classification of words by their images, but also a bounded-depth
structural decomposition of every input.

The purpose of this paper is to obtain an analogous bounded-depth decomposition
theorem for \emph{forests}. At a broad level, the intended parallel is clear: one would
like a morphism recognizing a forest language to impose on every forest a
decomposition whose depth is controlled only by the finite target algebra. What
makes this difficult is that the very notion of factorization changes in the
branching setting. For words, one splits into consecutive factors, and each
factor is again a word, so the recursive process remains in the same class of
objects. For forests, by contrast, removing a subtree leaves behind, in the
standard terminology, a \emph{context}: a forest with a distinguished hole
indicating where the missing subtree is to be reinserted. Once one continues
decomposing, several previously removed subforests may have to be remembered at
once. This bookkeeping problem has no counterpart in the word case, and it is
the first obstruction to a direct forest version of Simon's theorem.

The natural algebraic framework for this problem is that of \emph{forest
algebras}, introduced by Bojańczyk and
Walukiewicz~\cite{BojanczykWalukiewicz2007}. A forest algebra is a pair $(H,V)$
of monoids, usually called the \emph{horizontal} and \emph{vertical} monoids. The horizontal
monoid $H$ captures forest types, the vertical monoid $V$ captures context
types, and $V$ acts on $H$ by substitution (see \cref{sec:abs:forest algebra}).
In the \emph{free} forest algebra over an alphabet, this picture
becomes completely concrete: the elements of $H$ are forests, the elements of
$V$ are contexts, and the action is exactly the operation of plugging a forest
into the hole of a context. Forest algebras have proved to be a natural
algebraic formalism for regular languages of unranked forests and trees, and
they have already supported substantial interactions with logic and algorithms
on trees~\cite{BojanczykStraubingWalukiewicz2012,Bojanczyk2012}.

Our first contribution is a formalism that makes recursive decomposition
possible in this setting. We introduce an augmented free forest algebra in
which, besides the ordinary hole of a context, one may use \emph{default holes}
$\square_h$. Intuitively, $\square_h$ marks a specific subforest that has already been
extracted and is to be treated as fixed while the rest of the decomposition
continues. These markers let one continue decomposing the residue object
without losing track of how previously removed pieces fit back into the whole
forest, and they provide the bookkeeping device that is missing from
the naive forest analogue of factorization forests.

On top of this augmented algebra, we first define \emph{binary
decompositions}. Their role is not simply to encode binary splitting, but to
recover certain monotonicity properties that the classical proof for words gets ``for free''. 
Specifically, for a word $w$, if $P_w$ denotes the set of images under $\eta$ of the prefixes of $w$, then the implication
``$u \text{ prefix of } w \implies P_u \subseteq P_w$''
is immediate from properties of concatenation and morphisms.
In the forest setting, the relevant quantity is subtler, and only holds under certain conditions, which we therefore first ensure (see \cref{sec:abs:next contexts}). 
%

Binary decompositions are therefore fundamental in order to recover a notion of
decomposition that, in the setting of words, comes for free by the definition of
concatenation.  Equipped with these, we then introduce \emph{general
  decompositions}, the forest analog of Simon's decomposition trees. The nodes in a  general decomposition may be leaves, binary nodes, or idempotent
nodes. An idempotent ($\I$) node compresses an entire region that already admits a
binary decomposition all of whose relevant contexts map to a single idempotent
of the target semigroup; this is the forest analogue of the high-arity
idempotent node in Simon's theorem.  For technical convenience, and to obtain
better bounds, we also introduce a third type of inner nodes: centipede $(\C)$
nodes.

\begin{remark}[Visual representation of forests v.s. decomposition trees]
\label{rmk:abs:depiction forests vs trees}
In order to visually separate forests (elements in a forest algebra, including trees) from decomposition trees (the tree-structures we define), we depict forests with diagonal downward edges, and trees with right and down edges (as depicted in e.g., \cref{fig:abs:TBF_combined}).
\end{remark}

\begin{example}[TBF]
    \label{xmp:abs:TBF1}
    We demonstrate the setting with a concrete example. We refer back to this example as the concepts are clarified throughout the paper.
    Consider the language of True Boolean Formulas (TBF), which consists of binary trees labeled by $A=\{T,\neg,\wedge\}$ that represent a formula evaluating to true. 

    This language can be formulated using forest algebras. For example, $\neg  (\wedge (T + \neg(T))$  is a forest (in fact, a tree) in the free forest algebra $\tup{H_A,V_A}$ over $A$. It represents a formula that evaluates to true.
    We can write this tree as a composition of the \emph{context} $\neg (\wedge (\square))$ with the forest $T+\neg(T)$ (note that $+$ is the horizontal operator, and represents putting two forests ``side by side'').
    
    For a more elaborate forest, see e.g., the node labeled $\epsilon$ in \cref{fig:abs:TBF_I_decomposition}. There we also illustrate a binary decomposition of the forest: in each node we ``pluck'' a sub-forest to the right child, and put the ``residue'' in the bottom child. The contexts used in the decomposition are illustrated in the yellow ovals. In order to keep track of plucked sub-forests, we use default holes. For example, node $00$ has default holes corresponding to both node $01$ ($\square_f$) and to node $1$ ($\square_g$). 
    In \cref{fig:abs:TBF_full} we illustrate a general decomposition, starting with an $\I$-node at the root. We revisit this in due course.

    
    \begin{figure}[ht]
    \centering
    \begin{subfigure}{0.25\linewidth}
        \centering
        \includegraphics[width=\linewidth]{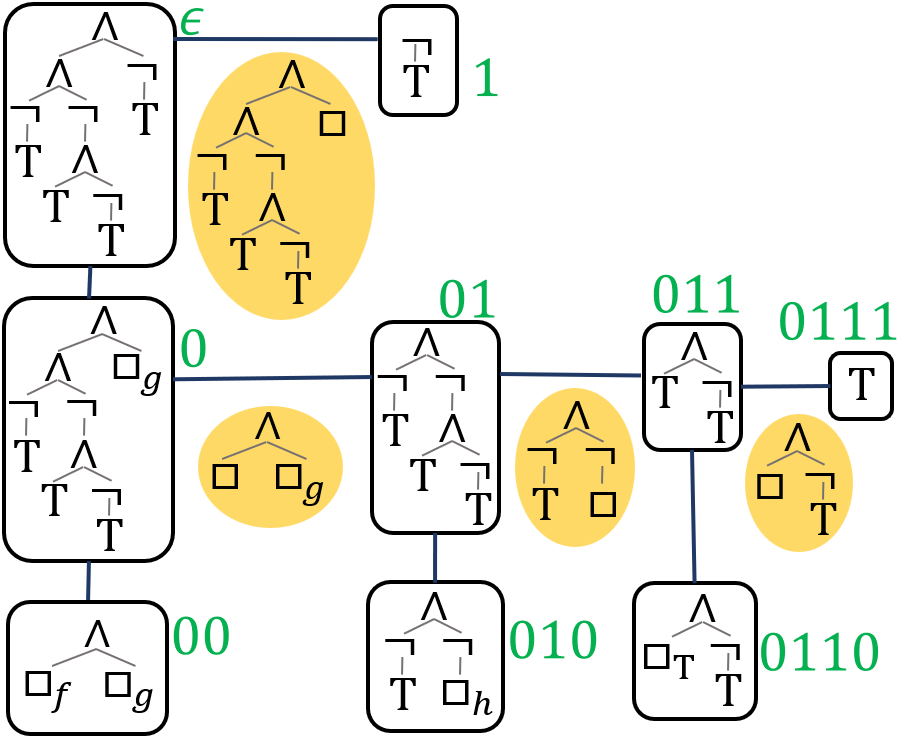}
        \caption{A partial binary decomposition. For TBF, this is a decomposition corresponding to the $\I$-node in \cref{fig:abs:TBF_full}.}
        \label{fig:abs:TBF_I_decomposition}
    \end{subfigure}
    \hfill
    \begin{subfigure}{0.7\linewidth}
        \centering
        \includegraphics[width=\linewidth]{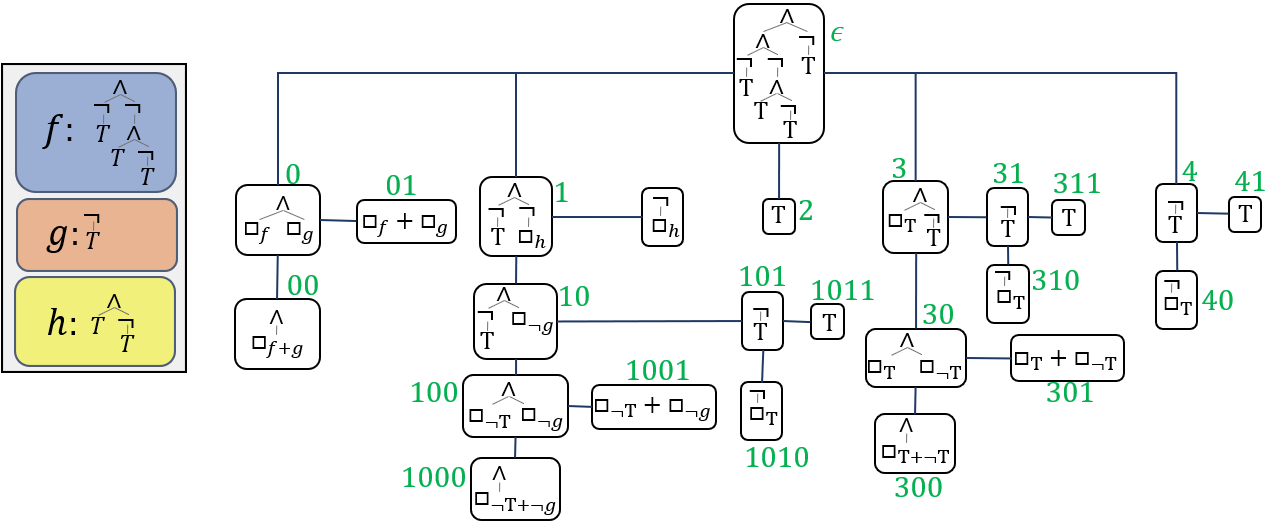}
        \caption{A general decomposition for TBF. The node at $\epsilon$ is an $\I$-node.}
        \label{fig:abs:TBF_full}
    \end{subfigure}
    \caption{Decompositions for TBF. $f,g,h$ are given by the legend.}
    \label{fig:abs:TBF_combined}
\end{figure}
\end{example}

With this machinery in place, we identify a semantic condition on morphisms,
called \emph{$\mr$-alignment}. Its definition appears in \cref{sec:abs:mr alignment}, but its role can be described informally as follows. When the same forest can be obtained by cutting in two different ways, one obtains two residue contexts. If their images lie in the same $\mj$-class, $\mr$-alignment requires these images to remain compatible in a stronger one-sided sense. In the word setting, the corresponding coherence is largely built into the linear order of factors; in the forest setting, it has to be identified explicitly as a property of the recognizing morphism.

Our main theorem shows that this condition is sufficient for bounded-depth
decomposition.

\begin{theorem*}[informal] Let $\varphi$ be a forest morphism into a finite forest
  algebra $(H,V)$, and assume that the $V$-component of $\varphi$ is $\mr$-aligned. 
  Then every forest admits a general decomposition of depth at most $4|V|-3$.
\end{theorem*}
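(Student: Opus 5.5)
We mimic the standard proof of Simon's theorem by induction on the $\mj$-order of $V$, and aim for a depth bound $f(V)$ with $f\le 4|V|-3$. Concretely, we prove a stronger statement. For every $\mj$-class $J$ of $V$ (in fact for every upward-closed set $U$ of $\mj$-classes), every forest in the augmented algebra with default holes admits a general decomposition whose depth is at most $4k-3$. Here $k$ is the number of elements of $V$ lying $\mj$-above or in $J$, and $J$ is the lowest class reached by the images of the contexts appearing in a binary decomposition of the forest. The base case is when every context maps into the maximal $\mj$-class, which is the group of units. There, a single binary decomposition is grouped by centipede/$\I$-nodes, giving depth $1$.

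For the inductive step, we fix the current forest and first take a binary decomposition of it, as constructed earlier in the paper. We then invoke the monotonicity property from \cref{sec:abs:next contexts}: the set of images of ``next contexts'' only shrinks as we descend. This lets us cut the binary decomposition into maximal regions in which every relevant context lies in the top $\mj$-class $J$ under consideration. Outside these regions the contexts drop strictly below $J$, so the inductive hypothesis applies there with depth $4(k-|J|)-3$ or less. We then handle each region that stays inside $J$ using at most $4|J|$ extra levels. Within such a region, we split at the first point where the $\mr$-class changes. By Green's lemma, along a chain staying in $J$ the $\mr$-classes can only change a bounded number of times, at most $|J|$ overall. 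Between changes, the contexts lie in a fixed $\mr$-class. Whenever products of consecutive pieces return to an idempotent, we group them with an $\I$-node. Otherwise we use a centipede node to absorb a linear chain of pieces whose products stay in $\mh$-classes of the same $\mr$-class. The budget of four levels per element comes from: one binary node, one centipede node, one $\I$-node, and one level for reattaching the default-hole subforests.

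The main obstacle is justifying the $\I$-node step. We must show that inside a region whose contexts all stay in $J$, the pieces between successive occurrences of the same $\mr$/$\mh$-class all evaluate to the same idempotent. For words, this follows because $s,st\in J$ implies $s\mathrel{\mr} st$. For forests, the same forest can be cut in two incomparable ways, producing two residue contexts. This is exactly where $\mr$-alignment is used: when both residues lie in $J$, it forces them to be $\mr$-equivalent. We can then apply the usual argument, which says that $e\,x\,e=e$-type equalities within a regular $\mj$-class yield idempotent products. I would first try to prove a lemma stating that in an $\mr$-aligned morphism, any two next-contexts of a region inside $J$ are $\mr$-related. The $\I$-node construction then follows as in Simon's proof, and the depth recursion $f(k)\le f(k-|J|)+4|J|$ with $f(1)=1$ gives $4|V|-3$.
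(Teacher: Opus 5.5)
There are genuine gaps, starting with your base case. If all contexts map into a group $G$, one level does not suffice. An $\I$-node can only hide a stable decomposition whose contexts \emph{and linking contexts} all map to the unique idempotent $\unitel$, and a $\C$-node forbids inherited references. For instance, take the parity forest algebra $(\mathbb{Z}/2,\mathbb{Z}/2)$ with $\varphi(a(\square))\neq\unitel$; every morphism into a group is $\mr$-aligned. The tree $a(a(\ell))$ has no depth-$1$ decomposition to $\bs$. The leaf containing the root must be $a(\square_{a(\ell)})$, whose linking context $a(\square)$ is not unit-valued. And no $\B$- or $\C$-root has all its children in $\bs$, since the factor unravelling to $a(\ell)$ would have to be $a(\square_\ell)$, i.e.\ carry an inherited reference. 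The paper needs a real lemma here, \cref{lem:group decomp technical}. It is an induction on $|\varphi[\ns(\tau,x)]|$ that does the following. When $\unitel$ is available, it folds a maximal minimal $\varphi^{-1}[\unitel]$-decomposition into an $\I$-node. Otherwise it folds a maximal minimal centipede into a $\C$-node, using that left multiplication by a non-unit has no fixed points to make the measure drop at every leg where $\unitel$ is unavailable. This costs $2|G|-1$ levels, not $1$. You need this lemma inside your inductive step as well. Within a regular $\mj$-class, the pieces between consecutive cuts evaluate into an $\mh$-class that is a group, not to a single idempotent, so ``group them with an $\I$-node'' cannot be done in one level.

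The inductive step has two further problems.

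First, its direction does not match the monotonicity that is actually available. \Cref{cor:decomp semantically decrease} only gives $\varphi(\tau.\ctx(x1))\cdot\varphi[\ns(\tau,x1)]\subseteq\varphi[\ns(\tau,x)]$. So what propagates upward is availability of contexts in an \emph{ideal} (\cref{cor:top zeros}), not in a high $\mj$-class. If a unit $\varphi(D)$ is available at the factor $D(h)$ of $C(D(h))$ with $\varphi(C)$ a non-unit, the root only inherits the non-unit $\varphi(C)\varphi(D)$. Hence contexts along a branch can leave $J$ and return arbitrarily often. Your claim that everything outside the maximal $J$-regions lies strictly below $J$ fails, and the recursion $f(k)\le f(k-|J|)+4|J|$ has nothing to rest on. The paper runs the induction the other way. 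For a $\mj$-minimal nonzero $a$ and $M=S^1aS^1$, it recurses on the Rees quotient $S/M$, where alignment survives by \cref{lem:rees unambiguous}. It keeps the invariant ``at most one $\zeroel$-valued $\I$-node and no other $\zeroel$ contexts''. It then expands that single node by the null or simple/$\zeroel$-simple case for $M$, where alignment survives by \cref{lem:min ideal unambiguous}. The constant $4$ is just slack from $2|M|+1\le 4|M|-3$ and $|S/M|=|S|-|M|+1$.

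Second, your treatment of a region inside $J$ has no decreasing measure. A forest offers no linear chain of prefixes; even for words, prefixes staying in $J$ never change $\mr$-class, by \cref{lem:J equivalence to one sided equivalence}. The lemma you propose is, for next contexts of a single node, just the definition of $\mr$-alignment, and it gives nothing across different nodes. The missing idea is the set $\lr(\tau,x)$ of pairs ($\ml$-class of a next context, $\mr$-class of a next context of the resulting factor). This set only shrinks down the tree (\cref{lem:lr decreasing}, via rotations). For a fixed $(L,R)$, the paper takes a maximal centipede over $L$-contexts whose legs can still be cut by an $R$-context. Under each leg it takes a maximal minimal decomposition with contexts in the group $L\cap R$, constrained so that every non-$\zeroel$ cut of each factor lies in $R$. $\mr$-alignment is exactly what shows two things: this constraint also holds at the down-child leaves, and a leaf still offering $(L,R)$ could be cut further within the constraint, contradicting maximality. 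Hence $|\lr|$ strictly drops at every leaf. The $L\cap R$ parts are then rebalanced by the group lemma, giving $2|S|+1$ for simple and $\zeroel$-simple $S$.
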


We also show that, without $\mr$-alignment, the notion of decomposition
developed here does not admit any uniform depth bound in general. Thus the
restriction is not merely a byproduct of the proof technique.
The proof follows the same broad lines as semigroup-theoretic proofs of Simon's
theorem, in that it is organized around the $\mj$-order on the target (vertical)
semigroup~\cite{Simon1990,Colcombet2021}. More precisely, we establish bounded
decomposition theorems for three building-block cases---groups, simple or
\zsimple semigroups, and null semigroups---and then combine them by an induction
that climbs up the hierarchy of $\mj$-classes via a $\mj$-minimal nonzero
element, the ideal it generates, and the corresponding Rees quotient. 
Among these ingredients, the simple or \zsimple semigroup case is technically the most involved, and it is also the only point at which $\mr$-alignment is used in an essential way; the group and null cases are treated separately and feed into the later induction.


We view this theorem as a significant foundational step 
toward a decomposition theory for forests,
parallel to the role that factorization forests have long played for words. Such
a theory is interesting already as a structural tool for inductive arguments
over forest morphisms, but it also suggests algorithmic directions, since
bounded-depth decompositions are natural candidates for hierarchical summaries
of large tree-shaped inputs~\cite{Bojanczyk2012}. We do not pursue those
applications here in full, but the framework developed in this paper is designed
with that broader perspective in mind.

The extended abstract provides an intuitive overview of the main ideas, and 
is organized as follows. We begin by recalling forests,
contexts, and forest algebras, and by introducing the augmented free forest
algebra with default holes. We then define binary decompositions and show, using
references and rotations, how to recover the monotonicity properties they
need. After that we introduce general decompositions, which are the
bounded-depth objects used in the theorem. In the second half of the paper, we
define $\mr$-alignment, establish the group and simple semigroup decomposition
theorems, and combine them into the main result by induction on the target
semigroup. We conclude with the reduction lemma and with a lower bound showing
that without $\mr$-alignment no bounded-depth theorem can hold in general.
Following the extended abstract, the full technical details start in
\Cref{sec: appendix organization}.

\section{Preliminaries}
\label{sec:abs:prelim}
In this extended abstract we omit most technical details. For the technical preliminaries see \cref{sec:prelim,sec:green prelim}. 

Let $\bbN$ be the non-negative integers. For a set $A$ let $A^*$ and $A^{+}$ be the sets of finite words and finite non-empty words over $A$. Denote by $\epsilon$ the empty word. For $x,y\in A^*$ we write $x\le y$ if $x$ is a prefix of $y$.
A \emph{tree} is a prefix-closed subset of $\bbN^*$. A \emph{forest} is a prefix-closed subset of $\bbN^+$. Note that trees are rooted at $\epsilon$, whereas forests have roots in $\bbN$. 
For nodes $x\le y$ in a forest $f$, we say that $y$ is a \emph{descendant} of $x$. If $y=x\cdot i$ for some $i\in \bbN$, then $y$ is a \emph{child} of $x$.
The \emph{depth} of a forest $f$ is $\max\{d \geq 0 : \bbN^d\cap f\neq \emptyset\}$.
A \emph{labeled forest over alphabet $A$} is a forest $f$ equipped with a labeling function $\lab\colon f\to A$. We overload notation and write $f[x]$ as $\lab(x)$ for $x\in f$. We say that a labeled forest $g$ \emph{extends} forest $f$ if $f\subseteq g$ as forests, and $f[x]=g[x]$ for all $x\in f\cap g$.

\subparagraph*{Semigroups and Green's Relations}
A \emph{semigroup} is a structure $\tup{S,\cdot}$ such that $S$ is a set and $\cdot\colon S\times S\to S$ is an associative operation. We often write $uv$ instead of $u\cdot v$ for $u,v\in S$. An element $e\in S$ is \emph{idempotent} if $e^2=e$. An element $z$ is a \emph{zero}, denoted $\zeroel$ if $zx=xz=z$ for all $x\in S$. An element $u$ is a \emph{unit}, denoted $\unitel$, if $ux=xu=x$ for all $x\in S$. We denote by $S^1$ the \emph{monoid} obtained by adding a unit to $S$, if it does not have one.

Consider two elements $u,v\in S$. Green's relations are defined as follows.\\
\begin{tabular}{p{0.45\textwidth} p{0.45\textwidth}}
\labelitemi\ $u\leq_\ml v$ if $u\in S^1v$.
&
\labelitemi\ $u\leq_\mj v$ if $u\in S^1vS^1$. \\
\labelitemi\ $u\leq_\mr v$ if $u\in vS^1$. 
&
\labelitemi\ $u\leq_\mh v$ if $u\leq_\ml v$ and $u\leq_\mr v$.
\end{tabular}\\
Every Green relation $G\in \{\mj,\ml,\mr,\mh\}$ induces an equivalence relation by $u G v\iff u\leq_Gv\wedge v\leq_G u$, and we denote by $[u]_G$ the equivalence class of $u$, dubbed the \emph{$G$-class of $u$}. 

The semigroup $S$ is a \emph{simple} (resp. \emph{\zsimple}) if it has only one $\mj$-class (resp., only $\{\zeroel\}$ and $S\setminus \{\zeroel\}$ are the $\mj$-classes).
In \cref{sec:green prelim} we present several results about Green's relations and simple semigroups, used throughout the proof. These are omitted here as we do not reach this level of detail.

\subparagraph*{Forest Algebras (\cref{sec:forest algebra})} Glossing over some technical details, a forest algebra is a tuple $\tup{H,V,+,\cdot}$ where $\tup{H,+}$ and $\tup{V,\cdot}$ are monoids, along with an action of $V$ on $H$ (which we denote as $\cdot\colon V\times H\to H$ or omit) such that for every $u,v\in V$ and $h\in H$ we have $(u\cdot v)\cdot h=u\cdot (v\cdot h)$.
A \emph{forest morphism} from $\tup{H_1,V_1}$ to $\tup{H_2,V_2}$ is a pair
$\tup{\eta,\varphi}$ where $\eta\colon H_1\to H_2$ and $\varphi\colon V_1\to V_2$, with $\eta(v \cdot h) =
\varphi(v) \cdot \eta(h)$ for every $v \in V, h \in H.$

The \emph{Free Forest Algebra} over alphabet $A$ is a concrete forest algebra $\tup{H_A,V_A}$ where $H_A$ is the set of $A$-labeled forests, and $V_A$ is the set of $A\cup \{\square\}$-labeled forests, where $\square$ appears exactly in one leaf. The action of $V_A$ on $H_A$ is the natural composition where we replace $\square$ with the given subforest. We remark that \cref{sec:forest contexts and algebras,sec:toolbox for free algebras} are devoted to establishing a convenient formalism to capture and reason about free forest algebras. These formalisms enable our precise proofs, but encumber notations, hence our informality here.

\begin{example}[TBF as a Forest Algebra]
\label{xmp:tbf prelim forest algebra}
We formulate TBF from \cref{xmp:abs:TBF1} as a forest morphism $\tup{H_A,V_A}\to \tup{H,V}$ for $A=\{\wedge,\neg,T\}$. We start with describing $\tup{H,V}$.
The forests are $H=\{\epsilon,\TBFt,\TBFf,\TBFtf,\TBFtt,\bot\}$ with the following intuition: $\epsilon$ captures the empty forest, and $\bot$ captures all the ``hopelessly invalid'' forests (e.g., forests that are malformed, ternary nodes, etc.).
$\TBFt$ and $\TBFf$ capture, respectively, trees whose root evaluates to true and false, respectively. $\TBFtt$ and $\TBFtf$ capture forests of the form $t_1+t_2$ where, respectively, both $t_1,t_2$ evaluate to true and where at least one evaluates to false. Note that such forests are then plugged into a $\wedge$ node, and we therefore need a way of tracking them, hence the latter two values.

To describe $V$, we identify it with a subset of $H^H$, so that each element describes its action on $H$. Since this set is elaborate, we only demonstrate a few important elements in $V$. Note that each context ``expects'' a certain structure of its input element, specifically whether it should have one or two roots. For example, the elements $\fpt,\fpf,\fand,\fnot$ intuitively capture contexts such as $\square+T$, $\square+\neg(T)$, $\wedge(\square)$, and $\neg(\square)$ respectively. Thus, they are given by the following functions (undefined inputs map to $\bot$): 
    \[\fpt(h)=\begin{cases}
        \TBFt & \epsilon\\
        \TBFtf & \TBFf\\
        \TBFtt & \TBFt
    \end{cases}\quad 
    \fpf(h)=\begin{cases}
        \TBFf  & \epsilon\\
        \TBFtf & \TBFt \text{ or }\TBFf
    \end{cases}\quad 
    \fand(h)=\begin{cases}
            \TBFf & \TBFtf\\
            \TBFt & \TBFtt
    \end{cases}\quad
    \fnot(h)=\begin{cases}
        \TBFt & \TBFf\\
        \TBFf & \TBFt
    \end{cases}\]
Compositions then give rise to other elements, e.g., $\fand(\fpf)$ corresponds to e.g., the context $\wedge(\square+\neg(T))$, and is defined by $\fand(\fpf(h))=\TBFf$ for $h\in \{\TBFt,\TBFf\}$ and $\bot$ otherwise. Observe that this is an \emph{idempotent} element in $V$, i.e., $\fand(\fpf(\fand(\fpf)))=\fand(\fpf)$.  

The morphisms $\eta,\varphi$ are then naturally obtained by assigning each context/forest to the element it represents. 
Observe that all the contexts (in yellow) in \cref{fig:abs:TBF_I_decomposition} map to $\fand(\fpf)$.
\end{example}

\section{Augmented Forest Algebras (\cref{sec:augmented forest algebra})}
\label{sec:abs:forest algebra}
In the setting of words there is a natural notion of decomposition: a word $w$ can be split into $w=u\cdot v$, where both $u$ and $v$ are subwords (\cref{fig:abs:decomp-word}). Then, one can recursively continue to decompose. This decomposition supports the \emph{local} nature of decompositions -- in order to evaluate the morphism on an infix, one only needs to look at the corresponding nodes (indeed, this is the basis for the \emph{fast infix evaluation} algorithm~\cite{FactorForBoj}).

For forests, a fundamental problem arises: when decomposing a forest $h$, we want to  ``pluck'' a subtree rooted at a certain node $x$. This leaves us with a \emph{context}, rather than a forest, so that we know where to restore the plucked subtree.
However, if we now wish to continue decomposing, we may want to pluck another
subtree from the remaining context. This results in two distinct ``holes'' to
restore the subtrees (\cref{fig:abs:decomp-forest}), and this would render the
decomposition ill-defined, since it is not clear where to plug in each subforest
--- this is the main reason contexts have only one hole.
\begin{figure}[ht]
    \centering
    \begin{subfigure}{0.15\linewidth}
        \centering
        \includegraphics[width=\linewidth]{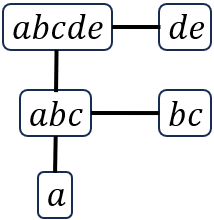}
        \caption{Words}
        \label{fig:abs:decomp-word}
    \end{subfigure}
    \hfill
    \begin{subfigure}{0.2\linewidth}
        \centering
        \includegraphics[width=\linewidth]{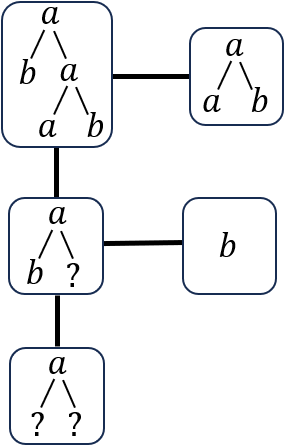}
        \caption{Forests without augmentation}
        \label{fig:abs:decomp-forest}
    \end{subfigure}
    \hfill
    \begin{subfigure}{0.2\linewidth}
        \centering
        \includegraphics[width=\linewidth]{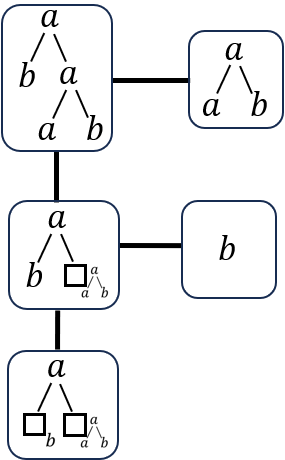}
        \caption{Forests with default holes}
        \label{fig:abs:decomp-augmented}
    \end{subfigure}
    \caption{For words (\cref{fig:abs:decomp-word}), each node can be evaluated separately according to the morphism. In particular, the value of a node is uniquely determined by the values of its children. \\
    For forests (\cref{fig:abs:decomp-forest}, it is not even clear how to represent the residue forest after ``plucking'' a subforest. For example, the bottom leaf cannot be evaluated, and looking at it locally does not provide enough information to know how the two right leaves plug into it.\\
    To overcome this, we use default holes (\cref{fig:abs:decomp-augmented}).}
    \label{fig:abs:problem-decomp}
\end{figure}

In order to overcome this problem, we introduce the \emph{augmented free forest algebra}. Intuitively, we add a new symbol $\square_h$ for every $h\in \hf$, called a ``default hole'': these are holes that can only be filled with the subforest $h$ (\cref{fig:abs:decomp-augmented}). 
A priori, it may seem as though this achieves nothing -- we just write $\square_h$ instead of $h$. Nonetheless, this is the basic building block for our notion of decomposition: the elements $\square_h$ are used as ``markers'' within a forest (or context) to denote a part which is \emph{elementary} and cannot be factored further.
We make this intuition concrete when defining decompositions in \cref{sec:abs:binary decomp}.

Technically, we augment the free forest algebra over alphabet $A$ with \emph{default holes}: let $\square_{H_A}\coloneqq \{\square_h:h\in H_A\}$, then we replace $A$ with the augmented infinite alphabet $\wHole{A}=A\cup \square_{H_A}$, and consider the free forest algebra over $\wHole{A}$, denoted $\tup{\tilH,\tilV}$, with one important restriction: the elements $\square_h$ may appear \emph{only in the leaves} of a forest or context (as intuitively they represent a suffix subforest, and do not have further children).
In \cref{xmp:abs:TBF1} above we demonstrate such default holes.

    
    

Default holes raise a problem: a morphism $\varphi\colon\tup{H_A,V_A}\to \tup{H,V}$ is not defined on default holes. However, a natural way to retain the semantics of a morphism is to \emph{unravel} a default hole back to its original forest. Thus, for $f\in \tilH\cup \tilV$ we define $\unr(f)\in H_A\cup V_A$ to be the corresponding element where every default hole $\square_h$ is replaced with $h$. 
Then, we define $\varphi(f)=\varphi(\unr(f))$.
This allows us to refer to default holes recursively: recall that we do not allow nested default holes by the definition of $H_A$. However, we can now safely define for $f\in \tilH$ that $\square_{f}=\square_{\unr(f)}$. 
Unraveling naturally induces a notion of equivalence on contexts and forests, by which $f$ and $g$ are \emph{unravel equivalent}, denoted $f\eqUnr g$, if $\unr(f)=\unr(g)$. 

In the following, we often consider specific subsets of $\tilV$ over which we
factor a forest. However, for these purposes we do not wish to distinguish
between equivalent contexts under $\unr$. Moreover, we want the set to be closed
under composition.  To this end, we call a set $\wHole{V}\subseteq \tilV$ \emph{stable}
if it is a subsemigroup of $\tilV$, and for every
$C_1\in \wHole{V}, C_2 \in \tilV$, if $C_1\eqUnr C_2$ then $C_2\in \wHole{V}$.

\section{Binary Decompositions (\cref{sec:bin decomp})}
\label{sec:abs:binary decomp}
Equipped with default holes, we can now define binary decompositions. For words, binary decompositions are trivial -- one merely splits a word recursively by concatenation (\cref{fig:abs:decomp-word}). 
For forests, we note that splitting a forest does not yield two forests, but rather a forest and a context. Then, we intuitively need to keep decomposing both parts. 
We illustrate how this is done before proceeding to the precise details.
\begin{example}
\label{xmp:abs:binary decomposition}
Consider the forest $h=b(a+a)+a(b+a(a+b))$. We wish to decompose it as \colorbox{blue!30}{$b(a+a)+a($}\colorbox{red!30}{$b+a(a+b)$}\colorbox{blue!30}{$)$}.
To this end, we write $h=C\cdot h'$ where $C=b(a+a)+a(\square)$ and $h'=b+a(a+b)$. The corresponding binary decomposition (with some extension) is depicted in \cref{fig:abs:bin decomp example with references}.
As per \cref{rmk:abs:depiction forests vs trees}, in the decomposition we refer to the  right-child of a node as the \emph{factor}, and to the down-child as the \emph{residue}. 
The decomposition tree is constructed as follows: we start with $h$ at the root of the decomposition, which is given address {\color{green!50!black} $\epsilon$}. 
We then ``pluck'' from $h$ the sub-forest factor $h'$ and place it as the right-child, with address {\color{green!50!black} $1$}. The node $1$ is also assigned a \emph{context}, which is the residue $C$ (so that $C\cdot h'=h$). Then, in the down-child (addressed {\color{green!50!black} $0$}) we place the tree that remains to be factored, which is $C$ only with $\square_{h'}$ instead of $\square$, to signify that this remaining hole is actually $h'$, but $h'$ itself is already being factored elsewhere (namely from node $1$).
Thus, the forest at node $0$ is $C\cdot \square_{h'}$.
\end{example}

\begin{figure}[ht]
    \centering
    \begin{minipage}{0.6\linewidth}
        \centering
        \includegraphics[width=0.5\linewidth]{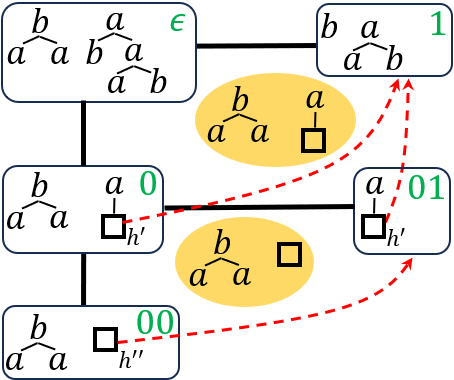}
        \caption{A (partial) binary decomposition with depicted references (dashed arrows). The first step corresponds to \cref{xmp:abs:binary decomposition}. The plucked factor $h'$ appears in Node $1$. The context $C$ appears highlighted below the $\epsilon\to 1$ edge.}
        \label{fig:abs:bin decomp example with references}
    \end{minipage}
    \hfill
    \begin{minipage}{0.3\linewidth}
        \centering
        \includegraphics[width=\linewidth]{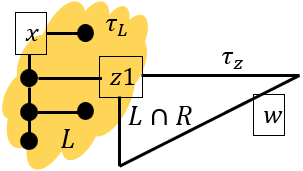}
        \caption{Decomposition structure of \cref{lem:abs:simple decomposition}.}
        \label{fig:abs:simple semigroup}
    \end{minipage}
\end{figure}

\begin{definition}[Binary Decomposition -- informal]
A \emph{binary decomposition} of a forest $h\in \tilH$ is a binary tree $\tau$ where each node $x$ is labeled with a forest $\tau.\frst(x)\in \tilH$ and has a type that is either a \emph{leaf} ($\tau.\type(x)=\lf$) or a \emph{binary node} $\tau.\type(x)=\B$. 
Binary nodes are associated with a context $\tau.\ctx(x)\in \tilV$, and the following conditions hold:
\begin{itemize}
    \item The root is labeled with $h$, i.e., $\tau.\frst(\epsilon)=h$.
    \item The forest of a binary node is decomposed correctly: $\tau.\frst(x)=\tau.\ctx(x1)\cdot \tau.\frst(x1)$.
    \item The residue matches the context:
    $\tau.\frst(x0)=\tau.\ctx(x1)\cdot \square_{\tau.\frst(x1)}$.
    \item Decompositions pluck out non-trivial forests: $\tau.\frst(x)\neq \tau.\frst(xi)$ for $i\in \{0,1\}$. 
\end{itemize}
\end{definition}
See \cref{fig:abs:bin decomp example with references} for an illustration, and \cref{def:bin decomp} for the formal details. A more elaborate example is in \cref{fig:abs:TBF_I_decomposition}.

An important observation to be made about binary decompositions is that for a binary node $x$, the context $\ctx(x1)$ actually determines both the factor and the residue.
Thus, most of our reasoning focuses on the contexts, rather than the forests.

\subsection{References and Stability (\cref{subsec:references and ancestor embedding})}
\label{sec:abs:references and stability}
The proof of Simon's factorization for words crucially relies on a certain monotonicity with respect to the image of the morphism. Specifically, the following property is used: for a word $w$ and a morphism $\varphi$, let $P_w=\{\varphi(u): w=u\cdot v\}$ be the set of images of the prefixes of $w$, then if $w=w_1w_2$, we trivially have that $P_{w_1}\subseteq P_{w}$. 
That is, the set of images of prefixes of $w_1$ is contained in that of $w$. This observation follows immediately by the definitions of a morphism and concatenation.
The dual observation for \(P_{w_2}\) is slightly more involved: we have that $\varphi(w_1)\cdot P_{w_2}\subseteq P_{w}$, which is again obvious.
Framed in the setting of binary decompositions, this relates the ``available ways'' of factoring a word $w$ into subwords, with the available ways of factoring each of its children.

Recovering this property under our binary decompositions is, as it turns out, nontrivial, and leads to new definitions that complicate the remainder of the proof.
At its core, the problem is that (unlike in the setting of words) each node in our binary decomposition may ``refer'' to other nodes by means of default holes. Moreover, this reference is meaningful as a possible decomposition, as we show in \cref{sec:abs:rotations,sec:abs:next contexts}.
To illustrate this, consider the decomposition in \cref{fig:abs:decomp-augmented}. Intuitively, we first pluck out $a(a+b)$ and then pluck out $b$ from the remaining forest, resulting in the two default holes at the bottom node. However, the default hole $\square_{a(a+b)}$ still ``references'' the node $a(a+b)$. 

Consider a binary decomposition $\tau$. To capture the notion of \emph{reference}, for every node $x$ and address $u$ in $\tau.\frst(x)$, we write $x@u\to y$ if $x[u]=\square_h$ such that $\tau.\frst(y)=h$ and $h$ was plucked to node $y$ in some parent of $x$, creating this specific default hole.  We write $x@*\to y$ when $u$ is immaterial.
In \cref{fig:abs:bin decomp example with references}, we depict references as dashed arrows. For example, $01@00\to 1$ means that node $01$, at address $00$ (i.e., the label $\square_{h'}$) references node $1$. 
We remark that capturing this definition formally involves a lot of accounting, see \cref{def:decomp references intuitive,def:decomp references}.
Of special importance are references from right-child nodes, called \emph{inherited references} (i.e, $x1@u\to y$). These are caused when a subforest is plucked, and later on a parent of this subforest is plucked (e.g., the reference $01@00\to 1$ mentioned above).  
Intuitively, these represent a ``bad'' way of factorization (in that it makes more ``sense'' to first pluck out the larger subforest). We make this precise in \cref{lem:abs:no inherited decomposition is stable}.

Next, for a reference $x@u\to y$ we define the \emph{linking context from $x$ to
  $y$} $\clnk(\tau,x,y)=\tau.\frst(x)[u\to \square]$, which formalizes the connection between
the nodes. Continuing the example of \cref{fig:abs:bin decomp example with
  references} we have $\clnk(\tau,01,1)=a(\square)$.
Intuitively, when considering a binary decomposition, the elements it represents are really the linking contexts, as these capture 
the ``infixes'' of the forest. Note that this has no analogue in words --
there, simply looking at the leaves of the decomposition
suffices to track the infixes.

In particular, when we wish to restrict a binary decomposition to use a specific set of contexts (e.g., when inductively decomposing a forest over an idempotent element),
we apply this requirement to the linking contexts as
well. 
\begin{definition}[Stable Decomposition]
    \label{def:abs:stable decomp}
    For a stable context set $\wHole{V}\subseteq \tilV$, we say that a decomposition $\tau$ is \emph{stable over $\wHole{V}$} if $\clnk(\tau,x,y)\in \wHole{V}$ for every $x,y$ such that $x@*\to y$.     
\end{definition}
Observe that in a decomposition without inherited references, the linking contexts are compositions of contexts that appear as $\tau.\ctx(\cdot)$. Since we assume a \emph{stable} context sets, such compositions are already $\wHole{V}$. We therefore have the following.
\begin{lemma}[\cref{lem:no inherited decomposition is stable}]
\label{lem:abs:no inherited decomposition is stable}
    Every decomposition $\tau$ with no inherited references is stable.
\end{lemma}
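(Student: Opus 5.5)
The plan is to show that every linking context $\clnk(\tau,x,y)$ is unravel-equivalent to a composition of contexts of the form $\tau.\ctx(\cdot)$. Stability of the context set, meaning closure under composition and under $\eqUnr$, then gives $\clnk(\tau,x,y)\in\wHole{V}$, where $\wHole{V}$ is any stable set containing all $\tau.\ctx(\cdot)$ (for instance, the least such set). The first step is to understand the shape of a non-inherited reference. Fix $x@u\to y$. By definition of references, the default hole at address $u$ was created at a binary node $z$ with $y=z1$: it is the hole $\square_{\tau.\frst(z1)}$ placed into $\tau.\frst(z0)=\tau.\ctx(z1)\cdot\square_{\tau.\frst(z1)}$. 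From there it is propagated downwards.

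The second step is a structural claim: if $\tau$ has no inherited references, then $x=z0^k$ for some $k\ge 1$. I would argue by following the hole from $z0$ downwards. At any binary node $w$ on the way, the hole lies either in the residue part $\tau.\ctx(w1)$ or in the factor $\tau.\frst(w1)$. In the latter case the hole is passed to $w1$, and $w1@*\to y$ is by definition an inherited reference, which is excluded. Hence the hole always travels along $0$-edges, and all nodes referencing $y$ lie on the $0$-path below $z$. Making this rigorous will need the formal bookkeeping of \cref{def:decomp references}. I expect this to be the main (though mostly notational) obstacle, because one must check that addresses of the hole are tracked correctly through each composition $\tau.\ctx(w1)\cdot\tau.\frst(w1)$.

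The third step is an induction on $k$ showing that $\unr(\clnk(\tau,z0^k,z1))=\unr(\tau.\ctx(z1))$. For the base case $k=1$, the hole at $z0$ is exactly the hole of $\tau.\ctx(z1)$, so $\clnk(\tau,z0,z1)=\tau.\ctx(z1)$. For the inductive step, let $w=z0^k$ with the hole at address $u$ in the $\tau.\ctx(w1)$ part. Then, with $u'$ the corresponding address in $w0$,
\[\tau.\frst(w0)[u'\to\square]=\tau.\ctx(w1)[u\to\square]\cdot\square_{\tau.\frst(w1)}\eqUnr \tau.\frst(w)[u\to\square].\]
This holds because $\square_{\tau.\frst(w1)}$ unravels to $\tau.\frst(w1)$ and the two holes are disjoint. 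So the linking context is invariant under $\eqUnr$ along the $0$-path. Thus every linking context is $\eqUnr$-equivalent to a context $\tau.\ctx(y)$, and closure under $\eqUnr$ finishes the argument.

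If instead the formal statement calls for compositions (for example, when the definition of stability is relative to contexts along the path), the same induction gives that $\clnk$ is $\eqUnr$-equivalent to a product of $\tau.\ctx$'s. In that case closure of $\wHole{V}$ under composition is used as well.
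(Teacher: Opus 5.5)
Your proposal is correct and follows essentially the same route as the paper: the paper fixes the target $y1$ and inducts down the chain of nodes referencing it. It takes $\clnk(\tau,y0,y1)=\tau.\ctx(y1)$ as the base case, observes in each step that $\clnk(\tau,x0,y1)\eqUnr\clnk(\tau,x,y1)$ because the two differ only by $\square_{\tau.\frst(x1)}$ versus $\tau.\frst(x1)$, and concludes by closure of $\wHole{V}$ under $\eqUnr$. Your Step 2 is immediate rather than a real obstacle: by \cref{def:decomp references}, references from a right child arise only in the third case, and those are inherited by definition, so the referencing nodes are exactly the $y0^k$.
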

As an example, observe that the decomposition in \cref{fig:abs:TBF_I_decomposition} has no inherited references, and is therefore stable. Moreover, as mentioned in \cref{xmp:tbf prelim forest algebra}, all the contexts there map to a single idempotent element. 

On top of restricting the context set $\wHole{V}$, in order to support inductive arguments we sometimes restrict the set
of leaves that we allow to some set $B\subseteq \tilH$ called a \emph{basis}. 
We then call a decomposition $\tau$ \emph{full over $\wHole{V},B$} if
it is stable with respect to $\wHole{V}$ and $\tau.\frst(x)\in B$ for every $x$ with
$\tau.\type(x)=\lf$. We call $\tau$ \emph{partial over $\wHole{V},B$} if it can be
extended by further decomposing its leaves 
to a full decomposition over $\wHole{V},B$ (and in particular is
stable). We denote by $\fd(\wHole{V},B)$ and $\pd(\wHole{V},B)$ the sets of full and partial decompositions, respectively (see \cref{def:full decomposition}).




\subsection{Tree Rotations and Monotonicity (\cref{sec:rotations,sec:dec lemmas})}
\label{sec:abs:rotations}
The main tool we develop for working with binary decompositions, and for
constructing them, is \emph{rotations}. Intuitively, these provide a method for
locally changing the shape of a tree while maintaining some desirable properties
(in particular, no inherited references,
  hence stability by \cref{lem:abs:no inherited decomposition is
    stable}).
Fundamentally, rotations can be thought of as changing the order
in which we pluck out factors.
In this section we illustrate the two types of rotations, and formulate their
properties. We remark that despite their intuitive nature, proving these
properties requires a careful analysis of how decompositions are built, and is
the source of a lot of the technical challenges, presented in appendix.

Fix some context set $\wHole{V}$.
The first rotation, dubbed \emph{sequential rotation}, is illustrated in \cref{fig:abs:factorizations for TB BT rotation lemma}. For this rotation, we start with a forest $f=C_1C_2h$ where $C_1,C_2\in \wHole{V}$ and $h\in \tilH$, and we factor it first by plucking $h$ (leaving the residue $C_1C_2\square_h$), and then plucking $C_2\square_h$ from the residue, leaving a further residue of $C_1\square_{C_2h}$. 
We refer to this as a \emph{Bottom-Top} factorization (BT). 
The rotation involves changing the order of plucking to \emph{Top-Bottom} (TB), so that we first pluck $C_2h$, leaving residue $C_1\square_{C_2h}$, and then from $C_2h$ pluck $h$, leaving residue $C_2\square_h$.
\begin{figure}[ht]
    \centering
    \begin{subfigure}[t]{0.35\textwidth}
        \centering
        \includegraphics[width=\linewidth]{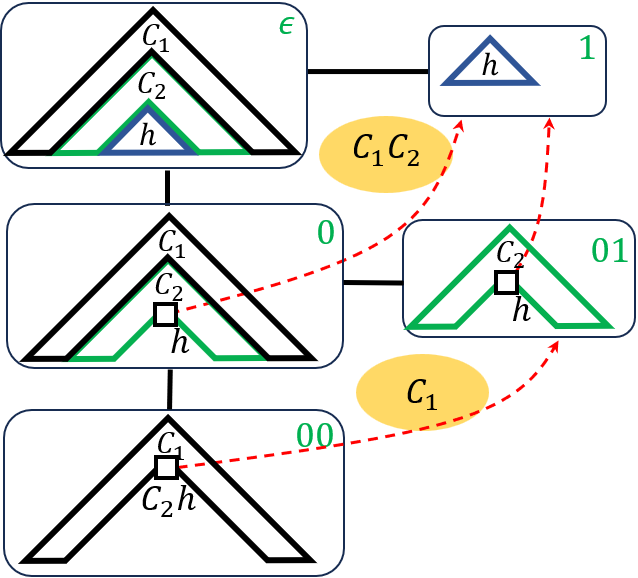}
        \caption{BT factorization.}
    \end{subfigure}
    \hfill
    \begin{subfigure}[t]{0.6\textwidth}
        \centering
        \includegraphics[width=\linewidth]{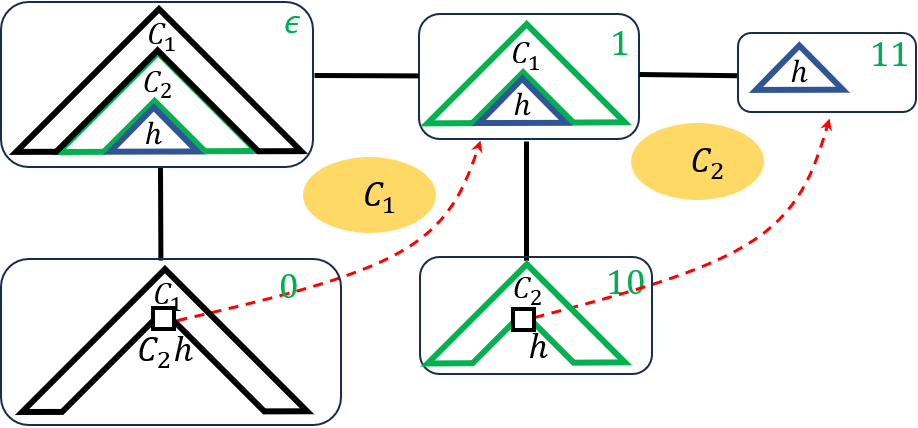}
        \caption{TB factorization}
    \end{subfigure}
    \caption{Two factorizations (BT and TB) for $C_1C_2h$. The translation between them is a \emph{sequential rotation}.}
    \label{fig:abs:factorizations for TB BT rotation lemma}
\end{figure}
It is easy to notice that both BT and TB have the same leaves. Further scrutiny
shows that in the BT decomposition
$\tau$, we have $\tau.01@*\to 1$, with $\clnk(\tau,01,1)=C_1$. In addition, there are no inherited references in the TB decomposition (within the five nodes under consideration). 
We can actually apply this rotation starting at a node $x$ that might not be the root. 

To discuss such tree manipulations, we denote by $\equpto{\tau}{x}{\tau'}$ the fact that two decompositions are identical except for the subtree under the node $x$.
Then, the properties of BT to TB, and its dual TB to BT, are summarized as follows.
\begin{lemma}[BT to TB -- informal]
    \label{lem:abs:BT to TB}
    Consider a binary decomposition $\tau$ and nodes $x,x0,x1,x00,x01$ in a BT formation. Then there exists a binary decomposition $\equpto{\tau'}{x}{\tau}$ such that:
    \begin{itemize}
        \item $\tau'.\ctx(x1) = \tau.\ctx(x01)$
        and $\tau'.\ctx(x11) = \clnk(\tau,x01,x1)$
        \item $\tau'[x0]= \tau[x00]$, $\tau'[x10]=\tau[x01]$, and $\tau'[x11]=\tau[x1]$.
    \end{itemize}
\end{lemma}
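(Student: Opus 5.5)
The idea is to build $\tau'$ explicitly and check each clause of the binary-decomposition definition. The BT formation at $x$ is unpacked first. Write $f=\tau.\frst(x)$, $h=\tau.\frst(x1)$ and $C=\tau.\ctx(x1)$, so that $f=C\cdot h$ and $\tau.\frst(x0)=C\cdot\square_h$. Next write $g=\tau.\frst(x01)$ and $D=\tau.\ctx(x01)$, so that $C\cdot\square_h=D\cdot g$ and $\tau.\frst(x00)=D\cdot\square_g$. Since this is a BT formation, the default hole $\square_h$ lies inside $g$, and $x01@u\to x1$ for the address $u$ of that hole. Put $C_2=\clnk(\tau,x01,x1)=g[u\to\square]$; then $g=C_2\cdot\square_h$ and $D\cdot C_2=C$. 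The last identity holds because plugging $\square$ back into $D\cdot g$ at the position of $\square_h$ recovers $C$. This is the point where the careful accounting of addresses from \cref{def:decomp references} has to be used.

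The new decomposition $\tau'$ keeps $\tau$ outside the subtree of $x$ and defines the following nodes.
\begin{itemize}
\item $\tau'.\frst(x)=f$.
\item $\tau'.\ctx(x1)=D$ and $\tau'.\frst(x1)=C_2\cdot h$, which satisfies $D\cdot(C_2\cdot h)=C\cdot h=f$.
\item $\tau'.\frst(x0)=D\cdot\square_{C_2 h}$.
\item $\tau'.\ctx(x11)=C_2$ and $\tau'.\frst(x11)=h$.
\item $\tau'.\frst(x10)=C_2\cdot\square_h=g$.
\end{itemize}
Below these, the whole subtree of $\tau$ at $x1$ is copied to $x11$, the subtree at $x01$ is copied to $x10$, and the subtree at $x00$ is copied to $x0$. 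This gives $\tau'[x0]=\tau[x00]$, $\tau'[x10]=\tau[x01]$ and $\tau'[x11]=\tau[x1]$, together with the two context equalities in the statement.

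Now we check the axioms of a binary decomposition.
\begin{itemize}
\item Node $x1$: $C_2h=C_2\cdot h$ and $x10=C_2\cdot\square_h$, both immediate.
\item Node $x$: we need $\tau'.\frst(x0)=D\cdot\square_{C_2h}$ to agree with $\tau.\frst(x00)=D\cdot\square_g$. This holds because $\square_g=\square_{\unr(g)}=\square_{\unr(C_2h)}$, using the convention $\square_f=\square_{\unr(f)}$ and $\unr(g)=\unr(C_2\square_h)=\unr(C_2 h)$.
\item Non-triviality: each of the forests $f$, $C_2h$, $g$, $h$, $D\square_g$ was already a proper factor or residue in $\tau$, or differs from its parent because $D$ and $C_2$ are nontrivial contexts. $C_2$ is nontrivial because otherwise $g=\square_h$ would be a single default hole equal to its own referent, contradicting the non-triviality of $x0\to x01$.
\end{itemize}

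The main obstacle is that the copied subtrees must remain valid decompositions, and their forest labels must match exactly, not merely up to $\eqUnr$. The labels of $\tau[x01]$ mention $\square_h$, and in $\tau'$ that hole must now reference $x11$ rather than $x1$. This is only a relabelling of the reference structure, since the forests themselves are unchanged. Likewise, subtrees below $x00$ contained $\square_g$, which is literally the same symbol as $\square_{C_2h}$. I would handle this by proving a small transplanting lemma: copying a subtree to a new address preserves all local decomposition conditions, and references are updated by the address map $x1\mapsto x11$, $x01\mapsto x10$, $x00\mapsto x0$. Once that lemma is available, the remaining checks are the mechanical equalities above.
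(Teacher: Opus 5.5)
Your construction of $\tau'$ is exactly the paper's. It is the five-node tree with $\tau'.\ctx(x1)=\tau.\ctx(x01)$ and $\tau'.\ctx(x11)=\clnk(\tau,x01,x1)$ plugged in at $x$, with $\tau[x00],\tau[x01],\tau[x1]$ transplanted to $x0,x10,x11$. Your justification of $D\cdot C_2=C$ is the paper's argument in other words: there, $\spos(C)=\spos(D)\spa u$ is read off \cref{def:decomp references}, and \cref{lem:equal contexts characterization} is applied with the witness $\square_h$. Your local checks are sound, and in places more explicit than the paper's, e.g.\ $\square_g=\square_{C_2h}$ via $\unr$. (The non-triviality condition that rules out $C_2=\square$ is the one on the edge $x0\to x00$, not $x0\to x01$.)

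For the statement as literally worded, with plain binary decompositions, this already suffices. The obstacle you flag is not actually one there. The conditions of \cref{def:bin decomp} are local and references are derived data, so a transplanted subtree only needs its root forest to match, which you checked.

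What is missing is what this informal lemma stands for and what all its applications use. In \cref{lem:rotation BT to TB}, both $\tau$ and $\tau'$ lie in $\pd(\wHole V,B)$; for instance, $\ns$ only ranges over partial decompositions. Establishing $\tau'\in\pd(\wHole V,B)$ is where nearly all of the paper's proof effort goes:
\begin{itemize}
\item One takes a full extension $\tau_1$ of $\tau$ and rotates it into $\tau_2$.
\item One then shows that every linking context of $\tau_2$ equals a linking context of $\tau_1$, hence lies in $\wHole V$.
\item The key input is \cref{lem:sequential factor map}. It shows the ancestor embeddings of the three leaves into $x$ are preserved under the leaf bijection, with genuine width bookkeeping when $\spos(C_2)\in\bbN$. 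Consequently, references from inside the subtree of $x$ to nodes outside it keep their addresses.
\item Together with \cref{lem:ancestor embedding respects holes and references,lem:ancestor embedding not affected by ancestor,lem:ancestor embedding not affected by incomparable}, this yields a case analysis over reference sources and targets.
\end{itemize}

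Your transplanting lemma would have to contain all of this. Moreover, the retargeting it proposes is not the right one. References from $\tau[x00]$ into $x01$ become references from $\tau'[x0]$ into $x1$, not into $x10$, because the hole $\square_g=\square_{C_2h}$ in $\tau'.\frst(x0)$ is now created by the pluck at $x$. Only the target $x1$ moves to $x11$.
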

\begin{lemma}[TB to BT -- informal]
    \label{lem:abs:TB to BT}
    Consider a binary decomposition $\tau$ and nodes $x,x0,x1,x11,x10$ in a TB formation. Then there exists a binary decomposition $\equpto{\tau'}{x}{\tau}$ such that:
    \begin{itemize}
        \item $\tau'.\ctx(x1)=\tau.\ctx(x1)\cdot \tau.\ctx(x11)$
        and $\tau'.\ctx(x01)=\tau.\ctx(x1)$
        \item $\tau'[x00]= \tau[x0]$, $\tau'[x01]=\tau[x10]$, and $\tau'[x1]=\tau[x11]$.
    \end{itemize}
\end{lemma}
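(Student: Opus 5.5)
We prove the TB-to-BT lemma by an explicit construction. We rebuild the subtree of $\tau$ rooted at $x$ in BT shape and keep everything outside that subtree unchanged. Write $h=\tau.\frst(x11)$, $C_1=\tau.\ctx(x1)$ and $C_2=\tau.\ctx(x11)$. The TB formation gives
\[\tau.\frst(x)=C_1\cdot \tau.\frst(x1),\quad \tau.\frst(x1)=C_2\cdot h,\quad \tau.\frst(x0)=C_1\cdot\square_{C_2h},\quad \tau.\frst(x10)=C_2\cdot\square_h.\]

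\textbf{Construction.} Define $\tau'$ at the new nodes as follows.
\begin{itemize}
\item Set $\tau'.\frst(x)=\tau.\frst(x)$, so the subtree is a valid replacement.
\item Set $\tau'.\ctx(x1)=C_1C_2$ and $\tau'.\frst(x1)=h$. Then $C_1C_2h=\tau.\frst(x)$.
\item Set $\tau'.\frst(x0)=C_1C_2\square_h$.
\item Set $\tau'.\ctx(x01)=C_1$ and $\tau'.\frst(x01)=C_2\square_h=\tau.\frst(x10)$.
\item Set $\tau'.\frst(x00)=C_1\square_{C_2\square_h}$. Since default holes are indexed through $\unr$, we have $\square_{C_2\square_h}=\square_{C_2h}$. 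Hence $\tau'.\frst(x00)=\tau.\frst(x0)$.
\end{itemize}
Below these nodes we graft the old subtrees: $\tau'[x00]\coloneqq\tau[x0]$, $\tau'[x01]\coloneqq\tau[x10]$ and $\tau'[x1]\coloneqq\tau[x11]$. The context equalities required by the lemma then hold by construction.

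\textbf{Verifying the binary-node conditions.} At $x$ and at $x0$ the decomposition equations are the displayed identities above. Non-triviality follows from the non-triviality of the original plucks. For example, $C_1C_2h\neq h$ because $C_1C_2$ is a nontrivial composition; if it were trivial, then $C_1$ or $C_2$ would be trivial, contradicting $\tau.\frst(x)\neq\tau.\frst(x1)$ or $\tau.\frst(x1)\neq\tau.\frst(x11)$. The grafted subtrees are valid decompositions of forests that are literally equal to the new node labels, so they remain valid. The only identification used is the unravel identification of hole labels at $x00$.

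\textbf{Main obstacle: references and $\equpto{\tau'}{x}{\tau}$.} The difficult step is showing that the reference structure is well defined after grafting. The default hole $\square_{C_2h}$ in $\tau.\frst(x0)$ referenced $x1$ in $\tau$. In $\tau'$, the same hole label at $x00$ must be re-resolved to reference $x01$. That node holds $C_2\square_h$, whose own hole in turn references $x1$. Consequently, every reference from inside the grafted subtree $\tau[x0]$ to $x1$ must be redirected to $x01$. The corresponding linking contexts do not change, because $C_2$ is now carried by the chain $x01\to x1$. I would handle this by tracking, via the formal definition of references (\cref{def:decomp references}), how each reference from a node under $x0$ is determined by the path of plucks, and showing that the path through $x$ and then $x01$ reproduces the same address. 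In the same way I would check that references out of the subtree of $x$, which point to ancestors of $x$, are untouched. Nodes outside the subtree of $x$ see only $\tau.\frst(x)$, which is unchanged, so $\equpto{\tau'}{x}{\tau}$ holds.
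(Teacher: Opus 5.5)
Your construction is the same rotated five-node tree the paper uses (its proof is stated as the dual of \cref{lem:rotation BT to TB}), and since this informal statement only asks for a binary decomposition, your local check of \cref{def:bin decomp} together with $\square_{C_2\square_h}=\square_{C_2h}$ already completes the proof. The reference bookkeeping you flag as the main obstacle is not needed here, since references are derived from the tree and need no proof of well-definedness; it is needed only in the formal version, where $\tau'$ must remain partial (i.e.\ stable), and there the paper does what you sketch: it extends $\tau$ to a full decomposition, uses the ancestor-embedding equivalence of \cref{lem:sequential factor map} to check stability, and then prunes.
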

In \cref{sec:parallel rotation} we consider an additional type of rotation called \emph{parallel rotation} and abbreviated RL to LR, which deals with two subtrees that are not nested. 


\subsection{Next Contexts and Monotonicity (\cref{sec:dec lemmas})}
\label{sec:abs:next contexts}
Equipped with the framework of rotations and stability, we can finally devise a notion of monotonicity that takes us a step closer to the proof. Here we present an abridged version, see \cref{sec:dec lemmas} for the complete picture.

Consider a partial decomposition $\tau\in \pd(\wHole{V},B)$ and a node $x$. We define the \emph{Next Contexts} of $x$ to be the set of contexts that can be used to factor $x$ such that the extended decomposition can still be completed to a full decomposition over $\wHole{V}$ and $B$ (\cref{def:next contexts}). 
\[
\ns(\tau,\wHole V, B, x)= \{\tau'.\ctx(x\cdot1): \equpto{\tau}{x}{\tau'},\ \tau'\in \pd(\wHole V, B),\ \tau'.\type(x)=\B\}
\]
When clear from context, we write $\ns(\tau,x)$ for that set.
With these notations, monotonicity is captured as follows (see \cref{cor:decomp semantically decrease}).
\begin{lemma}
    \label{lem:abs:decomp semantically decrease}
        Consider a morphism $\varphi\colon\wHole{V}\to V$ for some semigroup $V$.
    \begin{enumerate}
        \item $\varphi[\ns(\tau, x0)] \subseteq \varphi[\ns(\tau, x)]$
        \item $\varphi(\tau.\ctx(x1))\cdot \varphi[\ns(\tau, x1)] \subseteq \varphi[\ns(\tau, x)]$
    \end{enumerate}
\end{lemma}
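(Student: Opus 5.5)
We prove both inclusions the same way: start from a context in the smaller next-context set, witnessed by some decomposition, and transport that decomposition into one witnessing membership in $\ns(\tau,x)$. Rotations (\cref{lem:abs:BT to TB,lem:abs:TB to BT}) do the rearranging, and stability is preserved because the rotated decompositions have no inherited references (\cref{lem:abs:no inherited decomposition is stable}).

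\emph{Item 1.} Let $C\in\ns(\tau,x0)$ be witnessed by $\equpto{\tau_0}{x0}{\tau}$ with $\tau_0\in\pd(\wHole V,B)$, $\tau_0.\type(x0)=\B$ and $\tau_0.\ctx(x01)=C$. If $C$ plucks a subforest disjoint from the default hole $\square_{\tau.\frst(x1)}$, the nodes $x,x0,x1,x00,x01$ are in the configuration handled by the parallel rotation (RL to LR). That rotation moves the factor of $x01$ up to $x1$, and the new context at $x1$ is unravel-equivalent to $C$. If instead the plucked subforest contains the default hole, the nodes are in BT formation, and \cref{lem:abs:BT to TB} gives $\tau'$ with $\tau'.\ctx(x1)=\tau_0.\ctx(x01)=C$. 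In both cases $\tau'$ is again partial: it has the same leaves, so it extends to a full decomposition over the same basis $B$, and its linking contexts are compositions of contexts already present, so it stays stable over $\wHole V$. Hence $C\in\ns(\tau,x)$ up to $\eqUnr$, and stability of $\wHole V$ together with $\varphi(f)=\varphi(\unr(f))$ gives $\varphi(C)\in\varphi[\ns(\tau,x)]$.

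\emph{Item 2.} Let $D\in\ns(\tau,x1)$ be witnessed by $\tau_1$ with $\tau_1.\ctx(x11)=D$. Then $x,x0,x1,x10,x11$ are in TB formation, and \cref{lem:abs:TB to BT} yields $\tau'$ with $\tau'.\ctx(x1)=\tau.\ctx(x1)\cdot D$. The leaves are unchanged and the new configuration again has no inherited references, so $\tau'\in\pd(\wHole V,B)$. Since $\varphi$ is a morphism, $\varphi(\tau.\ctx(x1)\cdot D)=\varphi(\tau.\ctx(x1))\varphi(D)$, which gives the inclusion.

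\textbf{Main obstacle.} The hard step is showing that each rotated tree is genuinely partial over $(\wHole V,B)$. A rotation can create or redirect references elsewhere in the subtree below $x$, for example references from deeper descendants of $x01$ or $x1$ into the moved default holes. I would handle this by checking that every new linking context is unravel-equivalent to a composition of linking contexts of the original completion, and are therefore in the stable set $\wHole V$. This means tracking references through the rotation, for which I expect to rely on the formal reference bookkeeping (\cref{def:decomp references}).
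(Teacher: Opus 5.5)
Your route is the paper's route. In Item~1 you split on whether the reference $x0@*\to x1$ passes to $x01$ or to $x00$. In the first (BT) case you rotate BT to TB, so $\tau'.\ctx(x1)=C$. In the second you use the parallel rotation, so $\tau'.\ctx(x1)=C[u\mapsto\tau.\frst(x1)]\eqUnr C$. This split is exhaustive, since the reference is inherited by exactly one of the two children. In Item~2 you rotate TB to BT so that $x1$ carries $\tau.\ctx(x1)\cdot D$. Both inclusions then follow because $\varphi$ factors through $\unr$ and is multiplicative. That part is fine.

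What is wrong is your justification that the rotated tree is still in $\pd(\wHole V,B)$.

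\begin{itemize}
\item The claim that the rotated trees have no inherited references is false. In Item~2 the TB-to-BT rotation creates one by design. With $h=\tau.\frst(x11)$, the new residue is $\tau'.\frst(x0)=\tau.\ctx(x1)\cdot D\cdot\square_h$, and $x01$ plucks $D\cdot\square_h$. That forest contains $\square_h$, so $\tau'.x01@*\to x1$ holds, with $\clnk(\tau',x01,x1)=D$.
\item Moreover, $\tau$ is an arbitrary partial decomposition, not a minimal one. So $\tau$ and its completion may contain inherited references anywhere, and \cref{lem:abs:no inherited decomposition is stable} cannot be invoked.
\item Your Item~1 remark that the new linking contexts are ``compositions of contexts already present'' also fails. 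After BT to TB, $\tau'.\ctx(x11)=\clnk(\tau,x01,x1)$. This is only a right factor of $\tau.\ctx(x1)=\tau.\ctx(x01)\cdot\clnk(\tau,x01,x1)$. It lies in $\wHole V$ only because the original tree is stable, and a subsemigroup is not closed under taking factors.
\end{itemize}

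The repair is the one you outline under ``Main obstacle''. It is exactly what the paper packages into the appendix versions of the rotations (\cref{lem:rotation BT to TB,lem:rotation TB to BT,lem:rotation LR to RL}), whose conclusion is that the rotated tree is a \emph{partial} decomposition over $(\wHole V,B)$. Their proofs proceed as follows:
\begin{itemize}
\item Fix a full completion $\tau_1\in\fd(\wHole V,B)$ of $\tau$.
\item Graft the subtrees of $\tau_1$ under the rotated five-node skeleton along the leaf bijection $\pi$. This bijection preserves ancestor embeddings into $x$, by \cref{lem:sequential rotation AncEmb equivalence,lem:parallel factor map}.
\item Using \cref{lem:ancestor embedding respects holes and references}, match each reference of the grafted tree with a reference of $\tau_1$, so that its linking context lies in $\wHole V$.
\end{itemize}
References out of the newly created internal node of the skeleton, such as $x1$ after BT to TB, need exactly your allowance for compositions and unravel-equivalence.

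Invoke the rotations in this partial-decomposition form and drop the appeal to the absence of inherited references. Your two items then go through as written, and coincide with the paper's proof.
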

\begin{proof}[Proof Sketch]
  Consider a node $x$ and some $C\in \ns(\tau,x0)$. Ideally, we would want to show a stronger claim than the statement, namely that 
  $C\in \ns(\tau,x)$. We demonstrate such a case.
  By definition, we can extend $\tau$ to $\tau'$ so that
  $\tau'.\ctx(x01)=C$. We then split our reasoning according to which type of
  local structure (among BT, TB and LR)
  is below $x$.  For example, if the
  structure is BT, then by rotating $\tau'$ to a TB structure $\tau''$ as per
  \cref{lem:abs:BT to TB} we have $\tau''.\ctx(x1)=\tau'.\ctx(x01)=C$ (see the
  context $C_1$ in \cref{fig:abs:factorizations for TB BT rotation
    lemma}). Since $\equpto{\tau''}{x}{\tau}$, we have
  $C\in \varphi[\ns(\tau, x)]$, and in particular $\varphi(C)\in \varphi[\ns(\tau, x)]$.

    For the rotation LR to RL, we cannot actually ``move'' the same context $C$ from $x01$ to $x1$, but rather obtain at $x1$ a different context $C'$ such that $C'\eqUnr C$. But since $\varphi$ is oblivious to unraveling, the result still follows.
\end{proof}

A particularly useful application of rotations is to obtain stable decompositions. To do 
so, we define a canonical way for decomposing a node, based on an order on contexts. 
For two contexts $C_1,C_2\in \wHole{V}$, write $C_1\cleq C_2$ if $C_2=C_1\cdot C$ for some $C$, i.e., if $C_1$ is a ``prefix'' of $C$ (see \cref{def:context prefix order}).
Then, we say that a partial decomposition $\tau\in \pd(\wHole{V},B)$ is \emph{minimal below node $x$} if for every descendant $y$ of $x$ we have that $\tau.\ctx(y1)$ is $\cleq$-minimal among $\ns(\tau,y)$.
In \cref{lem:min full dec} we show that every partial decomposition can be extended below any node to a minimal decomposition.

Using rotations, we can show that a minimal rotation has a simple structure, in the following sense.
\begin{lemma}[Minimal Decompositions have No Inherited References]
\label{lem:abs:min dec no inherited}
Consider a decomposition $\tau$ that is minimal below $x$, then $\tau$ has no inherited references in the subtree of $x$.
In particular, $\tau$ is stable below $x$ (by \cref{lem:abs:no inherited decomposition is stable}).
\end{lemma}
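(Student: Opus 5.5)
We argue by contradiction, using the BT-to-TB rotation (\cref{lem:abs:BT to TB}) to produce a $\cleq$-smaller next context wherever an inherited reference occurs.

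Suppose $\tau$ is minimal below $x$, yet some node $z1$ in the subtree of $x$ has an inherited reference $z1@u\to y$. Among all such references, we pick one whose plucking is \emph{closest}: the node $y$ was plucked at some node $w1$ with $w$ a proper ancestor of $z$ (or $w$ a descendant of $x$), and $z1$ is a right child reached from $w0$ by a path of residue/factor steps. First we reduce to the local configuration in which the reference arises: $y=w1$, the default hole $\square_{\tau.\frst(w1)}$ travels down the residue child $w0$, and at some descendant $z$ of $w0$ the factor plucked to $z1$ contains this default hole. By choosing $z$ to be the first node below $w0$ whose factor contains the hole, every intermediate node on the path from $w0$ to $z$ sends the hole along residue edges. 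This reduction is largely bookkeeping with the reference definition.

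In the simplest case, $z=w0$, the five nodes $w,w0,w1,w00,w01$ form exactly a BT formation. Here $\tau.\ctx(w1)$ is the context plucking the smaller subforest $h$, whereas $\tau.\ctx(w01)=C_1$ plucks the larger forest $C_2\square_h$. Applying \cref{lem:abs:BT to TB} yields a decomposition $\tau'$ with $\equpto{\tau'}{w}{\tau}$ and $\tau'.\ctx(w1)=\tau.\ctx(w01)=C_1$, while $\tau.\ctx(w1)\eqUnr C_1C_2$. To conclude that $\tau'$ witnesses $C_1\in\ns(\tau,w)$, we must check that $\tau'$ is still partial over $\wHole V,B$: its leaves are the same as those of $\tau$, and completions of $\tau$ below these leaves transfer to $\tau'$. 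Since $C_2$ is nontrivial (factors are nontrivial), $C_1\cleq C_1C_2$ strictly, and by stability of $\wHole V$ under $\eqUnr$ this contradicts the $\cleq$-minimality of $\tau.\ctx(w1)$.

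The general case, where $z$ lies deeper below $w0$, is the main obstacle. Here I would first use rotations to bring $z$ up to $w0$ without changing $\tau.\ctx(w1)$. The intermediate residue steps pluck subforests disjoint from the hole's ancestors (LR-type configurations) or nested ones (TB/BT). I would induct on the distance from $w0$ to $z$, at each step applying either the parallel rotation of \cref{sec:parallel rotation} or a sequential rotation so that the node whose factor contains the hole moves one level up, all while staying inside $\pd(\wHole V,B)$. Alternatively, and perhaps more cleanly, I would show directly that the context $\clnk(\tau,z1,w1)$ composed appropriately gives an element of $\ns(\tau,w)$ strictly $\cleq$-below $\tau.\ctx(w1)$. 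In either route, minimality at $w$ (a descendant of $x$) is violated. The final clause, stability below $x$, then follows immediately from \cref{lem:abs:no inherited decomposition is stable}.
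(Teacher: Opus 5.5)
Your proof is correct and shares the paper's reduction and base case, but it handles the general case differently.

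\textbf{The paper's route.} The paper (\cref{lem:min factorization dec no references}) also reduces to an inherited reference $x0^n1@*\to x1$ and uses BT to TB when $n=1$. For $n>1$ it argues by a minimal counterexample on $n$, ranging over all decompositions minimal below $x$. A BT configuration at $x0^{n-1}$ is dispatched by the $n=1$ case applied there. In the LR configuration it rotates LR to RL at $x0^{n-1}$. The rotated tree need not be minimal, so the paper re-minimizes below $x0^{n-1}$. It then argues that the new $\cleq$-smaller factor contains the rotated one, so a minimal tree with an inherited reference at distance $n-1$ exists.

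\textbf{Your route.} You use minimality only at $w$. The rotations at $w0^{k-1},\dots,w0$ (LR to RL in the parallel configuration, BT to TB in the nested one) leave $\tau.\ctx(w1)$ and everything outside the subtree of $w$ untouched, hence leave $\ns(\tau,w)$ unchanged. A final BT to TB at $w$ then exhibits a strictly $\cleq$-smaller element of $\ns(\tau,w)$.

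\textbf{What each buys.} Your route dispenses with the re-minimization step and the global choice of counterexample. The paper's route, in exchange, only ever rotates minimal trees and needs only the parallel rotation in its inductive step. Your chain also works for minimality relative to a stable factorization set $S$, which is the version the paper proves and later uses. Each rotated factor is componentwise $\eqUnr$ to a factor of the original $\tau$, and that factor lies in $S$.

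\textbf{Points to tighten.}
\begin{itemize}
\item In the base case you need the exact identity $\tau.\ctx(w1)=\tau.\ctx(w01)\cdot\clnk(\tau,w01,w1)$ from \cref{lem:rotation BT to TB}, not merely $\eqUnr$. The relation $\cleq$ is syntactic, so $\tau.\ctx(w1)\eqUnr C_1C_2$ would not give $C_1\cleq\tau.\ctx(w1)$. Stability of $\wHole V$ plays no role at this step.
\item You must verify that each rotation lifts the reference to $w1$ by exactly one level. This follows from the ancestor-embedding equivalences of \cref{lem:sequential factor map,lem:parallel factor map} together with item~3 of \cref{lem:ancestor embedding respects holes and references}, as in the paper's $n>1$ case.
\item State outright that $w$ lies in the subtree of $x$. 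The lemma concerns references whose target is plucked below $x$; for targets plucked above $x$ it is false.
\item Your ``direct'' alternative via $\clnk(\tau,z1,w1)$ is not a separate argument. Showing that such a context lies in $\ns(\tau,w)$ is exactly what the rotation chain accomplishes.
\end{itemize}
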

\begin{proof}[Proof Sketch]
The complete proof is involved (see \cref{lem:min factorization dec no references}). However, as ``supporting evidence'', observe that the BT to TB rotation in \cref{fig:abs:factorizations for TB BT rotation lemma} eliminates the inherited reference $01@*\to 1$, and replaces the context $C_1C_2$ with the smaller $C_1$, progressing towards minimality.
\end{proof}

This completes our set of tools for binary decompositions. Note that for the setting of words, all the tools developed thus far are either irrelevant or follow trivially from the definition of concatenation and morphism (in particular, \cref{lem:abs:decomp semantically decrease}).

\section{General Decompositions}
\label{sec:abs:general decompositions}
In this section we introduce our notion of general decompositions. As in the setting of words, the main addition on top of binary decompositions are $\I$dempotent nodes: nodes with unbounded arity that encapsulate a decomposition where all the contexts map to a single idempotent.
However, for technical convenience we also introduce a new type of nodes called $\C$entipedes. 
Intuitively, centipedes encapsulate a decomposition of several non-nested subforests. More precisely, a centipede is a binary decomposition whose nodes are in $0^*(\epsilon+1)$, and it does not have inherited references (see \cref{fig:centipede node} for an illustration). We remark that centipede nodes can also be introduced in the setting of words. We discuss this in \cref{sec:abs:discussion}.

We start with the main definition (\cref{def:general decomposition}).
\begin{definition}[General Decomposition -- informal]
\label{def:abs:general decomp}
A \emph{general decomposition} with respect to a morphism $\varphi\colon\wHole{V}\to V$ is a
labeled tree $\tau$ where each node $x$ has a type $\tau.\type(x) \in \{\lf,\B,\I,\C\}$
and is labeled with a forest $\tau.\frst(x)\in \tilH$ and context $\tau.\ctx(x)\in
\wHole{V}$, and the following conditions hold.
\begin{itemize}
    \item $\lf$ and $\B$ nodes follow the conditions of binary decompositions.
    \item An $\I$-node $x$ has a set of children $S=\{x0,\ldots, xm\}$, such that there is an idempotent element $v\in V$ and a full decomposition $\tau'\in \fd(\varphi^{-1}[v],\tau.\frst[S])$ of $\tau.\frst(x)$ whose leaves bijectively match the children of $x$ in $\tau$.
    \item A $\C$-node $x$ has a set of children $S=\{x0,\ldots, xm\}$, such that there is a centipede decomposition $\tau'$ of $\tau.\frst(x)$ whose leaves bijectively match the children of $x$ in $\tau$.
\end{itemize}
$\tau$ is \emph{full} with respect to a basis $B\subseteq \tilH$ if the forests at  its leaves are all in $B$. We denote the set of full general decompositions of a forest $f$ by $\dec(\wHole{V},B,f,\varphi)$.
\end{definition}
Note that the requirement for $\I$-nodes implies that $\tau'$ is a stable
decomposition. Apart from facilitating proofs, this requirement makes intuitive
sense: an $\I$-node captures an idempotent ``zone'' in the decomposition. In
particular, we expect that any relationships between any two related nodes in
this decomposition also map to the idempotent $v$, not just immediate
contexts. This is captured by stability (\cref{def:abs:stable decomp}),
which requires that all the \emph{linking} contexts also map to $v$.

\begin{example}[TBF Full General Decomposition]
\label{xmp:abs:TBF full general}
In \cref{fig:abs:TBF_full} we present a full general decomposition of a forest in the TBF setting. The binary decomposition in \cref{fig:abs:TBF_I_decomposition} is over an idempotent element, and thus we fold it to an $\I$-node at the root. Notice that the leaves $0,1,2,3,4$ of \cref{fig:abs:TBF_full} bijectively match the leaves $00,010,0111,0110,1$ of \cref{fig:abs:TBF_I_decomposition}. 

The remainder of the composition uses binary nodes. Note that the leaves of the decomposition cannot be decomposed any further. Indeed, they are in the \emph{standard basis}, defined shortly.
\end{example}

The basis $B$ of the decomposition is used for inductive arguments. However, some special attention is given to the \emph{standard basis} $B_A\subseteq \tilH$ given by 
$\bs = A\cup \{\sigma\square_h:\sigma\in A,h\in \hf\}\cup \{\square_{h_1}+\square_{h_2}:h_1,h_2 \in \hf\}$.
Note that the elements in $\bs$ are indeed those that cannot be decomposed further. 

We can now define decomposition bounds.\footnote{The precise definition is more nuanced, see \cref{def:semigroup decomp bound}.} Let $S$ be a semigroup and $\varphi\colon\wHole{V}\to S$ a morphism. 
For a basis $B$ we define 
$\norm{\varphi, B}\coloneqq \sup_{f\in \tilH} \min_{\tau\in \dec(\wHole{V},B,f,\varphi)}\depth(\tau)$ 
and by extension $\norm{\varphi}=\sup_{B}\norm{\varphi,B}$.

As mentioned above, $\C$-nodes are a technical convenience. Indeed, the following result shows that we can eliminate them, at the cost of increasing the depth of a decomposition exponentially. Note that for our main result, this depth is constant, and therefore this exponential is still a constant.
\begin{lemma}[See \cref{lem:elimination of C nodes}]
\label{lem:abs:elimination of C nodes}
For every $\tau\in \dec(\wHole{V},\bs,f,\varphi)$ there exists $\tau'\in
\dec(\wHole{V},\bs,f,\varphi)$ with no centipede nodes, and such that $\depth(\tau')\leq
2^{\depth(\tau)+1}-2$.
\end{lemma}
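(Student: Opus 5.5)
The plan is to eliminate centipede nodes one at a time, bottom-up, by induction on the depth of $\tau$. We replace each $\C$-node by a balanced tree of binary nodes. The leaves of that tree will be the (recursively transformed) children of the centipede.

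The key local fact concerns a centipede $\C$-node $x$ with children $x0,\ldots,xm$, realized by a centipede decomposition $\tau_x$. Since $\tau_x$ has no inherited references, its leaves are pairwise non-nested subforests. Write the centipede's leaves as the factors $h_1,\ldots,h_k$ plucked along the $0^*1$ spine, together with a final residue $r$ at the end of the $0^*$ spine, so $k+1=m+1$. We re-bracket this sequence of independent pluckings into a balanced binary decomposition of depth $\lceil\log_2(m+1)\rceil$. Concretely, we pluck a forest that contains roughly half of the $h_i$ as default holes, and continue recursively on both the factor and the residue. Because the plucked pieces are non-nested, this can be realized using the parallel (LR/RL) rotation of \cref{sec:parallel rotation} together with the sequential rotations \cref{lem:abs:BT to TB,lem:abs:TB to BT}. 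These preserve the multiset of leaves up to $\eqUnr$, and $\varphi$ is oblivious to unravelling.

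I expect this re-bracketing to be the main obstacle. Regrouping non-nested factors may force a plucked forest that is not itself one of the original leaves. We must ensure that each binary node still plucks a nontrivial forest, and that the leaves of the new subtree match the original children exactly, not merely up to $\eqUnr$. The cleanest fallback avoids balancing altogether: simply unfold the centipede as its own $0^*(\epsilon+1)$ binary tree. This is already a valid binary decomposition, with $\B$ nodes and no stability constraint needed outside $\I$-nodes. Its depth is at most $m$, the number of children, which is unbounded, so balancing is genuinely needed. I would therefore balance via the rotations and verify the leaf correspondence.

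We then need a depth accounting that yields the stated bound; the exponential suggests the following argument. Let $D(d)$ bound the depth of the transformed tree obtained from a tree of depth $d$. Replacing a single $\C$-node at height $d$ costs at most a factor-of-two blow-up in height relative to the transformed children. This is because the balanced replacement tree's depth is bounded by the depth of its transformed subtrees plus a matching amount, once we charge the log of the arity against the heights of the children that are in turn flattened. This gives the recurrence $D(d)\le 2D(d-1)+2$ with $D(0)=0$, which solves to $D(d)=2^{d+1}-2$.

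If a cleaner route is needed, I would instead prove the bound by a weighted count. A tree of depth $d$ has at most $2^{d+1}-2$ edges along any root-to-leaf path after the binary nodes introduced for each centipede are laid along that path, each path through a $\C$-node contributing at most twice the remaining depth. $\I$-nodes, $\B$-nodes and leaves are copied unchanged, with their children replaced by transformed subtrees. $\I$-nodes remain valid because their defining full decompositions $\tau'$ refer only to the forests at the children, and those forests are unchanged by the transformation.
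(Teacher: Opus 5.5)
There is a genuine gap, and it lies in the depth accounting, not only in the re-bracketing step you flag. Any scheme that replaces a $\C$-node by a tree of binary nodes whose leaves are the $\C$-node's (transformed) children has depth at least about $\log_2(m+1)$. The arity $m$ is not controlled by $\depth(\tau)$, because $\I$-nodes have unbounded arity and can absorb all the work of decomposing the index.

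Here is a concrete example. Take $\varphi$ onto the trivial semigroup $\{e\}$ and $f=a+a+\cdots+a$ ($k$ copies). A root $\C$-node plucks the $a$'s one at a time; the legs are leaves $a\in\bs$, so there are no inherited references. This leaves the index $\square_a+\cdots+\square_a$. A single $\I$-node decomposes that index fully into $\bs$, for instance by repeatedly plucking all but the first hole with $\square_a+\square$. The result is a decomposition of depth $2$, so the target bound is $6$, yet your balanced replacement has depth roughly $\log_2 k$. Your recurrence $D(d)\le 2D(d-1)+2$ therefore does not follow from your construction: there is nothing to ``charge the log of the arity against''.

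Balancing also fails on its own terms. Grouping non-adjacent legs into one plucked forest must cut the index forest. The index child then no longer appears as a leaf, so you cannot plug its transformed subtree back in.

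The missing idea is to let the index child's own subtree do the organizing. Because the centipede has no inherited references, the index forest contains a default hole $\square_{h_i}$ for each leg $h_i$. Replacing these holes by the $h_i$ turns it back into $\tau.\frst(x)$. Perform this replacement throughout the (recursively $\C$-free) decomposition of the index child. All contexts change only up to $\eqUnr$, so $\I$-nodes remain valid. What you get is a decomposition of $\tau.\frst(x)$ itself, of depth at most $D(n-1)$ where $n=\depth(\tau[x])$.

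Its leaves are standard-basis elements in which at most two holes $\square_{h_i}$ have been replaced by $h_i$; this is exactly where the choice of $\bs$ is essential, and your proposal never uses it. At most two binary nodes then pluck those $h_i$ back out. Finally, you plug in the recursively transformed subtree of the corresponding leg. This gives $D(n)\le D(n-1)+2+D(n-1)$ with no dependence on the arity of the $\C$-node.
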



\section{Bounded Decompositions}
\label{sec:abs:bounded decomp}
As mentioned in \cref{sec:abs:intro}, our bounded-decomposition theorem holds under a semantic condition, which we now present.

\subsection{$\mr$-Alignment}
\label{sec:abs:mr alignment}
\begin{definition}[$\mr$-aligned Forests and Morphisms]
\label{def:abs:unamb property}
Consider a semigroup $S$, a stable context set $\wHole{V}$ and a morphism $\varphi\colon\wHole{V}\rightarrow S$.
A forest $f\in \tilH$ is \emph{$\mr$-aligned} over $\varphi$ if
for every $f_1,f_2\in\tilH$ and $C_1\neq C_2\in \wHole{V}$ such that $f=C_1(f_1)=C_2(f_2)$ and $\varphi(C_1)\mj \varphi(C_2)$ there exists $r\in S$ such that $r\mj \varphi(C_1)$ and $\varphi(C_2)=\varphi(C_1)\cdot r$. 

A morphism $\varphi$ is said to be \emph{$\mr$-aligned} if all forests $f\in \tilH$ are
$\mr$-aligned over \(\varphi\).
\end{definition}
Note that since the requirement in the definition holds for every $C_1,C_2$, it is in fact symmetric: there is also $r'\mj \varphi(C_1)$ such that $\varphi(C_1)=\varphi(C_2)\cdot r'$. 

Intuitively, the definition states the following: suppose we decompose $f$ in two different ways (by plucking out different sub-forests), such that the remaining contexts are $\mj$-equivalent, then not only are these contexts also $\mr$-equivalent, but they are $\mr$-equivalent using multiplicands that are within the same $\mj$-class (hence the term $\mr$-aligned). 

The requirement that $r\mj \varphi(C_1)$ allows us to show that $\mr$-alignment survives through taking certain quotients and subsemigroups, as follows.
\begin{lemma}[\cref{lem:rees unambiguous,lem:min ideal unambiguous}]
\label{lem:abs:R aligned quotient subsemigroup}
    Consider an $\mr$-aligned morphism $\varphi\colon\wHole{V}\to S$ and an ideal $I\subseteq S$, then the following hold.
    \begin{itemize}
        \item Let $\quo_I\colon S\to S/I$ be the quotient morphism, then $\quo_I\circ \varphi\colon\wHole{V}\to S/I$ is $\mr$-aligned.
        \item If $I$ is a $\zeroel$-minimal ideal\footnote{See \cref{sec:green prelim} for the definition. Here it can be taken as ``some condition on $I$''.}, let $\wHole{V'}=\varphi^{-1}[I]$  then $\varphi|_{\wHole{V'}}\colon\wHole{V'}\to I$ is $\mr$-aligned.
    \end{itemize}
\end{lemma}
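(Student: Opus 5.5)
The plan is to verify the two items separately. In both, the starting point is the $\mr$-alignment of $\varphi$ over $S$, applied to the same decomposition $f=C_1(f_1)=C_2(f_2)$. What has to be checked each time is that the $\mj$-relation in the new semigroup ($S/I$, \resp $I$) lines up with the $\mj$-relation in $S$. This holds both for the hypothesis we feed into alignment and for the element $r$ we get back. For the second item we must also check that $\wHole{V'}=\varphi^{-1}[I]$ is a stable context set. It is a subsemigroup because $I$ is an ideal, so it is closed under products. It is closed under $\eqUnr$ because $\varphi$ is defined through $\unr$.

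\textbf{Rees quotient.} Write $\psi=\quo_I\circ\varphi$ and suppose $\psi(C_1)\mj\psi(C_2)$ in $S/I$.

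If both images are $\zeroel$, take $r=\zeroel$. Then $\psi(C_2)=\psi(C_1)\cdot r$ and $r\mj\psi(C_1)$.

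Otherwise both images are nonzero, since $\{\zeroel\}$ is its own $\mj$-class in $S/I$. So $\varphi(C_1),\varphi(C_2)\notin I$. The key fact is that for $x,y\notin I$, $x\le_\mj y$ in $S/I$ iff $x\le_\mj y$ in $S$. The forward direction holds because a nonzero product in $S/I$ is computed as in $S$. The backward direction holds because if $x=syt$ in $S$ with $x\notin I$, then $s,t$ (when not the adjoined unit) lie outside $I$, so the same equation holds in $S/I$. Hence $\varphi(C_1)\mj\varphi(C_2)$ in $S$, and alignment of $\varphi$ gives $r\mj\varphi(C_1)$ with $\varphi(C_2)=\varphi(C_1)r$. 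Since $I$ is downward closed in $\le_\mj$ and $\varphi(C_1)\notin I$, we get $r\notin I$. Applying $\quo_I$ and the key fact again yields $\psi(C_2)=\psi(C_1)\quo_I(r)$ with $\quo_I(r)\mj\psi(C_1)$ in $S/I$.

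\textbf{$\zeroel$-minimal ideal.} I would use the standard facts (\cref{sec:green prelim}) that if $I$ is a $\zeroel$-minimal ideal, then $I\setminus\{\zeroel\}$ is a single $\mj$-class of $S$, and either $I^2=\{\zeroel\}$ (null) or $I$ is \zsimple. Suppose $C_1,C_2\in\wHole{V'}$ and $\varphi(C_1)\mj\varphi(C_2)$ computed in $I$. Since $I^1\subseteq S^1$, this implies the same relation in $S$, so alignment of $\varphi$ gives $r\in S$ with $r\mj_S\varphi(C_1)$ and $\varphi(C_2)=\varphi(C_1)r$. Because $r\le_\mj\varphi(C_1)\in I$, we have $r\in I$. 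It remains to show $r\mj\varphi(C_1)$ inside $I$, which I expect to be the only delicate point. There are three cases.

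First, if $\varphi(C_1)=\zeroel$, then $r\mj_S\zeroel$ forces $r=\zeroel$, and this $r$ works.

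Second, if $I$ is \zsimple and $\varphi(C_1)\ne\zeroel$, then $r\ne\zeroel$, and both lie in $I\setminus\{\zeroel\}$. That set is a single $\mj$-class of $I$, so $r\mj\varphi(C_1)$ in $I$.

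Third, if $I$ is null and $\varphi(C_1)\ne\zeroel$, then $\varphi(C_2)=\varphi(C_1)r\in I^2=\{\zeroel\}$. But in a null semigroup every $\mj$-class is a singleton, so $\varphi(C_1)\mj\varphi(C_2)=\zeroel$ forces $\varphi(C_1)=\zeroel$, a contradiction. So this case does not occur.
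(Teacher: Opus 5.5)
Your proposal is correct and follows essentially the same route as the paper. For the quotient, you transfer $\mj$ between $S/I$ and $S$ exactly as in \cref{lem:rees J} and pull back the witness $r$ from $\varphi$. For the $\zeroel$-minimal ideal, you place $r$ in $I$ and rule out the null case because $\varphi(C_2)=\varphi(C_1)r$ would be $\zeroel$, just as the paper does before invoking \cref{lem:min ideal is simple}. One small slip: when $S$ has no zero, $I$ is simple rather than \zsimple, so your dichotomy omits that case, but your second case goes through verbatim for it.
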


We can finally state our main theorem.
\begin{theorem}
    \label{thm:abs:main}
    Consider a finite semigroup $S$ and an $\mr$-aligned morphism $\varphi\colon\wHole{V}\to S$, then $\norm{\varphi}\le 4|S|-3$.
\end{theorem}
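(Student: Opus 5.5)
We argue by induction on $|S|$, in the same way as the semigroup-theoretic proofs of Simon's theorem. We show $\norm{\varphi}\le 4|S|-3$ for every $\mr$-aligned $\varphi\colon\wHole{V}\to S$ with $S$ finite. There are three building blocks.

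\textbf{Groups.} If $S$ is a group, every element lies in a single $\mh$-class and the identity is the only idempotent. We pluck greedily, taking $\cleq$-minimal contexts, so the decomposition is minimal and therefore stable (\cref{lem:abs:min dec no inherited}). The images of next contexts shrink monotonically along residue and factor branches (\cref{lem:abs:decomp semantically decrease}). This lets us group runs whose linking contexts multiply to the identity under an $\I$-node, and put the remaining nodes into binary or $\C$-nodes. The aim is depth linear in $|S|$.

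\textbf{Simple and \zsimple semigroups.} This case uses $\mr$-alignment. We fix a forest and consider its set of next contexts at the root. By \cref{lem:abs:decomp semantically decrease}, the nonzero images along any branch stay in one $\mj$-class. $\mr$-alignment says that any two cuts of the same forest with $\mj$-equivalent residues are $\mr$-related through elements of the same $\mj$-class. So the ``$\mr$-class of the next cut'' is an invariant of the forest, much like a fixed $\mr$-class of prefixes in the word case. Using a Rees-matrix picture, we let an $\I$-node gather all cuts whose linking context lies in a fixed $\mh$-class that contains an idempotent. The non-idempotent leftovers are collected under a $\C$-node, which is the structure in \cref{fig:abs:simple semigroup}. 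This gives a constant number of extra levels, at most about $4$ per $\mj$-class. For the $\zeroel$ part we first cut at the first place where the product becomes $\zeroel$, using a null-semigroup argument: in a null semigroup every product of two elements is $\zeroel$, so a $\C$-node followed by binary nodes suffices.

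\textbf{Induction.} For general $S$, we pick a $\zeroel$-minimal ideal $I$, that is, one generated by a $\mj$-minimal nonzero element, or the minimal ideal if $S$ has no zero. \Cref{lem:abs:R aligned quotient subsemigroup} says that $\quo_I\circ\varphi$ and $\varphi|_{\varphi^{-1}[I]}$ are both $\mr$-aligned, and that $S/I$ is strictly smaller. We first decompose $f$ over $S/I$ by the induction hypothesis, using the leaves of the inner decomposition as a new basis $B$; this is why $\norm{\cdot}$ takes a supremum over bases. The subforests whose contexts land in the zero of $S/I$, that is, in $I$, are then decomposed by the simple or \zsimple case. Next we check that the two decompositions compose into one general decomposition. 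Here we need stability to be preserved when we graft decompositions at leaves, and we use \cref{lem:abs:min dec no inherited} to refactor minimally. The depths add: $(4|S/I|-3)+4(|I|-1)\le 4|S|-3$, because $|S/I|=|S|-|I|+1$.

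\textbf{Main obstacle.} The hardest part is the simple/\zsimple case. Turning $\mr$-alignment into the invariant that all cuts in a region share an $\mr$-class, while keeping the $\I$-region stable, means controlling linking contexts under rotations. For these we would use the LR/BT rotations from \cref{lem:abs:BT to TB,lem:abs:TB to BT}, together with invariance of $\varphi$ under $\eqUnr$.
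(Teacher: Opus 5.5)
\textbf{Gap in the induction step.} Your skeleton matches the paper's: induction on $|S|$ through a $\mj$-minimal nonzero ideal $I$, the quotient $\quo_I\circ\varphi$, and the restriction to $\varphi^{-1}[I]$, both $\mr$-aligned. However, the additive count $(4|S/I|-3)+4(|I|-1)$ does not follow from the induction hypothesis you state.

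After decomposing $f$ over $\quo_I\circ\varphi$, the only nodes not already valid over $\varphi$ are $\I$-nodes valued $\zeroel_{S/I}$. Any other idempotent $e$ of $S/I$ lies in $S\setminus I$, is idempotent in $S$, and satisfies $(\quo_I\circ\varphi)^{-1}[e]=\varphi^{-1}[e]$. If $\zeroel_{S/I}$-valued nodes can recur along a root-to-leaf path, replacing each by a decomposition over $\varphi|_{\varphi^{-1}[I]}$ multiplies depths. That gives exactly the bound of \cref{lem:reduction}, not an additive one.

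The missing idea is the strengthened hypothesis the paper carries through the induction: every decomposable forest has a decomposition of depth at most $4|S|-3$ with \emph{at most one} $\zeroel$-valued $\I$-node and no other $\zeroel$-contexts. The null and simple/\zsimple base cases produce this as follows. They first take a maximal decomposition that is minimal with respect to $\varphi^{-1}[\zeroel]$ and fold it into one top $\I$-node (\cref{cor:min to I}). They then use \cref{cor:top zeros} to know that no $\zeroel$-context is available below it. The inductive step replaces that single node, which is the ``$-1$'' in $\depth(\tau)-1+(4|I|-3)$, and observes that $\zeroel_S$ occurs only inside the replacement.

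Your ``cut at the first place where the product becomes $\zeroel$'' via a null-semigroup argument does not give this invariant; a \zsimple semigroup is not null. You also drop the case $I^2=\{\zeroel\}$. There \cref{lem:min ideal is simple} does not apply, and $I$ must be handled by \cref{lem:null semigroup decomposable depth 2}.

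\textbf{The simple/\zsimple case.} This is also not viable as described. An $\I$-node requires every context and linking context of its witness to map to a single idempotent $v$. So it cannot absorb all cuts in an $\mh$-class containing an idempotent: that class is a group with one idempotent, and handling it costs the group bound $2|H|-1$ of \cref{lem:group decomposition bound}. Your ``about $4$ levels per $\mj$-class'' is therefore unsupported, and it is inconsistent with the $4(|I|-1)$ you charge later.

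The paper instead inducts on $|\lr(\tau,y)|$:
\begin{enumerate}
\item Fix $(A_L,A_R)\in\lr(\tau,y)$.
\item Take a maximal centipede over $\varphi^{-1}[A_L]$ whose legs still admit an $A_R$-cut.
\item Decompose each leg minimally with respect to a factorization set whose contexts lie in the group $A_L\cap A_R$ and whose plucked forests have all nonzero cuts in $A_R$.
\item Replace that part using the group lemma.
\item Use $\mr$-alignment twice to show that every resulting leaf has lost $(A_L,A_R)$ from its $\lr$-set.
\end{enumerate}
This yields depth $1+\sum_{(l,r)}2|l\cap r|\le 2|S|+1$.

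Finally, grafting, which you flag as the main obstacle, is harmless. General decompositions are closed under substitution by \cref{cor:general decomp closed substitution}, and stability only concerns each $\I$-node's own witness.
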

The proof of \cref{thm:abs:main} is by induction, with three base cases depending on the structure of $S$: a group, a simple or \zsimple semigroup, or a null semigroup. The latter is less involved and appears in \cref{sec:null semigroups}. We turn to sketch the main ideas of the interesting cases.
\subsection{Decomposition over a Group}
\label{sec:abs:group decomp}
\begin{lemma}
\label{lem:abs:group decomposition}
For a group $V$ we have $\norm{V}\le 2|V|-1$. 
\end{lemma}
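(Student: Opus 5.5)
We argue by induction on subgroups, following the classical proof of Simon's theorem for groups via prefix images. We show that for every forest $f$ and every partial decomposition node $x$, the subtree at $x$ admits a general decomposition whose depth is bounded by $2|P|-1$. Here $P=\varphi[\ns(\tau,x)]\subseteq V$ is the set of images of the next contexts at $x$. This set plays the role of the prefix-image set $P_w$ in the word case. The statement $\norm{V}\le 2|V|-1$ is then the case $x=\epsilon$, using $|P|\le|V|$. The induction is on $|P|$, and in each step we will use \cref{lem:abs:decomp semantically decrease} as the monotonicity tool.

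\textbf{Main step.} Fix $x$ and extend $\tau$ below $x$ to a decomposition that is minimal below $x$. This extension exists by \cref{lem:min full dec}, and by \cref{lem:abs:min dec no inherited} it is stable and has no inherited references. Walk down the residue spine $x,x0,x00,\dots$. By \cref{lem:abs:decomp semantically decrease}(1), the sets $\varphi[\ns(\tau,x0^i)]$ are non-increasing.
\begin{itemize}
\item Cut the spine at the first index where the set strictly shrinks. Everything below that cut is handled by the induction hypothesis.
\item Along the stretch where the set stays equal to $P$, the plucked factors $x0^i1$ have next-context sets that are translated by group elements. By \cref{lem:abs:decomp semantically decrease}(2), $\varphi(\ctx(x0^i1))\cdot\varphi[\ns(\tau,x0^i1)]\subseteq P$. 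Since left multiplication in a group is injective, each factor either has a strictly smaller next-context set, in which case we recurse on it, or a translate of $P$ of the same size.
\end{itemize}

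\textbf{Idempotent node.} To obtain an $\I$-node we group the plucked factors by the image of the linking context from $x$. We follow the word proof: if $p_i$ are the images of the ``prefixes'' (compositions of spine contexts up to step $i$), then consecutive equal values $p_i=p_j$ give infixes mapping to $p_i^{-1}p_j=\unitel$, the unique idempotent of the group. So the spine stretch between two occurrences of a fixed value $p\in P$ is a region whose linking contexts all map to $\unitel$. Stability of the minimal decomposition guarantees that the required linking contexts lie in $\varphi^{-1}[\unitel]$, so this region is folded into one $\I$-node. Concretely, choose $p$ as the image of the first prefix, and split the spine at the first and last occurrences of $p$. This gives a $\B$-node above (depth $+1$) and an $\I$-node over the idempotent middle part (depth $+1$). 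The children of these nodes are pieces whose available images miss at least one element, so the induction hypothesis applies to them. This yields the recurrence $d(k)\le d(k-1)+2$ with $d(1)=1$, i.e.\ $2k-1$. The centipede nodes can absorb the bookkeeping of the several non-nested plucked factors at a single level.

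\textbf{Expected obstacle.} The hard part is to justify that the children really have strictly smaller next-context sets, and that the $\I$-region is a genuine full stable decomposition over $\varphi^{-1}[\unitel]$ whose leaves match the children. For the second point, the linking contexts into the plucked factors involve default holes created higher up, so we must check them against the spine compositions. We would argue this through the absence of inherited references, which makes every linking context a composition of $\ctx$ labels along the spine; these compositions are stable by \cref{lem:abs:no inherited decomposition is stable}. For the strict decrease, we would use the rotations (\cref{lem:abs:BT to TB,lem:abs:TB to BT}) to show that any next context available at a child, precomposed with the group element linking it to $x$, is available at $x$, so that a missing element persists. Group cancellation then makes ``missing'' propagate through the translation.
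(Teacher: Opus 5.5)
There is a genuine gap, in two places.

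\textbf{The invariant is false.} You induct on ``depth at most $2|P|-1$ at every node'', so $d(1)=1$, and this fails when $\unitel\notin P$. Take $V=\{\unitel,g\}\cong\mathbb{Z}_2$, $\wHole{V}=\tilV$, and $\varphi(C)=g^{k}$, where $k$ is the number of $a$-labels in $\unr(C)$. Take $B=\bs$ and $f=a(b(c))$. Every context usable at the root contains the $a$-labelled root, so $P=\{g\}$. Yet $f$ has no general decomposition of depth $1$:
\begin{itemize}
\item With either root context, a $\B$-node or a centipede leaves a child $b(c)$ or $a(b(\square_c))$ outside $\bs$.
\item A centipede cannot refine $a(b(\square_c))$, since plucking $b(\square_c)$ creates an inherited reference.
\item An $\I$-node must use $\unitel$, the only idempotent of a group, and no usable context of $f$ maps to $\unitel$.
\end{itemize}
The correct invariant is the two-case one of \cref{lem:group decomp technical}: depth $2|P|-1$ if $\unitel\in P$, and $2|P|$ otherwise. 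The lemma then follows because $\unitel\notin P$ forces $|P|\le|V|-1$ at the root.

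\textbf{The $\I$-node construction does not transfer from words.} This is the more fundamental problem. A minimal decomposition has no inherited references, so the factors $\frst(x0^i1)$ plucked along the residue spine are pairwise non-nested subforests of $\frst(x)$. Hence a product such as $\ctx(x1)\cdot\ctx(x01)$ is not the context of anything occurring in $\frst(x)$. Moreover, the linking context from the bottom residue to $x0^i1$ is, up to $\eqUnr$, just $\ctx(x0^i1)$, as in the proof of \cref{lem:no inherited decomposition is stable}. So $p_i=p_j$ says nothing about the contexts in between. An $\I$-node over $\unitel$ needs each of those contexts, and each linking context, to map to $\unitel$, and stability over $\wHole{V}$ alone gives nothing of the kind.

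The missing idea is to build the idempotent zone directly.
\begin{itemize}
\item When $\unitel\in P$: extend below $x$ by a maximal decomposition that is minimal with respect to the context set $\varphi^{-1}[\unitel]$. It folds into one $\I$-node by \cref{cor:min to I}. Every leaf $z$ satisfies $\varphi[\ns(\tau,z)]\subseteq P$, because right-child translations are by $\unitel$. By maximality $\unitel\notin\varphi[\ns(\tau,z)]$, so this set has size at most $|P|-1$. This gives depth $1+2(|P|-1)=2|P|-1$.
\item When $\unitel\notin P$: fold a maximal minimal centipede into a $\C$-node. A leg $y=x0^k1$ with $c_y=\varphi(\ctx(y))$ satisfies $c_y\cdot\varphi[\ns(\tau,y)]\subseteq\varphi[\ns(\tau,x0^k)]\subseteq P$. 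Either $\unitel\in\varphi[\ns(\tau,y)]$, in which case this set has size at most $|P|$ and the first case gives $2|P|-1$. Or $c_y\notin c_y\cdot\varphi[\ns(\tau,y)]$ while $c_y\in\varphi[\ns(\tau,x0^k)]$, so the set shrinks to size at most $|P|-1$. Altogether the depth is at most $2|P|$.
\end{itemize}
Your ``same-size translate'' case is exactly the case $\unitel\in\varphi[\ns(\tau,y)]$. It is handled by this $\unitel$-case, not by spine prefixes.
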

\begin{proof}[Proof Sketch]
The proof is by induction over $|\ns(\tau,x)|$ at each node $x$ of a decomposition (see \cref{sec:group decomposition} for details).
Let $f\in \tilH$, we start by decomposing $f$ ``as much as possible'' using $\varphi^{-1}[\unitel]$ (i.e., over the unit of $V$, which is idempotent). We can then fold this decomposition to a single $\I$-node. Using \cref{lem:abs:decomp semantically decrease}, we can show that the children of this $\I$-node cannot be decomposed with $\unitel$, and therefore are decomposable over a smaller set of contexts, so continue by induction.

If $f$ is not decomposable over $\varphi^{-1}[\unitel]$, we start by decomposing $f$ using a maximal-depth centipede decomposition $\tau$, such that $\tau$ is \emph{minimal}. We then use the monotonicity of \cref{lem:abs:decomp semantically decrease} together with the properties of groups and the centipede structure to conclude that each leaf $y$ of the centipede either has a smaller $\ns(\tau,y)$ set (so we can apply induction), or $\unitel\in \ns(\tau,y)$, and we apply the case  proved above. Then, we can fold the centipede to a single $\C$-node, yielding a full general decomposition of the desired depth.    
\end{proof}

\subsection{Decomposition over Simple and $\zeroel$-Simple Semigroup}
\label{sec:abs:simple decomp}
This case is the most challenging part of our contribution. To frame it, we recall a central part of the proof of Simon's theorem~\cite{FactorForBoj}. In broad strokes, a word $w$ is written as $w=u_0abu_1abu_2ab\cdots abu_k$ for some letters $a,b$ such that the $u_i$ do not contain the subsequence $ab$. Then, $w$ is viewed as $w=u_0a(bu_1a)(bu_2a)\cdots (bu_{k-1}a)bu_k$ and it is noticed that all the $bu_ia$ have the same $\mr$- and $\ml$-classes, and thus belong to the same $\mh$-class, which is moreover idempotent (as an element in a quotient semigroup).
The takeaway for us is that we wish to split a forest by consecutive elements that dictate a pair $(L,R)$ of $\ml$- and $\mr$-classes. 

Unfortunately, for forests this approach becomes much more complicated. In fact, it generally breaks down (as we show in \cref{sec:abs:lower bound}), and is the reason for the introduction of $\mr$-alignment. Importantly, we notice that for simple semigroups without $\zeroel$, there is a single $\mj$-class. Thus, for an $\mr$-aligned morphism, for every $C,C'\in \tilV$ and $f,f'\in \tilH$ such that $C(f)=C'(f')$, we have $\varphi(C)\mr \varphi(C')$. 

We now give an overview of our approach.
The first point to consider is that for forests, the notion of ``two consecutive letters'' makes only partial sense. Consider a node $x$ in a binary decomposition $\tau$. Similarly to $\ns(\tau,x)$, we can consider the next \emph{two} contexts, i.e., $\ctx(x1)$ and $\ctx(x11)$. Since we are interested in $\ml$- and $\mr$-classes, we define $\lr(\tau,x)$ to be the set of $(L,R)$ pairs such that $L=[\tau'.\ctx(x1)]_{\ml}$ and $R=[\tau'.\ctx(x11)]_{\mr}$ in some extension $\tau'$ of $\tau$. That is, the possible ``consecutive'' $(L,R)$ classes that can be used to decompose $x$.

Using \cref{lem:abs:decomp semantically decrease} and results about Green's relations we show in \cref{lem:lr decreasing} a monotonicity property of $\lr(\tau,x)$: if $y$ is a descendant of $x$ then $\lr(\tau,y)\subseteq \lr(\tau,x)$.
We then proceed to show the bound on decomposition depth.
\begin{lemma}[Decomposition over Simple Semigroup -- abridged]
    \label{lem:abs:simple decomposition}
    Let $\varphi\colon\wHole{V}\to S$ with $S$ a simple semigroup and $\varphi$ $\mr$-aligned. Then $\norm{\varphi}\le 2|S|+1$.
\end{lemma}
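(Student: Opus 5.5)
The plan is to run an induction on $|\lr(\tau,x)|$, modelled on the $ab$-splitting in Simon's proof, with monotonicity (\cref{lem:lr decreasing}) playing the role that subsequence-avoidance plays for words. Fix $f\in\tilH$. Since $S$ is simple and $\varphi$ is $\mr$-aligned, any two contexts $C,C'$ with $C(f_1)=C'(f_2)=f$ have $\varphi(C)\mr\varphi(C')$. Hence all next contexts of a node lie in a single $\mr$-class $R_x$, and only the $\ml$-component of the pairs in $\lr(\tau,x)$ can vary. I will show that a node $x$ with $|\lr(\tau,x)|=k$ admits a general decomposition of depth at most $2k+1$. Since the number of possible pairs is at most the number of $\mh$-classes, which is at most $|S|$, this gives $2|S|+1$.

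For the inductive step I choose a pair $(L,R)\in\lr(\tau,x)$ and build below $x$ a binary decomposition shaped as in \cref{fig:abs:simple semigroup}. I repeatedly pluck, choosing a minimal extension (\cref{lem:min full dec}), so that consecutive pluckings realize the pair $(L,R)$, in analogy with $u_0a(bu_1a)\cdots(bu_{k-1}a)bu_k$.

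The segments between consecutive $(L,R)$-cuts are the forest analogue of the $bu_ia$. Each corresponding context $c$ satisfies $\varphi(c)\in R\cap L$: it is $\mr$-related to $R$ by alignment, and it stays in $L$ because it begins a product whose last factor lies in $L$, using stability in a simple semigroup. So $\varphi(c)$ lies in the $\mh$-class $R\cap L$. If this $\mh$-class contains an idempotent $e$, it is a group. The segments are then grouped by \cref{lem:abs:group decomposition}-style reasoning, or more simply folded into one $\I$-node over $e$ after multiplying consecutive segments until their product is $e$. By \cref{lem:abs:min dec no inherited}, the minimal choice guarantees no inherited references, so the folded decomposition is stable, as $\I$-nodes require. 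If $R\cap L$ contains no idempotent, then the product of two such elements leaves the $\mh$-class; I expect this to force at most one $(L,R)$-cut, which can be handled by a constant number of $\B$-nodes.

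The leftover pieces are the analogues of the $u_i$: the children hanging off the $\I$-node and the centipede prefix and suffix. I gather them into a $\C$-node so the depth cost is constant. By construction these pieces cannot be decomposed using the pair $(L,R)$, because a minimal decomposition would otherwise have cut there. By \cref{lem:lr decreasing}, their $\lr$ sets are contained in $\lr(\tau,x)\setminus\{(L,R)\}$, so the induction applies with each level costing two nodes (one $\C$ and one $\I$). The base case, where $\lr$ is empty or only single-context decompositions remain, costs depth at most one. The $\zeroel$-simple case should follow the same argument after separating out decompositions through $\zeroel$.

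The main obstacle is showing that the leftover pieces genuinely lose $(L,R)$ from their $\lr$ set. In the word proof this is the syntactic fact that $u_i$ avoids $ab$. Here it requires showing that any extension of a leftover node realizing $(L,R)$ could be rotated (via \cref{lem:abs:BT to TB}, \cref{lem:abs:TB to BT} and the parallel rotation) into an extension at $x$ that contradicts the minimality or maximality of the chosen cut sequence. $\mr$-alignment is essential at this point, to guarantee that the rotated contexts keep their $\mr$-class. I would attempt this first by contradiction using the minimality order $\cleq$.
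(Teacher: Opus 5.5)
\textbf{Gap 1: the step you leave open is the heart of the lemma.} Rotations only give the inclusion $\lr(w)\subseteq\lr(x)$ of \cref{lem:lr decreasing}. Strictness must come from a maximality argument, and for that the construction has to carry an invariant that a realization of $(L,R)$ at a leftover leaf would violate. ``A minimal decomposition would otherwise have cut there'' does not supply such an invariant.

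The paper builds it explicitly. Fix a node $y$ and a pair $(A_L,A_R)\in\lr(\tau,y)$.
\begin{enumerate}
\item First take a maximal down-centipede at $y$, minimal with respect to $\varphi^{-1}[A_L]$, in which every leg $z1$ still has a next context in $A_R$. The down-most leaf then loses $(A_L,A_R)$ by maximality.
\item Under each leg, take a maximal decomposition $\tau_z$ that is minimal with respect to the factorization set $F$. This set consists of the pairs $(C,f_1)$ with $\varphi(C)\in A_L\cap A_R$ such that every non-zero context $C'$ with $\unr(f_1)=C'(f'')$ maps into $A_R$.
\end{enumerate}
$\mr$-alignment is then used twice.
\begin{itemize}
\item It shows that every leaf of $\tau_z$ has all its non-zero decompositions in $A_R$. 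For residue leaves $C(\square_{\cdot})$ this is alignment against $C$ itself, since $\varphi(C)\in A_R$. For the root it is the leg condition.
\item Suppose $(C_L,C_R)$ realized $(A_L,A_R)$ at a leaf. Then $\varphi(C_L)\in A_L\cap A_R$. The factor plucked by $C_L$ admits $C_R$, so by alignment all its decompositions lie in $A_R$. Hence $(C_L,\cdot)\in F$, and $\tau_z$ was not maximal.
\end{itemize}
Without an invariant of the form ``every way of starting to decompose this forest lies in $A_R$'', there is nothing for minimality or maximality to contradict.

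\textbf{Gap 2: the depth accounting folds $H$-valued segments into a single $\I$-node, which does not work.} Here $H=L\cap R$, and the segments are folded over the idempotent of $H$. This fails for three reasons:
\begin{itemize}
\item those contexts take arbitrary values in the group $H$;
\item below a leg they form a binary decomposition, not a sequence, so ``multiplying consecutive segments'' is not defined;
\item an $\I$-node needs a stable decomposition all of whose linking contexts map to one idempotent.
\end{itemize}
Already for words over $\mathbb{Z}_n$, the word $g^{n-1}$ has no non-empty infix of idempotent value, so the group case cannot cost a constant.

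The paper proceeds in three steps:
\begin{enumerate}
\item \cref{lem:nts props} shows $H$ is closed under multiplication, hence a group.
\item Minimality with respect to $F$ gives no inherited references (\cref{lem:min factorization dec no references}), so $\tau_z$ is a decomposition over the stable set $\varphi^{-1}[H]$.
\item $\tau_z$ is replaced by a decomposition over the same leaves of depth $2|H|-1$ (\cref{lem:group decomposition bound}).
\end{enumerate}
Each induction level therefore costs $2|A_L\cap A_R|$, not $2$. The final bound $1+\sum_{(l,r)}2|l\cap r|\le 2|S|+1$ comes from the disjointness of these $\mh$-classes (\cref{rmk:lr}), not from counting pairs. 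Your intermediate bound $2k+1$ is unsupported.

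\textbf{Smaller points.}
\begin{itemize}
\item The second component of a pair is the $\mr$-class of a next context of the plucked factor, which depends on the factor. So it is not true that only the $\ml$-component varies, though this is harmless for the count.
\item The ``no idempotent'' case is vacuous by \cref{lem:nts props}.
\item A $\C$-node must be backed by an actual centipede at that node, so leftover pieces cannot simply be gathered into one. In the paper the centipede is the $A_L$-part on top, and the group decompositions hang from its legs.
\end{itemize}
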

\begin{proof}[Proof Sketch]
    Let $f\in \tilH$. 
    As mentioned above, we consider the case that $S$ has no $\zeroel$. If it does, we start by factoring over $\varphi^{-1}[\zeroel]$, similarly to the case of \cref{lem:abs:group decomposition} (see \cref{lem:simple decomp}).

    The proof proceeds by induction over $|\lr(\tau,x)|$, assuming an existing decomposition $\tau$ and a node $x$, where we prove that we can extend $\tau$ below $x$ with low-enough depth. Thus, consider a node $x$, and fix a pair $(L,R)\in \lr(\tau,x)$.
    
    We now decompose $f$ in two parts (see \cref{fig:abs:simple semigroup}).
    First, we decompose $f$ with a maximal-depth centipede $\tau_L$ over the contexts $\varphi^{-1}[L]$, with an additional crucial requirement: that every right-child (i.e., ``leg'' of the centipede) $z1$ has $\tau_L.\ns(z1)\cap R\neq \emptyset$. That is, we can continue decomposing $z1$ with a context mapped to $R$. This captures the analogy of an $(L,R)$ pair for words.

    Then, for every such leaf $z1$ we decompose it using a maximal-depth binary decomposition $\tau_z$ with the following property: $\tau_z$ is minimal (as in \cref{lem:abs:min dec no inherited}) over $L\cap R$, and for every node $y$ in $\tau_z$, every way of factoring $\tau_z.\frst(y)=C'(f')$ satisfies $\varphi(C')\in R$.
    Using this latter condition we show two significant properties: 
    \begin{itemize}
        \item $L\cap R$ is a group, and therefore we can use the group bound from \cref{lem:abs:group decomposition} to replace $\tau_z$ with a low-depth decomposition with the same leaves.
        \item Every \emph{leaf} $w$ of $\tau_z$ has $(L,R)\notin \lr(\tau_z,w)$. 
        To show this we rely on $\mr$-alignment twice: once for showing that any way of factoring a leaf of $w$ must start with a context mapped to $R$, and once for assuming that $(L,R)\in \lr(\tau_z,w)$ and reaching a contradiction. 
        We remark that for leaves of the form $w'1$ the former is trivial, by the property above. The crucial part is that due to $\mr$-alignment this also holds for $w'0$.
    \end{itemize}
    We can therefore decompose every leaf $w$ of $\tau_z$ inductively, since it has a smaller $\lr$ set. Then, we fold $\tau_L$ to a single $\C$-node, obtaining our desired decomposition.   
\end{proof}

\subsection{Decomposition over General Semigroups -- Main Theorem}
\label{sec:abs:main thm}
We can finally sketch the proof of \cref{thm:abs:main}. See \cref{sec:main theorem bound} for the details. 
The proof is by induction over $|S|$, where the base cases are when $S$ is a group (\cref{sec:abs:group decomp}), simple or \zsimple semigroup (\cref{sec:abs:simple decomp}), or null semigroup (\cref{sec:null semigroups}). 

Consider a forest $f$. The inductive argument follows a similar scheme to the proof of Simon's theorem for words. We consider a nonzero element $a\in S$ that is $\mj$-minimal and its corresponding ideal $M=S^1aS^1$. We show that this is a minimal non-zero ideal of $S$ with respect to containment, and that $|M|>1$. 
Then, by \cref{lem:abs:R aligned quotient subsemigroup} we have that $\quo_M\circ \varphi$ is $\mr$-aligned. Moreover, $|S/M|<|S|$. We can therefore apply the induction hypothesis and obtain a decomposition $\tau$ of $f$ over $S/M$. 
We further notice that $S/M$ is fairly similar to $S$, with the only difference being that $M$ is collapsed to a single element. However, this element is idempotent and is $\zeroel_{S/M}$. By carefully analyzing the constructions of the base case, we can actually show that there is at most one $\I$-node mapped to $M$ in $\tau$. Thus, all we need to do is replace this $\I$-node by a decomposition over $S$, with bounded depth.

To this end, we show that $M$ is either null (in which case we are done), or $|M|<|S|$, in which case we show that it is a $\zeroel$-minimal ideal, so by the second part of \cref{lem:abs:R aligned quotient subsemigroup} we can apply the induction hypothesis to conclude the theorem. \hfill \qed

\section{Additional Results, Discussion and Future Research}
\label{sec:abs:discussion}
In this overview we present the main ideas of our analogue of Simon's factorization theorem for forests. We briefly mention some contributions that appear only in the technical appendix. 
\subparagraph*{A Decomposition Reduction Scheme}
In \cref{sec:reduction} we present the following general result (not relying on $\mr$-alignment):
\begin{lemma}[Reduction Lemma -- abridged]
\label{lem:abs:reduction}
Consider two finite semigroups $V_1,V_2$ and a semigroup morphism $\varphi\colon V_1\to V_2$, then for every context set $\wHole{V}$ and a morphism $\beta\colon\wHole{V}\to V_1$ we have 
 $\norm{\beta} \leq \norm {\phi\circ \beta} \cdot \max_{\overset{e\in V_2}{e^2=e}}\norm{\varphi^{-1}[e]}$.
\end{lemma}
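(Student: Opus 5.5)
We prove the Reduction Lemma by a substitution argument in the spirit of the classical reduction for Simon's theorem: first decompose over the coarser morphism $\phi\circ\beta$, then refine every idempotent zone using a decomposition over the fibre of its idempotent. Fix a basis $B$ and a forest $f\in\tilH$. Let $\tau\in\dec(\wHole{V},B,f,\phi\circ\beta)$ have depth at most $\norm{\phi\circ\beta}$. The nodes of $\tau$ of type $\lf$, $\B$ and $\C$ are defined without reference to the morphism: leaves need only lie in $B$, and binary and centipede nodes are purely syntactic. Hence these nodes remain valid, unchanged, when $\tau$ is reinterpreted as a decomposition with respect to $\beta$. The only nodes that fail are the $\I$-nodes.

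Consider an $\I$-node $x$ of $\tau$. It carries an idempotent $e\in V_2$ and a full binary decomposition $\tau_x\in\fd((\phi\circ\beta)^{-1}[e],\tau.\frst[S_x])$ of $\tau.\frst(x)$, whose leaves are in bijection with the children $S_x$ of $x$. Write $\wHole{V}_e=\beta^{-1}[\phi^{-1}[e]]=(\phi\circ\beta)^{-1}[e]$.
\begin{itemize}
\item The set $\wHole{V}_e$ is a stable context set. It is closed under composition because $e$ is idempotent, and it is closed under $\eqUnr$ because $\beta$ factors through $\unr$.
\item Consequently $\beta$ restricts to a morphism $\beta_e\colon\wHole{V}_e\to\phi^{-1}[e]$, and $\phi^{-1}[e]$ is a subsemigroup of $V_1$.
\item Apply the definition of $\norm{\beta_e,B'}$ with the basis $B'=\tau.\frst[S_x]$, the forests at the children of $x$. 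Since $\tau_x$ witnesses that $\tau.\frst(x)$ decomposes fully over $\wHole{V}_e$ into $B'$, we obtain a general decomposition $\sigma_x$ of $\tau.\frst(x)$ with respect to $\beta_e$, with leaves in $B'$ and depth at most $\norm{\phi^{-1}[e]}$.
\item Every $\I$-node of $\sigma_x$ uses an idempotent of $\phi^{-1}[e]\subseteq V_1$ and a stable decomposition over its $\beta_e$-fibre, which is also a $\beta$-fibre. So $\sigma_x$ is a valid general decomposition with respect to $\beta$.
\end{itemize}

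Next replace each $\I$-node $x$ by $\sigma_x$, and at every leaf $w$ of $\sigma_x$ graft the subtree of $\tau$ rooted at the child of $x$ carrying the forest $\tau.\frst(w)$, recursing into that subtree. One step needs care. The leaves of $\sigma_x$ are drawn from $B'$, but $\sigma_x$ might use the same forest at several leaves, or might not use all of $B'$. The grafting is sound regardless: each leaf $w$ is labelled by the forest of some child $y$ of $x$, and the subtree of $\tau$ at $y$ is a decomposition of exactly that forest, so it can be plugged in at $w$. Default holes cause no trouble either, because forests are compared up to identity in $\tilH$ and $\sigma_x$ can be chosen with the same forest labels.

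For the depth, every root-to-leaf path of $\tau$ has at most $\norm{\phi\circ\beta}$ nodes. Each node on the path is either kept, contributing depth $1$, or is an $\I$-node that is expanded, contributing at most $\max_{e}\norm{\phi^{-1}[e]}\ge1$. The total depth is therefore at most $\norm{\phi\circ\beta}\cdot\max_{e^2=e}\norm{\phi^{-1}[e]}$. Taking the supremum over $f$ and $B$ gives the claim.

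The main obstacle is the basis bookkeeping in the third item above. We must check that the precise definition of $\norm{\cdot,B}$ (\cref{def:semigroup decomp bound}) quantifies over arbitrary bases and context sets in a way that allows $B'$ to be the specific set of child forests, and that the formal $\I$-node condition (bijective matching of leaves with children) survives the substitution. If the formal definition requires decompositions over the whole of $\tilV$ instead of $\wHole{V}_e$, I would instead appeal to the version of the bound parametrised by a stable context set and restrict $\beta$ to $\wHole{V}_e$. The existence of the full decomposition $\tau_x$ guarantees that the quantity in question is finite for this instance.
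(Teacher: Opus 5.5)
Your proposal is correct and follows essentially the same route as the paper. You decompose over $\phi\circ\beta$, keep the $\lf$, $\B$ and $\C$ nodes unchanged, and replace each $\I$-node of value $e$ by a decomposition over the restriction of $\beta$ to $\beta^{-1}[\phi^{-1}[e]]$, with basis the children's forests; you then graft the original child subtrees onto its leaves, which need not be bijectively matched. The paper formalises exactly this as the recursive transformation $\trec$, with the same multiplicative depth accounting.
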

This result allows us to obtain bounds on decomposition depth for one morphism ($\beta$) based on known bounds for e.g., quotient semigroups ($\varphi\circ \beta$), assuming knowledge of decompositions over the idempotent subsemigroups. We remark that a similar result is used in the proof of Simon's result for words. 

\subparagraph*{A Lower Bound without $\mr$-Alignment}
\label{sec:abs:lower bound}
A natural question is whether $\mr$-alignment is a necessary condition to obtain bounded-depth decompositions. The answer to this is no -- it is not hard to check that the TBF example from \cref{xmp:abs:TBF1} is not $\mr$-aligned. However, it is possible to show that it has bounded decompositions by following the proof of \cref{thm:main} and noticing that after getting rid of prefixes that map to 
a ``sink-element using a depth-2 decomposition, the remaining forests do satisfy $\mr$-alignment.
Thus, the failure of $\mr$-alignment in TBF is an edge case, and is not fundamental. 

The more important question is whether any condition is even necessary. In this section, we show that without any restrictions on $\varphi$, it might not admit bounded decompositions.
\begin{theorem}[see \cref{cor: counter example}]
    \label{thm:abs:lower bound}
    There is a morphism $\varphi\colon\wHole{V}\to V$ for some $V$ that is not $\mr$-aligned, such that $\norm{\varphi}=\infty$.
\end{theorem}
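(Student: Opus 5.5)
We need to exhibit a morphism that is not $\mr$-aligned and for which, for every $n$, some forest admits no full general decomposition of depth at most $n$. The plan is a counting (Ramsey-style) argument. We exploit the one thing $\mr$-alignment would forbid: the same forest splits as $C_1(f_1)=C_2(f_2)$ with $\varphi(C_1)\mj\varphi(C_2)$ in different $\mr$-classes. Then no single idempotent zone can swallow both cuts.

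\textbf{Step 1: the morphism.} Take a small alphabet, say $A=\{a,b\}$. Let $V$ be a finite vertical monoid whose non-sink part is a single $\mj$-class split into two $\mr$-classes. The intended separation is:
\begin{itemize}
\item \emph{vertical} contexts, such as $a(\square)$, which wrap a hole below a node;
\item \emph{horizontal} contexts, such as $\square+b$ or $b+\square$, which place the hole next to a sibling.
\end{itemize}
Vertical contexts map into one $\mr$-class and horizontal ones into the other, and mixed compositions fall into a sink or zero. Concretely, I would realise $V$ as a subsemigroup of $H^H$ for a tiny $H$, as in the TBF example. I would then check directly that some forest admits two cuts whose contexts are $\mj$-equivalent but not $\mr$-aligned, so $\varphi$ is not $\mr$-aligned.

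\textbf{Step 2: the hard forests.} Take a family $f_n$ mixing both directions at every scale, for example a ``comb'' that alternates vertical chains and horizontal sums. Nesting depth and width both grow with $n$, and every region of $f_n$ needs both kinds of cut.

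\textbf{Step 3: bounding each node type.} Classify what each node of a general decomposition can remove:
\begin{itemize}
\item an $\I$-node requires a stable full decomposition over $\varphi^{-1}[e]$ for a single idempotent $e$. Stability forces every linking context to map to $e$, hence into one $\mr$-class. So it can only group pieces along one ``direction'', and only boundedly many switches of direction can occur inside it.
\item $\B$- and $\C$-nodes make only boundedly many cuts, or cuts along non-nested centipede legs. So they reduce the ``alternation complexity'' of the remaining pieces by at most a constant factor, or by a logarithmic amount.
\end{itemize}

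\textbf{Step 4: the invariant.} Define a potential $\Phi(f)$, the number of direction alternations along some root-to-leaf or sibling chain. Show that every child of a depth-$d$ node has $\Phi$ at least $g(\Phi(\text{parent}))$ for some unbounded $g$, e.g.\ $\Phi/c$. Since leaves in $\bs$ have $\Phi\le1$, the depth is at least $\log_c\Phi(f_n)\to\infty$, giving $\norm{\varphi}=\infty$.

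\textbf{Main obstacle.} I expect Step 3 for $\I$-nodes to be the hardest part: proving that a stable idempotent zone cannot reduce $\Phi$ by more than a constant factor. Linking contexts can arise through default holes in subtle ways, so the first thing I would try is to use \cref{lem:abs:min dec no inherited} and the rotation lemmas. With these I would normalise the inner binary decomposition to one without inherited references, where linking contexts are just compositions of the $\ctx$ labels and can be analysed directly.
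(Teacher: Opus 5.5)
Your overall shape matches the paper's: a potential that no node of a general decomposition can shrink by more than a constant factor, giving depth $\ge\log$ of the potential. However, the construction in Step~1 defeats itself, and Step~3 for $\I$-nodes, which carries all the weight, is asserted rather than derived.

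\textbf{Step~1 is inconsistent as stated.} $V$ is the image of $\varphi$, so a product $x\,h\,y$ with $h$ horizontal is either horizontal, or (if $x$ or $y$ is vertical) a mixed composition, i.e.\ the sink. Hence no vertical element is $\le_\mj$ a horizontal one, and your intended witness of non-$\mr$-alignment does not exist.

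\textbf{The sink breaks Steps~3 and~4.} The sink is idempotent, and an $\I$-node only needs a stable decomposition over $\varphi^{-1}[e]\cup\{\square\}$ for \emph{some} idempotent $e$. In your combs every cut that mixes directions (e.g.\ $a(b+\square)$) is sink-valued. So by \cref{cor:min to I}, a single $\zeroel$-valued $\I$-node, built from a maximal decomposition that is minimal over $\varphi^{-1}[\zeroel]$, already breaks $f_n$ into pieces admitting no further zero-valued cut. These pieces have bounded alternation: essentially pure vertical chains or pure sums. The invariant of Step~4 therefore fails at the first node. Moreover, the remaining pieces are essentially words, so the depth is in fact bounded. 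Your Step~3 justification (``all linking contexts lie in one $\mr$-class, hence one direction'') overlooks exactly this directionless idempotent. Even for other idempotents, membership in one $\mr$-class says nothing by itself about where the cuts lie in the forest.

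\textbf{The missing idea} is a concrete ``side-picking'' lemma, for a design in which the sink never occurs.

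The paper uses $A=\{a,b,\ell_a,\ell_b,\ell\}$ and takes the hard forests inside the language. These are complete binary trees with $a$ on left children, $b$ on right children and an $\ell$ below each leaf, mapped to $\alT$. Since $\contbot$ sends everything to $\sbot$, which every context fixes, no cut of such a forest or of its subforests is sink-valued.

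It then enumerates the idempotents that can occur as cut contexts of unmarked forests. Using left-absorption and (P1)--(P4) of \cref{lem:ab idem calculation}, it shows that each of them leaves untouched the subtree at a fixed address of length at most $2$ (\cref{lem:ab idem preserving}).

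The non-alignment witness is $a(\ell)+b(\ell)$, cut by $a(\ell)+b(\square)$ and by $a(\square)+b(\ell)$. Their images $e_1,e_2$ satisfy $e_1e_2=e_1$ and $e_2e_1=e_2$, so they are $\ml$-equivalent but lie in different $\mr$-classes. The two $\mr$-classes thus separate left from right subtrees, not vertical from horizontal.

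The potential is $\mc(f)$, the depth of the largest complete binary subtree:
\begin{itemize}
\item \cref{lem:mc bin} handles $\B$-nodes.
\item For $\C$-nodes, the index child together with one leg suffices.
\item For an $\I$-node, one takes the leftmost centipede of an inherited-reference-free inner decomposition. Side-picking then carries half of a complete subtree down to a leaf (\cref{thm:counter example}).
\end{itemize}
Hence the depth is at least $\log_2\mc(f_n)$.
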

We refer to \cref{sec:lower bound} for the details, and bring here only the rough idea, without describing the concrete counterexample.
Intuitively, we construct a morphism based on a tree-language of binary trees, with the following property: each idempotent element $v$ is associated with a \emph{side} -- right or left, such that when decomposing a binary tree using a context $C\in \varphi^{-1}[v]$, only the right or left subtree of the tree (according to the associated side) is being decomposed. 

This means that in a balanced tree, each $\I$-node can decompose at most half the tree. Thus, a tree of depth $n$ requires a decomposition of depth at least $\log n$, concluding the claim.

\subsection{Discussion}
Absent from this work are applications of bounded decompositions for forests. One immediate application is an analogue of fast infix query for words~\cite{FactorForBoj}, where we want to compute the image of an ``infix'' of a forest under a morphism. Note that proper infixes of forests are contexts rather than forests, and formulating precisely what an infix is is nontrivial. Unfortunately, this is not useful from an algorithmic approach, since \emph{computing} a decomposition tree is not clearly efficient. Indeed, following our proof, it requires computing $\ns(\tau,x)$ for certain nodes, and it is not clear how to do that. 
This, however, is not surprising, as the same difficulty arises in words. To overcome this, an approach built on \emph{forward Ramsey decomposition} is introduced in~\cite{Colcombet07,Colcombet2021}.
In fact, a similar approach using forward factorizations has been introduced for forests as well, yielding a solution for infix querying on forests~\cite[Appendix E]{Bojaczyk2010}. 

Nonetheless, most applications of Simon's factorization are in proofs (e.g., boundedness of weighted automata~\cite{simon1994semigroups}), where we hope our work would enable lifting results on words to forests. We remark that as a sanity check, a naive application is as a simple pumping argument: given a general decomposition with an $\I$-node, we can modify the binary decomposition that corresponds to this $\I$-node by repeating a binary node (e.g., replacing a forest $C(f)$ with $C(C(f))$). This readily yields a decomposition of a different, ``larger'' forest, with the same image under the morphism.

\bibliography{ref}

\section{Technical Appendix -- Organization}
\label{sec: appendix organization}
In the following we present the details of our construction and proof. This is organized as follows.
In \cref{sec:prelim} we define basic terminology. In \cref{sec:forest contexts and algebras,sec:augmented forest algebra} we give the precise definition of forest algebras, and develop an equivalent formulation via forest domains. This highly-technical part is the foundations of the ``bookkeeping'' mentioned in \cref{sec:abs:intro}. In \cref{sec:bin decomp} we formally introduce binary decompositions, references and additional structures used throughout the proof. 
In \cref{sec:rotations} we develop rotations and prove their correctness. We then use them in \cref{sec:dec lemmas} to obtain the monotonicity results for binary decompositions. The reader may notice that these results are more elaborate than those sketched in \cref{sec:abs:next contexts}, and refer to additional definitions.

In \cref{sec:general decomp} we define general decompositions and decomposition bounds, and prove that $\C$-nodes can be eliminated.
In \cref{sec:green prelim} we present another set of preliminary results about Green's relations, used in the next sections. Then, in \cref{sec:decomposition theorems,sec:main theorem bound} we present $\mr$-alignment, the bases for the main inductive proof, and the proof itself. In particular, in \cref{sec:reduction} we present the reduction scheme mentioned in \cref{sec:abs:discussion}. 
We conclude in \cref{sec:lower bound} with the lower bound on decompositions without $\mr$-alignment.

\section{Preliminaries Part I -- Basic Definitions}
\label{sec:prelim}
We begin with some basic mathematical notations, definitions and results which we use throughout the paper. In \cref{sec:green prelim} we present another set of preliminary results pertaining to Green's relations, that are relevant for later parts of the paper.
\subparagraph*{Sets and Relations}
Let $\bbN$ be the set of non-negative integers. For a set $A$ let $A^*$ and $A^{+}$ be the sets of finite words and finite non-empty words over $A$. Denote by $\epsilon$ the empty word. For $x,y\in A^*$ we write $x\le y$ if $x$ is a prefix of $y$ and $x<y$ if $x\le y$ and $x\neq y$.

Consider sets $A,B$ and a relation $R\subseteq A\times B$. We interchangeably denote $(a,b)\in R$ or $aRb$ for elements $a\in A$ and $b\in B$.
For a set $S\subseteq A$, we denote $R[S]\coloneqq \{b\in B: \exists s\in S\ (sRb)\}$
And for some $a\in A$ define  $R[a]\coloneqq R[\{a\}]$\footnote{we only use this notation when it is clear that $a\in A\wedge \{a\}\notin A$},
In addition define $R^{-1}\coloneqq \{(b,a): aRb\}$
and
$\dom(R)\coloneqq R^{-1}[B]=\{a\in A: \exists b \ aRb\}$.
For an \emph{injective} function $f\colon A\rightarrow B$ and an element $b$ in the image of $f$ we denote by $f^{-1}(b)$ the unique element in $f^{-1}[b]$.

\subparagraph*{Semigroups, Monoids and Morphisms}
A \emph{semigroup} is an algebraic structure $\tup{S,\cdot}$ such that
$S$ is a set and $\cdot\colon S\times S\to S$ is an associative operation. We sometimes write $uv$ instead of $u\cdot v$ for $u,v\in S$, when the operation is clear. We also sometimes omit mentioning the operator (e.g., we refer to ``a semigroup $S$''). Unless stated otherwise, we refer to the operation generically as ``multiplication''.

For an element $x\in S$ and a number $i\in \bbN$ we denote the \emph{power} $x^i=\underbrace{x\cdot x \cdots \cdot x}_{i\text{ times }}$.
An element $e\in S$ is called \emph{idempotent} if it satisfies $e\cdot e = e$.
A simple but fundamental result about \emph{finite} semigroups states that every element has some power which is an idempotent element. Moreover, there is such a uniform power over $S$. 
\begin{lemma}[Idempotent Power (See e.g.,~\cite{pin2010mathematical})]
\label{lem:idm power}
Consider a finite semigroup $\tup{S,\cdot}$. For every $x\in S$ there exists $i\in \bbN$ such that $x^i$ is idempotent.

Moreover, there is a number $\omega_S\in \bbN$ called the \emph{exponent} of $S$ such that for every $x\in S$, the element $x^{\omega_S}$ is idempotent.
\end{lemma}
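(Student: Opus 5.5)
We prove the two claims of the statement separately. The first is the existence of an idempotent power of each $x$; the second is a single exponent $\omega_S$ that works for all elements simultaneously.

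\textbf{Existence of an idempotent power.} Fix $x\in S$ and consider the sequence $x,x^2,x^3,\ldots$ of elements of $S$. Since $S$ is finite, the pigeonhole principle gives indices $m<n$ with $x^m=x^n$. Write $p=n-m\ge 1$.

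By induction on $j$, multiplying both sides by $x$ shows that $x^{i}=x^{i+p}$ for every $i\ge m$. Iterating this, $x^{i}=x^{i+kp}$ for all $i\ge m$ and all $k\in\bbN$.

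Now choose $i$ to be a multiple of $p$ with $i\ge m$; for instance $i=mp$ works, since $p\ge1$. Write $i=kp$. Then
\[
(x^i)^2=x^{2i}=x^{i+kp}=x^i,
\]
so $x^i$ is idempotent.

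\textbf{A uniform exponent.} For each $x\in S$, let $m_x$ and $p_x$ be the indices obtained above. Define
\[
\omega_S=\max_x m_x\cdot \operatorname{lcm}_x p_x .
\]
This is a finite positive integer because $S$ is finite. It can be taken to be at least $1$ by choosing every $m_x\ge 1$, which is automatic since the sequence starts at $x^1$.

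For any $x$, the number $\omega_S$ is at least $m_x$ and is a multiple of $p_x$. The argument of the first part then applies verbatim with $i=\omega_S$ and gives that $x^{\omega_S}$ is idempotent.

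\textbf{Where care is needed.} The only step requiring attention is the periodicity shift $x^i=x^{i+kp}$ for $i\ge m$. This is where associativity is used: it makes powers well defined and gives $x^ax^b=x^{a+b}$. The remaining steps are routine arithmetic.
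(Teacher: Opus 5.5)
Your proof is correct. The paper gives no proof of this lemma and only cites Pin, and your argument (pigeonhole to get an index $m$ and period $p$, then take an exponent that is at least $m$ and a multiple of $p$) is the standard proof found there; your uniform choice $\omega_S=\max_x m_x\cdot\operatorname{lcm}_x p_x$ works as well as the usual textbook choice $\omega_S=|S|!$. The only blemish is cosmetic: the induction giving $x^i=x^{i+p}$ is on $i$ starting from $i=m$, not on an undefined $j$.
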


For a Semigroup $S$, an element $z\in S$ is a \emph{zero} of $S$ if $m\cdot z = z\cdot m=z$ for every $m\in S$. We typically denote such an element by $\zeroel$.
Notice that a Semigroup has at most a single zero, as if both $u,v\in S$ are zeroes we get $u=u\cdot v=v$. We emphasize that not all Semigroups have a zero (for example $\tup{\bbN,+,0}$ -- the element $0$ is an identity for addition, and it is not a zero).

A \emph{monoid} is a semigroup with an identity element $\tup{S, \cdot, \unitel}$ 
satisfying $m\cdot\unitel = \unitel\cdot m = m$ for every $m\in S$.

Fix some finite alphabet $A$.  The \emph{free monoid} 
$(A^*,\cdot)$ is the monoid all finite-length words over $A$ with concatenation and $\epsilon$ (the empty word) as the identity.


A \emph{morphism} is a mapping between two algebraic structures which preserves their operations. Specifically, for two semigroups $\tup{S_1,\oplus},\tup{S_2,\otimes}$ a semigroup morphism is a mapping 
$\varphi\colon S_1 \rightarrow S_2$ such that for every $m_1,m_2\in S_1$ we have 
$\varphi(m_1\oplus m_2) =\varphi(m_1)\otimes\varphi(m_2)$.

\section{Forests, Contexts, and Forest Algebras}
\label{sec:forest contexts and algebras}
Our main object of study is forest algebras~\cite{Bojanczyk2008ForestA}, which we now recall. 
Intuitively, a forest is built from a finite alphabet using two operations: addition of sub-forests (i.e., putting the two forests ``side by side''), and connecting a forest's roots as the children of a node, thus obtaining a tree. However, this approach only allows a ``bottom-up'' approach for constructing forests. The novelty in forest algebras is that they allow composition of forests in various ways, while retaining a nice algebraic structure. Technically, this is done by adding another algebraic structure called \emph{contexts}, where a context is a forest with a ``hole'' ($\square$) in a leaf. Then, given a context, we can plug into that hole another context, or a forest. 
We proceed to give the formal details of this framework.

Fix some finite alphabet $A$. We think of the letters of $A$ as \emph{unranked}, meaning that a letter can be the root of a tree of arbitrary arity. We designate two special symbols: $\az$ is the \emph{empty forest} (intuitively, a forest with no nodes), and $\square$ is a \emph{hole} (intuitively, a placeholder where a forest may be inserted). We assume $\az,\square\notin A$. We define the syntax of our two fundamental objects: \emph{forests} and \emph{contexts}. 

\begin{definition}[Forests and Contexts]
\label{def:forests and contexts}
    The set of \emph{forests} over $A$, denoted $\frs A$ is defined inductively as follows. 
    \begin{description}
        \item[\textbf{Base:}] $\sigma\in \frs A$ for every $\sigma\in A\cup \{\az\}$.
        \item[\textbf{Addition:}] if $h_1,h_2\in \frs A$ then $h_1+h_2\in \frs A$.
        \item[\textbf{Composition:}] if $h\in \frs A$ and $\sigma\in A$ then $\sigma(h)\in \frs A$.
    \end{description}

    The set of \emph{contexts} over $A$, denoted $\cts A$ is defined inductively as follows. 
    \begin{description}
        \item[\textbf{Base:}] $\square\in \cts A$.
        \item[\textbf{Addition:}] if $C\in \cts A$ and $h\in \frs A$ then $C+h\in \cts A$ and $h+C\in\cts A$.
        \item[\textbf{Composition:}] if $c\in \cts A$ and $\sigma\in A$ then $\sigma(C)\in \cts A$.
    \end{description}
\end{definition}

\begin{example}
\label{xmp:forests and contexts}
Let $A=\{a,b\}$. We illustrate some forests and their addition in \cref{fig:forests example1}. Contexts are similar but have exactly one leaf labeled $\square$. For example, $a(\square)$, $b(a+\square)+a$, and $\square$.

To drive the point in, the following are neither forests nor contexts:
\begin{itemize}
    \item $a(\square)+\square$ (only one $\square$ is allowed).
    \item $a+\square(b)+a$ (squares must be leafs).
\end{itemize}

\begin{figure}[ht]
    \centering
    \includegraphics[width=0.65\linewidth]{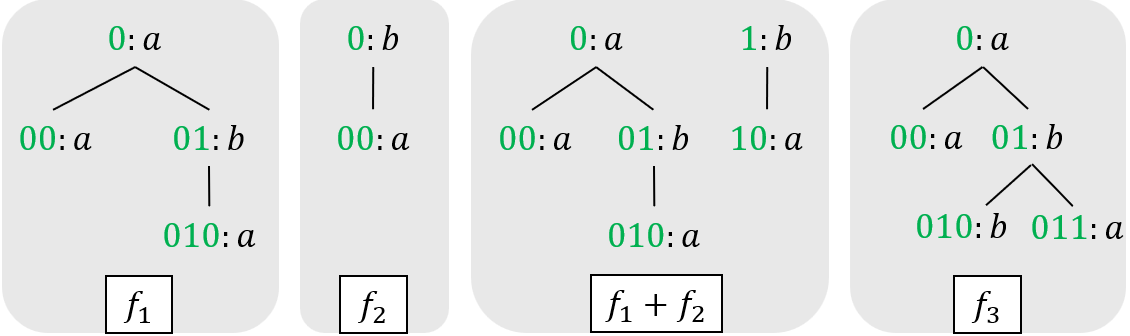}
    \caption{Examples of forests and their addition with $f_1=a(a+(b(a)))$,   $f_2=b(a)$, and $f_3=a(a+b(b+a))$. The numbers indicate the \emph{Address Domains} defined in \cref{sec:labeling intuitive}.}
    \label{fig:forests example1}
\end{figure}
\begin{figure}[ht]
    \centering
    \includegraphics[width=0.5\linewidth]{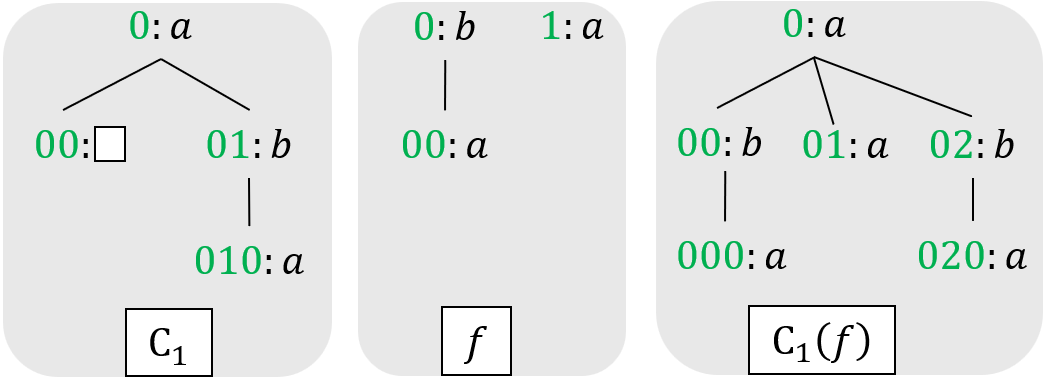}
    \caption{Examples of context composition with $C_1=a(a+(b(a)))$ and  $f_2=b(a)$. The numbers indicate the \emph{Address Domains} defined in \cref{sec:labeling intuitive}.}
    \label{fig:contexts example1}
\end{figure}
\end{example}
The formalism of contexts and forests is designed to allow the composition of a forest or a context with another context. For example, the composition of the context $a(\square)$ with $a+b$ yields the forest $a(a+b)$, since we replace the hole with the composed forest. In \cref{fig:contexts example1} we depict this idea.
\begin{definition}
We define the composition $C(h)$ inductively as follows.
\begin{description}
\label{def:context composition}
        \item[\textbf{Base:}] If $C=\square$ then $\square(h)=h$.
        \item[\textbf{Addition:}] If $C=C'+h'$ then $C(h)=C'(h)+h'$.  If $C=h'+C'$ then $C(h)=h'+C'(h)$.
        \item[\textbf{Composition:}] If $C=\sigma(C')$ then $C(h)=\sigma(C'(h))$.
\end{description}
We often denote $C(h)$ by $C\cdot h$.
\end{definition}

In order to reason about composition formally, and in order to prove our results, it is inconvenient to use the inductive definition, as it does not easily support pinpointing a specific node and tracking it through compositions.

Instead, we wish to refer to forests and contexts with an \emph{address scheme}, where we can point out certain elements, sub-forests, etc. We develop this viewpoint in the next section.

\subsection{Trees and Forests as Labeled Address Domains}
\label{sec:labeling intuitive}
The address scheme we use is standard and intuitive: we mark each node with a sequence of numbers $x\in \bbN^*$, such that the children of node $x$ are marked $x0,x1,\ldots$ (see depiction in \cref{fig:forests example1,fig:contexts example1}). 
We can then define a forest or context using a \emph{labeling function} that assigns to each address a label (e.g., a letter from $A$, or other information). 
\begin{definition}[labeled Address Domains]
    \label{def:labeling intuitive}
    Let $L$ be a non-empty set of labels.
    \begin{itemize}
        \item A \emph{Tree Domain} is a set $d\subseteq \bbN^*$ that represents the addresses of some tree (i.e., closed under left-sibling, and under parent). 
        That is, $\epsilon\in d$, and  if $x\cdot i\in d$ for $i>0$ then $x\cdot (i-1)\in d$ and $x\in d$.

        The set of all tree domains (explicitly including an empty domain $\emptyset$) is denoted $\tds$.
        \item A \emph{Forest Domain} is a set $d\subseteq \bbN^*$ that represents the addresses of some forest (i.e., closed under left-sibling, and under parent, except the root is not $\epsilon$). 
        That is, $\epsilon\notin d$, and  if $x\cdot i\in d$ for $i>0$ and $|x|>0$ then $x\cdot (i-1)\in d$ and $x\in d$.
        For every forest domain $d$, define $\width(d)\coloneqq \max\{d\cap \bbN\}$ (which corresponds to the number of a root in $d$)
        The set of all forest domains (including an empty domain $\emptyset$) is denoted $\fds$.

        \item An \emph{$L$-labeled} tree (\resp forest) domain is a pair $\tup{d,\lab}$ where $d\in \tds$ (\resp $d\in \fds$) and $\lab\colon d\to L$ is a labeling function. We denote the set of $L$-labeled tree (\resp forest) domains by $\lltds{L}$ (\resp $\llfds{L}$). We omit $L$ when it is clear from context.
        For $dl=\tup{d,l}\in \lfds$ denote $\width(dl)\coloneqq \width(d)$
    \end{itemize}
\end{definition}
In \cref{fig:forests example1} we illustrate some forest domains. Note that these are not tree domains, since $\epsilon$ is not the root. Observe that e.g., $\width(f_1+f_2)=1$ whereas $\width(f_1)=\width(f_2)=0$.

\begin{remark}[Tree v.s. Forest Domains]
    \label{rmk: tree vs forest domain}
    Despite the unified definition, our usages of tree domains and forest domains differ entirely: forest domains are used to model forests, whereas tree domains are used for decomposition trees -- the decomposition structures we develop. Conceptually, one can think of them as entirely different definitions.
\end{remark}

Labeled tree and forest domains naturally support certain structural operations (namely analogues of a recursive definition of forests), which we now define. 

\begin{definition}[Operations on Labeled Domains]
\label{def:operations on labeled domains}
Let $f_1,f_2\in \ltds\cup\lfds$ with $f_1=\tup{d_1,\lab_1}$ and $f_2=\tup{d_2,\lab_2}$, and let $x\in d_1$ be an address.
\begin{itemize}
    \item The \emph{addition} $f_1+f_2$ is the $\lfds$ obtained by putting the two (tree or forest) domains ``side by side'', and updating the addresses accordingly
    (see $f_1+f_2$ in \cref{fig:forests example1}).
    \item We denote by $f_1[x]$ the \emph{sub-forest or sub-tree of $f_1$ at $x$}. Note that this ``shifts'' the addresses, 
    e.g., in \cref{fig:forests example1} $f_2=(f_1+f_2)[1]$.
    \item We denote by $f_1[x\mapsto f_2]$ the \emph{composition of $f_1$ with $f_2$ at $x$}, namely the forest obtained by replacing the subtree of $x$ at $f_1$ with the forest $f_2$. For example, in \cref{fig:contexts example1} we have $C_1(f)=C_1[00\mapsto f]$.
\end{itemize}
\end{definition}

Labeled domains also admit a natural partial order, as follows.
\begin{definition}
\label{def:dl order}
For two $L$-labeled address domain $dl_1=\tup{d_1,l_1},dl_2=\tup{d_2,l_2}$ we write $dl_1\leq dl_2$ if $d_1\subseteq d_2$ and $l_2|_{d_1} = l_1$. That is, for every $x\in d_1$ we have $l_2(x)=l_1(x)$.
\end{definition}

With these definitions in place, we can now naturally capture forests and contexts using forest domains. Formally, we define the injection 
$\tofd\colon \frs A\cup\cts A\rightarrow \llfds{(A\cup\{\square\})}$
that associates a tree or forest domain with a given forest or context. Specifically, for an element $h\in \frs A\cup\cts A$, we denote $\tofd(h)=\tup{\dom(h),\lab(h)}$.

\label{def:convinient notion}
This formalism allows us to define composition of forests as follows. For $h\in \frs A\cup \cts{A}$, a forest $h'\in \frs{A}$, and an address $x\in \dom(h)$,  the forest obtained by replacing the $x$-subforest of $h$ with $h'$ is
\[h[x\mapsto h']\coloneqq \tofd^{-1}(\tofd(h)[x\mapsto \tofd(h')])\]
And similarly define:
\[h[x]\coloneqq \tofd^{-1}(\tofd(h)[x])\]

The view of trees as labeled domains allows us to use graph-theoretic notions such as a \emph{node} (an element in the domain), a \emph{child} (a node $x\cdot i$ with respect to the node $x$), a \emph{leaf} (a node without children), a \emph{path} (a sequence starting with a node and continuing with successive children), etc.

We conclude this section with three important notations. 
\begin{enumerate}
    \item Define $\spos\colon\cts{A}\to \bbN^+$ such that for a context $C\in \cts{A}$, $\spos(C)$ is the address of the unique $\square$.
    Observe that by \cref{def:forests and contexts}, a context $C\in \cts{A}$ has exactly one address labeled $\square$, so this is well defined. 
    For example, in \cref{fig:contexts example1} we have $\spos(C_1)=00$.

    \item For $h_1,h_2\in \frs{A}\cup\cts{A}$, we write $\equpto{h_1}{x}{h_2}$ when $h_1$ and $h_2$ are identical apart from (possibly) their subtree rooted at $x$. For example, in \cref{fig:forests example1} we have $\equpto{f_1}{01}{f_3}$
    \item For $h\in \frs A \cup \cts A$ we denote by $\width(h)$ the maximal number in the domain of $\tofd(h)$ (i.e., the maximal ``root'' index in a disjoint union of trees).
    Notice that for a context $C$ and a forest $h$ we have that the width of the forest $C\cdot h$ is determined by: $\width(Ch) = \begin{cases}
        \width(C) & \spos(C)\notin \bbN\\
        \width(C) + \width(h) & \spos(C)\in \bbN
    \end{cases}$. Intuitively, the width is that of $C$, unless $\square$ is itself a root in $C$, in which case the width of $h$ is added.
\end{enumerate}


\subsection{Forest Algebras}
\label{sec:forest algebra}
The algebraic definition of regular languages (as opposed to automata/regular expressions) is that a language is regular if it is the inverse image of some monoid morphism from $\Sigma^*$ to a finite monoid $M$. This view offers a convenient way to reason about certain aspects of regular languages.

For tree languages, it is less clear what a proper algebraic formalism is. The work in~\cite{Bojanczyk2008ForestA} establishes \emph{forest algebras} as an algebraic approach for unranked tree languages. We bring the full definition here, as it is crucial to our results.

\begin{definition}[Forest Algebra]
\label{def:forest algebra}
A \emph{Forest algebra}~\cite{Bojanczyk2008ForestA} is a tuple\footnote{In the context of \emph{universal algebra} this is a structure of \emph{two sorts}.} $\tup{H,V,+,\cdot,\act,\inl,\inr,\zeroel,\unitel}$ 
such that:
\begin{itemize}
    \item $\tup{H,+,\zeroel}$ and $\tup{V,\cdot,\unitel}$ are monoids.\footnote{in \cite{Bojanczyk2008ForestA}  $H$ is referred to as \emph{horizontal monoid} and $V$ as \emph{vertical monoid}, but we do not adopt these names in the paper to reduce confusion with the new structure of \emph{decomposition tree} presented later.}

    \item $\act\colon V\times H\rightarrow H$ is a \emph{faithful monoid action} of $V$ on $H$:
    \begin{description}
    \item[Monoid action:] For every $v,u\in V$ and $h\in H$ we have $\act(v\cdot u,h)=\act(v,\act(u,h))$, and  $\act(\unitel,h)=h$.

    \item[Faithful:] For every $u\neq v\in V$ there exists $h\in H$ such that $\act(u,h)\neq \act(v,h)$.
    \end{description}
    We typically abbreviate $\act(v,h)$ to $vh$.

    \item $\inl,\inr\colon H\rightarrow V$ satisfy $\inl(g)\cdot h=g+h$ and $\inr(g)\cdot h=h+g$.
    (notice that in particular $\inl,\inr$ are monoid morphisms).
    Intuitively, the functions $\inl,\inr$ are semantic versions of the functions $g\mapsto g+\square$ and $g\mapsto \square+g$, respectively.
\end{itemize}
We usually denote a forest algebra by $\tup{H,V}$, omitting the remaining elements.
\end{definition}

The notion of morphism naturally extends to forest algebras: a \emph{forest morphism} between two forest algebras $\tup{H_1,V_1},\tup{H_2,V_2}$ is a tuple $\tup{\alpha,\beta}$ such that $\alpha\colon H_1\rightarrow H_2$ and $\beta\colon V_1\rightarrow V_2$ are monoid morphisms.

The intuitive meaning of $H$ and $V$ in a general forest algebra is difficult to grasp. This is similar to the case of general monoids in the regular-language setting -- the target monoid can be thought of as the state space of a minimal DFA for a language, and may be complicated. 
Sticking to this analogy, the free monoid $\Sigma^*$ is easier to understand, and so is the \emph{free forest algebra}, which we define next.

\begin{definition}[Free Forest Algebra]
\label{def:free forest algebra}
Let $A$ be an unranked alphabet. The \emph{free forest algebra} over $A$ is $\tup{\hf, \vf, +,\cdot,\act,\inl,\inr,\az,\square}$ with the following components.
\begin{itemize}
    \item $\hf=\frs{A}$ and its monoid is $\tup{\hf,+,\az}$ with $+$ being forest addition, and such that $\az$ (the empty forest) is a neutral addition element.
    
    \item $\vf=\cts{A}$ and its monoid is $\tup{\vf,\cdot,\square}$ where $\cdot$ is defined by
    \[C_1\cdot C_2=C_1[\spos(C_1)\mapsto C_2]\]

\item $\inl(h)= h+\square\in \vf$ and $\inr(h)\coloneqq \square+h\in \vf$ 
\item $\act(C_1, h)=C_1[\spos(C_1)\mapsto h]$.

For readability, we denote $\act(C,h)$ by $C\cdot h$ or $C(h)$

\end{itemize}
\end{definition}

\begin{remark}
\label{rmk:free forest algebra is a forest algebra}
    Notice that the free forest algebra is indeed a forest algebra.
    In particular, the multiplication of contexts is well defined and associative, $\square$ is a neutral element for $\cdot$, and the action $\act(C,h)=C(h)$ is faithful since if $C_1\neq C_2$ then e.g., the forest $\epsilon$ satisfies $C_1(\epsilon)\neq C_2(\epsilon)$. 
    In addition, $\act$ satisfies for every $h\in \hf$ that $\square(h)=h$ (which is sometimes referred to as \emph{free action}).
\end{remark}

The free forest algebra is free in the (universal-algebraic) sense that a function $f\colon A\to V$ lifts to a unique morphism.

\section{Augmented Free Forest Algebra}
\label{sec:augmented forest algebra}
In the setting of words there is a natural notion of decomposition: a word $w$ can be split into $w=u\cdot v$, where both $u$ and $v$ are subwords (\cref{fig:decomp-word}). Then, one can recursively continue to decompose. This decomposition supports the \emph{local} nature of decompositions -- in order to evaluate the morphism on an infix, one only needs to look at the corresponding node (indeed, this is the basis for the \emph{fast infix evaluation} algorithm~\cite{FactorForBoj}).

For forests, a fundamental problem arises: when decomposing a forest $h$, we want to remove (or ``pluck'') a subtree rooted at a certain node $x$. This leaves us with a \emph{context}, rather than a forest, so that we know where to restore the plucked subtree.
However, if we now wish to continue decomposing, we may want to pluck another subtree from the remaining context. This raises a problem: we need two separate ``holes'' to restore the subtrees (\cref{fig:decomp-forest}), and this would render the decomposition ill-defined (indeed, this is the reason contexts have only one hole).
\begin{figure}[ht]
    \centering
    \begin{subfigure}{0.15\linewidth}
        \centering
        \includegraphics[width=\linewidth]{ppt_drawings/decomposition_of_word.png}
        \caption{Words}
        \label{fig:decomp-word}
    \end{subfigure}
    \hfill
    \begin{subfigure}{0.2\linewidth}
        \centering
        \includegraphics[width=\linewidth]{ppt_drawings/problem_no_augment.png}
        \caption{Forests without augmentation}
        \label{fig:decomp-forest}
    \end{subfigure}
    \hfill
    \begin{subfigure}{0.2\linewidth}
        \centering
        \includegraphics[width=\linewidth]{ppt_drawings/augmented_basic.png}
        \caption{Forests with default holes}
        \label{fig:decomp-augmented}
    \end{subfigure}
    \caption{For words (\cref{fig:decomp-word}), each node can be evaluated separately according to the morphism. In particular, the value of a node is uniquely determined by the values of its children. \\
    For forests (\cref{fig:decomp-forest}, it is not even clear how to represent the residue forest after ``plucking'' a subforest. For example, the bottom leaf cannot be evaluated, and looking at it locally does not provide enough information to know how the two right leaves plug into it.\\
    To overcome this, we use default holes (\cref{fig:decomp-augmented}).}
    \label{fig:problem-decomp}
\end{figure}

In order to overcome this problem, we introduce the \emph{augmented free forest algebra}. Intuitively, we add a new symbol $\square_h$ for every $h\in \hf$, called a ``default hole'': these are holes that can only be filled with the subforest $h$ (\cref{fig:decomp-augmented}). 
A priori, it may seem as though this achieves nothing -- we just write $\square_h$ instead of $h$ (cf.\ \cref{xmp:forests with default holes}). Nonetheless, this is the basic building block for our notion of decomposition: the elements $\square_h$ are used as ``markers'' within a forest (or context) to denote a part which is \emph{elementary} and cannot be factored further.
We make this intuition concrete when defining decompositions in \cref{def:bin decomp}.

\begin{definition}[Forest Algebra with Default Holes]
\label{def:forest algebra with defaults}
Let $A$ be an alphabet and $\tup{H_A,V_A}$ be the free forest algebra over $A$. The set of \emph{default holes} is 
\[\square_{H_A}\coloneqq \{\square_h:h\in H_A\}\]
We augment the alphabet $A$ with default holes by defining $\wHole{A}=A\cup \square_{H_A}$ (note that $\wHole{A}$ is infinite).
We then lift the setting to forests and contexts over $\wHole{A}$ by allowing the symbols in $H_A$ only in the \emph{leaves}. Formally, we define $\tilH$ inductively by:
\begin{itemize}
    \item $\wHole{A}\subseteq \tilH$.
    \item For $h_1,h_2\in \tilH$ we have $h_1+h_2\in \tilH$.
    \item For $h\in \tilH$ and $\sigma\in A$ (not in $\wHole{A}$) we have $\sigma(h)\in \tilH$.
\end{itemize}
The set of \emph{contexts with default holes} $\tilV$ is defined similarly, also requiring a single $\square$ at some leaf.

Finally, we define the \emph{Forest Algebra with Default Holes} $\tup{\tilH,\tilV}$ identically to the free forest algebra in \cref{def:free forest algebra} (i.e., the same operations and constants $\act,\inl,\inr,+,\cdot$).

\end{definition}
\begin{example}
\label{xmp:forests with default holes}
    Fix $A=\{a,b\}$. 
    The following are forests with default holes, i.e., forests in $\tilH$:
    $a,\ \square_a,\ b(a+a)+b,\ b(\square_a),\ b(a+\square_{b(a+a)+b})+\square_{b(b)}$.
    
    The following are contexts in $\tilV$:
    $\square,\ a+\square,\ b(\square_{a(b)}+\square)+\square_{b(a+b(a))}$.
    Note that there may be several default holes, but exactly one standard hole $\square$.

    The following are \textbf{non-examples}, to illustrate the requirements:
    \begin{itemize}
        \item $\square_a(b)$  -- \emph{non leaf} default hole is not allowed.
        \item $b(a+\square_{b(a+\square)})$ -- default hole with \emph{context} value are not allowed.
        \item $b(a+\square_{b(a+\square_a)})$ -- \emph{nested} default holes are not allowed.
    \end{itemize}
\end{example}

\subsection{A Toolbox for Free Forest Algebras}
\label{sec:toolbox for free algebras}
In the remainder of this section we present several useful tools for reasoning about Free Forest Algebras (including our Augmented Algebra). 
Note that $\tilH$ and $\tilV$ can be viewed as address domains, identically to $\tup{H_A,V_A}$ above. In particular, we can use $\tofd$ and $\spos$ in the same manner.

Our first tool is a property of the action $\act\colon\tilV\times\tilH\rightarrow \tilH$ that provides a useful characterization for the equality of two contexts.
\begin{lemma}
\label{lem:equal contexts characterization}
Consider $C_1,C_2\in \tilV$, then the following conditions are equivalent:
\begin{enumerate}
    \item $C_1=C_2$.
    \item There exist $h_1\neq h_2\in \tilH\cup\tilV$ such that $C_1(h_1)=C_2(h_1)$ and $C_1(h_2)=C_2(h_2)$.
    \item $\spos(C_1)=\spos(C_2)$ and there exists $h\in \tilH$ such that $C_1(h)=C_2(h)$.
\end{enumerate}
\end{lemma}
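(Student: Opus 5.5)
We prove the cycle of implications $1\Rightarrow 2\Rightarrow 3\Rightarrow 1$. The two easy steps come first. For $1\Rightarrow 2$, take any two distinct forests, e.g.\ $h_1=\az$ and $h_2=a$ for some $a\in A$. Since $C_1=C_2$, we have $C_1(h_i)=C_2(h_i)$ trivially. For $3\Rightarrow 1$, write $p=\spos(C_1)=\spos(C_2)$ and work in the labeled-domain view given by $\tofd$. Then $C_i=C_i(h)[p\mapsto\square]$, since plugging $h$ at $p$ and then replacing the subtree at $p$ by $\square$ recovers $C_i$. From $C_1(h)=C_2(h)$ we get $C_1=C_1(h)[p\mapsto\square]=C_2(h)[p\mapsto\square]=C_2$. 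The only point to check is that the subtree at $p$ in $C_i(h)$ is exactly $h$, so replacing it removes all of $h$ and nothing else. This is immediate from the definition of $[x\mapsto\cdot]$ on labeled domains. If $h$ is the empty forest, the address $p$ may not exist in $C_i(h)$; in that case we instead reinsert $\square$ at $p$, shifting the later siblings.

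The main step is $2\Rightarrow 3$. Given the second part of condition 3, it suffices to show $\spos(C_1)=\spos(C_2)$. We argue by contradiction: suppose $p_1=\spos(C_1)\neq p_2=\spos(C_2)$. Then $C_1$ and $C_2$ differ as labeled domains. We compare $C_1(h_j)$ and $C_2(h_j)$ node by node and split according to how $p_1$ and $p_2$ relate in the address order.

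In the first case, neither address is a prefix of the other, or they are siblings with shifts. Consider the label at $p_2$ in $C_2(h_j)$. It is the root part of $h_j$, while in $C_1(h_j)$ the position $p_2$ is outside the plugged region and carries a fixed label of $C_1$, or is shifted by $\width(h_j)$. The width formula shows that equality for two distinct $h$'s forces compatible widths and labels. More precisely, $C_1(h)=C_2(h)$ says that the forest $g=C_1(h)$ contains $h$ at two positions $p_1\neq p_2$ with the same surrounding context. Using the inductive definition of contexts, both $C_1$ and $C_2$ can be peeled simultaneously: if they share the same outermost constructor ($\sigma(\cdot)$, or $\cdot+h'$ with the hole on the same side), we recurse. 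The interesting base situation is when the holes lie in different summands, say $C_1=C_1'+k_1$ and $C_2=k_2+C_2'$. Then $C_1(h)=C_2(h)$ gives an equation of the form $C_1'(h)+k_1=k_2+C_2'(h)$, a word-like equation in the monoid of root sequences.

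The second case is the nested one, $p_1<p_2$. Here $C_1(h_j)=C_2(h_j)$ forces $h_j$ to contain a proper copy of an image of itself under a nontrivial context, which is impossible by a size count. We compare the number of nodes on both sides of $C_1(h)=C_2(h)$: $|C_1|+|h|=|C_2|+|h|$, so $|C_1|=|C_2|$.

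This step is where I expect the obstacle. For a single $h$, the non-nested case can genuinely hold; for instance $\square+a$ and $a+\square$ agree on $h=a$. So the argument must use both $h_1$ and $h_2$. My plan is to reduce to the top-level horizontal equation $u\,h\,v=u'\,h\,v'$ over root sequences with $|u|+|v|=|u'|+|v'|$. From $uh_jv=u'h_jv'$ holding for two distinct $h_j$, and from comparing lengths (the widths) for each $j$, I would derive $u=u'$. If $h_1$ and $h_2$ have different widths, this follows by a direct length comparison. If they have equal widths, I would compare the positions where the two equations first disagree. I would also need to handle the case where $h_i$ is itself a context, treating the $\square$ in it as an ordinary letter.
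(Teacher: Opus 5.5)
\textbf{There is a genuine gap, and it cannot be closed: step $2\Rightarrow3$ is false as stated.} Your steps $1\Rightarrow2$ and $3\Rightarrow1$ are fine.

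\emph{Counterexample.} Take the very pair you mention, $C_1=a+\square$ and $C_2=\square+a$, with $h_1=a$ and $h_2=a+a$. Then
$C_1(h_1)=a+a=C_2(h_1)$ and $C_1(h_2)=a+a+a=C_2(h_2)$. Yet $\spos(C_1)=1\neq0=\spos(C_2)$ and $C_1\neq C_2$. In fact $C_1(h)=C_2(h)$ for every $h=a+\cdots+a$.

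\emph{Where your reduction fails.} The claim that different widths give $u=u'$ ``by a direct length comparison'' is wrong. Each equation $u\,h_j\,v=u'\,h_j\,v'$ only yields $|u|+|v|=|u'|+|v'|$, because $|h_j|$ cancels. If $|u|<|u'|$, write $u'=us$ and $v=rv'$ with $|r|=|s|\ge1$. The equation then becomes the conjugacy equation $h\,r=s\,h$ over root sequences. Its solutions are exactly $h=(xy)^kx$ with $k\ge0$, where $s=xy$ and $r=yx$. These are infinitely many forests of pairwise different widths, so no number of distinct test forests forces $u=u'$.

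\emph{The paper's proof fails at the same point.} Its incomparable case uses both $C_1[x_1\mapsto h][x_1]=h$ and $C_2[x_2\mapsto h][x_1]=C_2[x_1]$. For sibling holes and a multi-rooted $h$, one of these fails. In the example above, with $x_1=1$, $x_2=0$ and $h=a+a$, one of the two breaks under either reading of ``subtree at $x_1$''.

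\textbf{What your case split does establish.} The sibling configuration is the only obstruction.
\begin{itemize}
\item Nested holes are excluded by a size count, but the count must be localized. Compare the tree at the shallower hole position: on one side it is a root tree of $h$, on the other a tree strictly containing all of $h$. The global equality $|C_1|=|C_2|$ is no contradiction by itself.
\item Let $q$ be the longest common prefix of two non-nested hole addresses. Suppose at least one hole is not itself a child of $q$ (a root, if $q=\epsilon$). Then comparing the children of $q$ on both sides forces $h_1=h_2$: their number, and the trees at the positions affected by the holes, together with injectivity of $h\mapsto C(h)$.
\end{itemize}
So a correct version of condition~2 needs a hypothesis that rules out sibling holes. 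For example, require one of the $h_j$ to be a single node whose label occurs in neither $C_1$ nor $C_2$. Plugging such a node shifts no addresses, so its unique occurrence in $C_1(h_j)=C_2(h_j)$ directly gives $\spos(C_1)=\spos(C_2)$.
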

\begin{proof}
    The implication $(1)\implies(2)$ is trivial (indeed, if $C_1=C_2$ then $C_1(h)=C_2(h)$ for every $h$).

    \subparagraph*{$(2)\implies (3)$:}
    Consider $C_1,C_2,h_1,h_2$ such that
        $h_1\neq h_2$ and $C_1(h_1)=C_2(h_1),C_1(h_2)=C_2(h_2)$.
        Denote $x_1=\spos(C_1),x_2=\spos(C_2)$, we claim that $x_1=x_2$. Assume by way of contradiction that $x_1\neq x_2$, we divide into cases according to the ordering between $x_1$ and $x_2$ (as elements in $\bns$).
        If $x_1<x_2$, then $C_1[x_1],C_2[x_1]\in \cts{A}$ but $C_1[x]=\square\neq C_2[x_1]$. In particular, $\dom(\square)=\{0\}\subsetneq \dom(C_2[x_1])$.
        In addition for every context $C$ and $x\leq \spos(C)$ we have that $C(h)[x] = C[x](h)$.  
        Together we have by \cref{def:context composition} that 
        \[C_1(h_1)[x_1]=C_1[x_1](h_1)=h_1\neq C_2[x_1](h_1)=C_2(h_1)[x_1]\]
        where the inequality is because the domain of $C_2[x_1](h_1)$ strictly contains the domain of $\square(h_1)=h_1$.
        This is a contradiction to the assumption $C_1(h_1)=C_2(h_1)$. The case $x_2<x_1$ is dual.

        Otherwise, $x_1$ and $x_2$ are incomparable. Then, intuitively, plugging $h_1$ at $x_1$ does not affect the subforest rooted at $x_2$, and vice-versa. Formally, for every $h\in \tilH$ we have 
        \[C_2[x_2\mapsto h][x_1]=C_2[x_1]\]
        Together with the assumptions of $(2)$ we get
        \begin{align*}
            h_1&=C_1[x_1\mapsto h_1][x_1] = C_2[x_2\mapsto h_1][x_1] = C_2[x_1]\\
            &=C_2[x_2\mapsto h_2][x_1] = C_1[x_1\mapsto h_2][x_1]
            = h_2
        \end{align*}
        which is a contradiction to the assumption $h_1\neq h_2$.
        We conclude that $x_1=x_2$ as required for $(3)$.

    \subparagraph*{$(3)\implies (1)$:}
        Denote $x=\spos(C_1)=\spos(C_2)$, and let $h\in \tilH$ such that $C_1(h)=C_2(h)$. We show that $C_1=C_2$.
        First, recall that for every $C,h$ we have $C(h)=C[\spos(C)\mapsto h]$.
        Denote $\tup{d,l}=C_1(h)=C_2(h)$.
        Notice that $\dom(C_1) = d\setminus x\bnp = \dom(C_2)$. Take $y\in d\setminus x\bnp$. If $y\in d\setminus x\bns$ then $\lab(C_1)(y)=l(y)=\lab(C_2)(y)$. Otherwise, $y=x$ and $\lab(C_1)(x)=\square=\lab(C_2)(x)$.
        So overall we get $C_1=C_2$ which implies the required $C_1=C_2$, concluding $(1)$.
\end{proof}
\begin{remark}[Two forests are needed to separate contexts]
\label{rmk: two forest to separate contexts}
The condition in Item 2 of \cref{lem:equal contexts characterization} may seem redundant, and raises the question of whether equality is already implied when a single $h$ satisfies $C_1(h)=C_2(h)$.
This, however, is not the case. For example, consider:
    \[C_1 = a + \square,\quad C_2 = \square + a,\quad h=a\]
    Then $C_1(h)=C_2(h)$, but $C_1\neq C_2$ as e.g., $C_1(b)=a+b\neq b+a= C_2(b)$. 
    Thus, the condition that equality is maintained under two different $h_1\neq h_2$ is necessary.
\end{remark}

Another basic technical tool we require is an ability to manipulate addresses in a way that corresponds to the behavior of $\spos$ under context composition. In the following we define such an operation $\spa$ and establish several properties of it.
Intuitively, one should think of $x\spa y$ as ``if we take a context $C$ with $\spos(C)=x$, and a node at address $y$ in some forest $h$, then this node moves to  address $x\spa y$ in $C\cdot h$''.
\begin{definition}
\label{def:spa}
We define $\spa\colon\bnp\times \bnp\to \bnp$ as follows: for $x,y\in \bnp$, write $x=x'\cdot \sigma_x,y=\sigma_y\cdot y'$ where $\sigma_x,\sigma_y\in \bbN$. Then  $x\spa y= x'\cdot (\sigma_x+\sigma_y)\cdot y'$. 
\end{definition}
$\spa$ is essentially an extension of $\cdot$, in the following sense.
\begin{proposition}
    \label{prop:spa and concat}
    Let $x\in \bns$ and $y\in \bnp$ then $x\cdot y = (x\cdot 0)\spa y$
\end{proposition}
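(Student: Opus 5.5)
The statement is a direct unfolding of \cref{def:spa}, so the plan is to decompose both sides and compare them symbol by symbol.

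First, I fix $x\in\bns$ and $y\in\bnp$. Since $y$ is nonempty, I write $y=\sigma_y\cdot y'$ with $\sigma_y\in\bbN$ and $y'\in\bns$. This is the unique decomposition \cref{def:spa} asks for on the right argument.

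Next, I look at the left argument $x\cdot 0$. It is a nonempty word, hence in $\bnp$, so $\spa$ is defined on it. Its decomposition as in \cref{def:spa} is $x\cdot 0 = x'\cdot\sigma_x$ with $x'=x$ and $\sigma_x=0$; this decomposition is unique because the last letter of a nonempty word is determined.

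Applying the definition then gives
\[(x\cdot 0)\spa y = x\cdot(0+\sigma_y)\cdot y' = x\cdot\sigma_y\cdot y' = x\cdot y,\]
which is the claim.

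There is no real obstacle here. The only points to check are that $x\cdot 0\in\bnp$ even when $x=\epsilon$, and that the decompositions used in \cref{def:spa} are unique. Uniqueness holds because the first letter and the last letter of a nonempty word are determined, so $\spa$ is well defined and the computation above is forced. The case $x=\epsilon$ is covered as well: it gives $0\spa y=y$, which is consistent with the general computation.
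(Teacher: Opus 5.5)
Your proof is correct and follows the paper's own argument: write $y=\sigma_y\cdot y'$, read $x\cdot 0$ as $x'\cdot\sigma_x$ with $x'=x$ and $\sigma_x=0$, and unfold the definition of $\spa$ to get $x\cdot\sigma_y\cdot y'=x\cdot y$. The paper leaves implicit your checks that the decompositions are unique and that $x\cdot 0\in\bnp$ (including when $x=\epsilon$); they are harmless additions.
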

\begin{proof}
    Denote $y = \sigma_y\cdot y'$, then  and we get $(x\cdot 0)\spa y = x\cdot (0 + \sigma_y)\cdot y' = x\cdot \sigma_y\cdot y' = x\cdot y$
\end{proof}
We now show that $\spa$ behaves well, algebraically. This also captures the intuition that $x\spa y$ is the address of $y$ when plugged into a context at $x$ (by essentially identifying $y$ with $\spos$).
\begin{lemma}
    \label{prop: address spos morphism}
    $\spa$ satisfies the following properties.
    \begin{enumerate}
    \item $\tup{\bnp, \spa, 0}$ is a monoid.
    \item $\spos\colon\tup{\tilV, \cdot, \square} \rightarrow \tup{\bnp, \spa, 0}$ is a monoid morphism.
    \end{enumerate}
\end{lemma}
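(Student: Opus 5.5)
We prove the two items by direct computation from \cref{def:spa}, writing every address in $\bnp$ as a (possibly empty) prefix, a distinguished last letter and, symmetrically, a distinguished first letter. For the first item, the identity $0$ is checked directly: $x\spa 0 = x'\cdot(\sigma_x+0) = x$, and $0\spa y = (0+\sigma_y)\cdot y' = y$.

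For associativity, take $x=x'\sigma_x$, $z=\sigma_z z'$, and split on the length of $y$.
\begin{itemize}
\item If $|y|=1$, say $y=\sigma$, then both $(x\spa y)\spa z$ and $x\spa(y\spa z)$ equal $x'\cdot(\sigma_x+\sigma+\sigma_z)\cdot z'$, by commutativity and associativity of $+$ on $\bbN$.
\item If $|y|\ge 2$, write $y=\sigma_1\, y''\, \sigma_2$ with $y''\in\bns$. Then $x\spa y = x'(\sigma_x+\sigma_1)y''\sigma_2$, whose last letter is still $\sigma_2$, so
\[
(x\spa y)\spa z = x'(\sigma_x+\sigma_1)\,y''\,(\sigma_2+\sigma_z)\,z'.
\]
Symmetrically, $y\spa z = \sigma_1 y''(\sigma_2+\sigma_z)z'$, whose first letter is still $\sigma_1$, and composing with $x$ yields the same word.
\end{itemize}
The only care needed is that the first and last letters of $y$ are distinct positions exactly when $|y|\ge2$, which is why the case split is on $|y|$.

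For the second item, $\spos(\square)=0$ holds because the free-algebra domain of $\square$ is $\{0\}$. It remains to show $\spos(C_1\cdot C_2)=\spos(C_1)\spa\spos(C_2)$. Since $C_1\cdot C_2=C_1[\spos(C_1)\mapsto C_2]$, this reduces to a lemma about the replacement operation of \cref{def:operations on labeled domains}: if $x\in\dom(h)$ and $y\in\dom(g)$, then the node $y$ of $g$ sits at address $x\spa y$ in $h[x\mapsto g]$, with the same label. To prove it, write $x=x'\sigma_x$. Replacing the subtree at $x$ by the forest $g$ inserts the roots $0,1,\dots,\width(g)$ of $g$ as consecutive siblings starting at position $\sigma_x$ under the parent $x'$, and shifts the later siblings of $x$ to the right. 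Thus the root $\sigma_y$ of $g$ lands at $x'(\sigma_x+\sigma_y)$, and descendants keep their relative suffix $y'$. Applying this with $y=\spos(C_2)$, the unique $\square$ of $C_1\cdot C_2$ is the image of the $\square$ of $C_2$ (the $\square$ of $C_1$ at $x$ is overwritten), so $\spos(C_1C_2)=x\spa y$.

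The main obstacle is purely formal. \cref{def:operations on labeled domains} describes replacement only informally, so we will state the index-shift semantics of $[x\mapsto g]$ explicitly, including the sibling shift and the case where $x$ is a root ($x'=\epsilon$). We then verify that this semantics matches the addressing scheme, after which the lemma follows immediately.
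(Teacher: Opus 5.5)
Your proof is correct. Item~1 follows the paper's computation, with the same case split on $|y|$. You also check that $0$ is a right unit, which the paper's proof skips: it only verifies $0\spa x=x$.

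For item~2 you take a different route. The paper argues by structural induction on the left context $C'$, peeling off $\sigma(\square)$, $h+\square$ or $\square+h$. It relies on one-step identities such as $\spos(\sigma(C))=\spos(\sigma(\square))\spa\spos(C)$ and $\spos(h+C)=\spos(h+\square)\spa\spos(C)$, which it states as basic observations without proof. The picture you formalize, in which the roots of the plugged-in context land at $x'(\sigma_x+\sigma_y)$, appears in the paper only as informal intuition.

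Your approach instead proves once and for all that every node $y$ of $g$ sits at $x\spa y$ in $h[x\mapsto g]$. This forces you to make the substitution of \cref{def:operations on labeled domains} precise: later siblings shift by $\width(g)$, and the root case $x'=\epsilon$ must be handled. Once that is done, the morphism property is immediate with no induction. You also get a statement about all nodes of the plugged-in forest, not just its hole, which is precisely the fact behind the clause $\AncEmb(\tau,x1,x)(u)=u_z\spa u$ of \cref{def:ancestor embedding formal}. The paper's route stays within the inductive constructors of forests and contexts, but leaves those one-step observations, themselves special cases of the claim, unverified.
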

\begin{proof}
    First we show that $\tup{\bnp, \spa, 0}$ is a monoid.
    \begin{itemize}
            \item $\spa$ is associative:
        we compute $(\overbrace{x\cdot \sigma_x}^{x'})\spa(y\spa (\overbrace{\sigma_z\cdot z}^{z'}))$ according to whether $|y|>1$:
        \begin{align*}
            &x'\spa((y'\cdot \sigma_y)\spa z') = x'\spa(y'\cdot (\sigma_y+\sigma_z)\cdot z))\\
            &=\begin{cases}
                x\cdot (\sigma_x+\sigma_y+\sigma_z)\cdot z = ((x\cdot\sigma_x)\spa\overbrace{\sigma_y}^y)\spa z' &y = \sigma_y\\
                x\cdot (\sigma_x+\sigma'_y)\cdot y''\cdot (\sigma_y+\sigma_z)\cdot z = ((x\cdot \sigma_z)\spa (\overbrace{\sigma'_y\cdot y''\cdot \sigma_y}^{y}))\spa z' &y = \sigma'_y\cdot y''\cdot \sigma_y
            \end{cases}\\
            &=(x'\spa y)\spa z'
        \end{align*}
            \item $0$ is unit of $\bnp$:
            Let some $x\in \bnp$ then denote $x=\sigma\cdot y$ and we get \[0\spa x = (0+\sigma)\cdot y = \sigma\cdot y = x\]
        \end{itemize}
    
    In order to show that $\spos$ is a monoid morphism, we need to show that $\spos(C'\cdot C)=\spos(C')\spa \spos(C)$ for every $C,C'\in \tilV$. The formal proof is by induction below.
    Intuitively, however, denote $\spos(C')=x=x'\sigma_x$ and $\spos(C)=y=\sigma_y y'$ and consider $C'\cdot C$. In $C'$, the leaf $\square$ at position $x'\sigma_X$ is replaced by all the roots of the forest $C$. Therefore, the branch leading to $\square$ in $C'\cdot C$, which in $C$ is rooted at $\sigma_y$, now passes through $x'(\sigma_x+\sigma_y)$. Then, this branch proceeds along $y'$ to reach $\square$. Therefore $\spos(C'\cdot C)=x'\cdot (\sigma_x+\sigma_y)\cdot y'$, as required. 
    
    Before addressing this formally, we make the following basic observations:
    \begin{itemize}
    \item $\spos(\square) = 0$  
    \item $\spos(\sigma(C)) = \spos(\sigma(\square))\spa \spos(C)$
    \item $\spos(h+C) = \spos(h+\square)\spa \spos(C)$
    \item $\spos(C+h) = \spos(\square+h)\spa \spos(C)$
    \end{itemize}
   We can now show that $\spos(C'\cdot C)=\spos(C')\spa \spos(C)$ for every $C,C'\in \tilV$. We show this by induction over the structure of $C'$.
        \begin{itemize}
            \item If $C'=\square$ then
            \begin{align*}
              \spos(\square \cdot C) &= \spos(C) = 0\spa \spos(C)
              = \spos(\square)\spa \spos(C)
            \end{align*}
            \item If $C'=\sigma(C'')$ then 
            \begin{align*}
                \spos(\sigma(C'')\cdot C) &= \spos(\sigma(\square)\cdot C''\cdot C)=\spos(\sigma(\square))\spa \spos(C''\cdot C) \\&=\spos(\sigma(\square))\spa \spos(C'')\spa \spos(C)\\
                &=\spos(\sigma(C''))\spa \spos(C)
            \end{align*}
            \item If $C'=h+C''$ then  
            \begin{align*}
                \spos((h+C'')\cdot C) &= \spos((h+\square)\cdot C''\cdot C)\\
                &=\spos(h+\square)\spa (\spos(C'')\spa \spos(C))\\
                &=\spos(h+\square)\spa \spos(C'')\spa \spos(C)\\
                &=\spos(h+C'')\spa \spos(C)
            \end{align*}
            \item If $C'=C''+h$ then similarly to the above
            \begin{align*}
                \spos((C''+h)\cdot C) &= \spos(\square+h)\spa \spos(C'')\spa \spos(C)
                \\&= \spos(C''+h)\spa \spos(C)
            \end{align*}
        \end{itemize}
\end{proof}
Our next intuitive property is that if we fix $x$, then address shifts with respect to $x$ are injective (i.e., address shifts do not cause any ``collision'' of addresses).
\begin{proposition}
\label{prop:spa injective}
    For every $x\in \bnp$ the functions $y\mapsto x\spa y$ and $y\mapsto y\spa x$ are injective.
\end{proposition}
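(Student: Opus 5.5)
We have to show that for fixed $x\in\bnp$ both $y\mapsto x\spa y$ and $y\mapsto y\spa x$ are injective. The plan is to argue directly from \cref{def:spa} by comparing lengths and letters, treating the two maps separately even though the arguments are symmetric.

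\textbf{Left multiplication.} Write $x=x'\cdot\sigma_x$ with $\sigma_x\in\bbN$. Suppose $x\spa y_1=x\spa y_2$ with $y_i=\sigma_i\cdot y_i'$. By definition,
\[x'\cdot(\sigma_x+\sigma_1)\cdot y_1'=x'\cdot(\sigma_x+\sigma_2)\cdot y_2'.\]
Two equal words have equal length, so $|y_1'|=|y_2'|$. After stripping the common prefix $x'$, the letters at position $|x'|+1$ agree, which gives $\sigma_x+\sigma_1=\sigma_x+\sigma_2$. Cancellation in $\bbN$ then yields $\sigma_1=\sigma_2$. The remaining suffixes coincide, so $y_1'=y_2'$, and hence $y_1=y_2$.

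\textbf{Right multiplication.} Write $x=\sigma\cdot x''$ and $y_i=y_i'\cdot\tau_i$. Then
\[y_i\spa x=y_i'\cdot(\tau_i+\sigma)\cdot x''.\]
Equal words have equal length, and $|x''|$ is fixed, so $|y_1'|=|y_2'|$. Comparing prefixes of that length gives $y_1'=y_2'$. The next letter gives $\tau_1+\sigma=\tau_2+\sigma$, so $\tau_1=\tau_2$, and therefore $y_1=y_2$.

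The only subtle point, if any, is the case where $x$ or $y$ has length one. There the decomposition in \cref{def:spa} places the same letter as both first and last symbol. The argument above never assumes $x'$ or $y'$ is nonempty, so these cases are covered as written and need no separate treatment.
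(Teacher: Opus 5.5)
Your proof is correct and follows the paper's argument for left multiplication: strip the common prefix $x'$, match the next letter to get $\sigma_x+\sigma_1=\sigma_x+\sigma_2$, and match the remaining suffixes. For right multiplication the paper only says ``symmetric''; your explicit argument (equal lengths force $|y_1'|=|y_2'|$, then compare prefixes and the next letter) supplies exactly that detail.
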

\begin{proof}
    The proof is based on the following observation: for every $x',y',z'\in \bns$ and $n_1,n_2\in \bbN$, if $x'n_1y'=x'n_2z'$ then $n_1=n_2$ and $y'=z'$.
    
    Consider the function $y\mapsto x\spa y$.
    Denote $x=x'\cdot \sigma_x$, and let $y = \sigma_y\cdot y', z=\sigma_z\cdot z'$ such that $x\spa y = x\spa z$.
    This means that $x'(\sigma_x+\sigma_y)y' = x' (\sigma_x+\sigma_z)z'$, yielding by the observation above that $\sigma_x+\sigma_y=\sigma_x+\sigma_z$ (so $\sigma_y=\sigma_z$) and $y'=z'$. Thus, $y=z$, so the function is injective.

    The case of $y\mapsto y\spa x$ is symmetric.
\end{proof}

Since $\spa$ extends $\cdot$ as per \cref{prop:spa and concat}, it also corresponds to a partial order $\lcrs$ on $\bbN^*$ that generalizes the prefix relation. Intuitively, in $\lcrs$ a node $x$ is smaller than its left child $x0$, and $x\cdot i$ is smaller than $x\cdot (i+1)$. 
\begin{definition}[Left-Child Right-Sibling Order]
\label{def:lcrs order}
The \emph{Left Child Right Sibling order} (LCRS, for short) is a partial order $\lcrs$ on $\bns$ defined as the transitive closure of the relation $\{(x,x):x\in \bns\}\cup\{(x,x0):x\in \bns\}\cup \{(xi,x(i+1)):x\in \bns,i\in \bbN\}$.
\end{definition}
For example: $1\lcrs 1\cdot2$ and $1\cdot2\lcrs 1\cdot3$, but $1\cdot3\not\lcrs 2$. 
Observe that an address domain is simply a \emph{finite} subset $d\subseteq \bns$ which is closed under intervals by $\lcrs$.

The connection between $\spa$ and $\lcrs$ is captured in the following.
\begin{proposition}
\label{prop:lcrs spa}
    Consider $x,y\in \bnp$ then $x\lcrs y$ if and only if exists $u\in \bnp$ such that $y = x\spa u$. Moreover, this $u$ is unique if it exists.
\end{proposition}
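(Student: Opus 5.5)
The plan is to characterize $\lcrs$ explicitly and then read off $u$. First I would prove the description
\[
x\lcrs y \iff y = x'\cdot(\sigma_x+k)\cdot w \text{ for some } k\in\bbN,\ w\in\bns,
\]
where $x=x'\cdot\sigma_x$ with $\sigma_x\in\bbN$ its last letter. In words: $y$ is obtained from $x$ by possibly increasing its last letter, then appending an arbitrary (possibly empty) suffix.

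For the direction ($\Leftarrow$), I would build a chain of generating steps. Starting from $x$, apply $k$ steps of the form $x'i\mapsto x'(i+1)$ to reach $x'(\sigma_x+k)$. Then, for each letter $j$ of $w$ in turn, apply one left-child step $z\mapsto z0$ followed by $j$ right-sibling steps, which appends $j$. The generating relation is reflexive, so this also covers $k=0$, $w=\epsilon$.

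For the direction ($\Rightarrow$), I would argue by induction on the length of a chain of generating steps. It suffices to show that the set $\{x'(\sigma_x+k)w : k\in\bbN,\ w\in\bns\}$ contains $x$ and is closed under the three generators:
\begin{itemize}
\item the reflexive step is trivial;
\item the step $z\mapsto z0$ extends $w$ by one letter;
\item the step $zi\mapsto z(i+1)$ increases the last letter of $w$ when $w\neq\epsilon$, and increases $k$ when $w=\epsilon$.
\end{itemize}

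Next, I translate this description into the statement via $\spa$. Write $u=\sigma_u\cdot u'$. By \cref{def:spa}, $x\spa u = x'(\sigma_x+\sigma_u)u'$. This matches the characterization exactly with $k=\sigma_u$ and $w=u'$, and every pair $(k,w)$ yields a valid $u=k\cdot w\in\bnp$. This gives both directions of the equivalence. Uniqueness of $u$ is then immediate from \cref{prop:spa injective}: the map $u\mapsto x\spa u$ is injective.

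The only mildly delicate point is the ($\Rightarrow$) direction, and specifically making sure the closure argument handles the sibling step in both cases. When $w=\epsilon$, the step modifies $x$'s own last letter, so it must be absorbed into $k$. When $w\neq\epsilon$, it modifies the last letter of $w$. Since $x\in\bnp$, its last letter $\sigma_x$ always exists, so the decomposition $x=x'\sigma_x$ is well defined.
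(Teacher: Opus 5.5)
Your proof is correct, but it is organized around a different invariant than the paper's.

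The paper proves ($\Rightarrow$) by induction on the transitive closure that defines $\lcrs$. It gives explicit witnesses for the three generators: $u=0$ for the reflexive pair, $u=00$ for $x\mapsto x0$, and $u=1$ for $zi\mapsto z(i+1)$. A composite step $x\lcrs z\lcrs y$ is handled by composing witnesses, $y=(x\spa u_z)\spa u_y=x\spa(u_z\spa u_y)$, which relies on the associativity of $\spa$ from the preceding monoid lemma. For ($\Leftarrow$) the paper simply asserts $x\lcrs x'(\sigma_x+\sigma_u)\lcrs y$.

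You instead first establish an explicit normal form for the $\lcrs$-upset of $x$, namely $\{x'(\sigma_x+k)w : k\in\bbN,\ w\in\bns\}$, using a closure argument. You then match it against the formula $x\spa(\sigma_u u')=x'(\sigma_x+\sigma_u)u'$.

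The paper's route is shorter because it reuses the already established monoid structure of $\spa$. Yours buys three things:
\begin{itemize}
\item It is self-contained and never needs associativity of $\spa$, whose proof in the paper requires a case split.
\item It explicitly justifies the chain $x'(\sigma_x+\sigma_u)\lcrs x'(\sigma_x+\sigma_u)u'$, which the paper leaves implicit.
\item It gives uniqueness directly: $k$ is read off from the letter of $y$ at position $|x|$, and $w$ is the remaining suffix. Citing injectivity of $u\mapsto x\spa u$ is therefore optional.
\end{itemize}
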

\begin{proof}
    The first direction follows by induction on the structure of $\lcrs$ (\cref{def:lcrs order}). Let $x,y$ such that $x\lcrs y$.
    \begin{itemize}
        \item Base:
        \[u=\begin{cases}
        0 & y=x\\
        00 & y=x0\\
        1 & x=zi\wedge y=z(i+1)\end{cases}\]
        \item Step: if $x\lcrs z \lcrs y$ ($x\neq z\neq y$) then by the induction hypothesis we can write $z=x\spa u_z$ and $y = z\spa u_y$, so by the definition of $\spa$ we have $y = x \spa (u_z\spa u_y)$
    \end{itemize}
    The uniqueness of $u$ follows from the injectivity in \cref{prop:spa injective}.
    
    For the second direction, assume $y = x\spa u$. Denote $x=x'\cdot \sigma_x$ and $u = \sigma_u\cdot u'$, then we get $y = x'(\sigma_x+\sigma_u)u'$. We now notice that $x\lcrs x'(\sigma_x+\sigma_u)\lcrs y$.
\end{proof}

The tools introduced thus far apply to the augmented forest algebra as well as to the standard free forest algebra. 
A natural operation unique to the augmented algebra is to replace a default hole $\square_h$ with the concrete forest $h$. In general, we can take a subforest and replace all its default holes with their concrete values. We call this \emph{unraveling}, as follows.
\begin{definition}[Unraveling]
\label{def:unraveling}
Consider $h\in \tilH\cup \tilV$. The \emph{unraveling} of $h$ is 
\[
\unr(h)=
\begin{cases}
    h'&h=\square_{h'}\\
    \square & h=\square\\
    \sigma(\unr(h'))& h=\sigma(h')\\
    \unr(h_1)+\unr(h_2)&h=h_1+h_2
\end{cases}
\]    
Note that by definition $\unr$ is a monoid morphism when viewed as either $\unr\colon\tilH\to H_A$ or $\unr\colon\tilV\to V_A$.
\end{definition}
We show that unraveling also respects the action of $\tilV$ on $\tilV\cup \tilH$.
\begin{lemma}[Unraveling Morphism]
    \label{lem:unr mor}
    For every $C \in \tilV$ and $h\in \tilV\cup \tilH$ we have
    \[\unr(C\cdot h) = \unr(C)\cdot \unr(h)\]
\end{lemma}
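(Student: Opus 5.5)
We prove $\unr(C\cdot h)=\unr(C)\cdot\unr(h)$ by structural induction on the context $C\in\tilV$, following the inductive definition of $\tilV$. Throughout we use the inductive form of composition from \cref{def:context composition}, which extends verbatim to arguments $h\in\tilV$: plugging a context into a context is again defined clause by clause, and this agrees with $C_1[\spos(C_1)\mapsto C_2]$. The definition of $\unr$ in \cref{def:unraveling} is also clause by clause, so the two recursions line up.

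\emph{Base case} $C=\square$. Then $C\cdot h=h$ and $\unr(\square)=\square$, so
\[
\unr(\square)\cdot\unr(h)=\square\cdot\unr(h)=\unr(h)=\unr(\square\cdot h).
\]
There is no base case for $C=\square_{h'}$: a default hole is a forest, not a context, and default holes occur only at leaves other than the position $\spos(C)$.

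\emph{Composition step} $C=\sigma(C')$ with $\sigma\in A$. Here $C\cdot h=\sigma(C'\cdot h)$. Using the definition of $\unr$ and then the induction hypothesis,
\[
\unr(C\cdot h)=\sigma(\unr(C'\cdot h))=\sigma(\unr(C')\cdot\unr(h)).
\]
Since $\unr(C)=\sigma(\unr(C'))$, the right-hand side equals $\unr(C)\cdot\unr(h)$ by the composition clause in $V_A$.

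\emph{Addition step} $C=C'+g$ with $g\in\tilH$ (the case $g+C'$ is symmetric). Here $C\cdot h=C'\cdot h+g$, so
\[
\unr(C\cdot h)=\unr(C'\cdot h)+\unr(g)=\unr(C')\cdot\unr(h)+\unr(g).
\]
Since $\unr(C)=\unr(C')+\unr(g)$, the addition clause of composition in the free algebra $\tup{H_A,V_A}$ turns the right-hand side into $(\unr(C')+\unr(g))\cdot\unr(h)=\unr(C)\cdot\unr(h)$.

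The main obstacle is well-definedness rather than the algebra. When $h\in\tilV$, $\unr(h)$ must be a genuine context of $V_A$, and the results on both sides must lie in the correct sets. Unravelling only replaces leaf symbols $\square_{h'}$ by forests $h'\in H_A$. It never touches the unique standard hole $\square$, and $H_A$ contains no default holes, so nothing nested can be created. Hence $\unr$ maps $\tilH$ to $H_A$ and $\tilV$ to $V_A$, and both sides of the equation are defined.

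If one prefers the address-domain definition $C[\spos(C)\mapsto h]$, we first note that it coincides with the inductive definition; this is implicitly assumed in \cref{def:free forest algebra}. The induction above then carries over unchanged.
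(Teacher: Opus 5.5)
Your proof is correct and matches the paper's argument: structural induction on $C$ with the cases $\square$, $\sigma(C')$, and $C'+g$ (and symmetrically $g+C'$), using the clause-by-clause definitions of $\unr$ and of composition. Your extra check that $\unr$ maps $\tilH$ into $H_A$ and $\tilV$ into $V_A$, so both sides are defined, is a detail the paper leaves implicit.
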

\begin{proof}
    By induction on the structure of $C$, using \cref{def:unraveling}.
    \begin{itemize}
        \item $C=\square$: trivial
        \item $C=\sigma(C')$: 
        \begin{align*}
            \unr(C\cdot h) &= \sigma(\unr(C'\cdot h)) = \sigma(\unr(C')\cdot \unr(h))\\
            &=\sigma(\unr(C'))\cdot \unr(h) =\unr(C)\cdot\unr(h)
        \end{align*}
        \item $C = h'+C'$:
        \begin{align*}
            \unr(C\cdot h)&=\unr(h')+\unr(C'\cdot h) \\
            &= \unr(h')+\unr(C)'\cdot \unr(h) \\
            &= \unr(h'+C')\cdot\unr(h)\\
            &=\unr(C)\cdot \unr(h)
        \end{align*}
        \item $C=C'+h'$: analogous to the previous case.
    \end{itemize}
\end{proof}

It is often useful to equate a default hole $\square_h$ with $h$, this can be lifted to an equivalence over forests and contexts, as follows (see \cref{fig:equiv unravel}).
\begin{definition}[Unravel-Equivalence]
    \label{def:equiv unravel}
    We say that $h_1,h_2\in \tilH\cup \tilV$ are \emph{unravel equivalent}, denoted $h_1\eqUnr h_2$ if $\unr(h_1)=\unr(h_2)$.
\end{definition}
\begin{figure}[ht]
    \centering
    \begin{subfigure}{0.48\textwidth}
        \centering
        \[
        \begin{forest}
            [$a$ [$\square_{a(a)}$] [$b$ [$b$]]]
        \end{forest}
        \eqUnr 
        \begin{forest}
            [$a$ [$a$ [$a$]] [$\square_{b(b)}$]]
        \end{forest}
        \xrightarrow{\unr}
        \begin{forest}
            [$a$ [$a$ [$a$]] [$b$ [$b$]]]
        \end{forest}
        \]
        \caption{Unravel-equivalent forests.}
        \label{fig:equiv unravel 1}
    \end{subfigure}
    \hfill
    \begin{subfigure}{0.48\textwidth}
    \centering
    \[
    \begin{split}
    \begin{forest}
        [$a$ [$\square$][$\square_{a+b+c}$]]
    \end{forest}
    &\eqUnr 
    \begin{forest}
        [$a$ [$\square$] [$a$] [$\square_{b+c}$]]
    \end{forest}\\
    &\downarrow\!{\unr}\\
    &\begin{forest}
        [$a$ [$\square$] [$a$] [$b$] [$c$]]
    \end{forest}
    \end{split}
    \]
    \caption{Unravel-equivalent contexts.}
    \label{fig:equiv unravel 2}
\end{subfigure}
    \caption{Unravel equivalent structures, and their corresponding unraveling.}
    \label{fig:equiv unravel}
\end{figure}

Of particular importance in our decomposition are sets of contexts that are ``stable'' under unraveling (and also under replacing subforests with default holes). We capture this in the following.
\begin{definition}[Stable Context Set]
    \label{def:stable context set}
    A set of contexts $\wHole{V}\subseteq \tilV$ is \emph{stable} if it is a sub-semigroup of $\tilV$, and for every $C_1,C_2\in \tilV$, if $C_1\eqUnr C_2$ then $C_1\in \wHole{V}\iff C_2\in \wHole{V}$.
\end{definition}

Recall that syntactically, we do not allow ``nested default holes'', i.e., $\square_h$ when $h\in \tilH\setminus H_A$. However, in the following it is convenient to refer to such elements, without changing the syntax. To achieve this, we lift the notion of $\square_h$ to $h\in \tilH\setminus H_A$ by defining $\square_h=\square_{\unr(h)}\in \hf$.

\begin{remark}[$\spos$ and $\unr$]
\label{rmk:spos and unravel}
Consider some $C\in \tilV$. The reader may suspect that $\spos(C)$ is simply $\spos(\unr(C))$. This, however, is not always the case. For example, 
take $C=\square_{a+a}+\square$ and notice that $\spos(C)=1$ while $\spos(\unr(C)) = \spos(a+a+\square)=2$. 
This address change is a nuisance that we encounter in certain proofs, and must account for.
\end{remark}

\label{def:lifted morph}
We now lift the notion of morphism to handle $\tup{\tilH,\tilV}$ as follows.
For a forest morphism $\tup{\alpha,\beta}$ where $\alpha\colon H_A\to H$ and $\beta\colon V_A\to V$ (for some forest algebra $\tup{H,V}$) we define $\tup{\wHole{\alpha},\wHole{\beta}}=\tup{\alpha\circ \unr,\beta\circ \unr}$. Since $\unr$ is a monoid morphism, this is a monoid morphism from $\tup{\tilH,\tilV}$ to $\tup{H,V}$.

The following now readily follows from the definition of morphism (since $\wHole{\beta}$ already unravels the given input).
\begin{proposition}
    \label{prop:inverse image is unravelling stable}
    Consider a monoid morphism $\wHole{\beta}\colon\tilV\to V$ and a sub-semigroup $S\le V$, then $\wHole{\beta}^{-1}(S)$ is a stable context set.
\end{proposition}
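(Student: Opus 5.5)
The plan is to unfold the definitions and check the two conditions of \cref{def:stable context set} for the set $\wHole{V}\coloneqq\wHole{\beta}^{-1}[S]$. Here $\wHole{\beta}=\beta\circ\unr$ is the lifted morphism from \cref{def:lifted morph}. Throughout we read $\wHole{\beta}^{-1}(S)$ as the preimage of the set $S$, which is a subset of $\tilV$.

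\textbf{Sub-semigroup.} Take $C_1,C_2\in\wHole{V}$, so that $\wHole{\beta}(C_1),\wHole{\beta}(C_2)\in S$. By \cref{lem:unr mor}, $\unr$ respects context composition, and $\beta$ is a monoid morphism. Hence $\wHole{\beta}$ is a semigroup morphism $\tilV\to V$, and
\[
\wHole{\beta}(C_1\cdot C_2)=\beta(\unr(C_1)\cdot\unr(C_2))=\wHole{\beta}(C_1)\cdot\wHole{\beta}(C_2).
\]
This product lies in $S$ because $S$ is closed under multiplication. So $C_1\cdot C_2\in\wHole{V}$, and $\wHole{V}$ is a sub-semigroup of $\tilV$. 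If $\wHole{V}$ is empty, the closure condition holds vacuously. I will follow the paper's implicit convention that the empty set also counts as a (degenerate) sub-semigroup; no other part of the argument depends on this.

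\textbf{Closure under unravel-equivalence.} Take $C_1,C_2\in\tilV$ with $C_1\eqUnr C_2$. By \cref{def:equiv unravel} this means $\unr(C_1)=\unr(C_2)$. Therefore
\[
\wHole{\beta}(C_1)=\beta(\unr(C_1))=\beta(\unr(C_2))=\wHole{\beta}(C_2).
\]
It follows that $\wHole{\beta}(C_1)\in S$ if and only if $\wHole{\beta}(C_2)\in S$, that is, $C_1\in\wHole{V}\iff C_2\in\wHole{V}$.

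There is no real obstacle here. The only point that needs care is that the multiplicativity of $\wHole{\beta}$ on $\tilV$ rests on \cref{lem:unr mor}, which guarantees that $\unr$ respects context composition, together with $\beta$ being a morphism. Once that is granted, both conditions follow immediately from the fact that $\wHole{\beta}$ factors through $\unr$.
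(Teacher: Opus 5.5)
Your proof is correct and matches the paper's one-line argument. The paper also uses that $\wHole{\beta}=\beta\circ\unr$ is multiplicative by \cref{lem:unr mor}, so the preimage of $S$ is closed under composition, and that unravel-equivalent contexts have equal images under $\wHole{\beta}$, which gives the closure condition. You rightly read $\wHole{\beta}$ as the lifted morphism rather than an arbitrary monoid morphism on $\tilV$, which is what the paper's remark that $\wHole{\beta}$ ``already unravels the given input'' relies on.
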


\section{Binary Decomposition}
\label{sec:bin decomp}
We can now start describing the first phase of our decomposition scheme, whereby we take a forest and decompose it via a binary tree. 
At a glance, this is a tree analogue of taking a word $w$ and simply decomposing it recursively into subwords in some way. Crucially, however, in trees there are many more ways to decompose, and more importantly -- in order to combine two children into a parent node we need to keep more information than just the two parts: we need to also know \emph{how} they fit together (as discussed in \cref{fig:problem-decomp}).

We remark that this phase of the decomposition is purely syntactic: we take a forest $h\in \tilH$ and produce a binary tree that corresponds to its decomposition, regardless of any morphisms involved in the setting. In \cref{def:general decomposition} we generalize the construction to achieve bounded depth, and there we use the morphism. 

This section is organized as follows. In \cref{sec:bin decomp definitions} we define a binary decomposition. In \cref{subsec:references and ancestor embedding} we enrich the decomposition with additional structural information that is crucial for our reasoning. Then, in \cref{subsec:stable and basis} we parameterize the definition by sets of contexts and of leaves, which is used later to apply inductive arguments.

\subsection{Binary Decomposition Definitions}
\label{sec:bin decomp definitions}
Before presenting the precise definition, we illustrate a binary decomposition.
\begin{example}
\label{xmp:binary decomposition}
Consider the forest $h=b(a+a)+a(b+a(a+b))$. We wish to decompose it as \colorbox{blue!30}{$b(a+a)+a($}\colorbox{red!30}{$b+a(a+b)$}\colorbox{blue!30}{$)$}.
To this end, we write $h=C\cdot h'$ where $C=b(a+a)+a(\square)$ and $h'=b+a(a+b)$. Then, the corresponding binary decomposition is depicted in \cref{fig:bin decomp example}.
To distinguish the depictions of decomposition trees from forests in the algebra, we depict decompositions with horizontal/vertical edges, with \emph{right-child} for the \emph{factor}, and \emph{down-child} for the \emph{residue}. 

The decomposition tree is constructed as follows: we start with $h$ at the root of the decomposition, which is given address {\color{green!50!black} $\epsilon$}. 
We then ``pluck'' from $h$ the sub-forest factor $h'$ and place it as the right-child, with address {\color{green!50!black} $1$}. The node $1$ is also assigned a \emph{context}, which is the residue $C$ (so that $C\cdot h'=h$). Then, in the down-child (addressed {\color{green!50!black} $0$}) we place the tree that remains to be factored, which is $C$ only with $\square_{h'}$ instead of $\square$, to signify that this remaining hole is actually $h'$, but $h'$ itself is already being factored elsewhere (namely from node $1$).
Thus, the forest at node $0$ is $C\cdot \square_{h'}$. We remark that for uniformity (i.e., so that each node is labeled with a forest and a context), we also assign a context to node $0$, but this is just $\square$ and is immaterial.

\begin{figure}[ht]
    \centering
    \includegraphics[width=0.3\linewidth]{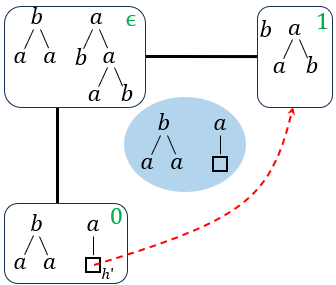}
    \caption{A (partial) binary decomposition corresponding to \cref{xmp:binary decomposition}. The plucked factor $h'$ appears in Node $1$. The context $C$ appears highlighted below the $\epsilon\to 1$ edge. The contexts associated with nodes $\epsilon$ and $2$ are omitted (they are both $\square$, see \cref{def:bin decomp}).}
    \label{fig:bin decomp example}
\end{figure}

\end{example}

We are now ready for the precise definition of binary decomposition.
\begin{definition}[Binary Decomposition Tree]
    \label{def:bin decomp}
    A \emph{binary decomposition tree} is an $\tilH\times \tilV$-labeled tree domain $\tau=\tup{\tau.\nds,\tau.l}\in \lltds {(\tilH\times \tilV)}$ equipped with three functions 
    \begin{itemize}
        \item $\tau.\frst\colon\tau.\nds\rightarrow \tilH$
        \item $\tau.\ctx\colon\tau.\nds\rightarrow \tilV$
        \item $\tau.\type\colon\tau.\nds\rightarrow \{\B,\lf\}$
    \end{itemize}
    that satisfy the following conditions.
    \begin{enumerate}
        \item 
        \label{itm:decomp:full binary}
        $\tau.\nds$ is a \emph{full, binary} tree domain. That is, every  $x\in \tau.\nds$ is either a leaf, or it has \emph{exactly} two children: $x0$ and $x1$. In the latter case, we refer to these children as the \emph{down} and the \emph{right} child, respectively.
        \item 
        \label{itm:decomp:label pair}
        For every $x\in\tau.\nds$ we have $\tau.l(x)=(\tau.\frst(x),\tau.\ctx(x))$.
        \item 
        \label{itm:decomp:node types}
        $\tau.\type$ identifies each node as a \emph{leaf} ($\lf$) or a \emph{binary node} ($\B$). Thus, the set of leaves is $\tau.\type^{-1}[\lf]$ and the set of binary nodes is $\tau.\type^{-1}[\B]$.
        \item 
        \label{itm:decomp:pluck nonempty}
        For every $x\in \tau.\nds$ and $i\in \{0,1\}$ we have $\tau.\frst(xi)\neq \tau.\frst(x)$. Intuitively, this means that we pluck a nontrivial forest in each node.
        \item 
        \label{itm:decomp:down child is square context}
        For every $x\in \tau.\nds\setminus \bns1$ (i.e., $x$ a down-child or the root) we have $\tau.ctx(x)=\square$.
        \item 
        \label{itm:decomp:correct context composition}
        For every binary node $x\in\tau.\type^{-1}[\B]$ we have $\tau.\frst(x)=\tau.\ctx(x1)(\tau.\frst(x1))$ and $\tau.\frst(x0)=\tau.\ctx(x1)(\square_{\tau.\frst(x1)})$.
    \end{enumerate}
\end{definition}
For brevity, in the remainder of the section we just write ``decomposition'' when referring to a binary decomposition. Also, since decompositions are in particular labeled tree domains, all the relevant definitions of \cref{sec:labeling intuitive} apply. 
We say that a decomposition $\tau$ is \emph{of forest $h$} if $h$ is the forest of the root of $\tau$, i.e., $\tau.\frst(\epsilon)=h$. 
For a node $x\in \tau.\nds$ we sometimes refer to $x1$ as the \emph{factor} and to $x0$ as the \emph{residue}, corresponding to the intuition that we split $\tau.\frst(x)$ at some subforest, and factor out this subforest into $x1$, with the rest (the residue) going to $x0$. We often refer to a factor as being \emph{plucked} from its parent node.

\begin{example}
\label{exmp:decomposition}
    Recall the decomposition in \cref{xmp:binary decomposition,fig:bin decomp example}. Formally, it has the following components:
    \begin{itemize}
        \item $\tau.\nds = \{\epsilon,0,1\}$
        \item $\tau.\frst(\epsilon) = h = b(a+a)+a(b+a(a+b))$
        \item $\tau.\frst(1)=b+a(a+b)$
        \item $\tau.\frst(0)= h''=b(a+a)+a(\square_{b+a(a+b)})$ 
        \item $\tau.\ctx(\epsilon)=\tau.\ctx(0)=\square$
        \item $\tau.\ctx(1)=C=b(a+a)+a(\square)$ (notice that $h=C\cdot h'$)
    \end{itemize}
    Notice that $h=C\cdot h'$ and $h''=C\cdot \square_{h'}$.
\end{example}

We remark that despite being part of the definition, $\tau.\ctx$ can actually be fully inferred from $\tau.\frst$. Indeed, the requirement that $\tau.\frst(xi)\neq \tau.\frst(x)$ (Condition~\ref{itm:decomp:pluck nonempty}) intuitively means that each decomposition step actually decomposes a nontrivial part of the forest, i.e., splits not just $\square$ and not the whole forest. Specifically, $\tau.\frst(x0)\neq \tau.\frst(x)$, which means that the part of the tree decomposed into $x1$ is not $\square_h$ for some $h$. Therefore, 
$\tau.\frst(x)\neq \square_{\tau.\frst(x)}$ for every $x$. Then, by \cref{lem:equal contexts characterization} we have that $\tau.\ctx(x1)$ is fully determined by the two equations of Condition~\ref{itm:decomp:correct context composition}. 

In addition, since each node plucks a nontrivial sub-forest (Condition~\ref{itm:decomp:pluck nonempty}), we show that every decomposition has bounded depth with respect to the forest at its root.
\begin{lemma}[Maximal Decomposition Depth]
\label{lem:max dec}
Consider a forest $t\in \tilH$ and a binary decomposition $\tau$ with $\tau.\frst(\epsilon)=t$, then $\depth(\tau)\leq |\tofd(t)|-|\lab(t)^{-1}[\square_\hf]|$ (i.e., the depth of $\tau$ is at most the number of nodes in $t$ that are not labeled by some $\square_h$).
\end{lemma}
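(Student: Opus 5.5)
We argue via a potential function that strictly decreases along every edge of a decomposition. For a forest $h\in\tilH$, define the potential $\mu(h)\coloneqq |\tofd(h)|-|\lab(h)^{-1}[\square_\hf]|$, the number of nodes of $h$ whose label lies in $A$. The goal is the following: for every binary node $x$ of $\tau$ and $i\in\{0,1\}$ we have $\mu(\tau.\frst(xi))<\mu(\tau.\frst(x))$, and moreover $\mu(\tau.\frst(x))\geq 1$ whenever $x$ is a binary node. Granting this, along any root-to-leaf path $\epsilon=x_0<x_1<\dots<x_k$ the potentials form a strictly decreasing sequence of non-negative integers starting at $\mu(t)$. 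Hence $k\le \mu(t)$, which is exactly the claimed bound on $\depth(\tau)$.

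To establish the decrease, fix a binary node $x$ and write $C=\tau.\ctx(x1)$ and $g=\tau.\frst(x1)$. By Condition~\ref{itm:decomp:correct context composition} we have $\tau.\frst(x)=C(g)$ and $\tau.\frst(x0)=C(\square_g)$. Counting non-default-hole nodes in a composition is additive. The nodes of $C(g)$ are the non-$\square$ nodes of $C$ together with the (shifted, via $\spa$) nodes of $g$, and labels are preserved. Therefore
\[
\mu(C(g))=\mu'(C)+\mu(g),\qquad \mu(C(\square_g))=\mu'(C)+0,
\]
where $\mu'(C)$ counts the $A$-labelled nodes of $C$. This follows by a routine induction on the structure of $C$ using \cref{def:context composition}, or directly from the domain description of $C[\spos(C)\mapsto g]$.

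It then remains to show $\mu(g)\ge1$ and $\mu'(C)\ge1$.

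First, for $\mu(g)\geq 1$: the key point is that $\mu(g)=0$ means $g$ is a sum of default holes (or empty). In that case $\unr$-wise nothing is lost, but syntactically $C(\square_g)\neq C(g)$ is still possible. So we must use Condition~\ref{itm:decomp:pluck nonempty} more carefully. I expect this to be the main obstacle. The paper notes that the intended reading is that the plucked part is not of the form $\square_h$. I would handle it as follows. If $g=\square_{h}$ is a single default hole, then $C(\square_g)=C(\square_{h})=C(g)$ by the convention $\square_{\square_h}=\square_h$, contradicting $\tau.\frst(x0)\ne\tau.\frst(x)$. If $g$ is a sum of two or more default holes, or is $\az$, I would need either that the standard basis / definition forbids further plucking there, or an alternate count. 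The fallback is a lexicographic potential, namely $\mu$ plus the number of nodes. Each step still strictly decreases total node count, since $|C(\square_g)|<|C(g)|$ when $g$ has at least two roots. I would check which reading the formal definition supports, and if needed refine the bound argument to use that Condition~\ref{itm:decomp:pluck nonempty} excludes these degenerate factors.

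Second, for $\mu'(C)\ge1$: if $C$ has no $A$-labelled nodes, then $C$ is $\square$ plus default holes at root level. If $C=\square$, then $\tau.\frst(x1)=\tau.\frst(x)$, violating Condition~\ref{itm:decomp:pluck nonempty}. Otherwise $g$ is a proper sub-sum of a root-level sum, and I would handle this case the same way as in the previous step. Once both inequalities hold, the strict decrease follows, and the depth bound is immediate by the path argument above.
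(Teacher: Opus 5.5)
Your strategy is the paper's own. Its proof is an induction on the same quantity $\Phi(t)=\mu(t)$, asserting $\Phi(t_0),\Phi(t_1)<\Phi(t)$ at every binary node. The cases you leave open are exactly where this fails, and they cannot be closed, because the stated bound is false.

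\textbf{The $\mu'(C)=0$ case.} Your plan (``$g$ is a proper sub-sum of a root-level sum; handle it like the previous step'') breaks down because $g$ may carry all the $A$-labelled nodes of $t$. Contexts such as $\square+\square_{c}$ are legal, and the paper itself uses them in the addition step of \cref{lem:every forest is decomposable to standard basis}. Apply that step to $t=a+b+c$ with $f_1=a+b$, $f_2=c$. It plucks $c$ with $a+b+\square$; then $a+b$ from $a+b+\square_c$ with $\square+\square_c$; then $b$ from $a+b$ with $a+\square$; then $a$ from $a+\square_b$ with $\square+\square_b$. All conditions of \cref{def:bin decomp} hold. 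Along the path $\epsilon,0,01,010,0101$ the values of $\mu$ are $3,2,2,1,1$, and the depth is $4>3=\Phi(t)$.

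\textbf{The sum-of-default-holes case.} Take $t=\square_{h_1}+\square_{h_2}+\square_{h_3}$, so $\Phi(t)=0$. The context $\square_{h_1}+\square$ with factor $\square_{h_2}+\square_{h_3}$ is a legal binary node: the residue is $\square_{h_1}+\square_{h_2+h_3}$, which differs from $t$. So your claim that $\mu(\tau.\frst(x))\ge 1$ at every binary node $x$ is also false.

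\textbf{The empty-forest case.} The definition does not forbid plucking the empty forest. Then $t=a$ admits the residues $a(\square_{\az})$, $a(\square_{\az}+\square_{\az})$, \dots{} via the contexts $a(\square)$, $a(\square_{\az}+\square)$, \dots{}, so the depth is unbounded. Your node-count fallback also fails here, since $|\tofd(C(\square_{\az}))|=|\tofd(C(\az))|+1$.

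The paper's proof has the same gap. Its base case asserts $\Phi(t)=0\Rightarrow t\in\square_\hf$. Its argument for $t_1$ only yields $\Phi(t_1)\le\Phi(t)$. It also asserts $|\tofd(t_0)|\le|\tofd(t)|$.

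What can be proved is a weaker bound, once factors are required to be nonempty. This is all the paper actually relies on: \cref{lem:min full dec}, and implicitly every choice of a maximal-depth decomposition, only needs depths to be bounded. Your fallback potential $P(h)=|\tofd(h)|+\mu(h)$ does the job.
\begin{itemize}
\item On the edge to $x1$: Condition~\ref{itm:decomp:pluck nonempty} forces $C\neq\square$, so $|\tofd(g)|<|\tofd(t)|$, while $\mu(g)\le\mu(t)$.
\item On the edge to $x0$: $P(t)-P(t_0)=(|\tofd(g)|-1)+\mu(g)$. This is positive unless $g=\az$. Indeed, if $|\tofd(g)|=1$ then $g$ must be a letter, because $g=\square_h$ would give $t_0=C(\square_h)=t$.
\end{itemize}
Hence $\depth(\tau)\le|\tofd(t)|+\mu(t)\le 2|\tofd(t)|$. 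You should state and prove this bound rather than the $\Phi$ bound.
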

\begin{proof}
    Denote $\Phi(t)=|\tofd(t)|-|\lab(t)^{-1}[\square_\hf]|$ and observe that $\Phi(t)\geq 0$ since $\lab(t)^{-1}[\square_\hf]\subseteq \tofd(t)$.
    We prove the lemma by induction on $\Phi(t)$:
    \begin{itemize}
        \item Base case: $\Phi(t)=0$.
        In this case we have $t\in \square_\hf$, so the only possible decomposition of $\tau$ is a single leaf at the root, whose forest is $t$. This has depth $0$, as required.
        \item Step: assume that for every $t'$ such that $\Phi(t')<\Phi(t)$ the lemma holds. Consider a decomposition $\tau$ such that $\tau.\frst(\epsilon)=t$, and assume $0,1\in \tau.\nds$ (otherwise $\depth(\tau)=0$ and we are done).
        Denote $t_0=\tau.\frst(0)$ and $t_1 = \tau.\frst(1)$. 
        Notice that $\Phi(t_1)<\Phi(t)$. Indeed, $|\tofd(t_1)|<|\tofd(t)|$, and any node in $t$ that is not in $t_1$ either contributes $1$ to $\Phi(t)$ (if it is not in $\square_\hf$) or $0$ if it is in $\square_\hf$.

        We claim that $\Phi(t_0)<\Phi(t)$ as well. 
        Observe that $|\tofd(t_0)|\le |\tofd(t)|$, but $t_0$ has an new $\square_{h}$ node, with $h=\tau.\frst(t_1)$ that is not in $t$. If $|\tofd(t_0)| = |\tofd(t)|$, then since we introduce a new $\square_h$ node in $t_0$, we have $\Phi(t_0)=\Phi(t)-1$ (notice that we cannot have $h\in \square_\hf$ as it would violate Condition~\ref{itm:decomp:pluck nonempty}, moreover $h\in A$). 
        Otherwise, we have $|\tofd(t_0)|< |\tofd(t)|$, and the same reasoning as above applies: every node in $t$ and not in $t_0$ contributes either $1$ or $0$ to $\Phi(t)$, depending on whether it is or is not in $\square_\hf$. 

        By the induction hypothesis, we can now conclude that 
        \[\depth(\tau)\leq 1+\max\{\depth(\tau[0]),\depth(\tau[1])\}\leq 1+\max\{\Phi(t_0),\Phi(t_1)\}\leq \Phi(t)\]
    \end{itemize}
\end{proof}

\subsection{References and Ancestor Embeddings}
\label{subsec:references and ancestor embedding}
Our next step is to associate with a decomposition $\tau$ a \emph{reference} relation. Intuitively, this relation describes how different nodes in the decomposition are ``linked'' to each other as factors of the same forest. More precisely, consider some node $z\in \tau.\nds$. When we factorize $\tau.\frst(z)$ to $\tau.\frst(z)=C\cdot h$, then $C=\tau.\ctx(z1)$, and $\tau.\frst(z0)=C(\square_{h})$. However, there could other nodes in the decomposition tree whose forest is $\square_h$, stemming from other parts of the tree where another $h$ subtree was plucked.
For certain algorithmic purposes, it is important that we not only remember $\square_h$, but also exactly which node in $\tau$ generates this \emph{specific} $h$. We thus ``link'' $z0$ with $z1$ in this case.

However, things get complicated as the decomposition continues: this $\square_h$ propagates through the tree in further decompositions. For example, when decomposing $\tau.\frst(z0)=C'\cdot h'$, we either have that $\square_h$ is part of $C'$, or part of $h'$. In the former case, we need to track the reference from $z0$, and in the latter from $z1$. We also need to carefully maintain the address of this link.

The formal definition is somewhat technical, since it requires tracking addresses. We therefore present an intuitive version, followed by the precise version and an example.

\begin{definition}[Decomposition References -- Intuitive Version]
    \label{def:decomp references intuitive}
    Consider a decomposition $\tau$. The \emph{reference relation} is a relation $\tau.\rdef\subseteq \tau.\nds\times \bnp\times \tau.\nds$, where $\tau.\rdef(x,u,y)$ is read as ``node $x$ at position $u$ points to node $y$'', and we denote it is $\tau.x@u\to y$. 
    When $\tau$ is clear from context we use the abbreviation $x@u\to y$. We write $x@*\to y$ when the exact position $u$ is not important.

    Then, $x@u\to y$ if $\tau.\frst(x)$ at address $u$ is $\square_h$ and $\tau.\frst(y)=h$ such that at some ancestor of $x$ the decomposition plucks this particular subtree $h$.

    Note that the only way to have a reference of the form $x1@u\to y$ is if the plucked subforest at $x$ contains the reference to a subforest that is already plucked above $x$. These represent a ``bad'' way of plucking (as it is ``better'' to first pluck the larger sub-forest). We refer to such references as \emph{inherited}.
\end{definition}
The precise version of the references relation is as follows.
\begin{definition}[Decomposition References]
\label{def:decomp references}
    Consider a decomposition $\tau$. The \emph{reference relation} is a relation $\tau.\rdef\subseteq \tau.\nds\times \bnp\times \tau.\nds$, where $\tau.\rdef(x,u,y)$ is read as ``node $x$ at position $u$ points to node $y$''. We typically use the abbreviation $x@u\to y$, and write $x@*\to y$ when the exact position $u$ is not important.
    
    We define $\tau.\rdef$ inductively on $x\in \tau.\nds$ as follows.
    Denote $s=\spos(\tau.\ctx(x1))$. 
    \begin{description}
    \item[Base case:] we always add the reference
    $x0@s\mapsto x1$ 
    \item[Step:] assume we already have $x@u\to y$, we define references of $x0$ and $x1$ that inductively stem from this reference.
    Let $M=\width(\tau.\frst(x1))$.
    \begin{itemize}
        \item If $s\not\lcrs u$: add reference $x0@u\to y$. This corresponds to the case where the reference from $x$ is within the residue, and therefore carried to $x0$ as is.
        \item If $s\spa (M+1)\lcrs u$: denote $u=s\spa (M+1)\spa v$ and add reference $x0@(s\spa 1\spa v)\to y$. This corresponds to the case where the reference from $x$ is within the residue but its address require adjustment due to plucking out the factor.
        \item Else (meaning $s\lcrs u$ and $s\spa(M+1)\not\lcrs u$): Denote $u=s\spa v$, and add reference $x1@v\to y$.  This corresponds to the case where the reference from $x$ appears in the sub-forest which currently is plucked out. 
        We distinguish this latter type of references, and refer to them as \emph{inherited references}.
        \label{def:inherited refs}
    \end{itemize}
    \end{description}
\end{definition}

\begin{example}[References]
\label{xmp:references}
Consider the decomposition depicted in \cref{fig:bin decomp example 2}, extending \cref{xmp:binary decomposition}. The dashed arrows now represent references. Specifically, observe that the address of $\square_{h'}$ in the forest of Node $1$, namely $b(a+a)+a(\square_{h'})$ is $10$, and therefore we have the reference $0@10\to 1$.
Next, observe that the subforest plucked at Node $0$ contains $\square_{h'}$, and therefore we also have the reference $01@*\to 1$ (specifically, $01@00\to 1$), and this reference is \emph{inherited}. 
We also have a reference $00@*\to 01$.

\begin{figure}[ht]
    \centering
    \includegraphics[width=0.4\linewidth]{ppt_drawings/bin_decomp_example_2.png}
    \caption{A binary decomposition with depicted references (dashed arrows).}
    \label{fig:bin decomp example 2}
\end{figure}
\end{example}

Note that only \emph{inherited} references are from right-children (i.e., $x1$). Intuitively, an inherited references means that we pluck out a factor, but later decide to pluck a predecessor of that factor. This raises a red flag, as we could have started by plucking out the larger sub-forest. Inherited references help us keep track of this, and we often pay special attention to cases where they do not occur.

A consequence of \cref{def:decomp references} is that for every $x,y\in \tau.\nds$ there is at most one address $u$ such that $x@u\to y$. Intuitively, $\tau.\frst(y)$ is only plucked from $x$ once using the node $y$. Identical forests may be plucked from $x$, but not to node $y$. 
We can therefore represent the link between $x$ and $y$ by replacing the referenced $\square_{\tau.\frst(y)}$ in $x$ by $\square$. 
\begin{definition}[Linking Context]
    \label{def:linking context}
    For every $x,y$ such that $x@u\to y$ we define the \emph{linking context between $x$ and $y$} as $\clnk(\tau,x,y)\coloneqq \tau.\frst(x)[u\mapsto \square]$.
\end{definition}
For example, in \cref{fig:bin decomp example 2} the reference $01@*\to 1$ induces the linking context $\clnk(\tau,01,1)=a(\square)$. Note that this may seem independent of the node $1$, but in forests with several default holes, it is crucial to know which default hole is replaced with $\square$.

When constructing a decomposition tree of forest $h$, we repeatedly pluck a factor from the current forest of node $x$, and place the factor at $x\cdot 1$ and the residue at $x\cdot 0$. Thus, the forests of the factor and the residue can be embedded into the forest of their ancestor nodes. This embedding, dubbed the \emph{Ancestor Embedding}
is useful for reasoning about decompositions. We demonstrate this in \cref{fig:ancestor embedding} and define it both intuitively and formally in the following.
\begin{definition}[Ancestor Embedding -- Intuitive Definition]
\label{def:ancestor embedding intuitive}
Consider a decomposition $\tau$ and two nodes $x,y\in \tau.\nds$ such that $x\le y$. Let $h_x=\tau.\frst(x)$ and $h_y=\tau.\frst(y)$. The \emph{Ancestor Embedding} is a function $\AncEmb(\tau,y,x)\colon\dom(h_y)\to \dom(h_x)$ which assigns for each address $u$ in $h_y$ the corresponding address of $u$ in $h_x$, when $h_y$ is naturally embedded in $h_x$.
\end{definition}
\begin{definition}[Ancestor Embedding -- Formal Definition]
\label{def:ancestor embedding formal}
Consider a decomposition $\tau$ and two nodes $x,y\in \tau.\nds$ such that $x\le y$. Let $h_x=\tau.\frst(x)$ and $h_y=\tau.\frst(y)$. Denote $y=z\cdot \sigma$ and let $M_z=\width(\tau.\frst(z1))$ and $u_z = \spos(\tau.\ctx(z1))$.
The \emph{Ancestor Embedding} $\AncEmb(\tau,y,x)\colon\dom(h_y)\to \dom(h_x)$ is defined inductively as follows.
\[\begin{cases}
    \AncEmb(\tau, y, x)= \id_{\dom(h_x)}   & y = x\\
    \AncEmb(\tau, y, x)(u)=u_z\spa u      & y = x\cdot 1\\
    \AncEmb(\tau, y, x)(u)= 
    \begin{cases}
        u                       & u=u_z\vee u_z\not\lcrs u\\
        u_z\spa M_z \spa v      & u\neq u_z \wedge u = u_z\spa v
    \end{cases}                                 & y = x\cdot 0\\
    \AncEmb(\tau, y, x)= \AncEmb(\tau, {xw},x)\circ \AncEmb(\tau,  y, {xw})                   & y = x\cdot w \cdot \sigma\\
\end{cases}\]
\end{definition}
\begin{figure}[ht]
    \centering
    \includegraphics[width=0.3\linewidth]{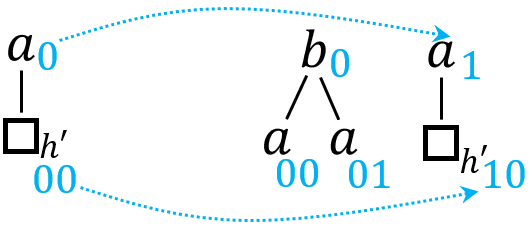}
    \caption{A depiction of the Ancestor Embedding $\AncEmb(\tau,01,0)$ for the nodes $01$ and $0$ in \cref{fig:bin decomp example 2}. Specifically, the map is $0\mapsto 1$ and $00\mapsto 10$.}
    \label{fig:ancestor embedding}
\end{figure}

In the following lemma we point out some natural properties satisfied by the ancestor embedding.
The first is that the root of $x\cdot 1$ is mapped to the position of $\square$ in the context of $x\cdot 1$. 
The second is that in a reference of the form $y@u\to x1$, the position $u$ is mapped to the position of $\square$ in the context of $x1$. 
The third is the converse of the second: if address $u$ in $y$ is mapped to the position of $\square$ for some node $x1$, then there is a reference $y@u\to x1$.
\begin{lemma}
\label{lem:ancestor embedding respects holes and references}
Consider a node $y$ in a decomposition $\tau$. Then the following hold:
\begin{enumerate}
    \item $\AncEmb(\tau,y1,y)(0)=\spos(\tau.\ctx(y1))$.
    \item If $y@u\to x1$ then $x<y$ and $\AncEmb(\tau,y,x)(u)=\spos(\tau.\ctx(x1))$.
    \item For two nodes $x<y$ such that $y\neq x1$, if $\AncEmb(\tau, y,x)(u)=\spos(\tau.\ctx(x1))$ for some $u\in \tau.\frst(y)$, then $y@u\to x1$.
\end{enumerate}
\end{lemma}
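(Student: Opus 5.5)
Item~1 is a direct unfolding of \cref{def:ancestor embedding formal}: for the child $y1$ of $y$ the definition gives $\AncEmb(\tau,y1,y)(u)=u_y\spa u$ with $u_y=\spos(\tau.\ctx(y1))$, and $0$ is the unit of $\spa$ by \cref{prop: address spos morphism}. So $\AncEmb(\tau,y1,y)(0)=u_y\spa 0=u_y$, as required.

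For Item~2, I will induct on the inductive construction of $\tau.\rdef$ in \cref{def:decomp references}, that is, on the length of $y$. The invariant to carry is: whenever $y@u\to x1$, we have $x<y$ and $\AncEmb(\tau,y,x)(u)=\spos(\tau.\ctx(x1))$.
\begin{itemize}
\item The base reference is $x0@s\to x1$ with $s=\spos(\tau.\ctx(x1))$. Here $x<x0$, and the residue clause of the embedding maps $u=u_z=s$ to itself.
\item For a step reference derived from $z@u\to x1$, there are three clauses. In each, $x<z<z\sigma$. By the composition clause, $\AncEmb(\tau,z\sigma,x)=\AncEmb(\tau,z,x)\circ\AncEmb(\tau,z\sigma,z)$, so it suffices to show that $\AncEmb(\tau,z\sigma,z)$ sends the new address back to $u$.
  \begin{itemize}
  \item If $s\not\lcrs u$, the residue clause fixes $u$.
  \item If $u=s\spa(M+1)\spa v$, the new address is $s\spa1\spa v$, and the residue clause sends it to $s\spa M\spa1\spa v=u$. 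This uses associativity of $\spa$ from \cref{prop: address spos morphism}. I need $M_z$ in the embedding to agree with the width $M$ in the reference definition; both are $\width(\tau.\frst(z1))$.
  \item In the inherited case, the new address is $v$ with $u=s\spa v$, and the factor clause sends $v$ to $s\spa v=u$.
  \end{itemize}
  Applying the induction hypothesis at $z$ then closes the step.
\end{itemize}

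Item~3 is the converse. I will induct on $|y|-|x|$, writing $y=z\sigma$ with $x\le z$.
\begin{itemize}
\item Case $z=x$: then $\sigma=0$, since $y\neq x1$. Put $s=\spos(\tau.\ctx(x1))$. The residue clause maps $u$ to $s$ only if $u=s$; the other branch yields $s\spa M\spa v$ with $v\ne0$, which differs from $s$ by injectivity (\cref{prop:spa injective}). So $u=s$, and the base reference $x0@s\to x1$ gives the claim.
\item Case $x<z$: set $u'=\AncEmb(\tau,y,z)(u)$, so that $\AncEmb(\tau,z,x)(u')=s$.
  \begin{itemize}
  \item If $z\neq x1$, the induction hypothesis gives $z@u'\to x1$. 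I then check that the step clause of \cref{def:decomp references} applied at $z$ produces exactly $y@u\to x1$. This amounts to inverting the three-way case split of the embedding against the three-way split of the reference definition. The residue branch $u_z\spa M_z\spa v$ must land in the region $s\spa(M+1)\lcrs\cdot$, and otherwise the address is one not $\lcrs$-above $s$; I should also handle $v=0$ and the boundary address $u_z$ itself.
  \item If $z=x1$, then $\AncEmb(\tau,z,x)$ is $w\mapsto s\spa w$, and $s\spa w=s$ forces $w=0$, the root of the factor. I expect this sub-case to be excluded, or handled, by noting that $\tau.\frst(x1)$ at its root $0$ is a genuine node of the plucked forest rather than the hole $\square_{\tau.\frst(x1)}$. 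So no descendant address can map to $s$ except through nodes below $x0$, and the hypothesis should be vacuous here.
  \end{itemize}
\end{itemize}

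I expect the main obstacle to be precisely this bookkeeping in Item~3: showing that the embedding's case split and the reference definition's case split match up exactly, including the off-by-one between $M$ and $M+1$, and ruling out the $z=x1$ path.
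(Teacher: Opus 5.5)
Your Items~1 and~2 are correct and match the paper's argument. The gap is in Item~3, at exactly the two places you flagged, and neither can be closed because Item~3 is false as stated.

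\textbf{The sub-case $z=x1$ is not vacuous.} The ancestor embedding is pure address bookkeeping. A descendant of $x1$ can keep the root position $0$ of the factor, and $\AncEmb(\tau,x1,x)$ sends that position to $s$. Concretely, let $\tau.\frst(\epsilon)=a(b+c)$ and $\tau.\ctx(1)=a(\square)$, so that $s=00$ and $\tau.\frst(1)=b+c$. At node $1$ pluck $b$ with $\tau.\ctx(11)=\square+c$, so $\tau.\frst(10)=\square_b+c$. Since $\spos(\tau.\ctx(11))=0$, the residue clause gives $\AncEmb(\tau,10,1)(0)=0$. Hence $\AncEmb(\tau,10,\epsilon)(0)=00\spa 0=s$ with $y=10\neq 1$. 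Yet the only reference of $10$ is $10@0\to 11$; node $1$ carries no references, so nothing is inherited. The same happens for $y=11$, $u=0$. That the root of $\tau.\frst(x1)$ is a genuine node does not help. Since references to $x1$ only ever live at descendants of $x0$, the correct hypothesis is $x0\le y$. Under it your induction never meets $z=x1$.

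\textbf{The boundary address $u=u_z$ is a real obstruction even when $x0\le y$.} Suppose $z@u_z\to x1$ with $u_z=\spos(\tau.\ctx(z1))$. The reference definition sends this to the inherited reference $z1@0\to x1$ and gives $z0@u_z\to z1$. Meanwhile $\AncEmb(\tau,z0,x)(u_z)=\AncEmb(\tau,z,x)(u_z)=s$. This happens whenever the factor at $z$ has width at least $1$. Example: take $\tau.\frst(\epsilon)=a(b+c)$ and $\tau.\ctx(1)=a(\square+c)$, and at node $0$ pluck $\square_b+c$ via $\tau.\ctx(01)=a(\square)$. Then $\tau.\frst(00)=a(\square_{b+c})$ and $\AncEmb(\tau,00,\epsilon)(00)=00=\spos(\tau.\ctx(1))$, but only $00@00\to 01$ holds. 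So an extra hypothesis is needed, e.g.\ that $\tau$ has no inherited references below $x$, which excludes $z@u_z\to x1$.

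With $x0\le y$ and that assumption, your planned inversion of the two case splits does go through:
\begin{itemize}
\item For $\sigma=1$, you land in the inherited clause, because roots of $\tau.\frst(z1)$ have index at most $M$, so $u_z\spa(M+1)\not\lcrs u_z\spa u$.
\item For $\sigma=0$ and $u=u_z\spa v\neq u_z$, the first letter of $v$ is at least $1$, since $u_z$ is a leaf of $\tau.\frst(z0)$. So $u_z\spa M\spa v$ lies in the shifted region.
\end{itemize}

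The paper's own proof has the same two holes. It invokes the inductive hypothesis at $xz$ without checking $xz\neq x1$. It also rules out $u=s$ by claiming $\tau.\frst(xz0)=\tau.\frst(xz)$, which holds only when the factor is a single default hole (width $0$).
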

\begin{proof}
    Let $\tau$ be a decomposition.
    \begin{enumerate}
        \item Consider some binary node $y$ (so that $y1 \in \tau.\nds$). By \cref{def:ancestor embedding formal,def:spa} we have that $\AncEmb(\tau, y1, y)(0) = \spos(\tau.\ctx(y1)) \spa 0 = \spos(\tau.\ctx(y1))$.
        \item We prove this part by induction on the structure of the references, as per \cref{def:decomp references}.
        The base case is $y=x0$. In this case we indeed we have $x<y$ and by \cref{def:decomp references} we also have $u=\spos(\tau.\ctx(x1))$. Therefore,  by \cref{def:ancestor embedding formal} we get $\AncEmb(\tau,x0,x)(u)=u=\spos(\tau.\ctx(x1))$.
        
 For the induction step, assume $y@u\to x1$ and that $y=z\sigma$ for some $z$. In particular, by \cref{def:decomp references} it holds that $z@u_z\to x1$ (since this is not a base-case reference). 
 By the induction hypothesis we then have $x<z<y$, so $x<y$. 
 We now split to cases:
 \begin{itemize}
     \item If $y=z0$ and $u=u_z$, then by \cref{def:ancestor embedding formal} we have we have $\AncEmb(\tau, y,x)(u)=\AncEmb(\tau,z,x)\circ \AncEmb(\tau,y,z)(u)=\AncEmb(\tau,z,x)(u_z)$, and by the induction hypothesis this equals $\spos(\tau.\ctx(x1))$.
     \item If $y=z0$ and $u\neq u_z$, then by \cref{def:decomp references} we have $u=s\spa 1 \spa v$ where $s=\spos(\tau.\ctx(z1))$,  $M=\width(\tau.\frst(z1))$, and $u_z=s\spa (M+1)\spa v$. Again we have $\AncEmb(\tau,y,z)(u)=u_z$ by \cref{def:ancestor embedding formal}, so the same reasoning as the previous item proves the claim.
     \item If $y=z1$, we have (by \cref{def:decomp references}) that $u_z=s\spa u$. Then, by \cref{def:ancestor embedding formal} we get $\AncEmb(\tau,y,z)=u_z$ and again the claim follows.
 \end{itemize}
 \item We prove the claim by induction on $y$.
 \begin{itemize}
     \item If $y=x0$, then by \cref{def:decomp references} we have $y@u\to x1$ for $u=\spos(\tau.\ctx(x1))$. Note that by \cref{def:ancestor embedding formal} we have $\AncEmb(\tau, y,x)(u)=u$ (since $u=u_z$ in the definition by cases). Thus, we have $u=\spos(\tau.\ctx(x1))$, as needed.
     \item If $y=xz\sigma$ and $\AncEmb(\tau,y,x)(u)=\spos(\tau.\ctx(x1))$, by the inductive assumption we have $xz@\AncEmb(\tau,xz\sigma,xz)(u)\to x1$.
     Denote $s=\spos(\tau.\ctx(xz1))$ and $M=\width(\tau.\frst(xz1))$.
     We split to cases.
     \begin{itemize}
         \item If $\sigma=1$: we have $\AncEmb(\tau,xz1,xz)(u)=s\spa u$, so by \cref{def:decomp references} we get $xz1@u\to x1$.
         \item If $\sigma=0$ and $s\not\lcrs \AncEmb(\tau,xz0,xz)(u)$ then by  \cref{def:ancestor embedding formal} we have $u=\AncEmb(\tau,xz0,xz)(u)$.  Together with \cref{def:decomp references} this implies $y@u\to x1$.
         \item If $\sigma=0$ and $s\lcrs\AncEmb(\tau,xz0,xz)(u)$, notice that $u\neq s$. Indeed, otherwise we would have that $xz@s\to x1$, and therefore $\tau.\frst(xz0)=\tau.\frst(xz)$ which contradicts $\tau$ to be a valid binary decomposition (specifically, Condition~\ref{itm:decomp:pluck nonempty} of \cref{def:bin decomp}, stating that plucked forests are non-empty).
         
         Then, by \cref{def:ancestor embedding formal} we are at the case of$\AncEmb(\tau,xz0,xz)(u)=s\spa M\spa v$ where $u=s\spa v$ ($1\lcrs v$), and in particular $s\spa (M_1)\lcrs \AncEmb(\tau,xz0,xz)(u)$. Thus, by \cref{def:decomp references} we have that $xz0@u\to x1$ (as $u=s\spa v$), so $y@u\to x1$ as required.
     \end{itemize}
 \end{itemize}
\end{enumerate}
\end{proof}
We conclude this section by formalizing the (intuitive) notion that $\AncEmb(\tau,y,x)$ is determined only by the path in $\tau$ from $x$ to $y$, and not by anything outside it. We start by showing that $\AncEmb$ is not affected by ancestors of $x$.
\begin{lemma}
\label{lem:ancestor embedding not affected by ancestor}
    Consider a decomposition $\tau$ and nodes $x\le y$. For every $z\le x$ denote $x=z\cdot x_z$ and $y=z\cdot y_z$, then $\AncEmb(\tau,y,x)\equiv \AncEmb(\tau[z],y_z,x_z)$.
\end{lemma}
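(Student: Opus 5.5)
The plan is to prove the identity by induction on the length of the path from $x$ to $y$, that is, on $|y|-|x|$, following the inductive clauses of \cref{def:ancestor embedding formal}. Before starting, note that $\tau[z]$ is the subtree of $\tau$ rooted at $z$, with addresses shifted. Concretely, for every $w$ we have $\tau[z].\frst(w)=\tau.\frst(zw)$, $\tau[z].\ctx(w)=\tau.\ctx(zw)$ and $\tau[z].\type(w)=\tau.\type(zw)$. (For $w=\epsilon$, the context is $\square$ in $\tau[z]$ by Condition~\ref{itm:decomp:down child is square context}; this entry is never used below, because $\AncEmb$ only reads contexts of right children $w1$.) In particular, $\tau[z]$ is again a binary decomposition, of the forest $\tau.\frst(z)$, so $\AncEmb(\tau[z],\cdot,\cdot)$ is well defined. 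Both functions in the statement have domain $\dom(\tau.\frst(y))=\dom(\tau[z].\frst(y_z))$ and codomain $\dom(\tau.\frst(x))$, so it suffices to show that they agree pointwise.

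\textbf{Base cases.}
\begin{itemize}
\item If $y=x$, both sides are the identity on the same domain.
\item If $y=x\cdot 1$, the left side is $u\mapsto u_x\spa u$ with $u_x=\spos(\tau.\ctx(x1))$. The right side is $u\mapsto \spos(\tau[z].\ctx(x_z1))\spa u$, and $\tau[z].\ctx(x_z1)=\tau.\ctx(x1)$, so the two maps coincide.
\item If $y=x\cdot 0$, the definition by cases reads only $u_x=\spos(\tau.\ctx(x1))$ and $M_x=\width(\tau.\frst(x1))$. These equal the corresponding quantities $\spos(\tau[z].\ctx(x_z1))$ and $\width(\tau[z].\frst(x_z1))$ at $x_z$ in $\tau[z]$. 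The case split (whether $u=u_x$, or $u_x\not\lcrs u$) depends only on these values and on $u$, so the two sides agree branch by branch.
\end{itemize}

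\textbf{Inductive step.} Write $y=x\cdot w\cdot\sigma$ with $w$ nonempty. By the composition clause,
\[
\AncEmb(\tau,y,x)=\AncEmb(\tau,xw,x)\circ\AncEmb(\tau,y,xw).
\]
Both factors correspond to strictly shorter paths, so the induction hypothesis applies to each: one with the pair $x\le xw$, the other with the pair $xw\le y$, and in both cases with the same $z$. Since $z\le x\le xw$, we have $xw=z\cdot(x_zw)$ and $y=z\cdot(x_zw\sigma)$. This gives
\[
\AncEmb(\tau,xw,x)\equiv\AncEmb(\tau[z],x_zw,x_z),\qquad \AncEmb(\tau,y,xw)\equiv\AncEmb(\tau[z],y_z,x_zw).
\]
Composing these, and applying the same composition clause in $\tau[z]$ (noting $y_z=x_z\cdot w\cdot\sigma$), yields exactly $\AncEmb(\tau[z],y_z,x_z)$.

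The only real obstacle is bookkeeping. The induction must be on path length, not on the structure of $y$ alone, because the composition clause splits at an arbitrary intermediate node $xw$. One also has to observe that the definition is consistent, i.e.\ independent of how $y$ is split. Since the argument transports every clause verbatim to $\tau[z]$, that issue does not affect the equality.
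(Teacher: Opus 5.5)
Your proof is correct and follows essentially the same route as the paper: induction on the path $y=x\cdot y'$, using the composition clause of \cref{def:ancestor embedding formal}, together with the observation that $\tau[z]$ has the same forests and right-child contexts as $\tau$ after shifting addresses by $z$. The difference is that you write out the one-step cases $y=x0$ and $y=x1$, which the paper dismisses as ``readily follows from the definition''. Also, the composition clause always splits at the parent of $y$, so the consistency issue you raise does not arise.
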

\begin{proof}
  Denote $y=x\cdot y'$, we prove the claim by induction on $y'$.
    \begin{itemize}
        \item If $y'=\epsilon$, i.e. $x=y$, then $\AncEmb(\tau,y,x)$ is the identity on $\dom(\tau.\frst(y))$ (by \cref{def:ancestor embedding formal}). We also have $x_z=y_z$ and therefore $\AncEmb({\tau[z]},{y_z},ex_z)$ is the identity as well.
        \item If $y' = y''\cdot \sigma_y$ then we have $y_z = x_zy'$ and the claim readily follows from the induction hypothesis and \cref{def:ancestor embedding formal}:
        \[
        \begin{split}
            \AncEmb(\tau,y,x)=&\AncEmb(\tau, xy'',x)\circ \AncEmb(\tau, y, xy'')=\\
            &\AncEmb(\tau[z],x_z y'',x_z)\circ \AncEmb(\tau[z],y_z,x_zy'') =\AncEmb(\tau[z],y_z,x_z)
        \end{split}
        \]
    \end{itemize}  
\end{proof}
We now show that $\AncEmb$ is not affected by nodes outside the relevant subtree.
\begin{lemma}
    \label{lem:ancestor embedding not affected by incomparable}
    Consider two decompositions $\tau,\tau'$ and node $z$ such that $\equpto{\tau} z {\tau'}$ and let $x,y$ be nodes such that $z\not<x,y$ and $x< y$. Then $\AncEmb(\tau,y,x) = \AncEmb({\tau'}, y,x)$.
\end{lemma}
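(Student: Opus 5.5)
We argue by induction on the length of the path from $x$ to $y$, i.e., writing $y=x\cdot w$, by induction on $|w|$. We unfold $\AncEmb$ one step at a time via \cref{def:ancestor embedding formal}, and check that each one-step embedding only reads data that agree in $\tau$ and $\tau'$.

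The key observation concerns what a single step $\AncEmb(\tau,v\sigma,v)$ depends on. For $\sigma=1$ it depends on $u_v=\spos(\tau.\ctx(v1))$. For $\sigma=0$ it depends on $u_v$ and on $M_v=\width(\tau.\frst(v1))$. The domain in both cases is $\dom(\tau.\frst(v\sigma))$. So a step at parent $v$ reads only the labels of the nodes $v1$ and $v\sigma$.

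We now verify that for every step along the path from $x$ to $y$ these labels coincide in $\tau$ and $\tau'$. Since $\equpto{\tau}{z}{\tau'}$, the two trees agree, in node set and labels, at every node $u$ with $z\not< u$, that is, outside the strict subtree of $z$. Here we read $\equpto{}{}{}$ as allowing the label at $z$ itself to be kept; if the intended meaning also allows $z$'s own label to change, we use $z\not\le$ instead, and the argument is the same. Let $v$ be a node with $x\le v<y$, and let $v\sigma\le y$ be the next node on the path. Suppose $z<v\sigma$. Since $z\not<y$ and $v\sigma\le y$, this would give $z<y$, a contradiction. Hence $v\sigma$ lies outside the modified region. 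For the sibling $v1$ when $\sigma=0$, suppose $z<v1$. Then $z\le v$, so $z\le v<y$, i.e.\ $z<y$, again a contradiction. Thus both $v1$ and $v\sigma$ carry identical labels in $\tau$ and $\tau'$, and the one-step embeddings coincide.

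The base case $x=y$ gives the identity on $\dom(\tau.\frst(x))$ in both trees, and these domains agree because $z\not<x$. The inductive case composes the equal one-step maps through the recursive clause of \cref{def:ancestor embedding formal}. The only delicate point is the bookkeeping just described, namely that the sibling $v1$ also avoids the modified subtree. The argument above settles this using only $z\not<y$; the hypothesis $z\not<x$ is needed just for the base domain.
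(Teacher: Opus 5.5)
Your proof is correct and follows essentially the same argument as the paper. Both induct on the distance from $x$ to $y$, observing that each one-step embedding at a parent $v$ reads only the labels at $v1$ (and the domain at $v\sigma$), and that $z\not<y$ forces $z\not<v1$ and $z\not<v\sigma$, so these labels agree in $\tau$ and $\tau'$. Your bookkeeping is slightly more careful than the paper's (which cites only $\tau.\ctx(x1)$, not the width of $\tau.\frst(x1)$), and your reading of $\equpto{\tau}{z}{\tau'}$ as keeping the label at $z$ is the one the paper uses implicitly. One small correction: the hypothesis $z\not<x$ already follows from $z\not<y$ and $x<y$, so it is not actually needed, even for the base domain.
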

\begin{proof}
    By the definition of $\equpto \tau {z} \tau'$ (see notations in \cref{sec:labeling intuitive}), we have $x,y\in \tau.\nds \iff x,y\in \tau'.\nds$ and  $\tau.\frst(x)=\tau'.\frst(x)$ and $\tau.\frst(y)=\tau'.\frst(y)$.

    We now proceed by induction on the distance between $x$ and $y$.
    \begin{itemize}
        \item If $y$ is an immediate child of $x$, notice that $\AncEmb(\tau, y,x)$ and $\AncEmb(\tau', y,x)$ only depend on $\tau.\ctx(x1)$ and $\tau'.\ctx(x1)$, respectively. However, we have $\tau.\ctx(x1)=\tau'.\ctx(x1)$ (since $z\not<y$ and therefore $z\not< x1$ when $y$ is an immediate child of $x$). We thus conclude the equality of $\AncEmb$.
        \item If $y=x\cdot w\cdot \sigma$, then by the induction hypothesis we have $\AncEmb(\tau,xw,x)=\AncEmb(\tau',xw,x)$. As above,  $\AncEmb(\tau,y,xw)$ and $\AncEmb(\tau',y,xw)$ are determined by $\tau.\ctx(xw1)$ and $\tau'.\ctx(xw1)$, respectively, and are
        therefore equal (because $z\not<xw1$ and $\equpto{\tau}{z}{\tau'}$).
        We therefore conclude with 
        \[ 
        \begin{split} \AncEmb(\tau,y,x)=\AncEmb(\tau,xw,x)\circ\AncEmb(\tau,y,xw)=\\ \AncEmb(\tau',xw,x)\circ\AncEmb(\tau',y,xw)=\AncEmb(\tau',y,x)
        \end{split}\] 
    \end{itemize}
\end{proof}

\subsection{Stable Decomposition, Basis, and Minimal Decomposition}
\label{subsec:stable and basis}
Recall from \cref{def:linking context} that when $x@u\to y$, a \emph{linking context} from $x$ to $y$ is induced. 
In \cref{sec:rotations} we introduce certain manipulations of decompositions which cause these linking contexts to appear as contexts in the decomposition itself.  
In order to guarantee a certain ``stability'' of the decompositions under such manipulations, we place emphasis on decompositions that allow a pre-specified stable set of contexts (as per \cref{def:stable context set}), as follows.
\begin{definition}[Stable Decomposition]
    \label{def:stable dec}
    Consider a stable context set $\wHole{V}\subseteq \tilV$ (as per \cref{def:stable context set}) and a decomposition $\tau$.
    We say that $\tau$ is \emph{stable with respect to $\wHole{V}$} if for every $x,y$ such that $x@u\to y$ we have that $\clnk(\tau,x,y) \in \wHole{V}$
\end{definition}
Notice that in a stable decomposition $\tau$ with respect to stable context set $\wHole{V}$ we also have that for every $x1\in \tau.\nds$ it holds that $\tau.x0@\spos(\tau.\ctx(x1))\to x1$ and thus, by stability we have $\tau.\ctx(x1)=\clnk(\tau,x0,x1)\in \wHole{V}$.

A particular case where decompositions are stable ``for free'' is when they do not have inherited references, as we now show.

\begin{lemma}
\label{lem:no inherited decomposition is stable}
  Every decomposition $\tau$ with no inherited reference is stable.
\end{lemma}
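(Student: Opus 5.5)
The statement should be read relative to a stable context set $\wHole{V}$ that contains all the contexts the decomposition uses, i.e.\ $\tau.\ctx(x1)\in \wHole{V}$ for every binary node $x$. The plan is to show that in the absence of inherited references, every linking context is unravel-equivalent to one of these contexts. The stability axiom of \cref{def:stable context set} (closure under $\eqUnr$) then gives $\clnk(\tau,x,y)\in\wHole{V}$, which is what \cref{def:stable dec} requires.

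\textbf{Step 1 (shape of references).} By \cref{def:decomp references}, a reference is created only by the base case $x0@s\to x1$. It is then propagated to $x0$ in the first two step cases, or to $x1$ in the third (inherited) case. With no inherited references, a simple induction on the depth of the referring node shows that every reference has the form $z0^k@u\to z1$ with $k\ge 1$. Here $z0^{j+1}@u_{j+1}\to z1$ arises from $z0^{j}@u_j\to z1$ by one of the two non-inherited cases, and this holds for each $1\le j<k$.

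\textbf{Step 2 (invariant).} Prove by induction on $k$ that $\clnk(\tau,z0^k,z1)\eqUnr \tau.\ctx(z1)$.

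For $k=1$ we have $u=\spos(\tau.\ctx(z1))$ and $\tau.\frst(z0)=\tau.\ctx(z1)(\square_{\tau.\frst(z1)})$. Replacing that address by $\square$ therefore gives back exactly $\tau.\ctx(z1)$.

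For the step, write $x=z0^j$, $C=\tau.\ctx(x1)$, $h'=\tau.\frst(x1)$, $s=\spos(C)$ and $M=\width(h')$. Then $\tau.\frst(x)=C(h')$ and $\tau.\frst(x0)=C(\square_{h'})$. Since the reference was not inherited, either $s\not\lcrs u_j$ or $s\spa(M+1)\lcrs u_j$. In both cases the address $u_j$ lies outside the plucked block $h'$, and the address $u_{j+1}$ given by \cref{def:decomp references} is precisely where the same node sits once the block $h'$ is collapsed to the single leaf $\square_{h'}$. In the second case the address shifts from $s\spa(M+1)\spa v$ to $s\spa 1\spa v$. One can verify this bookkeeping using the ancestor embedding, namely \cref{def:ancestor embedding formal} together with \cref{lem:ancestor embedding respects holes and references}.

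Consequently $\clnk(\tau,x0,z1)$ is obtained from $\clnk(\tau,x,z1)$ by replacing the occurrence of $h'$ rooted at $s$ with $\square_{h'}$. Both objects are single-hole contexts, and $\unr(\square_{h'})=\unr(h')$, so the two contexts have equal unravelings by \cref{def:unraveling} (compatibility of $\unr$ with composition is \cref{lem:unr mor}). Combining this with the induction hypothesis yields $\clnk(\tau,x0,z1)\eqUnr\tau.\ctx(z1)$.

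\textbf{Step 3 (conclusion).} Since $\tau.\ctx(z1)\in\wHole{V}$ and $\wHole{V}$ is closed under $\eqUnr$, every linking context lies in $\wHole{V}$, so $\tau$ is stable. Note that closure under composition is not even needed here; only unravel-closure is used.

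The main obstacle is the address bookkeeping in Step 2, specifically the second case, $s\spa(M+1)\lcrs u_j$. There I would write $C(h')$ as a labeled domain and verify directly that replacing the block of roots $s,\dots,s\spa M$ by the single node $s$ shifts the right siblings by exactly $M$. This matches the second clause of \cref{def:ancestor embedding formal}, and I would use \cref{prop:lcrs spa} and \cref{prop:spa injective} to justify that the decomposition $u=s\spa(M+1)\spa v$ is unique.
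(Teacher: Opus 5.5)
Your proposal is correct and follows the paper's argument. The paper also inducts along the chain of down-children referencing a fixed $y1$. It observes that $\clnk(\tau,x,y1)$ and $\clnk(\tau,x0,y1)$ differ only by $\tau.\frst(x1)$ versus $\square_{\tau.\frst(x1)}$, so they are unravel-equivalent, and it closes with the unravel-closure of $\wHole{V}$. Your invariant $\clnk(\tau,z0^k,z1)\eqUnr\tau.\ctx(z1)$ is the same induction, packaged via transitivity of $\eqUnr$. Your explicit address bookkeeping for the shift $s\spa(M+1)\spa v\mapsto s\spa 1\spa v$ is more careful than the paper, which simply asserts that the two linking contexts differ only in that block.
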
 
\begin{proof}
    Consider a decomposition $\tau$ with respect to a stable context set $\wHole{V}$ with no inherited references (\cref{def:decomp references}). That is, every reference is of the form $\tau.x0@u\to y1$ for some $x,y$. In order to prove that $\tau$ is stable, we need to show that for every such reference it holds that $\clnk(\tau,x0,y1)\in \wHole{V}$ (see \cref{def:stable dec}). Fix $y$, we prove this by induction on $x$.
    The base is $x=y$, in which case $\clnk(\tau,x0,y1)=\tau.\ctx(y1)\in \wHole{V}$ by \cref{def:linking context}.

    We proceed with the inductive case. Since $x0@u\to y1$, then $x@u'\to y1$ for some $u'$ (by \cref{def:decomp references}). Denote $f=\tau.\frst(x),f_1=\tau.\frst(x1),f_0=\tau.\frst(x0)$ (see \cref{fig:linking context stability}).
    \begin{figure}[ht]
        \centering
        \includegraphics[width=0.5\linewidth]{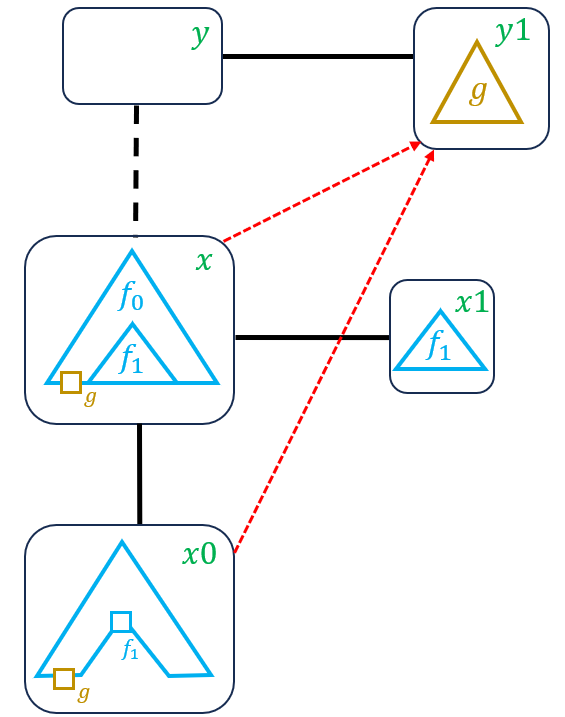}
        \caption{The decomposition in the proof of \cref{lem:no inherited decomposition is stable}. The linking contexts are obtained by replacing $\square_g$ with $\square$.}
        \label{fig:linking context stability}
    \end{figure}
    By \cref{def:linking context,def:bin decomp} we have:
        \begin{align*}
            \clnk(\tau,x0,y1)=f_0[u\mapsto \square]\\
            \clnk(\tau,x,y1)=f[u'\mapsto \square]\\
        \end{align*}
    We claim that $\clnk(\tau,x,y1)\eqUnr \clnk(\tau,x0,y1)$ (see \cref{def:equiv unravel}). Indeed, the difference between the two contexts is that $\clnk(\tau,x0,y1)$ has $\square_{f_1}$ where $\clnk(\tau,x,y1)$ has $f_1$ itself, but this difference disappears under unraveling. 

    By the induction hypothesis, we have $\clnk(\tau,x,y1)\in \wHole{V}$. Since $\wHole{V}$ is a stable context set, it is closed under unraveling (\cref{def:stable context set}) and therefore $\clnk(\tau,x0,y1)\in \wHole{V}$, and we are done.
\end{proof}

Another restriction we place on decompositions is which forests can appear in their leaves. Technically, this restriction arises due to the inductive nature of our proof: we refine a decomposition by going through a sequence of quotients of algebraic structures. This means that at certain points, we decompose to a set of leaves, but then further decompose the leaves themselves. To this end, we draw the leaves from a specific  \emph{basis} set $B\subseteq \tilH$. 

Furthermore, similarly to the case of words, in the following we construct decompositions by extending existing decompositions. Therefore, we need a notion for decompositions that are ``prefixes'' of full decompositions.

\begin{definition}[Full and Partial Decompositions]
    \label{def:full decomposition}
    \label{def:partial decomposition}
    Consider a stable context set $\wHole{V}$ and a basis $B\subseteq \tilH$. 
    \begin{itemize}
        \item We say that a decomposition $\tau$ is \emph{full} with respect to  $\wHole{V}$ and $B$ if $\tau$ is stable with respect to $\wHole{V}$ and for every leaf $x\in \tau.\nds$ we have $\tau.\frst(x)\in B$.

        We denote the set of full decompositions by $\fd(\wHole{V},B)$. 
        We say that a forest $f\in \tilH$ is \emph{decomposable by $\wHole{V}$ to the basis $B$} if there is some $\tau\in \fd(\wHole{V},B)$ such that $\tau.\frst(\epsilon)=f$.

        \item We call a decomposition $\tau$ \emph{partial with respect to $\wHole{V}$ and $B$} if there exists $\tau'\in \fd(\wHole{V}, B)$ such that $\tau\leq \tau'$ 
        (meaning $\tau$ is a substructure of $\tau'$ as per \cref{def:dl order}).
        We denote the set of partial decompositions with respect to $\wHole{V}$ and $B$ by $\pd(\wHole{V}, B)$.
    \end{itemize}
\end{definition}
Observe that while \cref{def:partial decomposition} does not explicitly require a partial decomposition to be stable, it is in fact implicitly required, since a partial decomposition must be extendable to a full decomposition (and in particular, a stable decomposition). Since all the linking contexts remain in the extension, we get in particular that the partial decomposition must also be stable.

In order to bootstrap our construction, we observe that there is always a trivial partial decomposition, which contains a single node, as follows.
We define for every forest $f\in \hdef$ the \emph{singleton decomposition} $\sing(f)\coloneqq \tup{\{\epsilon\}, l_f}$ where $l_f(\epsilon)\coloneqq (f,\square)$.

We also use singleton decomposition for a more convenient notion of decomposition substitution:
Consider a decomposition $\tau$ and some $x\in \tau.\nds$, and a forest $h\in \hdef$.
Denote by $\tau[x\mapsto h]$ the substitution $\tau[x\mapsto \sing(h)]$.

\section{Decomposition Rotations}
\label{sec:rotations}
A central tool in our proof, dubbed \emph{rotations}, provides a method for changing the shape of a binary decomposition tree in a way that maintains certain aspects of the decomposition, and guarantees some desirable properties.
Fundamentally, rotations can be thought of as changing the order at which we pluck out factors. Crucially, the difficult part is maintaining the stability of the trees, and reasoning about their references structure.

We remark that this framework is a novelty of the setting of forests. Indeed, for the setting of words the order at which we pluck subwords is (to some extent) irrelevant, and can be easily changed (see \cref{fig:rotation for words}).
\begin{figure}[ht]
    \centering
    \includegraphics[width=0.5\linewidth]{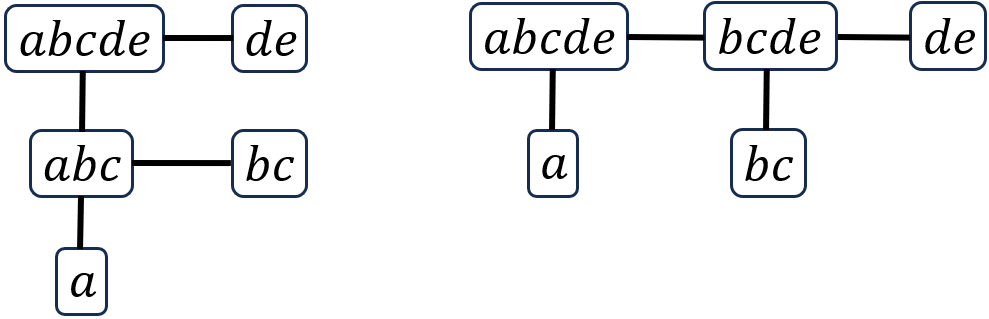}
    \caption{A tree rotation in the setting of words: two binary decomposition trees for $abcde$ with the same leaves. To perform this rotation, we ``violently'' move the subtree that corresponds to the infix $bc$ to the right child of the root instead of the down child. For decomposition (and their stability requirement) this is much more involved.}
    \label{fig:rotation for words}
\end{figure}

\subsection{Sequential Rotation}
\label{sec:sequential rotation}
A \emph{sequential rotation}, depicted in~\cref{fig:factorizations for TB BT rotation lemma}, refers to the following setting. We are decomposing a forest of the form $C_1 C_2 h$ where $C_1,C_2\in \tilV$ and $h\in \tilH$. Intuitively, there are two approaches to such a decomposition: in the \emph{Bottom-Top (BT)} approach, we first pluck the ``bottom'' factor $h$, and then pluck $C_2$ from the residue. In the \emph{Top-Bottom (TB)} approach, we first pluck the factor $C_2 h$, and then pluck $h$ from this factor.

BT and TB yield two different decomposition trees. However, there is a natural correspondence between them, which is captured by sequential rotations. More precisely, we show that the leaves of the two decompositions have a bijection between them that preserves the way they embed in the root, via $\AncEmb$.
    \begin{figure}[ht]
    \centering
    \begin{subfigure}[t]{0.35\textwidth}
        \centering
        \includegraphics[width=\linewidth]{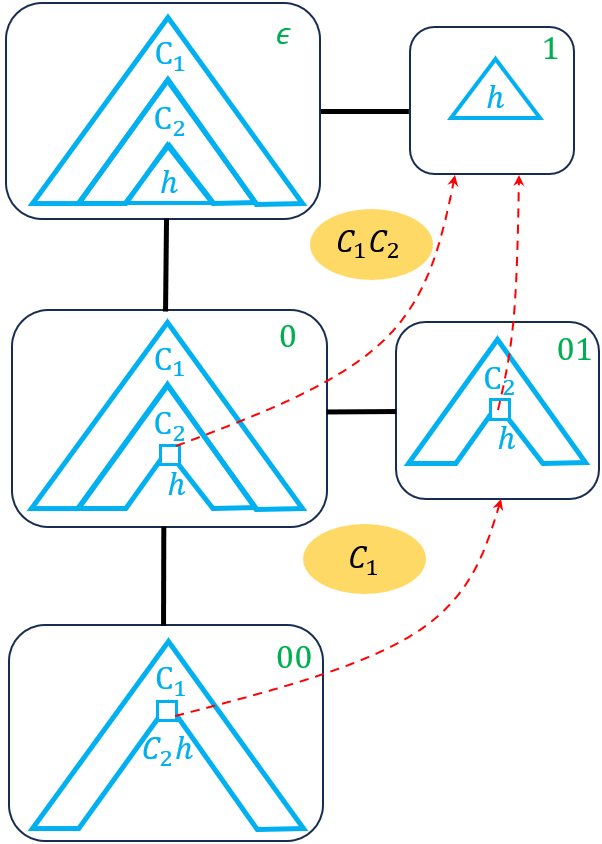}
        \caption{BT factorization.}
    \end{subfigure}
    \hfill
    \begin{subfigure}[t]{0.6\textwidth}
        \centering
        \includegraphics[width=\linewidth]{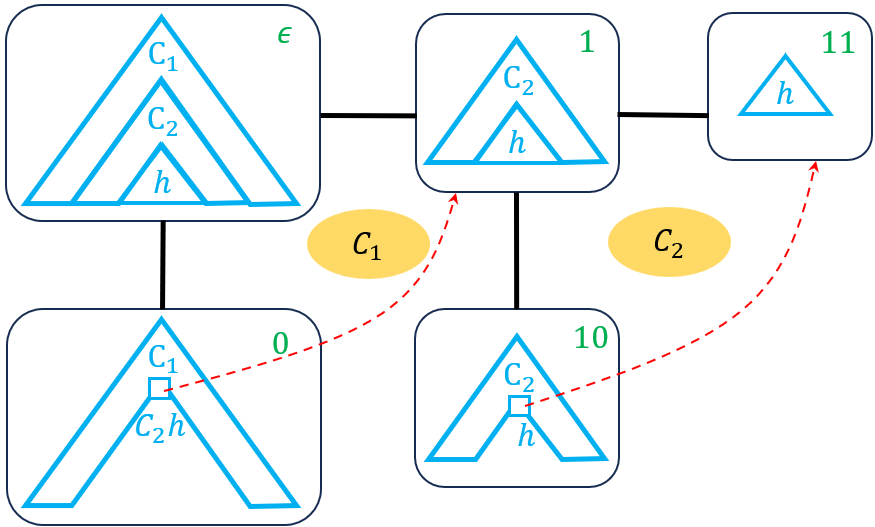}
        \caption{TB factorization}
    \end{subfigure}
    \caption{Two factorizations (BT and TB) for $C_1C_2h$. The translation between them is a \emph{sequential rotation}.}
    \label{fig:factorizations for TB BT rotation lemma}
\end{figure}
\begin{lemma}[Sequential Factorization Ancestor Embedding Equivalence]
\label{lem:sequential factor map}
\label{lem:sequential rotation AncEmb equivalence}
    Consider a binary decomposition $\tau$ such that $\tau.\nds = \{\epsilon,0,1,00,01\}$.
    Such that there exists contexts $C_1,C_2\in \tilV$ for which $\tau.\ctx(1)=C_1\cdot C_2$ and $\tau.\ctx(01)=C_1$.
    Denote $h=\tau.\frst(1)$.
    Now consider a decomposition $\tau'$ which is defined by $\tau'.\nds=\{\epsilon,0,1,10,11\}$ with $\tau'.\frst(\epsilon)=\tau.\frst(\epsilon)$ and $\tau'.\ctx(1)=C_1,\tau'.\ctx(11)=C_2$.
    Then for the bijection $\pi=\left(\begin{array}{c}
    00 \mapsto 0\\
    01\mapsto 10\\
    1 \mapsto 11
    \end{array}\right)$ between the leaves of $\tau$ and $\tau'$ it holds that for every leaf $y$ of $\tau$ we have $\AncEmb(\tau,y,\epsilon)=\AncEmb(\tau',\pi(y),\epsilon)$.
\end{lemma}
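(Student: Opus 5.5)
The plan is a direct computation. Both sides are compositions of one-step ancestor embeddings (\cref{def:ancestor embedding formal}), and each one-step map depends only on $u_z=\spos(\tau.\ctx(z1))$ and $M_z=\width(\tau.\frst(z1))$.

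Introduce notation for the relevant positions and widths. Let $s_1=\spos(C_1)$ and $s_2=\spos(C_2)$; by \cref{prop: address spos morphism}, $\spos(C_1C_2)=s_1\spa s_2$. Let $M_h=\width(h)$ and $M_2=\width(C_2\cdot\square_h)$. In $\tau$, the edge $\epsilon\to 1$ uses context $C_1C_2$ and plucks $h$. The edge $0\to 01$ uses $C_1$ and plucks $C_2\square_h$. In $\tau'$, the edge $\epsilon\to1$ uses $C_1$ and plucks $C_2h$, which has width $M_2+M_h$ if $s_2\in\bbN$ and $M_2$ otherwise (the width formula in \cref{sec:labeling intuitive}). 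The edge $1\to 11$ uses $C_2$ and plucks $h$.

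First I check that the two trees are well defined, using \cref{lem:equal contexts characterization}. This gives $\tau.\frst(0)=C_1C_2\square_h$, $\tau.\frst(01)=C_2\square_h=\tau'.\frst(10)$ and $\tau.\frst(00)=C_1\square_{C_2\square_h}$. On the other side, $\tau'.\frst(0)=C_1\square_{C_2h}$. Since $\square_{C_2\square_h}=\square_{C_2h}$ by the convention $\square_f=\square_{\unr(f)}$, we get $\tau.\frst(00)=\tau'.\frst(0)$. Also $\tau'.\frst(11)=h=\tau.\frst(1)$. So $\pi$ matches leaves with identical forests, and both embeddings in each pair have the same domain.

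Then verify the three leaves one at a time.

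For $y=1$: the left side is $u\mapsto (s_1\spa s_2)\spa u$. The right side is $\AncEmb(\tau',1,\epsilon)\circ\AncEmb(\tau',11,1)$, i.e.\ $u\mapsto s_1\spa(s_2\spa u)$. These agree by associativity of $\spa$.

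For $y=01$: the left side is $\AncEmb(\tau,0,\epsilon)(s_1\spa u)$ and the right side is $s_1\spa u$. So I must show that the residue map at $\epsilon$ fixes every address of the form $s_1\spa u$ where $u$ is an address of $C_2\square_h$. The addresses not fixed are those $w=(s_1\spa s_2)\spa v$ with $v\neq 0$; these are shifted by $M_h$. For each $u$, I case on $s_2\lcrs u$, using \cref{prop:lcrs spa} and injectivity (\cref{prop:spa injective}):
\begin{itemize}
\item If $u=s_2$, this is the $u=u_z$ case of the definition, so the address is fixed.
\item If $s_2\not\lcrs u$, then $s_1\spa s_2\not\lcrs s_1\spa u$, so the address is fixed.
\item If $s_2\lcrs u$ and $u\neq s_2$, then $u$ lies to the right of the hole of $C_2$. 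The shift by $M_h$ is exactly what distinguishes addresses of $C_2\square_h$ from those of $C_2 h$, so it is consistent with the right-hand side.
\end{itemize}
I expect this third sub-case, when $s_2\in\bbN$ so that the widths add, to be the main obstacle. I will handle it by carefully unfolding the definition of $\spa$ on the last letter of $s_1$ and the first letter of $u$.

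For $y=00$: the left side is $\AncEmb(\tau,0,\epsilon)\circ\AncEmb(\tau,00,0)$. The first map to apply, $\AncEmb(\tau,00,0)$, shifts addresses after $s_1$ by $\width(C_2\square_h)$. Then $\AncEmb(\tau,0,\epsilon)$ shifts the addresses after $s_1\spa s_2$ by $M_h$. The right side shifts addresses after $s_1$ once, by $\width(C_2h)$. I will show the two agree by splitting on whether $s_1\lcrs u$. If not, both sides are the identity. If so, write $u=s_1\spa v$ with $v\in\bbN^+$ and expand. The composite shift equals $\width(C_2\square_h)$, plus $M_h$ exactly when $s_2\in\bbN$, which by the width formula is $\width(C_2h)$. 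The needed equality then reduces to additivity of $\spa$ on the first coordinate, i.e.\ associativity of $\spa$.
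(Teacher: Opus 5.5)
Your overall route is the paper's: unfold each side into one-step embeddings, handle leaf $1$ by associativity of $\spa$, and handle leaf $00$ via the width formula with a split on $s_2\in\bbN$. However, the leaf-$01$ step is wrong as written.

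In $\tau'$ the leaf $\pi(01)=10$ is the \emph{residue} child of $1$. So $\AncEmb(\tau',10,\epsilon)=\AncEmb(\tau',1,\epsilon)\circ\AncEmb(\tau',10,1)$, and the inner map is not the identity. It is the residue embedding with hole $\spos(\tau'.\ctx(11))=s_2$ and shift $\width(\tau'.\frst(11))=\width(h)$. Hence the right-hand side is not $u\mapsto s_1\spa u$, and your stated goal, that $\AncEmb(\tau,0,\epsilon)$ fixes every $s_1\spa u$, is false.

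Concretely, take $C_1=b(\square)$, $C_2=\square+a$, $h=a+a$, so the root forest is $b(a+a+a)$. Let $u=1$ be the $a$ of $C_2\square_h=\square_{a+a}+a$. The left-hand side sends it to $02$, while $s_1\spa u=01$ is the second root of $h$. Your third bullet appeals to the shift by $\width(h)$ being ``consistent with the right-hand side''. That contradicts the right-hand side you actually wrote down.

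The repair is short. The correct right-hand side is $u\mapsto s_1\spa u$ when $u=s_2$ or $s_2\not\lcrs u$, and $u\mapsto s_1\spa s_2\spa\width(h)\spa v$ when $u=s_2\spa v\neq s_2$. This matches $\AncEmb(\tau,0,\epsilon)(s_1\spa u)$ case by case. By \cref{prop:lcrs spa} and injectivity of $w\mapsto s_1\spa w$, one has $s_1\spa s_2\lcrs s_1\spa u$ iff $s_2\lcrs u$, and the shifted values agree by associativity. No widths are added at this leaf, so the obstacle you anticipate there does not exist; it lives entirely at leaf $00$.

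At leaf $00$, your claim that the composite shift picks up $\width(h)$ ``exactly when $s_2\in\bbN$'' needs the facts the paper uses. First, the case $u=s_1$ is fixed by every map. Otherwise $u=s_1\spa v$ where the first letter of $v$ is at least $1$, because $s_1$ is a leaf of $\tau.\frst(00)$. Then split on $s_2$:
\begin{itemize}
\item If $s_2=\sigma_2 s_2'$ with $s_2'\neq\epsilon$: every $s_2\spa w$ begins with $\sigma_2$, and $\sigma_2\le\width(C_2\square_h)$. But $\width(C_2\square_h)\spa v$ begins with a strictly larger letter, so the address is not after $s_1\spa s_2$ and is not shifted.
\item If $s_2\in\bbN$: $s_2\le\width(C_2\square_h)$ places the address strictly after $s_1\spa s_2$, so it is shifted.
\end{itemize}
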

\begin{proof}
Denote $s_1=\spos(C_1),\ s_2=\spos(C_2)$ and let 
\[M_1=\width(h),\ M_{01}=\width(C_2\square_h),\ M'_1=\width(C_2h)\] 
We consider each leaf separately. Most of the equalities are immediate from \cref{def:ancestor embedding formal}, but some require additional justification.
\begin{itemize}
    \item For the leaf $1\in \tau.\nds$, consider an address $u$. By \cref{prop: address spos morphism} we have $\AncEmb(\tau,1,\epsilon)(u)= s_1\spa s_2 \spa u$.
    On the other hand, we have
    \[\AncEmb(\tau',11,\epsilon)(u)=\AncEmb(\tau',1,\epsilon)\circ \AncEmb(\tau',11,1)(u)\] and then \[\AncEmb(\tau',11,1)(u)=\spos(\tau'.\ctx(11))\spa u=s_2\spa u\] so 
    \[\AncEmb(\tau',1,\epsilon)(s_2\spa u)=s_1\spa s_2\spa u\]
    giving the equivalence $\AncEmb(\tau,1,\epsilon)=\AncEmb(\tau',11,\epsilon)$.

    \item For the leaf $01\in \tau.\nds$, again consider an address $u$. We have
    \[\AncEmb(\tau,01,\epsilon)(u)=\AncEmb(\tau,0,\epsilon)\circ \AncEmb(\tau,01,0)(u)=\AncEmb(\tau,0,\epsilon)(s_1\spa u)\]
    By instantiating \cref{def:ancestor embedding formal} to our setting, we have
    \[\AncEmb(\tau,0,\epsilon)(s_1\spa u)=\begin{cases}
        s_1\spa u & s_2\not \lcrs u\\
        s_1\spa s_2\spa M_1 \spa v & u = s_2\spa v
    \end{cases}\]
    We compare this to the leaf $10$ of $\tau'$, where the definition yields
    \[\AncEmb(\tau',10,1)(u)=\begin{cases}
        u & s_2\not \lcrs u\\
        s_2\spa M_1 \spa v & u = s_2\spa v
    \end{cases}\]
    and $\AncEmb(\tau',1,\epsilon)(u')=s_1\spa u'$ for every $u'$. Since \[\AncEmb(\tau',10,\epsilon)=\AncEmb(\tau',1,\epsilon)\circ \AncEmb(\tau',10,1)\] we plug in the above and get
    \[
    \AncEmb(\tau',10,\epsilon)(u)=\begin{cases}
        s_1\spa u & s_2\not \lcrs u\\
        s_1\spa s_2\spa M_1 \spa v & u = s_2\spa v
    \end{cases}
    \]
    again yielding the desired equality $\AncEmb(\tau',10,\epsilon)=\AncEmb(\tau,01,\epsilon)$.

    \item Finally, for the leaf $00\in \tau.\nds$, consider an address $u$.  We separate to cases depending on the relative position of $s_1,s_2$ and $u$.
    \begin{itemize}
        \item If $s_1\not \lcrs u$, then in particular $s_1\spa s_2 \not \lcrs u$. Then, we simply have
        \[\AncEmb(\tau,00,\epsilon)(u)=u=\AncEmb(\tau',0,\epsilon)(u)\] 
        and we are done.
        \item Otherwise, $s_1\lcrs u$, i.e., $u$ is a descendant of $s_1$ in the left-child/right-sibling order. By \cref{prop:lcrs spa} we can write $u=s_1\spa v$ for some unique $v$.         
        Note that $u$ cannot be a descendant of $s_1$ itself, since $s_1$ is a leaf in $\tau.\frst(00)$, and therefore has no descendants.
        Thus, we can write $v=\sigma_v v'$ with $\sigma_v>0$ (since if $\sigma_v=0$ we would get $u=s_1\cdot v'$, contradicting $s_1$ being a leaf).

        Write $s_2=\sigma_2 s'_2$ and observe that $\sigma_2\le M_{01}$ by definition. We split to two further cases.
        \begin{itemize}
            \item If $s_2\notin \bbN$ (i.e., if $s'_2\neq \epsilon$) then $C_2$ does not have its $\square$ as a root, and therefore $M_{01}=\width(C_2\square_h)=\width(C_2h)=M'_1$.
            By \cref{prop:spa and concat} we have $s_2=(\sigma_2\cdot 0)\spa s'_2$. Now, since $\sigma_2\le M_{01}$ and $\sigma_v>0$ we have
            \[(\sigma_2\cdot 0)\spa s'_2\not \lcrs ((M_{01}+\sigma_v)\cdot 0)\spa v'\]
            Indeed, otherwise by \cref{prop:lcrs spa} we would have that $((M_{01}+\sigma_v)\cdot 0)\spa v'=(\sigma_2\cdot 0)\spa s'_2 \spa v''$ for some $v''$, but this is a contradiction since the right-hand side starts with $\sigma_2<M_{01}+\sigma_v$.
            Then, by \cref{def:ancestor embedding formal} we again have
            \[
            \begin{split}
            &\AncEmb(\tau,00,\epsilon)(u)=\AncEmb(\tau,0,\epsilon)(s_1\spa M_{01}\spa v)=\\
            &s_1\spa M_{01}\spa v = s_1\spa M'_1\spa v = \AncEmb(\tau',0,\epsilon)    
            \end{split}
            \]

            \item If $s_2\in \bbN$ then $s_2=\sigma_2$ and $s'_2=\epsilon$. Notice that in this case we have $M'_1=M_{01}+M_1$. Since $s_2\le M_{01}$, we have $M_{01}=s_2\spa (M_{01}-s_2)$. Plugging this into \cref{def:ancestor embedding formal} and using the fact that for $x,y\in \bbN$ we have $x\spa y=x+y$, we have
            \[
            \begin{split}
                \AncEmb(\tau,00,\epsilon)=&\AncEmb(\tau,0,\epsilon)(s_1\spa M_{01}\spa v)\\
                =&\AncEmb(\tau,0,\epsilon)(s_1\spa s_2 \spa (M_{01}-s_2)\spa v)\\
                =&s_1\spa s_2\spa M_1 \spa (M_{01}-s_2)\spa v\\
                =& s_1\spa (s_2+M_1+M_{01}-s_2)\spa v\\
                =& s_1 \spa M'_1\spa v \\
                =& \AncEmb(\tau',0,\epsilon)(u)
            \end{split}
            \]       
        \end{itemize}
    \end{itemize}
\end{itemize}
\end{proof}

Having established the equivalence in the leaves of $\AncEmb$ between the a given BT and TB decomposition, we now show how to obtain a TB decomposition by ``rotating'' a BT decomposition.
\begin{lemma}[BT to TB Rotation]
    \label{lem:dec rot A}
    \label{lem:rotation BT to TB}
    Consider a partial decomposition $\tau$ (over $\wHole{V}, B$) and nodes $x,x0,x1,x00,x01\in \tau.\nds$ such that $x01@u\to x1$
    (as depicted by the dashed red arrow in \cref{fig:factorizations for TB BT rotation lemma}).
    Then $\tau.\ctx(x1) = \tau.\ctx(x01)\cdot \clnk(\tau,x01,x1)$,
    and there exist a partial decomposition $\equpto{\tau'}x\tau$ 
    such that:
    \begin{itemize}
        \item $\tau'.\ctx(x1) = \tau.\ctx(x01)$
        \item $\tau'.\ctx(x11) = \clnk(\tau,x01,x1)$
        \item $\tau'[x0]= \tau[x00]$, $\tau'[x10]=\tau[x01]$, and $\tau'[x11]=\tau[x1]$.
    \end{itemize}
\end{lemma}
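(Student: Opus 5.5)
Set $C_1\coloneqq\tau.\ctx(x01)$, $h\coloneqq\tau.\frst(x1)$, and $C_2\coloneqq\clnk(\tau,x01,x1)$. The plan has three stages: an algebraic identity for $\tau.\ctx(x1)$, the construction of the rotated tree $\tau'$, and the check that $\tau'$ is still a partial decomposition over $\wHole{V},B$.

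\emph{The identity.} Write $g\coloneqq\tau.\frst(x01)$ and $s\coloneqq\spos(\tau.\ctx(x1))$. By the reference definition (\cref{def:decomp references}), the reference $x01@u\to x1$ comes from $x0@s\to x1$ through the inherited case. So $s=\spos(C_1)\spa u$, and $g$ carries the default hole $\square_h$ at address $u$. Hence $C_2=g[u\mapsto\square]$ satisfies $C_2\cdot\square_h=g$. Then
\[
\tau.\frst(x0)=C_1\cdot g=C_1C_2\square_h,
\]
and by \cref{def:bin decomp} also $\tau.\frst(x0)=\tau.\ctx(x1)\cdot\square_h$. By \cref{prop: address spos morphism}, $\spos(C_1C_2)=\spos(C_1)\spa u=s$. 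Item 3 of \cref{lem:equal contexts characterization} then gives $\tau.\ctx(x1)=C_1C_2$.

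\emph{The construction.} Define $\tau'$ to agree with $\tau$ outside the subtree of $x$. Set $\tau'.\ctx(x1)=C_1$ and $\tau'.\frst(x1)=C_2h$. Set $\tau'.\ctx(x11)=C_2$, graft $\tau[x1]$ at $x11$, and graft $\tau[x01]$ at $x10$; this is consistent because $C_2\square_h=g$. Graft $\tau[x00]$ at $x0$, using $\tau.\frst(x00)=C_1\square_g$ and $\square_g=\square_{C_2h}$ after unraveling.

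Checking the conditions of \cref{def:bin decomp}:
\begin{itemize}
\item $\tau.\frst(x)=C_1C_2h$, so the decomposition at $x$ is correct.
\item The non-triviality condition (Condition~\ref{itm:decomp:pluck nonempty}) at $x$, $x1$, $x11$ follows from the corresponding conditions in $\tau$.
\end{itemize}
One wrinkle: $\square_g$ and $\square_{C_2h}$ are literally equal, since default holes index by $\unr$ and the two forests unravel alike. So the forest at $x0$ matches exactly.

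\emph{Partiality, the main obstacle.} Extend $\tau$ to a full $\hat\tau\in\fd(\wHole{V},B)$. Apply the same surgery to $\hat\tau$, obtaining $\hat\tau'$; its leaves are exactly those of $\hat\tau$, so they lie in $B$. It remains to show that $\hat\tau'$ is stable, i.e.\ that every linking context of $\hat\tau'$ lies in $\wHole{V}$. I would do this by matching references between the two trees.
\begin{itemize}
\item \Cref{lem:sequential rotation AncEmb equivalence}, together with \cref{lem:ancestor embedding not affected by ancestor,lem:ancestor embedding not affected by incomparable}, shows that ancestor embeddings of corresponding nodes into $x$ agree under $\pi$.
\item Item 3 of \cref{lem:ancestor embedding respects holes and references} then turns each reference of $\hat\tau'$ into a reference of $\hat\tau$ to the corresponding node, with unravel-equivalent linking context. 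Stability of $\wHole{V}$ (closure under $\eqUnr$) transfers membership.
\end{itemize}

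Several references are new and must be handled directly.
\begin{itemize}
\item The new contexts: $C_1$ is already in $\wHole{V}$. For $C_2$, it is a linking context of $\hat\tau$, so $C_2\in\wHole{V}$ by stability of $\hat\tau$.
\item References from inside $\tau[x1]$ now pass through $x11$; their linking contexts are compositions or unravel-variants of old ones.
\item References into $x1$ and $x11$ of $\hat\tau'$ correspond to references into $x01$ and $x1$ of $\hat\tau$. Here a linking context may become a product such as $(\text{old link})\cdot C_2$. Closure of $\wHole{V}$ under products handles this.
\end{itemize}
This bookkeeping of references, done by cases on whether each address lies below $s$, is where I expect the work to concentrate.
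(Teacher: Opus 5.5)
Your proposal is correct and follows the paper's proof essentially step for step. It has the same three parts: the identity via item 3 of \cref{lem:equal contexts characterization} and \cref{prop: address spos morphism}; the same rotated tree; and partiality via a full extension, whose references are matched to those of the original using \cref{lem:sequential rotation AncEmb equivalence}, the $\AncEmb$ locality lemmas, and \cref{lem:ancestor embedding respects holes and references}.

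Doing the surgery directly on the full extension is a slightly cleaner variant of the paper's ``without loss of generality'' step. Also, no product linking contexts actually arise in this direction. Every reference of the rotated tree corresponds to a reference of the original at the same address and on an identical forest, so the linking contexts coincide exactly. Products are only needed for the TB-to-BT rotation.
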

\begin{proof}
    We start the proof by focusing on the first two items. To this end, we ignore the suffix of $\tau$ below $x00,x01,x1$. Intuitively, we can think of these nodes as leaves, which they might be. Later on, we replace them with their actual subtrees in order to obtain the last requirement of the lemma.
    Denote the following
    \begin{align*}
        C_1&\coloneqq \tau.\ctx(x1)\\
        C_{01}&\coloneqq \tau.\ctx(x01)\\
        h_1&\coloneqq \tau.\frst(x1)\\
        h_0&\coloneqq \tau.\frst(x0)\\
        h_{01}&\coloneqq \tau.\frst(x01)
    \end{align*}
    Since by \cref{def:decomp references} we have $x0@\spos(C_1)\to x_1$ and we assume $x01@u\to x1$, then by \cref{def:decomp references} we have  $\spos(C_1)=\spos(C_{01})\spa u$.

    We first show that $C_1 = C_{01}\cdot \clnk(\tau, x01,x1)$, using the context-equality criteria of \cref{lem:equal contexts characterization}:
    \begin{itemize}
        \item By the above and by \cref{def:linking context,prop: address spos morphism} we get
        \[
        \begin{split}
        \spos(C1)&=\spos(C_{01})\spa u=\spos(C_{01})\spa \spos(\clnk(\tau,x01,x1))\\
        &=\spos(C_{01}\cdot \clnk(\tau,x01,x_1))
        \end{split}
        \]
        \item We now plug $\square_{h_1}$ in $C_1$ (and recall $h_1= \tau.\frst(x1)$):
        \[C_1(\square_{h_1})=h_0=C_{01}(h_{01})=C_{01}\cdot \clnk(\tau,x01,x1)(\square_{h_1})\]
    \end{itemize}
    By \cref{lem:equal contexts characterization}, the two criteria above imply $C_1=C_{01}\cdot \clnk(\tau,x01,x1)$.
    
    We can now define $\tau''=\tup{d''=\{\epsilon,0,1,10,11\}, l''}$ such that:
    \begin{align*}
        l''(\epsilon) &= \tup{\tau.\frst(x),\square}\\
        l''(0) &= \tup{\tau.\frst(00), \square}\\
        l''(1) &= \tup{\clnk(\tau, x01,x1)(\tau.\frst(x1)),C_{01} }\\
        l''(10) &= \tup{\tau.\frst(x01), \square}\\
        l''(11) &= \tup{\tau.\frst(x1), \clnk(\tau,x01,x1)}
    \end{align*}
    Since $\tau$ is a partial decomposition, then it is in particular stable (as per \cref{def:stable dec}). It then follows that $\tau''$ is also a stable decomposition. Indeed, it has a single right child and $\wHole{V}$ is a stable context set. 
    
    Additionally, observe that $\tau[x],\tau''$ satisfy the conditions of \cref{lem:sequential factor map}.
    Define $\tau' \coloneqq \tau[x\mapsto \tau'']$ (i.e., we plug $\tau''$ in $\tau$ under $x$). We then have that $\tau'$ is a binary decomposition over $\wHole V$, $\equpto{\tau'}x\tau$ and it holds that $\tau'.\ctx(x1) = \tau''.\ctx(1)=\tau.\ctx(x01)$ and $\tau'.\ctx(x11) = \tau''.\ctx(11) = \clnk(\tau.x01,x1)$.

    It remains to show that $\tau'\in \pd(\wHole V, B)$, i.e., that it can be completed to a full decomposition. Intuitively, this follows by using roughly the same extension of $\tau$, correctly adapted to the rotation. We proceed with the details.
    
    Since $\tau\in \pd(\wHole V, B)$, there is a full decomposition $\tau_1\in \fd(\wHole{V},B)$ such that $\tau \leq \tau_1$. We assume without loss of generality that every leaf $y\in \tau.\type^{-1}[\lf]$ which is not below $x$ ($x\not<y$) satisfies $\tau.\frst(y)\in B$ (otherwise we can first complete $\tau$ outside the $x$ subtree).
    Define $\tau_2\coloneqq \tau'\left[\begin{array}{c}
        x0\mapsto \tau_1[x00] \\
        x10\mapsto \tau_1[x01]\\
        x11\mapsto \tau_1[x1]
    \end{array}\right]$. Since $\tau_1$ is full, then $\tau_2$ has all its leaves in $B$, and has contexts in $\wHole V$. We now need to show that $\tau_2$ is stable.
    Since $\equpto \tau x {\tau'}$ we also get $\equpto {\tau_1} x {\tau_2}$.
    Consider $y,z\in \tau_2.\nds$ and $u$ such that 
    $\tau_2.y@u\to z$, and denote denote $z=\hat z1$. We split to cases and prove that $\clnk(\tau_2,y,z)\in \wHole{V}$.
    \begin{itemize}
        \item If $x\not\leq y,z$:
        by \cref{lem:ancestor embedding respects holes and references,lem:ancestor embedding not affected by incomparable} we get that 
        $\tau_1.y@u\to z$.
        Since $\equpto{\tau_1} x {\tau_2}$ we also have $\tau_1.\frst(y)=\tau_2.\frst(y)$.
        We conclude that $\clnk(\tau_2,y,z)=\clnk(\tau_1,y,z)\in \wHole{V}$.
        \item If there is some $w\in \{x0,x10,x11\}$ (in $\tau'$) such $w\leq y,z$,
        recall (as in \cref{lem:sequential factor map}) that $\pi=\left(\begin{array}{c}
             00\mapsto 0\\
             01\mapsto 10\\
             1\mapsto 11
        \end{array}\right)$ maps between the leaves of $\tau[x]$ and $\tau''$.
        Denote $w = x\pi(v)$ and $y=w\cdot y',z=w\cdot z'1$.
        By \cref{lem:ancestor embedding not affected by ancestor} and because $\tau_2[w]=\tau_1[xv]$ we have $\AncEmb(\tau_2,y,\hat z)=\AncEmb(\tau_2[w],y',z')=\AncEmb(\tau_1[xv],y',z')=\AncEmb(\tau_1,xvy',xvz')$.
        Hence (by \cref{lem:ancestor embedding respects holes and references}) 
        $\tau_1.xvy'@u\to xvz'$,
        and because $\tau_1.\frst(xvy')=\tau_1[xv].\frst(y')=\tau_2[w].\frst(y')=\tau_2.\frst(y)$,
        we conclude that $\clnk(\tau_2,y,z)=\clnk(\tau_1,xvy',xvz'1)\in \wHole{V}$.
        
        \item If $x\not\leq z$ and there is a leaf $w$ of $\tau'$ such that $w\leq y$ (denote $y=w\cdot y'$), again recall $\pi$ as per \cref{lem:sequential factor map}.
        Notice that (by the assumptions and \cref{lem:ancestor embedding not affected by ancestor,lem:ancestor embedding not affected by incomparable,def:ancestor embedding formal}) we have $\AncEmb(\tau_2,x,\hat z)=\AncEmb(\tau_1,x,\hat z)$,$\AncEmb(\tau_2,w,x)=\AncEmb(\tau_1,xv,x)$ and $\AncEmb(\tau_2,y,w)=\AncEmb(\tau_1,xvy',xv)$.
        Overall, together with \cref{lem:ancestor embedding respects holes and references} we conclude 
        $\tau_1.xvy'@u\to z$, and since $\tau_1.\frst(xvy')=\tau_2.\frst(y)$ we have $\clnk(\tau_2,y,z)=\clnk(\tau_1,xvy',z)\in \wHole{V}$.
        
        \item If $z = x11$ and $x10\leq y$ (denote $y=x10y'$), we have \[\AncEmb(\tau_2,y,x1)(u)=\AncEmb(\tau_2,x10,x1)(\AncEmb(\tau_2,y,x10)(u))=\spos(\tau_2.\ctx(x11))\]
        By \cref{def:ancestor embedding formal} we obtain $\AncEmb(\tau_2,y,x10)(u)=\spos(\tau_2.\ctx(x11))$.
        Denote $u'=\spos(\tau_2.\ctx(x11))=\spos(\clnk(\tau_1,x01,x1))$.
        By \cref{lem:ancestor embedding not affected by incomparable,lem:ancestor embedding not affected by ancestor} we have that $\AncEmb(\tau_1,x01y',x01)(u)=u'$.
        By \cref{def:linking context} we get that 
        $\tau_1.x01@u'\to x1$, so by \cref{lem:ancestor embedding respects holes and references}, we have $\AncEmb(\tau_1,x01,x)(u')=\spos(\tau_1.\ctx(x1))$.
        
        Overall we have $\AncEmb(\tau_1,x01y',x)(u)=\spos(\tau_1.\ctx(x1))$, so 
        $\tau_1.x01y'@u\to x1$.
        Since $\tau_2.\frst(y)=\tau_1.\frst(x01y')$ we conclude $\clnk(\tau_2,y,x11)=\clnk(\tau_1,x01y',x1)\in \wHole{V}$.
        
        \item If $z=x1$ and $x0 \leq y$ (denote $y=x0y'$), then similarly to the previous case, we have 
        \[\AncEmb(\tau_2,y,x)(u)=\AncEmb(\tau_2,x0,x)\circ \AncEmb(\tau_2,y,x0)(u)=\spos(\tau_2.\ctx(x1))\]
        Thus, 
        \[\AncEmb(\tau_1,x00y',x00)(u)=\AncEmb(\tau_2,y,x0)(u)=\spos(\tau_2.\ctx(x1))=\spos(\tau_1.\ctx(x01))\]
        Together with \cref{def:ancestor embedding formal,lem:ancestor embedding respects holes and references} we conclude $\AncEmb(\tau_1,x00y',x0)(u)=\spos(\tau_1.\ctx(x01))$, so 
        $\tau_1.x00y'@u\to x01$.
        Overall $\clnk(\tau_2,y,z)=\clnk(\tau_1,x00y',x01)\in \wHole{V}$.
        
    \end{itemize}
    We conclude that $\tau_2$ is stable, and therefore $\tau'$ is a partial decomposition.

    Note, however, that we have yet to consider the third requirement of the lemma (namely the equivalence of the subforests rooted at the leaves $x00,x01$, and $x1$ with their $\tau$ analogues). 
    We now easily obtain it: recall that $\tau_2$ is obtained by plugging in subtrees of $\tau_1$ into the rotated $\tau''$, and that $\tau_1$ extends $\tau$. We can therefore modify $\tau'$ so that instead of plugging singletons in the rotated tree, we plug the subtrees of $\tau$. Thus, we essentially ``prune'' $\tau_2$ so that its subtrees come from $\tau$ instead of $\tau_1$.
    Then $\tau'$ remains a partial decomposition (since it is still a prefix of the stable $\tau_2$), and now it satisfies the last requirement, namely that 
    $\tau'[x0]= \tau[x00]$, $\tau'[x10]=\tau[x01]$, and $\tau'[x11]=\tau[x1]$.
\end{proof}

A dual argument to \cref{lem:rotation BT to TB} shows that we can obtain a BT decomposition by rotating a TB decomposition. We therefore have the following.
\begin{lemma}[TB to BT Rotation]
    \label{lem:rotation TB to BT}
    Consider a partial decomposition $\tau$ (over $\wHole{V}, B$) and nodes $x,x0,x1,x11,x10\in \tau.\nds$,
    Then there exists a partial decomposition $\equpto{\tau'}{x}{\tau}$ such that:
    \begin{itemize}
        \item $\tau'.\ctx(x1)=\tau.\ctx(x1)\cdot \tau.\ctx(x11)$
        \item $\tau'.\ctx(x01)=\tau.\ctx(x1)$
        \item $\tau'[x00]= \tau[x0]$, $\tau'[x01]=\tau[x10]$, and $\tau'[x1]=\tau[x11]$.
    \end{itemize}
\end{lemma}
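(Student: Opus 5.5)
The plan is to mirror the proof of \cref{lem:rotation BT to TB}, running the construction in the opposite direction. I would first ignore the subtrees below $x0,x10,x11$, proving the claim for the five-node skeleton, and then graft those subtrees back at the end.

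\textbf{Step 1: the context identities.} Denote $C_1=\tau.\ctx(x1)$, $C_{11}=\tau.\ctx(x11)$, $h_{11}=\tau.\frst(x11)$ and $h_{10}=\tau.\frst(x10)=C_{11}(\square_{h_{11}})$. Then $\tau.\frst(x)=C_1C_{11}h_{11}$.

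\textbf{Step 2: the rotated skeleton.} Define $\tau''$ on $\{\epsilon,0,1,00,01\}$ as follows.
\begin{itemize}
\item The root is labelled $\tau.\frst(x)$.
\item Node $1$ is labelled $(h_{11},C_1C_{11})$.
\item Node $0$ is labelled $C_1C_{11}(\square_{h_{11}})=C_1(h_{10})$.
\item Node $01$ is labelled $(h_{10},C_1)$.
\item Node $00$ is labelled $C_1(\square_{h_{10}})$.
\end{itemize}
Since $\square_{h_{10}}=\square_{\unr(h_{10})}=\square_{C_{11}h_{11}}$, the forest at $00$ equals $\tau.\frst(x0)=C_1(\square_{\tau.\frst(x1)})$. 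Hence $\tau''.\frst(00)=\tau.\frst(x0)$, as required for grafting $\tau[x0]$ at $00$. Condition~\ref{itm:decomp:pluck nonempty} holds for $\tau''$ because it holds for the original pluckings. The pair $\tau''$, $\tau[x]$ then matches exactly the hypothesis of \cref{lem:sequential rotation AncEmb equivalence}, with $C_2=C_{11}$, $h=h_{11}$ and the roles of $\tau,\tau'$ swapped. That lemma yields, for the bijection $\pi^{-1}$ ($0\mapsto 00$, $10\mapsto 01$, $11\mapsto 1$), equality of the ancestor embeddings of corresponding leaves into the root.

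\textbf{Step 3: stability.} Set $\tau'=\tau[x\mapsto\tau'']$ and show it is partial. Take a full extension $\tau_1\geq\tau$, and build $\tau_2$ by grafting $\tau_1[x0]$, $\tau_1[x10]$ and $\tau_1[x11]$ at $x00$, $x01$ and $x1$ respectively. For each reference $\tau_2.y@u\to z$ I would show that $\clnk(\tau_2,y,z)$ is a linking context of $\tau_1$, or is unravel-equivalent to one or to a product of contexts in $\wHole V$. The case split is the same as before.
\begin{itemize}
\item If both $y$ and $z$ lie outside $x$ or inside a single grafted subtree, use \cref{lem:ancestor embedding not affected by ancestor,lem:ancestor embedding not affected by incomparable}.
\item If $z$ is outside and $y$ is inside, compose embeddings using the $\AncEmb$ equivalence from Step 2, then apply \cref{lem:ancestor embedding respects holes and references}.
\item If $y$ lies below $x0$ and $z=x01$, the reference in $\tau_1$ is $y\to x1$ with the same address, by the equality of embeddings.
\item If $y$ lies below $x01$ and $z=x1$: in $\tau_1$, the address $u$ in $y$ points into $x11$ via $x10$.
\item Finally, there are the new references $x00\to x01$ and $x01\to x1$ themselves. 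These give linking contexts $C_1$ and $C_{11}$ up to unravel-equivalence, both of which lie in $\wHole V$. Also, $\tau_2.\ctx(x1)=C_1C_{11}\in\wHole V$, since $\wHole V$ is a subsemigroup.
\end{itemize}
So every linking context lands in $\wHole V$, possibly after unravel-equivalence, which stability of $\wHole V$ handles.

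\textbf{Step 4: pruning.} Prune $\tau_2$ back so that the grafted subtrees are those of $\tau$ rather than $\tau_1$. This gives the third item, and $\tau'$ remains partial since it is a prefix of $\tau_2$.

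The main obstacle is the address bookkeeping in Step 3. In particular, in the TB$\to$BT direction a reference from inside $x01$ (formerly $x10$) to $x1$ (formerly $x11$) must correspond to an address in $\tau_1$ that sits within the plucked factor. One must check that the linking context obtained is exactly a $\tau_1$ linking context $\clnk(\tau_1,x10y',x11)$; this is inherited, but still in $\wHole V$ because $\tau_1$ is stable. I would handle this by composing the embedding equalities from \cref{lem:sequential rotation AncEmb equivalence} with part~3 of \cref{lem:ancestor embedding respects holes and references}.
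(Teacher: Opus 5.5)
Your proposal is correct and follows the paper's approach. The paper gives no separate proof; it only says the lemma follows by a dual argument to \cref{lem:rotation BT to TB}, and your plan is exactly that dualization: you build the BT skeleton with contexts $\tau.\ctx(x1)\cdot\tau.\ctx(x11)$ and $\tau.\ctx(x1)$, apply \cref{lem:sequential rotation AncEmb equivalence} with the two trees' roles swapped, and redo the stability case analysis, including the new inherited reference $x01@*\to x1$ (linking context $\tau.\ctx(x11)$) and the matching of references from below $x01$ to $x1$ with $\tau_1$-references from below $x10$ to $x11$.
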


\subsection{Parallel Rotation}
\label{sec:parallel rotation}
Our second type of rotation concerns the setting where we pluck two subforests that are not nested, i.e., they are ``parallel'', as depicted in \cref{fig:LR and RL decompositions}.
We then have a choice of plucking the left sub-forest before the right one (LR), or vice-versa (RL), yielding two different, but highly-related, decompositions. 

\begin{figure}[ht]
    \centering
    \begin{subfigure}[t]{0.4\textwidth}
        \centering
        \includegraphics[width=\linewidth]{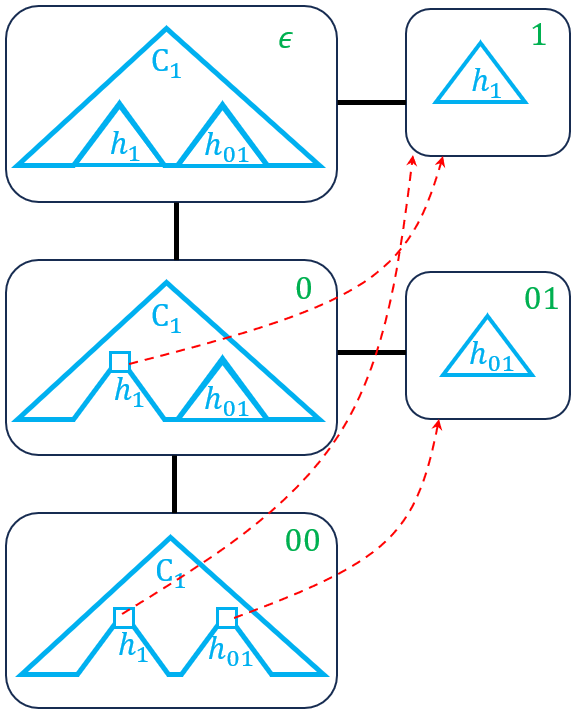}
        \caption{LR factorization.}
    \end{subfigure}
    \hfill
    \begin{subfigure}[t]{0.4\textwidth}
        \centering
        \includegraphics[width=\linewidth]{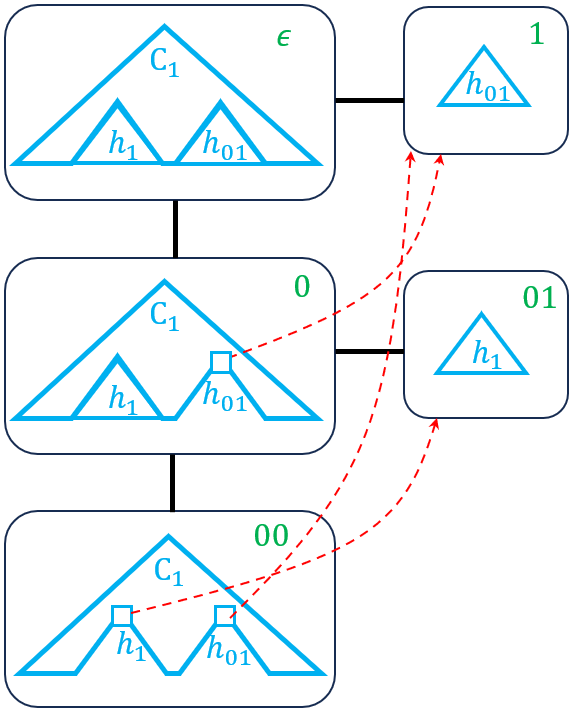}
        \caption{RL factorization}
    \end{subfigure}
    \caption{Two factorizations (LR and RL). The translation between them is a \emph{parallel rotation}.}
    \label{fig:LR and RL decompositions}
\end{figure}

Similarly to \cref{sec:sequential rotation}, we start by identifying a bijection between the nodes of LR and RL decompositions. Then, we proceed to show how we obtain an RL decomposition from an LR one, and vice versa.
\begin{lemma}[Parallel Factorization Ancestor Embedding Equivalence]
\label{lem:parallel factor map}
Consider a binary decomposition $\tau$ with $\tau.\nds = \{\epsilon, 0, 1, 00, 01\}$ 
such that exist $u_1,u_{01}$ with 
$00@u_1\to 1$ and $00@u_{01}\to 01$. 
Let \begin{align*}
    &C_1 = \tau.\ctx(1)\\
    &C_{01} = \tau.\ctx(01)\\
    &h_1=\tau.\frst(1)\\
    &h_{01} = \tau.\frst(01)\\
    &h_{00} = \tau.\frst(00)
\end{align*}
We obtain $\tau'$ from $\tau$ by setting: $\tau'.\nds = \tau.\nds$, $\tau'.\ctx(1) = C_{01}[u_1\mapsto h_1]$ (denoted $C'_{01}$) and $\tau'.\ctx(01) = \clnk(\tau, 00, 1)$ (denoted $C'_1$).  Then $\tau'$ is a decomposition which for the mapping $\pi = \left(\begin{array}{c}
     1\mapsto 01  \\
     01 \mapsto 1\\
     00 \mapsto 00
\end{array}\right)$ satisfies $\AncEmb(\tau,y,\epsilon)=\AncEmb(\tau',\pi(y),\epsilon)$ for every $y\in \{1,01,00\}$.
\end{lemma}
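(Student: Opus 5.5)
The plan has two parts. First I verify that $\tau'$ is a valid binary decomposition. Then I compare ancestor embeddings leaf by leaf, mirroring the proof of \cref{lem:sequential factor map}.

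\textbf{Positions and validity of $\tau'$.} Write $s_1=\spos(C_1)$ and $s_{01}=\spos(C_{01})$. Since $0@s_1\to 1$ and $00@u_1\to 1$, the reference definition forces $u_1$ to satisfy $s_{01}\not\lcrs u_1$ (the non-inherited case) or the shifted case. The hypothesis that both $u_1$ and $u_{01}$ are referenced from $00$ means that, in $\tau.\frst(0)$, the holes $s_1$ and $s_{01}$ are $\lcrs$-incomparable. This is the ``parallel'' situation.

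I then check the two equations of \cref{def:bin decomp} for $\tau'$:
\begin{itemize}
\item At the root: $C'_{01}(h_{01})=\tau.\frst(\epsilon)$. This holds because $C_{01}(h_{01})=\tau.\frst(0)=C_1(\square_{h_1})$, and replacing the default hole at $u_1$ by $h_1$ gives $C_1(h_1)$.
\item At node $0$: the residue is $C'_{01}(\square_{h_{01}})$.
\item At node $0$: $C'_1(\square_{h_1})=h_{00}$ and $C'_1(h_1)=C'_{01}(\square_{h_{01}})$. Both follow since $C'_1=h_{00}[u_1\mapsto\square]$, and $h_{00}=C_{01}(\square_{h_{01}})$ agrees with $C'_{01}(\square_{h_{01}})$ except at $u_1$.
\end{itemize}
Throughout, \cref{lem:equal contexts characterization} together with $\spos$ computations via \cref{prop: address spos morphism} pins down the contexts. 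The non-triviality condition~\ref{itm:decomp:pluck nonempty} is inherited from $\tau$.

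\textbf{Ancestor embeddings.} For each leaf I compute $\AncEmb$ through \cref{def:ancestor embedding formal}, splitting into two cases according to whether $s_1$ lies to the left of $s_{01}$ in the same sibling list, or on a different branch. Only in the same-list case do widths interfere: plucking one factor shifts the other's addresses by its width minus one.
\begin{enumerate}
\item Leaf $1\mapsto 01$. In $\tau$ the map is $u\mapsto s_1\spa u$. In $\tau'$ it is $u\mapsto \AncEmb(\tau',0,\epsilon)(\spos(C'_1)\spa u)$, where $\spos(C'_1)=u_1$. I show that $\AncEmb(\tau',0,\epsilon)(u_1)=s_1$ by undoing the width adjustment of $h_{01}$; this is essentially item~2 of \cref{lem:ancestor embedding respects holes and references} applied to $\tau$.
\item Leaf $01\mapsto 1$. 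This is symmetric: $\AncEmb(\tau,0,\epsilon)(s_{01})=\spos(C'_{01})$. Computing $\spos(C'_{01})$ requires accounting for the width change of replacing $\square_{h_1}$ by $h_1$, as in \cref{rmk:spos and unravel}.
\item Leaf $00$. Both sides are compositions of two ``residue'' shifts. The one-step $\AncEmb$ for a down-child is the identity off the subtree of the hole and shifts by the plucked width after it. Composing two such shifts at incomparable positions commutes, once the positions are re-expressed in the common coordinates of $\tau.\frst(\epsilon)$.
\end{enumerate}

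The main obstacle is the leaf-$00$ case, and more generally the same-sibling-list bookkeeping. There the shifts by $M_1=\width(h_1)$ and $M_{01}=\width(h_{01})$ must be shown to add up identically in both orders, much like the $s_2\in\bbN$ subcase of \cref{lem:sequential factor map}. I would handle it with the identity $x\spa y=x+y$ on $\bbN$ and a careful split on whether $u$ lies before, between, or after the two holes in $\lcrs$ order.
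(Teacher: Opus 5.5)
\textbf{Gap in the setup.} Your route is the paper's: check that $\tau'$ is valid, then compare $\AncEmb$ leaf by leaf via \cref{def:ancestor embedding formal}, splitting on the relative $\lcrs$-position of $s_1$ and $s_{01}$. However, the consequence you draw from the hypothesis is false, and your case split misses the one case where the hypothesis is actually used.

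The reference $00@u_1\to 1$ does \emph{not} make $s_1$ and $s_{01}$ $\lcrs$-incomparable. Take $f=a+a+b$. Pluck $b$ with $C_1=a+a+\square$, so $s_1=2$. Then pluck $a+a$ with $C_{01}=\square+\square_b$, so $s_{01}=0$ and $M_{01}=1$. Both $00@1\to 1$ and $00@0\to 01$ hold, yet $s_{01}=0\lcrs 2=s_1$, and $u_1=1\neq s_1$.

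What \cref{def:decomp references} really gives is weaker. Because the reference inherited from $0@s_1\to 1$ lands at $00$ and not at $01$, we only know that $s_1$ is not inside the block of roots of $h_{01}$: $s_{01}\lcrs s_1$ implies $s_{01}\spa(M_{01}+1)\lcrs s_1$, which also rules out $s_{01}=s_1$. This implication is exactly what gives $s'_1=s_{01}\spa w_1$ when $s_1=s_{01}\spa M_{01}\spa w_1$. It is needed in the case $s_{01}\lcrs s_1$, for leaf $00$ and also for leaf $01$.

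Your split (``$s_1$ to the left of $s_{01}$ in the same sibling list'' versus ``a different branch'') does not contain that case. It is not a free mirror image of the other case: the two plucks play different roles in $\tau$. Getting it by symmetry would require showing that $\tau'$ itself satisfies the hypotheses and rotates back to $\tau$.

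The interference is also not confined to siblings; it occurs whenever the two positions are $\lcrs$-comparable. For $f=a+a+b(b)$ with $C_1=\square+b(b)$ and $C_{01}=\square_{a+a}+b(\square)$, we have $s_1=0$ and $s_{01}=10$, yet $\spos(C'_{01})=20$.

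\textbf{Leaves $1$ and $01$.} Reducing to a single address is a nice shortcut compared with the paper's explicit formulas for $s'_1$ and $s'_{01}$, but two steps are missing.
\begin{itemize}
\item Agreement at $u_1$ extends to all of $h_1$ only because the residue shift of $\tau'$ at $\spos(C'_{01})$ acts uniformly on the block $\{u_1\spa u : u\in\dom(h_1)\}$. That is, $\AncEmb(\tau',0,\epsilon)(u_1\spa u)=\AncEmb(\tau',0,\epsilon)(u_1)\spa u$. This holds because $\square_{h_{01}}$ lies outside that block, but it has to be argued.
\item Item~2 of \cref{lem:ancestor embedding respects holes and references}, applied to $\tau$, gives $\AncEmb(\tau,00,\epsilon)(u_1)=s_1$, which is a statement about $\tau$. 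It becomes $\AncEmb(\tau',0,\epsilon)(u_1)=s_1$ only through the leaf-$00$ equality together with $\AncEmb(\tau',00,0)(u_1)=u_1$ (since $u_1=\spos(C'_1)$).
\end{itemize}

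\textbf{Leaf $00$.} Leaf $00$ must therefore be handled first, and it carries the entire content of the lemma. What you defer there is precisely the paper's three-case computation:
\begin{itemize}
\item $s_1$ and $s_{01}$ incomparable;
\item $s_1\lcrs s_{01}$, with $s'_1=s_1$ and $s'_{01}=s_1\spa M_1\spa w_{01}$;
\item $s_{01}\lcrs s_1$, with $s'_{01}=s_{01}$ and $s'_1=s_{01}\spa w_1$.
\end{itemize}
The third case goes through only once the implication above is in hand.
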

\begin{proof}
    Notice that $\tau'$ is a valid binary decomposition, with $\tau'.\frst(1)=h_{01}$ and $\tau'.\frst(01)=h_1$, and since $\tau$ is stable (over $\wHole{V}$) we have $C'_1=\tau'.\ctx(01)=\clnk(\tau,00,1)\in\wHole{V}$ and  $C'_{01}=\tau'.\ctx(1)=C_{01}[u_1\mapsto h_1]\eqUnr C_{01}\in \wHole{V}$. 
    Additionally, $C'_{01}(h_{01})=\tau.\frst(\epsilon)$, $C'_1(h_1)=\tau'.\frst(0)=C'_{01}(\square_{h_{01}})$,
    and $\tau'.\frst(00)=C'_1(\square_{h_1})=h_{00}$.
    Notice that $\pi$ matches between $\tau.\frst$ and $\tau'.\frst$.
    
    The remainder of the proof is painful accounting, complicated by the fact that $h_{1}$ and $h_{01}$ might have several roots (i.e., have width larger than $1$). Technically, we consider each possible configuration of $h_1$ and $h_{01}$ (i.e., whether they are nested or incomparable), and for each such configuration consider each possible address $u$ in their domain. Moreover, this is done for each leaf, leading to a bunch of cases.
    
    Denote by $s_1,s_{01},s'_1,s'_{01}$ the respective $\spos$ of $C_1,C_{01},C'_1,C'_{01}$, and let $M_1=\width(h_1)$ and $M_{01}=\width(h_{01})$.
    Notice that $\tau.0@s_1\to1$, thus by \cref{def:decomp references} if $s_{01}\lcrs s_1 \wedge s_{01}\spa (M_{01}+1)\not\lcrs s_1$ then $\tau.01@*\to 1$. However, the latter is not the case, since we already have $\tau.00@u_1\to 1$ by assumption. Therefore, we have that 
    \[s_{01}\lcrs s_1\rightarrow s_{01}\spa (M_{01}+1)\lcrs s_1\]
    Indeed, this intuitively means that if $s_{01}$ is ``left'' of $s_1$, then it remains left of it also after adding the width of the plucked forest $h_{01}$ (otherwise there is no reference back to $1$ from $00$).
    We proceed by expressing $s'_1,s'_{01}$ in terms of $s_1,s_{01}$:
    \begin{itemize}
        \item For $s'_1$:
        Recall that $s'_1=\spos(\clnk(\tau,00,1))=u_1$, and therefore  $\tau.00@s'_1\to 1$ (by the assumption $\tau.00@u_1\to 1$). 
        Since  $\tau.0@s_1\to 1$, then by \cref{def:decomp references} we have that \[s'_1 = \begin{cases}
        s_1 & s_{01} \not\lcrs s_1\\
        s_{01} \spa w_1 & s_1 = s_{01} \spa M_{01} \spa w_1
    \end{cases}\]
    \item For $s'_{01}$:
    again by \cref{def:decomp references} we have 
    \[u_1=\begin{cases}
        s_1 & s_{01}\not\lcrs s_1\\
        s_{01}\spa v_1 & s_1=s_{01}\spa M_{01}\spa v_1
    \end{cases}\]
    Recall that $s'_{01}=\spos(C_{01}[u_1\mapsto h_1])$, and therefore (by direct computation) we have 
    \[s'_{01}=\begin{cases}
        s_{01} & u_1\not\lcrs s_{01}\\
        u_1 \spa M_1 \spa v_{01} & s_{01}=u_1\spa v_{01}
    \end{cases}\]
    Plugging in $u_1$ in the latter we conclude that 
    \[s'_{01} = \begin{cases}
        s_{01} & s_1\not\lcrs s_{01}\\
        s_1 \spa M_1 \spa w_{01} & s_{01} = s_1 \spa w_{01} 
    \end{cases}\]
    \end{itemize}
    We now show the equality of $\AncEmb$ for each leaf $y$, based on $s'_1$ and $s'_{01}$ above.
    \begin{itemize}
        \item For $y=00$:
        \begin{itemize}
            \item If $s_1,s_{01}$ are $\lcrs$ incomparable, then we have $s_1 = s'_1,s_{01} = s'_{01}$ and we get the desired $\AncEmb(\tau,00,\epsilon) = \AncEmb(\tau',00,\epsilon)$.
            \item If $s_1\lcrs s_{01}$ (denote $s_{01} = s_1\spa w_{01}$) then recall we have $s'_{01} = s_1\spa M_1 \spa w_{01}$, and $s'_1 = s_1$. Note that this case is not depicted in \cref{fig:LR and RL decompositions} -- it reflects the case where $h_{01}$ is nested within $h_1$.
            We now compute $\AncEmb(\tau,00,\epsilon)(u)$ by first computing $\AncEmb(\tau, 00, 0)(u)$, as per \cref{def:ancestor embedding formal}.
            We therefore have 
            \[\AncEmb(\tau, 00, 0)(u) = \begin{cases}
                u & s_{01}\not\lcrs u\\
                s_1\spa w_{01}\spa M_{01}\spa v & u = s_{01}\spa v = s_1 \spa w_{01}\spa v
            \end{cases}\]
            We can now split to cases according to whether $u$ is below $s_1$ and $s_{01}$ or not. This yields
            \[
            \begin{split}
            &\AncEmb(\tau, 00, \epsilon)(u) = 
            \AncEmb(\tau, 0, \epsilon)\circ \AncEmb(\tau, 00, 0)(u)=\\
            &\begin{cases}
                u & s_1\not\lcrs u\\
                s_1 \spa M_1 \spa v & u = s_1\spa v \wedge s_{01}\not\lcrs u\\
                s_1 \spa M_1 \spa w_{01} \spa M_{01} \spa v & u = s_{01} \spa v
            \end{cases}
            \end{split}
            \]
            A similar analysis for $\tau'$ yields the following.
            \[\AncEmb(\tau', 00, 0)(u) = \begin{cases}
                u & s_1 \not\lcrs u\\
                s_1 \spa M_1 \spa v & u = s_1 \spa v
            \end{cases}\] 
            and again introducing a split on $s_{01}$ and $u$ yields
            \[\AncEmb(\tau',00,\epsilon)(u) = \begin{cases}
            u     & s_1\not\lcrs u\\
            s_1 \spa M_1 \spa v     & u = s_1 \spa v \wedge s_{01}\not\lcrs u \\
            s'_{01} \spa M_{01} \spa v & u = s_{01}\spa v = s_1\spa w_{01}\spa v
            \end{cases}\]
            and the latter is equal to $\AncEmb(\tau,00,\epsilon)(u)$, so we are done.
            
            \item If $s_{01} \lcrs s_1$ denote $s_1 = s_{01} \spa M_{01} \spa w_1$. This is again a case not depicted in \cref{fig:LR and RL decompositions}.
            Recall that $s'_{01} = s_{01}$ and $s'_1 = s_{01}\spa w_1$.
            We use \cref{def:ancestor embedding formal} to compute
            \[
                \begin{split}
                &\AncEmb(\tau,00, \epsilon)(u) = \begin{cases}
                \AncEmb(\tau,0,\epsilon)(s_{01}\spa M_{01}\spa v) & u = s_{01}\spa v\\
                \AncEmb(\tau,0,\epsilon)(u) & s_{01} \not\lcrs u
            \end{cases} \\
            &=\begin{cases}
                u & s_{01} \not\lcrs u\\
                s_{01}\spa M_{01} \spa v   & u = s_{01}\spa v \wedge s'_1 \not\lcrs u\\
                s_{01} \spa M_{01} \spa w_1 \spa M_1 \spa v   & u = s'_1 \spa v = s_{01}\spa w_1 \spa v
            \end{cases}\\
            &=\begin{cases}
                \AncEmb(\tau',0,\epsilon)(u) & s'_{01} \not\lcrs u\\
                s'_{01}\spa M_{01} \spa v   & u = s'_{01}\spa v \wedge s'_1 \not\lcrs u\\
                \AncEmb(\tau',0,\epsilon)(s'_{01} \spa w_1 \spa M_1 \spa v) & u = s'_1 \spa v
            \end{cases}
            \end{split}
            \]
            and the latter is exactly $\AncEmb(\tau',00,\epsilon)(u)$ as required.
        \end{itemize}
        \item For $y=01$:
        \begin{itemize}
            \item If $s_1 \lcrs s_{01}$ denote $s_{01} = s_1\spa w$. 
            Notice that in this case $s'_{01} = s_1\spa M_1\spa w$, so
            \[\AncEmb(\tau,01,\epsilon)(u) = \AncEmb(\tau,0, \epsilon)(s_{01}\spa u) = s_1\spa M_1 \spa w \spa u = s'_{01}\spa u = \AncEmb(\tau',1, \epsilon)(u)\]
            \item If $s_1\not \lcrs s_{01}$, 
            we show that $s_1\not\lcrs s_{01}\spa u$. Indeed, if $s_{01}\lcrs s_1$, then recall the implication $s_{01}\lcrs s_1\rightarrow s_{01}\spa (M_{01}+1)\lcrs s_1$, which precludes $s_1\lcrs s_{01}\spa u$ (for every $u\in\dom(h_{01})$), since $M_{01}=\width(h_{01})$.
            Otherwise, if $s_{01}\not \lcrs s_1$, then clearly $s_1\not\lcrs s_{01}\spa u$, since no suffix appended to $s_{01}$ can go ``below'' $s_1$.

            Thus, we have $s'_{01}=s_{01}$, and we conclude that $\AncEmb(\tau,01,\epsilon)(u)=\AncEmb(\tau,0,\epsilon)(s_{01}\spa u)=s_{01}\spa u = s'_{01}\spa u$.
            
        \end{itemize}
        \item For $y=1$, the setting is is symmetric to the case above, as $\pi(1) = 01$.
    \end{itemize}
\end{proof}

We can now show the parallel rotation lemma.
\begin{lemma}[LR to RL Rotation]
\label{lem:rotation LR to RL}
    Consider a partial decomposition $\tau$ (over $\wHole V, B$) and $x,x0,x1,x00,x01\in \tau.\nds$ such that 
    $\tau.x00@u_1\to x 1$.
    Then there exist a partial decomposition $\equpto{\tau'}x\tau$ such that:
    \begin{itemize}
        \item $\tau'.\ctx(x1) = \tau.\ctx(x01)[u_1\mapsto \tau.\frst(x1)] \eqUnr \tau.\ctx(x01)$
        \item $\tau'.\ctx(x01)=\clnk(\tau,x00,x1)\eqUnr \tau.\ctx(x1)$
        \item $\tau'[x1]= \tau[x01]$, $\tau'[x01]=\tau[x1]$, and $\tau'[x00]=\tau[x00]$.
    \end{itemize}
    as depicted in \cref{fig:LR and RL decompositions}.
\end{lemma}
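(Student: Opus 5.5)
The plan follows the structure of the proof of \cref{lem:rotation BT to TB}, with \cref{lem:parallel factor map} playing the role that \cref{lem:sequential factor map} played there. Restrict attention to $\tau[x]$ truncated at $x00,x01,x1$. Its node set is $\{\epsilon,0,1,00,01\}$, and it satisfies $00@u_1\to 1$. Since $x0$ is the parent of $x01$, the reference from $x00$ to $x01$ is the base-case reference $00@\spos(\tau.\ctx(x01))\to 01$. So \cref{lem:parallel factor map} applies and yields the rotated five-node decomposition $\tau''$. In it, $\tau''.\ctx(1)=\tau.\ctx(x01)[u_1\mapsto \tau.\frst(x1)]$ and $\tau''.\ctx(01)=\clnk(\tau,x00,x1)$, and the leaves correspond via $\pi$ with equal ancestor embeddings.

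The unravel-equivalences in the first two items are direct. Replacing $\square_{\tau.\frst(x1)}$ at $u_1$ by $\tau.\frst(x1)$ does not change $\unr$. Likewise, $\clnk(\tau,x00,x1)$ and $\tau.\ctx(x1)$ differ only in that the former has $\square_{\tau.\frst(x01)}$ where the latter has $\tau.\frst(x01)$. The last point follows by the same reasoning as in \cref{lem:no inherited decomposition is stable}, comparing $\clnk(\tau,x0,x1)=\tau.\ctx(x1)$ with $\clnk(\tau,x00,x1)$. Both new contexts lie in $\wHole V$: stability of $\tau$ gives $\clnk(\tau,x00,x1)\in\wHole V$, and $\wHole V$ is closed under $\eqUnr$ (\cref{def:stable context set}). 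We then set $\tau'=\tau[x\mapsto\tau'']$. This gives $\equpto{\tau'}{x}{\tau}$ and the first two items.

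To show $\tau'\in\pd(\wHole V,B)$, take a full $\tau_1\ge\tau$, first completed outside the $x$-subtree. Define $\tau_2$ from $\tau'$ by plugging $\tau_1[x00]$ at $x00$, $\tau_1[x01]$ at $x1$, and $\tau_1[x1]$ at $x01$. Its leaves are in $B$, and it remains to check stability. For each reference $\tau_2.y@u\to z$ we locate a matching reference in $\tau_1$ with the same forest at the source, hence the same linking context. The case split mirrors the BT-to-TB proof:
\begin{itemize}
\item $y,z$ both outside the subtree of $x$: use \cref{lem:ancestor embedding not affected by incomparable}.
\item $y,z$ inside the same plugged subtree: use \cref{lem:ancestor embedding not affected by ancestor}.
\item $z$ outside and $y$ inside a plugged subtree: compose ancestor embeddings via $\pi$, using the leaf equality from \cref{lem:parallel factor map}.
\item The new cross references within the rotated block: these are $z=x01$ with $x00\le y$, and $z=x1$ with $x0\le y$.
\end{itemize}
In each case we translate through \cref{lem:ancestor embedding respects holes and references}: the statement ``$u$ maps to $\spos$ of the target context'' is transported across the rotation. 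Finally, pruning $\tau_2$ back to the subtrees of $\tau$ gives the third item while preserving partiality, since the result is a prefix of the stable $\tau_2$.

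The main obstacle is the last group of cross references. Unlike the sequential case, the new contexts are not literally contexts of $\tau$ but only $\eqUnr$-equivalent to them. Moreover, the addresses shift, as in \cref{rmk:spos and unravel}: the hole in $\tau'.\ctx(x1)$ sits at $s'_{01}$ rather than $s_{01}$. The first thing to try is to reuse the explicit formulas for $s'_1,s'_{01}$ from the proof of \cref{lem:parallel factor map}. Then show that, for instance, $\tau_2.x00y'@u\to x01$ corresponds to $\tau_1.x00y'@u\to x1$, by checking that $\AncEmb(\tau_2,x0,x)$ and $\AncEmb(\tau_1,x0,x)$ agree on the relevant address after composing with $x00$. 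The linking context is then the same forest with the same default hole replaced, so it lies in $\wHole V$.
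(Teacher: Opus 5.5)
Your proposal is correct and follows essentially the same route as the paper: you build the rotated five-node block via \cref{lem:parallel factor map}, plug it in at $x$, and certify partiality by grafting the subtrees of a full extension $\tau_1$ (with those of $x1$ and $x01$ swapped). You then transport each reference of $\tau_2$ to a reference of $\tau_1$ with the same source forest, and the only new cases are the swapped cross references $\clnk(\tau_2,y,x1)=\clnk(\tau_1,y,x01)$ and $\clnk(\tau_2,y,x01)=\clnk(\tau_1,y,x1)$, which you handle exactly as the paper does, via the leaf-wise $\AncEmb$ equalities together with \cref{lem:ancestor embedding respects holes and references}.
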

\begin{proof}
    Similarly to \cref{lem:dec rot A}, denote by $h_1,h_{01},h_{00}$ the forests $\tau.\frst(x1)$, $\tau.\frst(x01)$, $\tau.\frst(x00)$, respectively, and by $C_1,C_{01}$ the contexts $\tau.\ctx(x1),\tau.\ctx(x01)$, respectively.
    As per \cref{lem:parallel factor map}, denote $C'_{01}\coloneqq C_{01}[u_1\mapsto h_1]$ and $C'_1\coloneqq \clnk(\tau,x00,x1)$,
    and define $\tau''=\tup{d'' = \{\epsilon, 0, 1, 00, 01\},l''}$ such that:
    \begin{align*}
        l''(\epsilon) &= \tup{\tau.\frst(x),\square}\\
        l''(0) &= \tup{C'_{01}(\sqr(h_{01}),\square}\\
        l''(1) &= \tup{h_{01},C'_{01}}\\
        l''(00) &= \tup{h_{00},\square}\\
        l''(01) &= \tup{h_1,C'_1}\\
    \end{align*}
    By \cref{lem:parallel factor map} we have that $\pi=\left(\begin{array}{c}
         1\mapsto 01  \\
         01 \mapsto 1 \\
         00 \mapsto 00
    \end{array}\right)$ preserves $\AncEmb$ between $\tau''$ and $\tau[x]$.
    Define $\tau'\coloneqq \tau[x\mapsto \tau'']$.
    Notice that we already fulfill the requirements $\tau'.\frst(x1)=\tau.\frst(x01)$ and $\tau'.\frst(x01)=\tau.\frst(x1)$.
    We now show that $\tau'\in \pd(\wHole{V}, B)$.
    First consider $\tau_1\in \fd(\wHole{ V}, B)$ which extends $\tau$, and assume without loss of generality that for every leaf of $y\in \tau.\type^{-1}[\lf]$ which is not below $x$ ($x\not<y$) we have $\tau.\frst(y)\in B$.
    Define $\tau_2\coloneqq \tau'\left[\begin{array}{c}
         x1\mapsto \tau_1[x01]  \\
         x01\mapsto \tau_1[x1] \\
         x00\mapsto \tau_1[x00]
    \end{array}\right]$.
    Notice that all the leaves of $\tau_2$ are in $B$, so it remains to prove that $\tau_2$ is a stable decomposition.
    Similarly to \cref{lem:dec rot A} we rely on \cref{lem:parallel factor map} to prove that whenever $\tau_2.y@u\to z$ it holds that $\clnk(\tau_2,y,z)\in \wHole{V}$.
    The only two cases which that differ from the proof of \cref{lem:rotation BT to TB} are the following:
    \begin{itemize}
        \item If $z=x1$: in this case we have $x00\leq y$ (because $\tau_2.x00@u_{01}\to x1$).
        By \cref{lem:ancestor embedding respects holes and references} we have that $\AncEmb(\tau_2,y,x)(u) = \AncEmb(\tau_2,x00,x)\circ \AncEmb(\tau_2,y,x00)(u) = \spos(C'_{01})$.
        Since $\AncEmb(\tau_2,x00,x)(u_{01}) = \spos(C'_{01})$ we get $\AncEmb(\tau_2,y,x00)(u) = u_{01} = \AncEmb(\tau_1,y,x00)(u)$.
        Recall that $\tau.x00@u_{01}\to x01$,
        and therefore $\AncEmb(\tau_1,x00,x0)(u_{01})=\spos(C_{01})=u_{01}$.
        
        We conclude (by \cref{def:ancestor embedding formal}) that 
        $\tau_1.y@u\to x01$, yielding $\clnk(\tau_2,y,x1) = \clnk(\tau_1,y,x01)\in \wHole V$.
        \item If $z=x01$: similarly to the previous case, we obtain $\clnk(\tau_2,y,x01) = \clnk(\tau_1,y, x1) \in \wHole{V}$.
    \end{itemize}
    We therefore conclude that $\tau_2\in \fd(\wHole{V}, B)$ so $\tau'$ satisfies the first two requirements of the lemma.

    We can now extends $\tau'$ by plugging in the respective subforests of $\tau$ in its leaves. This decomposition is still partial, since $\tau_2$ still extends it (as $\tau_1$ extends $\tau$). We thus conclude the third requirement.
\end{proof}

\section{Semantic Properties of Binary Decompositions}
\label{sec:dec lemmas}
Equipped with the rotation toolbox, we can now show that, in a sense, the contexts along a decomposition get ``smaller''. 
As with many of our results -- in the setting of words it is a trivial observation, but becomes intricate for forests.


Consider a partial decomposition $\tau$ and some $x\in \tau.\nds$. There may be multiple ways to extend $\tau$ and decompose $x$ to achieve a \emph{full} decomposition (see \cref{def:partial decomposition,def:full decomposition}).
In order to reason about such extensions, we define a ``local'' version of these possibilities, pertaining to which factors can immediately follow a node.
\begin{definition}[Next Factors]
\label{def:next factorizations}
    Consider a stable context set $\wHole{V}$, a basis $B\subseteq \tilH$ and a node $x\in \tau.\nds$. 
    \begin{itemize}
        \item 
    The \emph{factorization step} of $x$ is $\factor(\tau,x)\coloneqq (\tau.\ctx(x1),\tau.\frst(x1))\in \tilV\times \tilH$ (if $\type(x)=\B$, and undefined for leaves).
    \item The \emph{next factors} of $x$ are:
    \[\nf(\tau,\wHole{V},B,x)\coloneqq \{\factor(\tau',x): \equpto{\tau}{x}{\tau'}, \tau'\in \pd(\wHole{V},B), \tau'.\type(x)=\B\}\]
    \end{itemize}
    When clear from context, denote $\nf(\tau,x)$.
\end{definition}
We often restrict attention only to the next context of a node, rather than the entire factor. We capture this as follows.
\begin{definition}[Next Contexts]
\label{def:next contexts}
Consider a stable context set $\wHole{V}$, a basis $B\subseteq \tilH$, a binary decomposition tree $\tau$ and some $x\in \tau.\nds$.
We define
\[
\ns(\tau,\wHole V, B, x)= \{\tau'.\ctx(x\cdot1): \equpto{\tau}{x}{\tau'},\ \tau'\in \pd(\wHole V, B),\ \tau'.\type(x)=\B\}
\]
When clear from context, denote $\ns(\tau,x)$.

Note that $\ns(\tau,x)=\{C: (C,f)\in \exists f. \nf(\tau,x)\}$.
\end{definition}

An important observation is that \emph{computing} $\nf(\tau,\wHole V, B, x)$ (and $\ns(\tau,\wHole V, B, x)$) might be difficult, as it relies on \emph{every} full decomposition $\tau'$. Since our main result is not algorithmic, this does not affect our results.

The following simple observation is that if a node in a partial decomposition has no next contexts, then its forest is already in $B$.
\begin{lemma}
    \label{lem:no next contexts implies B}
    Consider a binary decomposition $\tau\in \pd(\wHole{V}, B)$ and $x\in \tau.\nds$ such that $\ns(\tau, \wHole{V}, B, x) = \emptyset$, then $\tau.\frst(x)\in B$.
\end{lemma}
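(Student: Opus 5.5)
The plan is a direct argument by contradiction from the definitions of partial and full decompositions (\cref{def:partial decomposition,def:full decomposition}) together with \cref{def:next contexts}. Since $\tau\in\pd(\wHole{V},B)$, there is a full decomposition $\tau_1\in\fd(\wHole{V},B)$ with $\tau\leq\tau_1$. Because $\ns$ only makes sense when $x$ can still be decomposed, I first dispose of the case where $x$ is a binary node of $\tau$. In that case $\tau$ itself witnesses $\equpto{\tau}{x}{\tau}$, $\tau\in\pd(\wHole{V},B)$ and $\tau.\type(x)=\B$. Hence $\tau.\ctx(x1)\in\ns(\tau,\wHole{V},B,x)$, contradicting emptiness. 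So $x$ is a leaf of $\tau$.

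Next I look at $x$ inside $\tau_1$; note that $x\in\tau_1.\nds$ since $\tau\leq\tau_1$, with the same forest label.

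\textbf{Case: $x$ is a leaf of $\tau_1$.} Fullness gives $\tau_1.\frst(x)\in B$. Since $\tau.\frst(x)=\tau_1.\frst(x)$ by \cref{def:dl order}, we get $\tau.\frst(x)\in B$, which is the claim.

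\textbf{Case: $x$ is a binary node of $\tau_1$.} I build a witness for $\ns(\tau,x)\neq\emptyset$. Let $\tau'$ be $\tau$ with the leaf $x$ replaced by the depth-one subtree of $\tau_1$ at $x$. Concretely, $\tau'$ is $\tau$ together with nodes $x0,x1$ carrying the labels of $\tau_1$. Then $\equpto{\tau}{x}{\tau'}$ holds and $\tau'.\type(x)=\B$. Moreover $\tau'\leq\tau_1$, since $\tau'$ is obtained from $\tau$ by adding nodes of $\tau_1$ with their $\tau_1$-labels, and $\tau$ already agrees with $\tau_1$ on its domain. It follows that $\tau'\in\pd(\wHole{V},B)$, with $\tau_1$ as the witnessing full decomposition. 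Therefore $\tau_1.\ctx(x1)\in\ns(\tau,\wHole{V},B,x)$, a contradiction.

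The only point requiring care is checking that $\tau'$ is a legitimate binary decomposition, i.e.\ that it satisfies \cref{def:bin decomp}. Fullness and binarity hold because both children are added together. Conditions~\ref{itm:decomp:pluck nonempty}--\ref{itm:decomp:correct context composition} are local to $x,x0,x1$ and are inherited from $\tau_1$. Stability need not be checked separately: membership in $\pd$ only requires an extension to a full decomposition, and $\tau_1$ provides it. I expect no real obstacle beyond this bookkeeping.
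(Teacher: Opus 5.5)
Your proof is correct and follows the same route as the paper. Take a full extension $\tau_1$ of $\tau$; if $x$ were binary in $\tau_1$ this would yield a next context, and otherwise fullness gives $\tau.\frst(x)=\tau_1.\frst(x)\in B$. Grafting only the two children $x0,x1$ of $\tau_1$ onto $\tau$ is a small but real improvement: it secures the requirement $\equpto{\tau}{x}{\tau'}$ in the definition of $\ns$, which the paper's one-line argument glosses over, since $\tau_1$ itself may also differ from $\tau$ at leaves outside the subtree of $x$.
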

\begin{proof}
    Since $\tau\in \pd(\wHole{V}, B)$, by \cref{def:partial decomposition} there exists  $\tau'\in \fd(\wHole{V}, B)$ that extends $\tau$.
    By \cref{def:next contexts} and since $\ns(\tau, \wHole{V}, B, x)=\emptyset$, we conclude that $\tau'.\type(x)=\lf$ (otherwise there would be a next context in $\tau'$). 
    By \cref{def:full decomposition} we conclude that $\tau.\frst(x)=\tau'.\frst(x)\in B$.
\end{proof}

In the following we place further restrictions on the ``allowed'' set of next factors (or contexts), namely ``minimality''. In order to allow enough flexibility in such restriction, we only consider restrictions that are closed under unraveling. Thus, we lift \cref{def:stable context set} to factors by defining a \emph{stable factorization set} as $S\subseteq \tilV\times \tilH$ such that for every $(C,f)\in S$, if $C\eqUnr C'$ and $f\eqUnr f'$ then $(C',f')\in S$.
In order to define minimality, we first equip $\tilV$ with a natural ``prefix'' order\footnote{Readers familiar with Green's relation~\cite{green1951structure} may notice that this is exactly Green's $\le_R$ relation.} (and lift it to factors). We demonstrate this in \cref{fig:order on contexts}.
\begin{definition}[The prefix order $\cleq$.]
\label{def:context prefix order}
    Let $C_1,C_2\in \tilV$, then $C_1\cleq C_2$ if there exists $C\in \tilV$ such that $C_1\cdot C=C_2$.

    For factors $(C,f),(C',f')$ we define $(C,f)\cleq (C',f')$ when $C\cleq C'$ (ignoring the forest).
\end{definition}  

\begin{figure}[ht]
\centering
\begin{subfigure}[t]{0.45\textwidth}
\centering
\[
\begin{array}{c@{\quad}c@{\quad}c}
C_1 : &
\begin{forest}
    [$a$
        [$\square$]
        [$b$]
    ]
\end{forest}
&
\cleq\
\begin{forest}
    [$a$
        [$b$
            [$a$]
            [$\square$]
        ]
        [$b$]
    ]
\end{forest}
\ : C_2
\end{array}
\]
\caption{\(C_1 \cleq C_2\) since $C_1\cdot (b(a+\square))=C_2$}
\label{fig:order-example-yes}
\end{subfigure}
\hfill
\begin{subfigure}[t]{0.45\textwidth}
\centering
\[
\begin{array}{c@{\quad}c@{\quad}c}
C_1: &
\begin{forest}
    [$a$
        [$\square$]
        [$b$]
    ]
\end{forest}
&
\not\cleq\
\begin{forest}
    [$a$
        [$\square$]
        [$b$
            [$a$]
            [$b$]
        ]
    ]
\end{forest}
\ : C_2
\end{array}
\]
\caption{\(C_1 \not\cleq C_2\), since for every $C$ we have $C_1\cdot C\neq C_2$.}
\label{fig:order-example-no}
\end{subfigure}

\caption{Examples illustrating the order $\cleq$ between contexts.}
\label{fig:order on contexts}
\end{figure}

We can now capture the notion of \emph{minimal} decompositions, in  the sense that from a certain node we pluck the ``first'' context available from the context set.
More precisely, a decomposition is minimal below a certain node $x$ if the contexts plucked below $x$ are $\cleq$-minimal among the contexts in $\ns$ (given $\wHole{V}$, basis, and position). That is, we cannot leave a ``smaller'' residue and still extend to a full decomposition. 
\begin{definition}[Minimal Decomposition Below Node]
    \label{def:min dec}
    Consider a partial decomposition $\tau\in \pd(\wHole{V}, B)$, a stable factorization set $S\subseteq \tilH\times \tilV$ (respectively, stable context set $S\subseteq \tilV$), and a node $x\in \tau.\nds$,
    
    We say that $\tau$ \emph{is minimal below $x$ with respect to $\wHole V, B$ and $S$}, if for every $y\in \bns$ such that $\tau.\type(xy)=\B$ we have that $\tau.\factor(xy)$ (respectively, $\tau.\ctx(xy1)$) is $\cleq$-minimal among $\nf(\tau,\wHole V,B,xy)\cap S$ (respectively, $\ns(\tau,\wHole V,B,xy)\cap S$).

    When $S=\tilH\times \tilV$ (respectively, $S=\tilV$) we omit it and say that that $\tau$ is minimal with respect to $\wHole{V},B$.
    When $\wHole{V},B$ (and $S$) are clear, we simply say that $\tau$ \emph{is minimal below $x$}. 
    
    Notice that a minimal decomposition is not necessarily unique.
\end{definition}

A minimal partial decomposition is not particularly restrictive. However, what we are actually after is a way of completing minimal partial decompositions to full ones, while maintaining minimality. The latter is stricter, since we need to maintain minimality while ``striving'' toward the basis $B$. In the following we show something even slightly stronger: if a partial decomposition is minimal below some node $x$, then we can complete the subtree below $x$ to a full decomposition while retaining minimality.
\begin{lemma}[Minimal Full Decomposition]
    \label{lem:min full dec}
    For every $\tau\in \pd(\wHole{V},B)$ and $x\in \tau.\nds$, there is $\tau'\in \pd(\wHole{V},B)$ such that $\equpto{\tau}{x}{\tau'}$ and $\tau'$ is minimal below $x$ (for both contexts and factors), and $\tau'[x]\in \fd(\wHole{V},B)$.
\end{lemma}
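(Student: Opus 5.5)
We construct $\tau'$ greedily, top-down from $x$, and argue termination using \cref{lem:max dec}. Throughout we maintain a partial decomposition $\tau_i\in\pd(\wHole{V},B)$ with $\equpto{\tau_i}{x}{\tau}$ and the following invariant: every binary node $xy$ of $\tau_i$ created so far has its factor $\cleq$-minimal among $\nf(\tau_i,xy)$, and its context $\cleq$-minimal among $\ns(\tau_i,xy)$.

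We start from $\tau_0=\tau[x\mapsto \tau.\frst(x)]$, which prunes the subtree below $x$. It is still partial, since any full extension of $\tau$ is also a full extension of $\tau_0$. At each step we pick a leaf $z$ of $\tau_i$ with $x\le z$ and $\tau_i.\frst(z)\notin B$. By \cref{lem:no next contexts implies B}, $\ns(\tau_i,z)\neq\emptyset$, so also $\nf(\tau_i,z)\ne\emptyset$.

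We then choose a $\cleq$-minimal element $(C,f)\in\nf(\tau_i,z)$. Such an element exists because $\cleq$ restricted to contexts $C$ with $C\cdot f$ equal to a fixed forest is well-founded: $C_1\cleq C_2$ strictly forces $|\dom(C_1)|<|\dom(C_2)|$ or a strict change in $\spos$ along a finite chain, and there are only finitely many such $C$. By the definition of $\nf$, there is a witness $\tau^\ast\in\pd$ with $\equpto{\tau^\ast}{z}{\tau_i}$ and $\factor(\tau^\ast,z)=(C,f)$. We let $\tau_{i+1}$ be $\tau^\ast$ truncated to $z$, $z0$, $z1$ (with $z0,z1$ made leaves), and it stays partial.

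Minimality for factors implies minimality for contexts, because the factor order is defined as the context order. The one subtlety is that $\nf(\cdot,xy)$ is computed with respect to the current tree, and later steps could in principle change these sets. We argue they do not. Extensions performed strictly below $z$, or at incomparable leaves, modify $\tau_i$ only outside the subtree of an already-fixed node $xy$ other than at its own proper descendants, and $\equpto{\cdot}{xy}{\cdot}$ ignores exactly that subtree. Incomparable modifications are harmless for the same reason once we note that $\pd$ membership of a $\tau'$ with $\equpto{\tau'}{xy}{\tau_{j}}$ depends on the rest of the tree only through its existence as a completable prefix. The delicate point is this: a later, more committed tree $\tau_j$ could shrink $\nf(\tau_j,xy)$ relative to $\nf(\tau_i,xy)$, and shrinking preserves minimality only if the chosen element stays in the set. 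It does stay, since $\tau_j$ itself witnesses it. So minimality persists, and this is the step requiring care.

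For termination, the depth of the subtree below $x$ in any decomposition is bounded by \cref{lem:max dec}, and the tree is binary, so after finitely many steps every leaf below $x$ lies in $B$. Finally, $\tau'[x]\in\fd(\wHole{V},B)$ holds. All leaves below $x$ are in $B$, and stability holds because $\tau'$ is partial: a full extension of $\tau'$ adds nothing below $x$, and linking contexts of references among nodes below $x$ coincide with those in that full extension, which is stable.
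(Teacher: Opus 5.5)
Your proof is correct and follows the paper's argument. The paper takes a partial decomposition that is minimal below $x$, agrees with $\tau$ outside $x$, and is maximal under extension (its existence comes from \cref{lem:max dec}), and it obtains fullness by contradiction: at any leaf outside $B$ one plucks a $\cleq$-minimal next context, which is exactly your greedy step stated non-constructively. Your explicit checks fill in points the paper leaves implicit: that earlier choices stay minimal when $\nf$ shrinks under later extensions, and that $\tau'[x]$ inherits stability from a full extension of $\tau'$.
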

\begin{proof}
    We remark that proving the lemma for either contexts or factors implies the other, since a minimal decomposition with respect to the factorization set $\tilH\times \tilV$ implies a minimal decomposition for the context set $\tilV$ (by simple projection), and vice-versa, a minimal decomposition with respect to the context set $\tilV$ induces a decomposition with respect to $\tilV\times \tilH$, by using the factors that actually occur in the tree. We therefore prove the lemma for contexts. 

    Observe that the decomposition $\tau''=\tau[x\mapsto \tau.\frst(x)]$ obtained by ``folding'' the subtree under $x$ in $\tau$ is trivially minimal below $x$, and moreover $\equpto{\tau''}{x}{\tau}$. In particular, the set of minimal decompositions below $x$ is nonempty. 

    Let $\tau'$ be a decomposition that is minimal below $x$, satisfies $\equpto{\tau'}{x}{\tau}$, and is maximal in the sense that no leaf can be further expanded while maintaining these properties. Such $\tau'$ exists by \cref{lem:max dec}, which shows that the depth of all decompositions is bounded.

    We claim that $\tau'[x]$ is a full decomposition. Indeed, assume by way of contradiction that this is not the case, then there exists some leaf $y\in \tau'.\type^{-1}[\lf]$ such that $\tau'.\frst(y)\notin B$.
    By \cref{def:partial decomposition,def:next contexts} we have that $\ns(\tau',\wHole{V},B,y)\neq \emptyset$ and it therefore has some minimal element $C$. We can thus extend $\tau'$ by plucking out a forest from $y$ so that $\tau'.\ctx(y1)=C$. This contradicts the maximality of $\tau'$.
\end{proof}


The following result is our first important step toward semantic reasoning about decompositions. Intuitively, we show that the set of contexts that are available at every node is ``decreasing'', in a sense that becomes precise in \cref{cor:decomp semantically decrease}.
\begin{lemma}[Decompositions are Decreasing]
\label{lem:prefix set}
\label{lem:decomposition decreasing}
Consider a partial decomposition $\tau\in \pd(\wHole{V}, B)$ 
and a node $x\in \tau.\type^{-1}[\B]$, then the following hold.
    \begin{enumerate}
        \item For every $C\in \ns(\tau, \wHole{V}, B, x0)$ there exists $C'\in \ns(\tau, \wHole{V}, B, x)$ such that $C' \eqUnr C$ (i.e., they are \emph{unravel equivalent} as per \cref{def:equiv unravel}).
        \item For every $C\in \ns(\tau,\wHole{V}, B, x1)$ we have $\tau.\ctx(x1)\cdot C\in \ns(\tau,\wHole{V}, B, x)$
    \end{enumerate}
\end{lemma}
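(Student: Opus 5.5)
Both items follow the same scheme. Take a witness extension at the child, then use a rotation (\cref{lem:rotation BT to TB}, \cref{lem:rotation TB to BT}, \cref{lem:rotation LR to RL}) to move the relevant context up to node $x$. The rotated tree is still a partial decomposition and agrees with $\tau$ outside the subtree of $x$, so the moved context is a next context of $x$.

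\textbf{Item 2.} Fix $C\in\ns(\tau,x1)$. By \cref{def:next contexts} there is $\tau_1\in\pd(\wHole V,B)$ with $\equpto{\tau_1}{x1}{\tau}$, $\tau_1.\type(x1)=\B$ and $\tau_1.\ctx(x11)=C$. Since $\tau_1\equiv\tau$ outside $x1$, we keep $\tau_1.\ctx(x1)=\tau.\ctx(x1)$ and the node $x0$. The nodes $x,x0,x1,x10,x11$ of $\tau_1$ now form a TB formation. Applying \cref{lem:rotation TB to BT} at $x$ yields $\tau_2\in\pd(\wHole V,B)$ with $\equpto{\tau_2}{x}{\tau_1}$, hence $\equpto{\tau_2}{x}{\tau}$, and with $\tau_2.\ctx(x1)=\tau.\ctx(x1)\cdot C$. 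Therefore $\tau.\ctx(x1)\cdot C\in\ns(\tau,x)$.

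\textbf{Item 1.} Fix $C\in\ns(\tau,x0)$ and a witness $\tau_1$ with $\equpto{\tau_1}{x0}{\tau}$ and $\tau_1.\ctx(x01)=C$. The node $x1$ was plucked at $x$, so the base case of \cref{def:decomp references} gives $x0@s\to x1$. Extending this reference to $x0$'s children per \cref{def:decomp references}, exactly one of two things happens: $x01@*\to x1$ (an inherited reference), or $x00@*\to x1$. In the first case we are in a BT formation. \cref{lem:rotation BT to TB} gives a rotated $\tau_2$ with $\tau_2.\ctx(x1)=\tau_1.\ctx(x01)=C$, so $C$ itself lies in $\ns(\tau,x)$. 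In the second case we are in an LR formation with $x00@u_1\to x1$. \cref{lem:rotation LR to RL} gives $\tau_2$ with $\tau_2.\ctx(x1)=C[u_1\mapsto\tau.\frst(x1)]\eqUnr C$, and we take $C'=\tau_2.\ctx(x1)$. In both cases $\equpto{\tau_2}{x}{\tau}$ and $\tau_2\in\pd$, so $C'\in\ns(\tau,x)$.

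\textbf{Main obstacle.} The delicate step is the case split in item 1. It must be exhaustive: the reference $x0@s\to x1$ always propagates to exactly one child of $x0$, and in the propagation case $s\not\lcrs$ / $s\spa(M+1)\lcrs$ it lands in $x00$ and not in some other node. This follows because $\tau_1.\type(x0)=\B$, and the three cases of \cref{def:decomp references} cover all addresses. The second thing to check is that the rotation lemmas apply to $\tau_1$, i.e.\ that their hypotheses are stated only for the local five-node formation. This holds as stated.
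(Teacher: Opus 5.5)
Your proof is correct and is essentially the paper's argument. Item 2 uses a TB-to-BT rotation at $x$. Item 1 splits on whether the reference to $x1$ propagates to $x01$ or to $x00$: in the first case (BT) a rotation to TB moves $C$ itself up to $x1$, and in the second case (LR) a rotation to RL moves up an unravel-equivalent copy of $C$. Your explicit check that these two cases are exhaustive, via the three clauses of \cref{def:decomp references}, fills in a step the paper leaves implicit.
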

\begin{proof}
    \begin{enumerate}
        \item Let $C\in \ns(\tau,\wHole{V}, B, x0)$. By \cref{def:next contexts} there exists a partial decomposition $\tau'\in \pd(\wHole{V},B)$ with $\equpto{\tau}{x0}{\tau'}$ such that $\tau'.\ctx(x01)=C$ (and in particular $\equpto{\tau}{x}{\tau'}$). There are now two possibilities: either the decomposition is BT or LR. More precisely, either $\tau'.x01@u\to x1$ (BT) or $\tau'.x00@u\to x1$ (LR). In the former, we can rotate to TB by \cref{lem:rotation BT to TB} to get $\tau''$ with $\tau''.\ctx(x1)=C$, and in the latter we can rotate to RL with \cref{lem:rotation LR to RL} to get $\tau''$ with $\tau''.\ctx(x1)=C[u\mapsto \tau'.\frst(x1)]$, and notice that $\lab(\tau'.\frst(x00))(u)=\square_{\tau'.\frst(x1)}$, so $\tau''.\ctx(x1)\eqUnr C$. Either way, we obtain a partial decomposition $\tau''$ such that $\equpto{\tau''}{x}{\equpto{\tau'}{x}{\tau}}$ and for $C'=\tau''.\ctx(x1)$ we have $C'\eqUnr C$ and $C'\in \ns(\tau,\wHole{V},B,x)$.

        \item Assume $\ns(\tau, \wHole{V},B,x1)\neq \emptyset$. Let $C\in \ns(\tau, \wHole{V}, B, x1)$. By \cref{def:next contexts} there exists a partial decomposition $\tau'\in \pd(\wHole{V},B)$ with $\equpto{\tau}{x1}{\tau'}$ such that $\tau'.\ctx(x11)=C$. We now have only one possible setting, namely TB. By \cref{lem:rotation TB to BT} there is a partial decomposition $\tau''$ such that $\equpto{\tau''}{x}{\equpto{\tau'}{x}{\tau}}$ and $\tau''.\ctx(x1)=\tau'.\ctx(x1)\cdot C=\tau.\ctx(x1)\cdot C$. In particular, $\tau.\ctx(x1)\cdot C\in\ns(\tau,\wHole{V},B,x)$.
        
    \end{enumerate}
\end{proof}

The decrease along the decomposition directly translates to the images under a morphism. Indeed, recall that for a morphism $\phi$, if $C\eqUnr C'$ then $\phi(C)=\phi(C')$ (as defined in \cref{sec:toolbox for free algebras}). We thus have the following.
\begin{corollary}
\label{cor:decomp semantically decrease}
Consider a partial decomposition $\tau\in \pd(\wHole{V},B)$, a node $x\in \tau.\type^{-1}[\B]$ and a morphism $\varphi\colon\wHole{V}\to V$, then the following hold.
    \begin{enumerate}
        \item $\varphi[\ns(\tau, \wHole{V}, B, x0)] \subseteq \varphi[\ns(\tau, \wHole{V}, B, x)]$
        \item $\varphi(\tau.\ctx(x1))\cdot \varphi[\ns(\tau,\wHole{V}, B, x1)] \subseteq \varphi[\ns(\tau, \wHole{V}, B, x)]$
    \end{enumerate}
\end{corollary}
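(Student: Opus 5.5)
This is a direct consequence of \cref{lem:decomposition decreasing}, once we use that $\varphi$ is blind to unraveling. Recall from \cref{sec:toolbox for free algebras} that lifted morphisms factor through $\unr$. So whenever $C\eqUnr C'$ we have $\varphi(C)=\varphi(C')$, since $\unr(C)=\unr(C')$ by \cref{def:equiv unravel}. I will state this as a one-line preliminary remark and then treat the two items separately. In both items I fix an arbitrary element of the left-hand image set and exhibit a preimage in $\ns(\tau,\wHole{V},B,x)$.

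For item 1, take $v\in\varphi[\ns(\tau,\wHole{V},B,x0)]$ and pick $C\in\ns(\tau,\wHole{V},B,x0)$ with $\varphi(C)=v$. Item 1 of \cref{lem:decomposition decreasing} supplies some $C'\in\ns(\tau,\wHole{V},B,x)$ with $C'\eqUnr C$. By the preliminary remark, $\varphi(C')=\varphi(C)=v$, so $v\in\varphi[\ns(\tau,\wHole{V},B,x)]$.

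For item 2, take $v=\varphi(\tau.\ctx(x1))\cdot\varphi(C)$ with $C\in\ns(\tau,\wHole{V},B,x1)$. If this next-context set is empty, there is nothing to prove. Item 2 of \cref{lem:decomposition decreasing} gives $\tau.\ctx(x1)\cdot C\in\ns(\tau,\wHole{V},B,x)$. Since $\varphi$ is a semigroup morphism on the stable set $\wHole{V}$, we get $\varphi(\tau.\ctx(x1)\cdot C)=\varphi(\tau.\ctx(x1))\varphi(C)=v$, which places $v$ in the right-hand set.

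Two points need a brief check, and neither is a real obstacle. First, $\tau.\ctx(x1)$ must lie in the domain $\wHole{V}$ of $\varphi$. This holds because $\tau$ is partial, hence stable, and by the remark after \cref{def:stable dec} we have $\tau.\ctx(x1)=\clnk(\tau,x0,x1)\in\wHole{V}$. Second, the next contexts must lie in $\wHole{V}$, which holds for the same reason. All the substantive work (the rotations) was already done in \cref{lem:decomposition decreasing}.
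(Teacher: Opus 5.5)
Your proof is correct and follows the paper's own route. The paper also derives the corollary directly from \cref{lem:decomposition decreasing}: item 1 there combined with the fact that the lifted morphism $\varphi$ is invariant under unraveling, and item 2 there combined with multiplicativity of $\varphi$. Your explicit check that $\tau.\ctx(x1)$ and the next contexts lie in $\wHole{V}$, via stability of partial decompositions, is a detail the paper leaves implicit.
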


A particularly useful application of \cref{cor:decomp semantically decrease} is when $V$ has a zero, as follows.
\begin{corollary}
\label{cor:top zeros}
    Consider a semigroup $V$ with a zero $\zeroel\in V$, a stable context set $\wHole{V}$ and a morphism $\varphi\colon\wHole{V}\rightarrow V$,
    Then for every $x\leq y\in \tau.\nds$ we have
    \[\zeroel\in \varphi[\ns(\tau,\wHole{V},B,y)]\implies \zeroel\in \varphi[\ns(\tau,\wHole{V}, B, x)]\]
\end{corollary}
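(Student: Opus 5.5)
The plan is to prove the implication for an immediate child and then induct along the path from $x$ to $y$. Write $y=x\cdot w$ and induct on $|w|$. If $|w|=0$ there is nothing to prove. Otherwise write $y=z\sigma$ with $x\le z<y$ and $\sigma\in\{0,1\}$. The induction hypothesis applied to $z$ will give $\zeroel\in\varphi[\ns(\tau,\wHole{V},B,z)]\implies\zeroel\in\varphi[\ns(\tau,\wHole{V},B,x)]$. So it suffices to show that $\zeroel\in\varphi[\ns(\tau,\wHole{V},B,z\sigma)]$ implies $\zeroel\in\varphi[\ns(\tau,\wHole{V},B,z)]$. Note that $z$ has a child in $\tau$, so $\tau.\type(z)=\B$, and \cref{cor:decomp semantically decrease} applies at $z$ (we take $\tau\in\pd(\wHole{V},B)$ as throughout).

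For $\sigma=0$ I will use item~1 of \cref{cor:decomp semantically decrease} directly: $\varphi[\ns(\tau,z0)]\subseteq\varphi[\ns(\tau,z)]$, so $\zeroel$ lies in the latter.

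For $\sigma=1$ I will use item~2. Pick $C\in\ns(\tau,z1)$ with $\varphi(C)=\zeroel$. Then $\varphi(\tau.\ctx(z1))\cdot\varphi(C)=\varphi(\tau.\ctx(z1))\cdot\zeroel=\zeroel$ by the defining property of a zero, and this product lies in $\varphi[\ns(\tau,z)]$. This closes the induction.

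There is no real obstacle. The only care needed is in checking that the hypotheses of \cref{cor:decomp semantically decrease} hold at each intermediate node $z$, namely that $z$ is a binary node of the partial decomposition $\tau$. This is immediate because $z$ is a strict ancestor of $y$ in $\tau.\nds$ and $\tau$ is a full binary tree domain (Condition~\ref{itm:decomp:full binary} of \cref{def:bin decomp}).
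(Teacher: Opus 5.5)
Your proposal is correct and is essentially the paper's argument: the corollary is stated without a separate proof as an immediate consequence of \cref{cor:decomp semantically decrease}, using item~1 for down-children, item~2 together with absorption by $\zeroel$ for right-children, and induction along the path from $x$ to $y$. Your check that each intermediate node is binary is all the extra care that is needed.
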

Intuitively, we use this in \cref{sec:decomposition theorems,sec:main theorem bound} to ``decompose away'' all the $\zeroel$ contexts from a forest, and then we are guaranteed that $\zeroel$ does not occur in the next available contexts.

Recall from \cref{lem:no inherited decomposition is stable} that decompositions without inherited references are stable, which is a desirable property.
In the following we show that minimality in fact entails that there are no inherited references. 
Intuitively, this corresponds to the intuition that inherited references correspond to ``bad'' factorization: plucking out a sub-forest, and later on plucking a parent of the sub-forest -- this kind of factorization is not minimal. 
Importantly, this proof relies on our framework of tree rotations (\cref{sec:rotations}).
\begin{lemma}[No Inherited References in Factorization-Minimal Decomposition]
\label{lem:min factorization dec no references}
    Consider a decomposition $\tau$ that is minimal below some node $x$ (with respect to $\wHole{V},B$ and a factorization set $S$, as per \cref{def:min dec}), then there are no inherited references below $x$, as per \cref{def:decomp references}.
\end{lemma}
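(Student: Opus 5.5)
Suppose, towards a contradiction, that there is an inherited reference below $x$. Inherited references arise only in the third case of \cref{def:decomp references}, so there is a node $z$ with $x\le z$, a node $y<z$, and a reference $z1@v\to y$. Among all such references, choose one where $z$ is maximal with respect to $\le$; this keeps the witness local. Tracing the reference back through \cref{def:decomp references}, $z@(s\spa v)\to y$ with $s=\spos(\tau.\ctx(z1))$. Hence the plucked factor $\tau.\frst(z1)$ contains the default hole $\square_{\tau.\frst(y)}$ at address $v$. By \cref{lem:ancestor embedding respects holes and references}, write $y=\hat y1$ with $\hat y<z$.

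The plan is to produce a strictly $\cleq$-smaller next context at $z$, contradicting minimality below $x$. Set $C_z=\tau.\ctx(z1)$ and $L=\clnk(\tau,z1,y)=\tau.\frst(z1)[v\mapsto\square]$. Then $\tau.\frst(z)=C_z\cdot L\cdot\square_{\tau.\frst(y)}$. Alternatively, pluck from $z$ the forest $\square_{\tau.\frst(y)}$ itself with context $C_z\cdot L$, though this is not allowed since it would be trivial. The intended reverse move is instead to pluck only a bottom part. Concretely, the inherited reference means $\tau$ has a BT-like shape in which the larger forest was taken after the smaller one. So I will apply the TB to BT / BT to TB rotations of \cref{lem:rotation BT to TB,lem:rotation TB to BT}, and the LR to RL rotation of \cref{lem:rotation LR to RL}, along the path from $\hat y$ to $z$. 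The goal is to transport the plucking at $z$ upward. The cleanest case is $z=\hat y0$, where we have exactly a BT formation at $\hat y$ ($x01@u\to x1$ with $x=\hat y$). By \cref{lem:rotation BT to TB}, $\tau.\ctx(\hat y1)=\tau.\ctx(\hat y01)\cdot\clnk(\tau,\hat y01,\hat y1)$, and the rotated $\tau'$ is a partial decomposition with $\tau'.\ctx(\hat y1)=\tau.\ctx(\hat y01)$. Hence $\tau.\ctx(\hat y01)\in\ns(\tau,\hat y)$ and $\tau.\ctx(\hat y01)\cleq\tau.\ctx(\hat y1)$. Strictness holds because the linking context is not $\square$: otherwise $\tau.\frst(z1)=\square_{\tau.\frst(y)}$ and then $\tau.\frst(z0)=\tau.\frst(z)$, contradicting Condition~\ref{itm:decomp:pluck nonempty}. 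This contradicts $\cleq$-minimality at $\hat y$, provided $x\le\hat y$.

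For the general case, where $z=\hat y0w$ with $w$ a path of down and right steps, I will induct on $|w|$. At each step I will use \cref{lem:decomposition decreasing}, essentially its proof by rotation, to lift the context $\tau.\ctx(z1)$ to a next context at the parent: via BT$\to$TB or LR$\to$RL for down-edges, and TB$\to$BT for right-edges, which prepends the parent's context. All rotations preserve the inherited reference's target up to unravel-equivalence. One obtains a next context at $\hat y$ that is unravel-equivalent to a proper prefix of $\tau.\ctx(\hat y1)$. Since the minimality set $S$ is stable, this candidate lies in $S\cap\ns$, and it is strictly $\cleq$-below, giving the contradiction. Note that the reference $z1@v\to y$ cannot exist with $\hat y<x$ unless the chosen chain passes through $x$. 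To handle this, I will pick the reference so that $\hat y\ge x$, by following the reference down from $x$ and using \cref{lem:ancestor embedding respects holes and references}: the first inherited reference created below $x$ points to a node plucked below $x$. References to nodes above $x$ are created only as non-inherited ones at $x0$-type positions, and can become inherited only later; that case will need separate care.

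The main obstacle is this general lifting: showing that after the rotations, the resulting context at $\hat y$ is genuinely a strict $\cleq$-predecessor of $\tau.\ctx(\hat y1)$, rather than merely unravel-equivalent to something. This requires tracking $\spos$ through $\spa$ and the relation $\spos(C_1)=\spos(C_{01})\spa u$ used in \cref{lem:rotation BT to TB}. The LR rotations change contexts only up to $\eqUnr$, and $\cleq$ is not obviously invariant under $\eqUnr$ because of \cref{rmk:spos and unravel}. I would first try to restrict the minimal counterexample so that only the direct BT case occurs, picking the reference with $|w|$ minimal and using minimality at intermediate nodes to rule out the longer configurations.
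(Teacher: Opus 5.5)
\textbf{Verdict: genuine gap.} Your base case $z=\hat y0$ is right and coincides with the paper's $n=1$ case: BT$\to$TB via \cref{lem:rotation BT to TB}, with strictness from the non-triviality condition of \cref{def:bin decomp}. Since minimality is taken in $\nf\cap S$, you should also note that the rotated factor $\bigl(\tau.\ctx(\hat y01),\clnk(\tau,\hat y01,\hat y1)(\tau.\frst(\hat y1))\bigr)$ is unravel-equivalent to $\factor(\tau,\hat y0)$, which lies in $S$ by minimality at $\hat y0$. The general case, however, is not established, and the argument fails as sketched.

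\textbf{Right steps on the path.} Lifting across a right step by TB$\to$BT prepends a context, and the resulting factor need not lie in $S$. A stable factorization set is closed under $\eqUnr$, not under left multiplication, so ``since $S$ is stable, this candidate lies in $S\cap\ns$'' fails there. Choosing $z$ \emph{maximal} is exactly what lets such steps appear. The missing observation is that the reference $z1@v\to\hat y1$ is passed down through every node on the path from $\hat y0$ to $z1$. So any right step $w'1$ on that path is itself an inherited reference to $\hat y1$. Choosing the witness \emph{nearest} to its target therefore forces $z=\hat y0^n$.

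\textbf{The case $n>1$ needs an actual rotation.} Minimality of $n$ does not reduce you to the direct BT shape, and minimality of $\tau$ at intermediate nodes cannot exclude $n>1$. The base reference $\hat y0^n@*\to\hat y0^{n-1}1$ continues either into $\hat y0^n1$ or into $\hat y0^{n+1}$. The first is an inherited reference at distance one, handled by your base case. The second is an LR shape at $\hat y0^{n-1}$. There \cref{lem:rotation LR to RL} moves the factor of $\hat y0^n$ up to $\hat y0^{n-1}$, changed only up to $\eqUnr$, so it stays in $S$. By the $\AncEmb$-preservation of \cref{lem:parallel factor map} together with \cref{lem:ancestor embedding respects holes and references}, its plucked forest now references $\hat y1$.

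\textbf{Finishing the argument.} The paper uses this single rotation to lower $n$ and then re-minimizes below $x0^m$, so as to stay among minimal decompositions. Alternatively, you can iterate the LR$\to$RL step upward and stop at the first node $w\geq\hat y$ on the way up that shows a BT shape (at the latest $w=\hat y$). There \cref{lem:rotation BT to TB} yields a factor in $S$ whose context is an \emph{exact} proper prefix of $\tau.\ctx(w1)$. Since all rotations occurred strictly below $w$, neither $\nf(\tau,w)$ nor $\factor(\tau,w)$ has changed, which contradicts minimality of $\tau$ at $w$. In particular, the obstacle you single out ($\cleq$ not being $\eqUnr$-invariant) never arises: unravel-equivalence enters only through membership in the stable set $S$.

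\textbf{The case $\hat y<x$.} The case you defer cannot be handled and is not intended. The lemma concerns inherited references whose target also lies below $x$; the paper's proof reduces to target $x1$. For instance, suppose $\tau.\frst(x)=b(a(\square_h))$, where $h$ was plucked at some $w1$ with $w<x$. The only admissible step at $x$ plucks $a(\square_h)$, which creates an inherited reference to $w1$ in a decomposition that is trivially minimal below $x$.
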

\begin{proof}
    Notice that if $\tau$ is minimal below $x$ then it is also minimal below $z$ for every $x\leq z$.
    We proceed by way of contradiction: assume that $\tau$ is minimal below $x$ and that there is an inherited reference $y1@*\to z1$ below $x$. In particular, we can assume without loss of generality that $y1@*\to x1$ (since $\tau$ is minimal below $z$ as well, so we can assume $x=z$).

    We now take $\tau,y,x$ to be minimal with this property. That is, there is no smaller $y$ for which this holds, where ``smaller'' is by the standard order on $\bns$, in \emph{any} decomposition $\tau$.

    By the minimality of $y$, we can write $y=x0^n$ for some $n>0$. Indeed, if $y$ is obtained by a path with a right-child, then this right child also has a reference to $x1$, contradicting the minimality of $y$.

    Denote $C_x=\tau.\ctx(x1)$, and $C_y=\tau.\ctx(y1)$, we now split to cases.
    \begin{itemize}
        \item If $n=1$: this puts us under the conditions of \cref{lem:rotation BT to TB} (i.e., a BT decomposition). We therefore have $\tau.\ctx(x1)=C_x=\tau.\ctx(x01)\cdot \clnk(\tau,x01,x1)$ (in particular $\tau.\ctx(x01)\cleq C_x$) and there exists $\equpto{\tau'}{x}{\tau}$ with $\tau'.\ctx(x1)=\tau.\ctx(x01)$ and 
        \[\begin{split}
        &\tau'.\frst(x1)=\clnk(\tau,x01,x1)(\tau.\frst(x1))\\
        &\eqUnr \clnk(\tau,x01,x1)(\square_{\tau.\frst(x1)})=\tau.\frst(x01)
        \end{split}
        \]
        By the definition of a minimal decomposition (\cref{def:min dec}), we have
        $\factor(\tau,x0)=(\tau.\ctx(x01),\tau.\frst(x01))\in S$.
        Overall we have that 
        \[\factor(\tau',x)=(\tau.\ctx(x01),\clnk(\tau,x01,x1)(\tau.\frst(x1)))\in \nf(\tau,f)\cap S\]
        

        This, however, contradicts the minimality of $\tau$: we could have replaced $C_x$ with the smaller $\tau.\ctx(x01)$. So we are done.

        \item 
        If $n>1$: let $m=n-1$ and consider $C'=\tau.\ctx(x0^m1)$ (i.e., the context just above $y$, see \cref{fig:LR to RL in MinLemma}). We first claim that $x0^n0@*\to x0^m1$. Indeed, by \cref{def:decomp references} we have $x0^n@*\to x0^m1$, and there are now two options, either $x0^n1@*\to x0^m1$ (again contradicting the minimality of $n$ since we can use $x0^m$ as $x$, reducing back to the $n=1$ case), or we do not have such a reference, giving the desired $x0^n0@*\to x0^m1$.

        This now puts us in the premise of \cref{lem:rotation LR to RL}, so we can apply an LR to RL rotation (see \cref{fig:LR to RL in MinLemma}). Intuitively, following the rotation we obtain an inherited reference $x0^m1@*\to x1$, contradicting the minimality of $n$ (since $m<n$). 
                
        \begin{figure}[ht]
            \centering
            \begin{subfigure}[t]{0.3\textwidth}
                \centering
                \includegraphics[width=\linewidth]{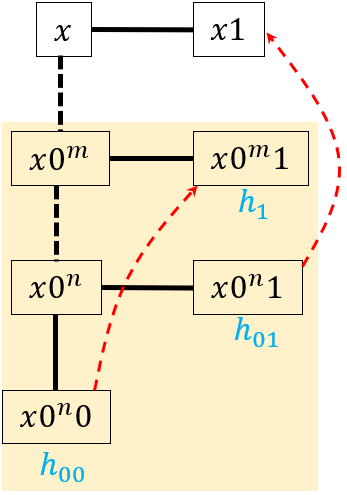}
            \end{subfigure}
            \qquad
            \begin{subfigure}[t]{0.28\textwidth}
                \centering
                \includegraphics[width=\linewidth]{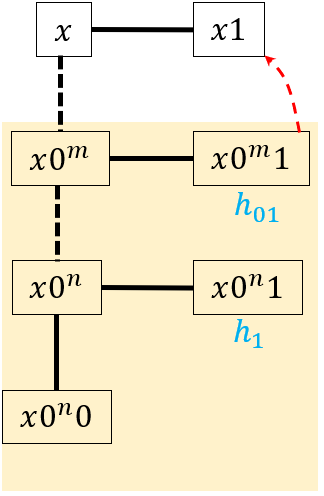}
            \end{subfigure}
            \caption{LR to RL rotation for \cref{lem:min factorization dec no references}. The dashed lines represent references. The rotation is at the subtree rooted at $x0^m$ (the highlighted region).}
            \label{fig:LR to RL in MinLemma}
        \end{figure}
        
        Importantly, however, it is not clear that the rotated tree is minimal. We therefore need to dive into some more details. Denote $h_{00}=\tau.\frst(x0^n0)$,$h_1 = \tau.\frst(x0^m1)$ and $h_{01}=\tau.\frst(x0^n1)$. By \cref{lem:rotation LR to RL} there is a partial decomposition $\equpto{\tau'}{x0^m}{\tau}$ with $\tau'.\frst(x0^m1)=h_{01}$ and $\tau'.\ctx(x0^m1)\eqUnr \tau.\ctx(x0^n1)$, thus by the stability of $S$ we get $\factor(\tau',x0^m)\in \nf(\tau,x0^m)\cap S$.
        In order to reason about the trees in their most ``generic way'' (in particular, so they can be expanded to a minimal decomposition), we ``fold'' everything below $h_1,h_{00}$ and $h_{01}$ as follows. Denote
        \[\tau_1=\tau[x0^m]\left[\begin{array}{c}
             1 \to \sing(h_1)\\
             00 \to \sing(h_{00})\\
             01 \to \sing(h_{01})
        \end{array}\right]
        \text{ and } \tau_2=\tau'[x0^m]\left[\begin{array}{c}
             1 \to \sing(h_{01})\\
             00 \to \sing(h_{00})\\
             01 \to \sing(h_1)
        \end{array}\right]\]
        The definition of $\tau',\tau_1$ and $\tau_2$ puts us in the premise of \cref{lem:parallel factor map} on $\tau_1,\tau_2$, so we get $\AncEmb(\tau_1,01,\epsilon)=\AncEmb(\tau_2,1,\epsilon)$. Since $\AncEmb$ is ``well behaved'' by \cref{lem:ancestor embedding not affected by ancestor,lem:ancestor embedding not affected by incomparable,lem:ancestor embedding respects holes and references} and since $\tau.x0^n1@u\to x1$ for some $u$, we have that $\spos(C_x)=\AncEmb(\tau,x0^n1,x)(u)=\AncEmb(\tau',x0^m1,x)(u)$. 

        We are now ready to consider a minimal decomposition: let $\tau''$ be a minimal decomposition such that $\equpto{\tau''}{x0^m}{\tau}$ and such that $\tau''.\ctx(x0^m1)\cleq \tau'.\ctx(x0^m1)$ (which exists since $\tau'$ is a partial decomposition). By the ordering of the contexts, it follows that $\tau''.\frst(x0^m1)$ (which is plucked out at $x0^m$ leaving $\tau''.\ctx(x0^m1)$) includes the sub-forest $\tau'.\frst(x0^m1)$, as the latter leaves the larger $\tau'.\ctx(x0^m1)$. In particular, $\spos(\tau.\ctx(C_x))$ is in the image of $\AncEmb(\tau'',x0^m1,x0^m)$. Again by \cref{lem:ancestor embedding respects holes and references} this implies that $\tau''.x0^m1@*\to x1$, which is a contradiction to the minimality of $n$ (since $\tau''$ is a minimal decomposition, and $m<n$).
    \end{itemize}
    Since both cases reach a contradiction, we conclude that there are no inherited references below $x$.
\end{proof}

We now wish to obtain an analogue of \cref{lem:min factorization dec no references} to stable context sets. Fortunately, this follows immediately: if a decomposition is minimal below $x$ with respect to a stable context set $S$, then it is also minimal below $x$ with respect to the stable factorization set $S\times \tilH$, and therefore we can apply \cref{lem:min factorization dec no references}. 
\begin{corollary}[No Inherited References in Context-Minimal Decomposition]
    \label{lem:min dec}
    \label{lem:min dec no references contexts}
    Consider a decomposition $\tau$ that is minimal below some node $x$ (with respect to $\wHole{V},B$ and a stable context set $S$, as per \cref{def:min dec}), then there are no inherited references below $x$, as per \cref{def:decomp references}.
\end{corollary}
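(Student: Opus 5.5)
The plan is to reduce the statement to \cref{lem:min factorization dec no references} by showing that context-minimality with respect to a stable context set $S$ implies factorization-minimality with respect to the factorization set $S\times \tilH$. Once that implication is in place, the earlier lemma applies verbatim and rules out inherited references below $x$.

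\emph{Stability of $S\times\tilH$.} First I would check that $S\times \tilH$ is a stable factorization set. Suppose $(C,f)\in S\times\tilH$, $C\eqUnr C'$ and $f\eqUnr f'$. Since $S$ is closed under unravel equivalence, $C'\in S$; and trivially $f'\in\tilH$. So $(C',f')\in S\times \tilH$.

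\emph{Transfer of minimality.} Next, take any $y$ with $\tau.\type(xy)=\B$.
\begin{itemize}
\item By \cref{def:next contexts}, $\ns(\tau,\wHole V,B,xy)$ is exactly the projection of $\nf(\tau,\wHole V,B,xy)$ onto its first coordinate. Hence $\ns(\tau,\wHole V,B,xy)\cap S$ is the projection of $\nf(\tau,\wHole V,B,xy)\cap(S\times\tilH)$.
\item By \cref{def:context prefix order}, $\cleq$ on factors compares only the context coordinate.
\item Therefore, if $\tau.\ctx(xy1)$ is $\cleq$-minimal in $\ns(\tau,xy)\cap S$, then $\factor(\tau,xy)=(\tau.\ctx(xy1),\tau.\frst(xy1))$ is $\cleq$-minimal in $\nf(\tau,xy)\cap(S\times\tilH)$.
\end{itemize}
It remains to confirm that $\factor(\tau,xy)$ actually lies in this set. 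Minimality with respect to $S$ implicitly gives $\tau.\ctx(xy1)\in \ns(\tau,xy)\cap S$. Moreover, $\tau$ itself witnesses membership in $\nf$, because $\tau\in\pd(\wHole V,B)$ and $\equpto{\tau}{xy}{\tau}$.

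\emph{Conclusion.} With the transfer established, \cref{lem:min factorization dec no references} applies with factorization set $S\times\tilH$ and yields that there are no inherited references below $x$. The only delicate point is reading \cref{def:min dec} so that ``minimal among'' requires membership; if the definition is read more loosely, minimality under either reading still transfers in the same way, so no real obstacle remains.
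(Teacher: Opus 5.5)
Your proposal is correct and takes the same route as the paper. The paper also observes that minimality with respect to a stable context set $S$ implies minimality with respect to the stable factorization set $S\times\tilH$, and then invokes \cref{lem:min factorization dec no references}; you simply spell out the stability of $S\times\tilH$ and the transfer of $\cleq$-minimality, which the paper leaves implicit.
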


\section{General Decomposition}
\label{sec:general decomp}
We are now equipped with a robust notion of binary decomposition for forests, and have a versatile toolbox for manipulating such decompositions. For comparison, in the classical setting of words (rather than forests), everything we develop thus far are trivial observations about concatenation.

At the heart of Simon's decomposition lie the \emph{Idempotent nodes} (denoted $\I$-nodes), which we are yet to define. These capture the plucking of several sub-forests simultaneously, where each is mapped to the same idempotent element.
In the setting of trees, defining such decomposition is (unsurprisingly) involved. Moreover, we introduce another type of nodes: \emph{Centipede nodes} (denoted $\C$-node). The latter are a convenient technicality, and we elaborate on them in the following.

Recall that in the setting of words (with respect to a morphism $\phi$), an idempotent node labeled $x$ with children labeled $x_1,\ldots,x_k$ means that $\phi(x)=\phi(x_1)=\ldots=\phi(x_k)=e$ for some idempotent element $e$. Intuitively, we can think of this node as ``hiding'' a binary decomposition of $x$ to the leaves $x_1,\ldots,x_k$, such that at each node in the binary decomposition we have the image $e$ under $\phi$. The latter is the intuition for our $\I$-nodes: they hide a binary decomposition whose contexts are all mapped to an idempotent element. 
Crucially, however, we also require that this decomposition is \emph{stable} (\cref{def:stable dec}). This guarantees that all the linking contexts within the tree are also mapped to the same idempotent. In turn, this allows us to use developed for decompositions in \cref{sec:dec lemmas} as well as tree rotations.

The Centipede nodes are a novel technicality, which can be adopted to words (see \cref{fig:centipede node}). They are prompted by the following scenario (demonstrated over words, for simplicity). Consider a node labeled with a word $x$, and suppose we aim to initially decompose $x=uvw$. Using binary decompositions, we need to decompose $x=u\cdot (vw)$, and then $vw= v\cdot w$, in two ``layers''. It is natural to ask why we do not directly allow a ternary node, to save this split. 
The reason is that in the extreme case, one would decompose $x$ directly to letters in a depth-1 tree, voiding the point of the decomposition theorem. More importantly, it does not really make sense to allow such a node, since given a single child of a multi-child node, we do not directly know where it ``fits'' within its parent node.

However, if allow multiple children and record an ``index'' where each child fits in its parent, then the problem above disappears: we cannot gain too much depth by decomposing to many children, as it would require a large index. Technically, this ``index'' is simply another word with placeholders for its children, i.e., a context over $\tilV$, and so the index itself needs to be further decomposed.
We remark that $\C$-nodes can be avoided entirely (at a certain cost, see \cref{lem:elimination of C nodes}), but they are convenient, as we show in later proofs.

 \begin{figure}[ht]
            \centering
            \begin{subfigure}[t]{0.2\textwidth}
                \centering
                \includegraphics[width=\linewidth]{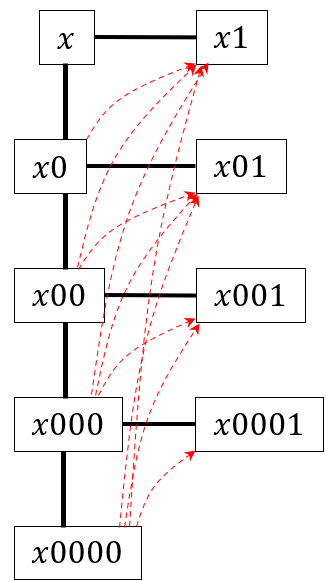}
            \end{subfigure}
            \qquad
            \begin{subfigure}[t]{0.5\textwidth}
                \centering
                \includegraphics[width=\linewidth]{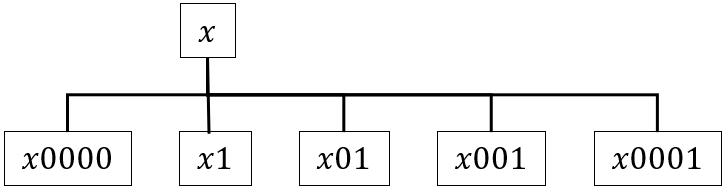}
            \end{subfigure}
            \caption{A Centipede decomposition, and the corresponding $\C$-node. Note the bijection between the leaves. Also note there are no inherited references in the centipede. Finally, the node $x0000$ is the ``index'' -- it has holes corresponding to all other leaves.}
            \label{fig:centipede node}
        \end{figure}

We proceed with the formal definition of general decompositions, as an extension of Binary decompositions (\cref{def:bin decomp}).

\begin{definition}[General Decomposition Tree]
    \label{def:general decomposition}
    Consider a stable context set $\wHole{V}$, a finite semigroup $V$, and a morphism $\varphi\colon\wHole{V}\to V$.
    A \emph{general decomposition tree} with respect to $\phi$ is an $\tilH\times (\wHole{V}\cup \{\square\})$-labeled tree domain $\tau=\tup{\tau.\nds,\tau.\lab}\in \lltds{(\tilH\times (\wHole{V}\cup \{\square\}))}$ equipped with three functions
    \begin{itemize}
        \item $\tau.\frst\colon\tau.\nds\rightarrow \tilH$
        \item $\tau.\ctx\colon\tau.\nds\rightarrow (\wHole{V}\cup \{\square\})$
        \item $\tau.\type\colon\tau.\nds\rightarrow \{\B,\lf,\I,\C\}$
    \end{itemize}
    that satisfies the following conditions for every node $x\in \tau.\nds$.
    \begin{enumerate}
        \item If $\tau.\type(x)=\lf$ then $x$ is a \emph{$\lf$eaf}.
        \item If $\tau.\type(x)=\B$, then $x$ is a \emph{$\B$inary node} satisfying the conditions of \cref{def:bin decomp}.
        \item If $\tau.\type(x)=\I$ then $x$ is an \emph{$\I$dempotent node}. Its set of children is unbounded (but finite), denoted $S=\tau.\nds\cap (x\cdot \bbN)$, and there is a binary decomposition $\tau'$ such that the following hold. 
        \begin{itemize}
            \item $\tau'.\frst(\epsilon) = \tau.\frst(x)$.
            \item There exists an idempotent element $v\in V$ (i.e., $v^2=v$) such that for the stable context set\footnote{by the properties of idempotent elements, monoid morphism, and the artificial inclusion of $\{\square\}$, we get that $\wHole{V_x}$ is indeed a sub-monoid of $\tilV$.} $\wHole{ V_x}=\varphi^{-1}[v]\cup \{\square\}$ and the basis $\tau.\frst[S]$ (i.e., the set of forests in the children of $x$), we have $\tau'\in \fd(\wHole{V_x},\tau.\frst[S])$. 
            \item There is a bijection $g\colon S\rightarrow \tau'.\type^{-1}[\lf]$ such that for every $y\in S$ we have $\tau.\frst(y)=\tau'.\frst(g(y))$. 
        \end{itemize} 
        For an $\I$-node, we set $\tau.\ctx(y)=\square$ for every $y\in S$ (intuitively, the context of $y$ is of no consequence).
        
        We denote the idempotent element $v$ by $\val(\tau,x,\varphi)$.
        
        \item If $\tau.\type(x)=\C$ then $x$ is an \emph{$\C$entipede node}. Its set of children is unbounded (but finite), denoted $S=\tau.\nds\cap (x\cdot \bbN)$, and there is a binary decomposition $\tau'$ such that the following hold:
        \begin{itemize}
            \item $\tau'.\frst(\epsilon)=\tau.\frst(x)$.
            \item $\tau'.\nds\subseteq 0^*(\epsilon+1)$ (see \cref{fig:centipede node}).
            \item $\tau'$ does not have inherited references (as per \cref{def:decomp references}).
            \item There is a bijection $g\colon S\to \tau'.\type^{-1}[\lf]$ such that for every $y\in S$ we have $\tau.\frst(y)=\tau'.\frst(g(y))$
        \end{itemize}
        For a $\C$-node, we also set $\tau.\ctx(y)=\square$ for every $y\in S$.
\end{enumerate}
\end{definition}

\begin{remark}[Full vs Stable Decomposition for $\I$-Nodes]
\label{rmk:I node full decomp vs stable}
In \cref{def:general decomposition} we require that an $\I$-node has a corresponding full decomposition $\tau'$ (\cref{def:full decomposition}). In fact, we only need $\tau'$ to be a stable decomposition (\cref{def:stable dec}). Indeed, since we match the leaves of the decomposition, then we can essentially choose the basis to be exactly the forests in the leaves, rendering a given stable decomposition full.
\end{remark}

Recall that a minimal decomposition does not have inherited references (\cref{lem:min dec no references contexts}) and that a decomposition without inherited references is stable (\cref{lem:no inherited decomposition is stable}). Thus, in light of \cref{rmk:I node full decomp vs stable}, we have the following.
\begin{corollary}
\label{cor:min to I}
    Consider a semigroup morphism $\varphi\colon\wHole{V}\to V$ and an idempotent element $e\in V$.
    In every decomposition $\tau$ that is minimal below $x$ with respect to $\wHole{V}, B$ and the set $S=\varphi^{-1}[e]$ (\cref{def:min dec}) we can be collapse the subtree $\tau[x]$ to obtain a decomposition $\equpto{\tau'}{x}{\tau}$ such that $\tau'[x]$ is a single $\I$-node (and in particular a subtree of depth $1$). 
\end{corollary}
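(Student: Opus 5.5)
The plan is to take the subtree $\tau[x]$ itself as the binary decomposition $\tau'$ witnessing the $\I$-node, and then check the three conditions of \cref{def:general decomposition}. Set $e$ to be the given idempotent. Define $\wHole{V_x}=\varphi^{-1}[e]\cup\{\square\}$; by the footnote in \cref{def:general decomposition} this is a stable context set.

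The first step is to show that every binary node below $x$ uses a context in $\varphi^{-1}[e]$. Minimality with respect to $S=\varphi^{-1}[e]$ (\cref{def:min dec}) asks that each $\tau.\ctx(xy1)$ be $\cleq$-minimal in $\ns(\tau,xy)\cap S$. I read this as implicitly requiring $\tau.\ctx(xy1)\in S$, i.e.\ every plucked context maps to $e$. If the formal definition does not include this membership, I would restrict to decompositions built by always choosing from $\ns\cap S$, which is how such decompositions are produced via \cref{lem:min full dec}.

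The second step is stability, and I expect it to be the main obstacle. The linking contexts $\clnk$ need not coincide with the contexts $\ctx$, so membership in $\varphi^{-1}[e]$ does not follow directly from the first step. By \cref{lem:min dec no references contexts}, minimality below $x$ with respect to $S$ gives that there are no inherited references below $x$. Then, applied to $\tau[x]$, the argument of \cref{lem:no inherited decomposition is stable} shows that every linking context is unravel-equivalent to a composition of contexts $\tau.\ctx(\cdot)$ of $\tau[x]$. Each such composition maps to a product of copies of $e$, which is $e$ because $e$ is idempotent. Since $\varphi$ is invariant under $\eqUnr$, every linking context lies in $\varphi^{-1}[e]$, so $\tau[x]$ is stable over $\wHole{V_x}$. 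The delicate point is that \cref{lem:no inherited decomposition is stable} is phrased for a fixed ambient $\wHole{V}$. I would rerun its induction with $\wHole{V_x}$ in place of $\wHole{V}$, which uses only closure under composition and unravelling, both of which hold for $\wHole{V_x}$.

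The final step is to collapse. Take as basis the set of forests at the leaves of $\tau[x]$. As noted in \cref{rmk:I node full decomp vs stable}, this makes the stable decomposition $\tau[x]$ full, so $\tau[x]\in\fd(\wHole{V_x},\tau.\frst[\text{leaves}])$. Build $\tau'$ by replacing the subtree at $x$ with a node of type $\I$ whose children are the leaves of $\tau[x]$, enumerated as $x0,\dots,xm$ together with their subtrees. The bijection $g$ is this enumeration, and the children's contexts are set to $\square$. Nothing outside $x$ changes, so $\equpto{\tau'}{x}{\tau}$ holds, and $\tau'[x]$ has depth $1$ when the leaves are not expanded further.
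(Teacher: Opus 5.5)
Your proposal is correct and follows the paper's own route. Minimality with respect to $\varphi^{-1}[e]$ excludes inherited references (\cref{lem:min dec no references contexts}). Rerunning the induction of \cref{lem:no inherited decomposition is stable} over $\varphi^{-1}[e]\cup\{\square\}$ then gives stability. Finally, \cref{rmk:I node full decomp vs stable} lets you take the leaves' forests as the basis and fold $\tau[x]$ into a single $\I$-node.

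One small refinement: with no inherited references, the only nodes referencing $y1$ are of the form $y0^k$. So every linking context to $y1$ is unravel-equivalent to the single context $\tau.\ctx(y1)$. Idempotency of $e$ is therefore needed only to make $\varphi^{-1}[e]\cup\{\square\}$ a sub-semigroup, not for a product-of-$e$'s argument.
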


We are finally ready to define our main object of study, namely Full General Decomposition trees.

\begin{definition}[Full General Decomposition]
    \label{def:fdec}
    \label{def:full general decomposition}
    A \emph{full general decomposition} of a forest $f\in \tilH$ over a stable context set $\wHole{V}$, basis $B$, and morphism $\varphi\colon\wHole{V}\rightarrow V$ is a general decomposition $\tau$ such that $\tau.\frst(\epsilon)=f$, for every node $x\in \tau.\nds$ we have $\tau.\ctx(x)\in \wHole{V}\cup\{\square\}$, and $\tau.\frst(x)\in B$ if $\tau.\type(x)=\lf$.

    The set of all full general decompositions over $\wHole{V},B,f$ and $\varphi$ is denoted $\dec(\wHole{V},B,f,\varphi)$.
\end{definition}

Notice that the requirements of a general decomposition in \cref{def:general decomposition,def:full general decomposition} are \emph{local}, in the sense that they place restrictions for each node only on itself and its children. In particular, full general decompositions are closed under subtrees and under subtree-substitution, as follows.

\begin{corollary}
    \label{cor:general decomp closed substitution}
    Consider a general decomposition $\dec(\wHole{V}, B, f, \varphi)$ and $x\in \tau.\nds$, then the following hold.
    \begin{enumerate}
        \item $\tau[x]\in \dec(\wHole{V},B,\tau.\frst(x),\varphi)$.
        \item For a decomposition $\tau'\in \dec(\wHole{V}, B, \tau.\frst(x),\varphi)$ we have $\tau[x\mapsto \tau']\in \dec(\wHole{V}, B, f, \varphi)$.
    \end{enumerate}
\end{corollary}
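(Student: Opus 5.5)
The statement is local, so I would argue directly from \cref{def:general decomposition,def:full general decomposition}. The plan is to check, node by node, that every condition imposed on a node refers only to that node, its type, its forest and context, and its immediate children. Once that is done, both items follow from bookkeeping on tree domains: we need to know how $\tau[x]$ and $\tau[x\mapsto\tau']$ relabel addresses, and that the labels $(\frst,\ctx,\type)$ are carried along unchanged.

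For item~1, write $\sigma=\tau[x]$, so that $\sigma.\nds=\{y: xy\in\tau.\nds\}$ with all labels and types shifted. The root condition holds because $\sigma.\frst(\epsilon)=\tau.\frst(x)$. For each $y\in\sigma.\nds$, the node $y$ has the same type and the same (shifted) children as $xy$ has in $\tau$.
\begin{itemize}
\item For $\lf$ nodes, membership in $B$ is inherited directly.
\item For $\B$ nodes, \cref{def:bin decomp} only constrains $\frst(y),\frst(y0),\frst(y1),\ctx(y1)$, and these are unchanged by the shift.
\item For $\I$ and $\C$ nodes, the witnessing binary decomposition $\tau'$, the idempotent $v$, and the bijection $g$ composed with the address shift $xy\cdot i\mapsto y\cdot i$ still work. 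The basis $\tau.\frst[S]$ is the same set of forests.
\end{itemize}
The requirement $\ctx\in\wHole{V}\cup\{\square\}$ is pointwise. One small point needs checking: the root of $\sigma$ carries $\tau.\ctx(x)$, which may be a non-$\square$ context if $x$ was a right child. The conditions do not forbid this, though \cref{def:bin decomp} Condition~\ref{itm:decomp:down child is square context} asks roots to have context $\square$. I would resolve this by resetting the root context to $\square$. That change is harmless, since no condition of a node depends on its own context except through its parent.

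For item~2, let $\rho=\tau[x\mapsto\tau']$. I would partition the nodes of $\rho$ into three kinds: nodes not below or equal to $x$; the node $x$ itself; and nodes $xy$ with $y\neq\epsilon$ in $\tau'$.
\begin{itemize}
\item Nodes strictly below $x$ satisfy their conditions by the corresponding nodes of $\tau'$, by the same shift argument as in item~1.
\item Nodes $z$ incomparable with $x$, or with $z<x$ but not the parent of $x$, see exactly the same children and labels as in $\tau$.
\item The node $x$ has forest $\tau'.\frst(\epsilon)=\tau.\frst(x)$, which is unchanged. Its type and children now come from $\tau'$, and its context must be taken from $\tau$, i.e.\ we keep $\rho.\ctx(x)=\tau.\ctx(x)$ rather than $\tau'$'s root context.
\item For the parent $p$ of $x$, its conditions mention only $\frst(x)$ and $\ctx(x)$. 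Both coincide with their values in $\tau$, so the $\B$, $\I$ or $\C$ condition at $p$ (with the same witness $\tau''$ and bijection $g$) still holds.
\end{itemize}
The root of $\rho$ is still labeled $f$, unless $x=\epsilon$, in which case $\rho$ is just $\tau'$ and $\tau'.\frst(\epsilon)=\tau.\frst(\epsilon)=f$. Leaves of $\rho$ are leaves of $\tau$ or of $\tau'$, so they lie in $B$.

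The only step needing real care, and the one I expect to be the main obstacle, is the convention on the context stored at the substitution point. This affects the root of $\tau[x]$ in item~1 and the node $x$ in item~2. I would make explicit that substitution keeps $\tau.\ctx(x)$ (and dually that extraction resets it to $\square$), and then verify that no condition of \cref{def:general decomposition} reads a node's own context except via its parent's binary condition. Everything else is the routine address shift.
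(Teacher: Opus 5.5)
Your proposal is correct and follows the paper's own justification. In the paper, the corollary rests entirely on the preceding remark that every condition in \cref{def:general decomposition,def:full general decomposition} constrains a node only through itself and its immediate children. Your explicit convention, where extraction resets the root context to $\square$ and substitution keeps $\tau.\ctx(x)$ at the plug-in point, is a detail the paper leaves implicit. It is needed for the parent's binary condition when $x$ is a right child, and you handle it correctly.
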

In contrast to \cref{cor:general decomp closed substitution}, if we consider \emph{stability} (\cref{def:stable dec}) as a property, then it is not local, and the corollary does not hold for it (specifically, the second item fails).

As mentioned above, the parameterization of the decomposition by a basis is to facilitate the inductive proof. However, there is a canonical basis that is ``elementary'', in some sense.
\begin{definition}[Standard Basis]
\label{def:standard basis}
we define the standard basis $\bs \subseteq \tilH$ to be:
\[\bs = A\cup \{\sigma\square_h:\sigma\in A,h\in \hf\}\cup \{\square_{h_1}+\square_{h_2}:h_1,h_2 \in \hf\}\]
\end{definition}
\begin{remark}[$\bs$ does not contain $\square_h$]
\label{rmk:standard basis no squares}
The observant reader may notice that elements of the form $\square_h$ are not present in $\bs$. While these elements are ``more elementary'' than those in $\bs$, we observe that they (almost) never occur as leaves in a decomposition, since obtaining $\square_h$ as a leaf means that the plucked subforest is the entire forest, which is precluded by \cref{def:bin decomp} (Condition~\ref{itm:decomp:pluck nonempty}).

The only exception to this is when the subforest at the root is already $\square_h$. Since this is a degenerate case (and never occurs in the induction), we can safely disregard $\square_h$.
\end{remark}
As the intuition suggests, every forest can be decomposed to the standard basis, regardless of the morphism. Moreover, such a decomposition can be achieved using binary decompositions (i.e., no $\I$ or $\C$ nodes). Intuitively, this is simple: inductively pluck subtrees according to the construction of the forest, until reaching a basis element. 
\begin{lemma}
\label{lem:every forest is decomposable to standard basis}
    For every $f\in \hf$ \footnote{We could show this for $f\in \tilH$, with the exception of $f=\square_h$, see \cref{rmk:standard basis no squares}} there exists a decomposition $\tau\in \fd(\tilV, \bs)$ such that $\tau.\frst(\epsilon)=f$.
    
    In particular, for every morphism $\varphi\colon\wHole{V}\to V$, we have $\tau\in \dec(\wHole{V},\bs,f,\varphi)$.
\end{lemma}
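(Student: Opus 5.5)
We construct the decomposition by structural induction on $f\in\hf$, or equivalently by strong induction on the number of nodes of $f$. The decomposition is purely binary and always plucks the ``last'' syntactic component. We can also aim to show the stronger statement that the tree has no inherited references; stability over $\tilV$ is then automatic. In fact, stability over $\tilV$ holds trivially anyway, since every linking context lies in $\tilV$, and $\tilV$ is a stable context set because it is closed under composition and under $\eqUnr$. The only real requirements are therefore the conditions of \cref{def:bin decomp} and that all leaves lie in $\bs$.

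The induction should run over a slightly larger class than $\hf$, since residues contain default holes. Let $\mathcal{G}$ be the set of forests $g\in\tilH$ in which every default hole $\square_h$ appears either as the unique child of an $A$-labeled node, or as a root of $g$, with the additional requirement that $g$ is not itself a single $\square_h$. We prove by induction on $\Phi(g)$, the number of non-default-hole nodes (as in \cref{lem:max dec}), that every $g\in\mathcal{G}\cup\{\square_{h_1}+\square_{h_2}\}$ has a full binary decomposition to $\bs$. The cases are as follows.
\begin{enumerate}
\item If $g\in\bs$, take the singleton decomposition $\sing(g)$.
\item If $g=\sigma(g')$ with $g'$ not a single default hole, pluck $g'$ using the context $\sigma(\square)$. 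The factor is $g'$ and the residue is $\sigma(\square_{g'})\in\bs$. Here $g'$ is not $\square_h$, and it is in $\mathcal{G}$ or has the form $\square_{h_1}+\square_{h_2}$, so the induction hypothesis applies.
\item If $g=g_1+g_2$ has width at least two and is not in $\bs$, split off the last root tree $t$ using the context $g_1'+\square$, where $g=g_1'+t$. If $t$ is not a default hole, the factor $t$ is smaller, and the residue $g_1'+\square_t$ has fewer non-hole nodes. If $t=\square_h$, instead pluck $g_1'$ (which is not a single hole, since $g\notin\bs$) using the context $\square+\square_h$; the residue is then $\square_{g_1'}+\square_h\in\bs$.
\end{enumerate}
When the residue has width at least three, we continue by grouping. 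From $\square_{a}+\square_{b}+\square_c$ we pluck $\square_a+\square_b$, which is in $\bs$, leaving $\square_{\cdots}+\square_c\in\bs$. This needs care, because the plucked factor must differ from its parent (Condition~\ref{itm:decomp:pluck nonempty}).

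In each step we verify Condition~\ref{itm:decomp:correct context composition} directly, and check that the plucked forest is neither the whole forest nor a default hole, so that Condition~\ref{itm:decomp:pluck nonempty} holds. The measure $\Phi$ strictly decreases in both children, exactly as in the proof of \cref{lem:max dec}, and this guarantees termination. Combining the pieces by substituting the decompositions of the children at the nodes $x0$ and $x1$ yields $\tau\in\fd(\tilV,\bs)$ with $\tau.\frst(\epsilon)=f$.

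The ``in particular'' part follows from \cref{def:general decomposition,def:full general decomposition}, since a binary decomposition is a general decomposition with only $\B$ and $\lf$ nodes. The one point to check is that each context $\tau.\ctx(x)$ lies in $\wHole{V}\cup\{\square\}$. If $\wHole{V}$ is not all of $\tilV$, this must be read as in the paper's convention $\wHole{V}=\tilV$ for the morphism's domain. I expect the main obstacle to be the bookkeeping of widths and default holes at roots, in particular avoiding residues that are a lone $\square_h$ or a sum of three or more holes. The grouping steps in case 3 handle exactly this.
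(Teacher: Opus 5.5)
Your construction is sound, and on hole-free inputs it produces exactly the paper's decomposition. For $f=g_1'+t$ you pluck $t$ with $g_1'+\square$. The residue $g_1'+\square_t$ then falls into the hole branch of your case~3, which plucks $g_1'$ with $\square+\square_t$ and leaves $\square_{g_1'}+\square_t\in\bs$. The paper performs these two plucks in a single inductive step, with contexts $f_1+\square$ and $\square+\square_{f_2}$. Both recursive calls are therefore on hole-free forests, so plain structural induction on $\hf$ suffices, with no auxiliary class $\mathcal{G}$ and no termination measure.

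The step that fails as written is well-foundedness: $\Phi$ does \emph{not} strictly decrease in both children. Going from the residue $g_1'+\square_t$ to its factor $g_1'$ removes only a hole, so $\Phi(g_1')=\Phi(g_1'+\square_t)$. The reasoning you cite from \cref{lem:max dec} only yields a non-strict inequality when the removed nodes are holes. Your alternative, strong induction on the number of nodes, fails too: $g_1'+\sigma$ (with $\sigma\in A$) and its residue $g_1'+\square_\sigma$ have the same size. Inducting on $|\tofd(g)|+\Phi(g)$ repairs this. Every factor has fewer nodes and no larger $\Phi$. The only residues outside $\bs$ are of the form $g_1'+\square_t$ with $t$ not a hole, and these have smaller $\Phi$ and no more nodes.

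Once repaired, your route gives something the paper's proof does not. The restriction defining $\mathcal{G}$ is never actually needed: your three cases apply verbatim to every $g\in\tilH$ other than a lone $\square_h$ or the empty forest $\az$. (Neither you nor the paper can decompose $\az$ to $\bs$; it is a degenerate exception to the statement.) So you in fact obtain the extension to $\tilH$ mentioned in the footnote.

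Some minor points:
\begin{itemize}
\item Your ``grouping'' paragraph is just the hole branch of case~3.
\item The clause ``or has the form $\square_{h_1}+\square_{h_2}$'' is vacuous, since such forests already lie in $\mathcal{G}$.
\item Your caveat on the ``in particular'' part is correct. It requires the contexts $\sigma(\square)$, $g_1'+\square$ and $\square+\square_h$ to lie in $\wHole{V}$, e.g.\ $\wHole{V}=\tilV$. The paper's proof does not address this point.
\end{itemize}
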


\begin{proof}
    By induction on the generating sequence of $f$. Note that since we only have an inductive assumption on $\hf$, we take care not to introduce subtrees in $\tilH$, unless they are already basis elements.
    \begin{description}
        \item[Base:] $f\in A$. Then, the singleton decomposition $\sing(f)$ is such a decomposition.
        \item[Composition step:] assume $f=\sigma(f')$. 
        Consider the decomposition $\tau$ with $\tau.\nds=\{\epsilon,0,1\}$ such that $\tau.\frst(\epsilon)=f$ and $\tau.\ctx(1)=\sigma(\square)$.
        Notice that $\tau.\frst(0)=\sigma(\square_{f'})\in \bs$ and $\tau.\frst(1)=f'$.
        By the inductive assumption, there is a corresponding decomposition $\tau'\in \fd(\tilV,\bs)$ for $f'$. Then, the decomposition $\tau''=\tau[1\mapsto \tau']\in \fd(\tilV,\bs)$ satisfies the requirements for $f$.
        \item[Addition step:] assume $f=f_1+f_2$.
        Consider the decomposition $\tau$ with $\tau.\nds=\{\epsilon,0,1,00,01\}$ such that $\tau.\frst(\epsilon)=f$ and $\tau.\ctx(1)=f_1+\square$ and $\tau.\ctx(01)=\square+\square_{f_2}$.
        Notice that $\tau$ has no inherited references, and that $\tau.\frst(1)=f_2$ and $\tau.\frst(01)=f_1$ and $\tau.\frst(00)=\square_{f_1}+\square_{f_2}\in \bs$.
        By the inductive assumption on $f_1,f_2\in \hf$, there are corresponding decompositions $\tau_1,\tau_2\in \fd(\tilV,\bs)$. Then, we construct $\tau'=\tau\left[\begin{array}{c}
             1\mapsto \tau_2  \\
             01\mapsto \tau_1 
        \end{array}\right] \in \fd(\tilV,\bs)$ such that $\tau'.\frst(\epsilon)=f$, as required.
    \end{description}
\end{proof}

Recall that Simon's factorization theorem places a bound on the depth of decomposition of any word over a given monoid (and morphism). We now introduce the notion of a depth bound for forests. For technical reasons that become apparent in the proof, we define this notion first for semigroups, and then specialize it to monoids and to the standard basis.
\begin{definition}[Semigroup Decomposition Bound]
\label{def:semigroup decomp bound}
Consider a finite semigroup $S$, a stable context set $\wHole{V}$, a morphism $\varphi\colon\wHole{V}\rightarrow S$, and a basis $B$.
The \emph{decomposition bound} of $\phi$ over $B$ is
\[\norm{\varphi, B}\coloneqq \sup_{f\text{ is decomposable by }\wHole{V}\text{ to }B}\min_{\tau\in \dec(\wHole{V},B,f,\varphi)}\depth(\tau)\]
(note that $\phi$ implicitly determines $\wHole{V}$ and $S$).
We further define:
\begin{itemize}
    \item $\norm{\varphi}=\sup_{B}\norm{\varphi,B}$.
    \item $\norm{S}=\sup_{\varphi:\wHole{V}\to S}\norm{\varphi}$ (where $\varphi$ is taken over any context set $\wHole{V}$).
\end{itemize}
\end{definition}
Thus, the decomposition depth of $\varphi$ over basis $B$ is the maximal depth of a minimal decomposition on any forest, restricted to forests that are actually decomposable. Note that this maximum (rather, supremum) might be infinite. 
\begin{remark}[Decomposability]
\label{rmk:decomposability}
Note that the requirement of $f$ to be decomposable by $\wHole{V}$ to $B$ is independent of the morphism. Indeed, a forest is decomposable if and only if there is a binary decomposition (by e.g., replacing $\C$-nodes and $\I$-nodes with their respective binary sub-decompositions). Finally, note that binary decompositions are independent of the morphism. 
\end{remark}
Also note that for every decomposable forest we have $\dec(\wHole{V},B,f,\varphi)\neq \emptyset$, since binary decompositions are in particular general.

Then, the decomposition depth of a semigroup $S$ is defined by the worst-case morphism and the standard basis.

As discussed above, while $\C$-nodes are a convenient technicality, they are (in a sense) not necessary. In the following result we formalize this, and make explicit the blowup incurred by removing them. Note that our main result (\cref{thm:main}) states that there is a constant bound on the depth of decompositions. This therefore also holds without $\C$-nodes, only with a higher (constant) bound.
\begin{lemma}
\label{lem:elimination of C nodes}
    For every $\tau\in \dec(\wHole{V},\bs,f,\varphi)$ there exists $\tau'\in \dec(\wHole{V},\bs,f,\varphi)$ with no centipede nodes, and such $\depth(\tau')\leq 2^{\depth(\tau)+1}-2$.
\end{lemma}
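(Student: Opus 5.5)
We eliminate $\C$-nodes one at a time, bottom-up, by replacing each $\C$-node with a balanced binary decomposition built from its underlying centipede. Fix a $\C$-node $x$ with witnessing centipede $\tau'$, so $\tau'.\nds\subseteq 0^*(\epsilon+1)$, $\tau'$ has no inherited references, and its leaves $0^k$ and $0^i1$ ($i<k$) correspond bijectively to the children of $x$. In a centipede all references go from the spine node $0^k$ to the legs $0^i1$. The bottom leaf $0^k$ is therefore an ``index'' forest $g$ with default holes $\square_{h_0},\dots,\square_{h_{k-1}}$, where $h_i=\tau'.\frst(0^i1)$.

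The plan is to regroup the plugging of the $k$ legs into $g$ as a balanced binary tree. We split the legs into two halves $P_1,P_2$ and first pluck a forest in which only the legs of $P_2$ are still present. Concretely, the root of the new decomposition plucks a single leg $h_j$ for a suitably chosen $j$. Because the order in which independent (parallel) legs are plucked can be permuted without creating inherited references, we can choose any order. We use this freedom through the LR$\leftrightarrow$RL rotation of \cref{lem:rotation LR to RL}; for nested legs, where a leg is plucked from a node whose residue contains another leg, we instead apply the TB/BT rotations (\cref{lem:rotation BT to TB,lem:rotation TB to BT}).

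The resulting structure should be a decomposition in which the legs' subtrees hang at depth $O(\log k)$ from $x$ rather than $k$. We expect this step to be the main obstacle. A centipede node with depth-$d$ children can have up to exponentially many children in $d$, but only a bounded number of children is useful. The intended depth bound is $2^{\depth(\tau)+1}-2$, which suggests an induction of the form $D(d)\le 2D(d-1)+2$ rather than a logarithmic rearrangement.

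So we will instead argue recursively on depth. For a $\C$-node whose children have $\C$-free replacements of depth at most $D(d-1)$, we realize the centipede as a chain of binary nodes. Pluck $h_0$, and then continue down the spine; the spine has length $k$, which is unbounded. This is problematic unless each spine step is merged, so the key idea is to absorb the spine into the index. We therefore plan to use induction on $\depth(\tau)$ together with the fact that $g\in\bs$ when $x$ is near the bottom. In $\bs$, the index $g$ has at most two default holes, so $k\le 2$. In general, the index child of $x$ has a $\C$-free decomposition $\rho$ of depth at most $D(d-1)$, in which every default hole $\square_{h_i}$ ends in a basis leaf containing at most two default holes.

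We then graft: in $\rho$, at the node where the basis leaf containing $\square_{h_i}$ sits, we replace $\square_{h_i}$ by plugging the decomposition of $h_i$. This amounts to one extra binary node plucking $h_i$ (or two, for $\square_{h_1}+\square_{h_2}$), followed by the depth-$D(d-1)$ decomposition of $h_i$. The grafted tree has depth at most $D(d-1)+2+D(d-1)$, giving $D(d)\le 2D(d-1)+2$. With $D(0)=0$ this yields $D(d)= 2^{d+1}-2$.

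The remaining verifications are two. First, the grafted binary nodes respect \cref{def:bin decomp}: pushing forests through $\rho$ with $h_i$ re-substituted for $\square_{h_i}$ keeps every context unravel-equivalent. Second, $\I$-nodes inside $\rho$ stay valid, since their contexts only change up to $\eqUnr$, to which $\varphi$ and the stable sets $\varphi^{-1}[v]$ are insensitive (\cref{prop:inverse image is unravelling stable}).

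We anticipate that this re-substitution through $\I$-nodes, keeping the stability of their hidden binary decompositions intact, is the delicate point. We plan to handle it by noting that re-substituting an unraveled forest for a default hole maps references and linking contexts to unravel-equivalent ones.
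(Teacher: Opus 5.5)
Your final argument is essentially the paper's proof. You take a $\C$-free decomposition of the index child and substitute each leg $h_i$ back for its hole $\square_{h_i}$ throughout, which changes forests, contexts and $\I$-nodes only up to unravel-equivalence. You then graft the legs' $\C$-free decompositions at the resulting basis-derived leaves using one or two extra binary nodes, giving $D(d)\le 2D(d-1)+2$, exactly as the paper does. Drop the opening balanced-regrouping and rotation detour, which is not used. Also make the substitution per occurrence (only the holes stemming from the index node, as the paper phrases it), since a label $\square_h$ can also appear elsewhere.
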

\begin{proof}
    We start with a general observation about decompositions: consider some decomposition $\tau\in \pd(\wHole{V},B)$ for some $\wHole{V},B$, and let $f=\tau.\frst(\epsilon)$. Suppose that $f$ has a leaf $x$ labeled $\square_h$ for some $h\in \tilH$. We can modify $\tau$ by replacing $\square_h$ with $h$ in all occurrences that stem from $x$ throughout $\tau$, thus obtaining a new decomposition $\tau'$ with $\tau'.\nds=\tau.\nds$ and the property that for every node $w$ it holds that $\tau'.\frst(w)\eqUnr \tau.\frst(w)$, and similarly $\tau'.\ctx(w)\eqUnr \tau.\ctx(w)$ and the reference structure is maintained. 
    Thus, $\tau'$ is ``equivalent'' to $\tau$ up to unraveling. 
    The main difference is that the leaves of $\tau'$ are not not necessarily in $B$, but may contain instances of $h$. 
    
    This is illustrated in \cref{fig:remove C node} -- the node $x000$ has leaves labeled $\square_{h_0},\square_{h_1},\square_{h_2}$. In turn, the relevant subtrees are decomposed in the remaining subtrees. We can therefore replace each $\square_{h_i}$ with the respective $h_i$, and concatenate the decomposition (as in \cref{fig:remove C node 2}).
    
    Next, we recall the structure of the decomposition corresponding to a $\C$-node (see \cref{fig:centipede node}): its root is $x$, and the nodes are $x0^*$ and $x0^*1$ up to some $x0^{k-1}1$ and $x0^k$. Specifically, the bottom-most node (i.e., the ``index'' node) is $x0^k$. By the reference structure, we observe that $\frst(x0^k)$ has among its leaves $\square_{h_0},\ldots, \square_{h_{k-1}}$ where $h_i=\frst(x0^i1)$. 
    In particular, if we replace each $\square_{h_i}$ with $h_i$ in $x0^k$, as described above, we would obtain exactly $\frst(x)$.
    We henceforth assume that in the corresponding $\C$-node, the child corresponding to $x0^k$ is $x0$ (recall that this $\C$-node has $k+1$ children).

    \begin{figure}[ht]
        \begin{subfigure}{0.4\textwidth}
            \centering
            \includegraphics[width=1\linewidth]{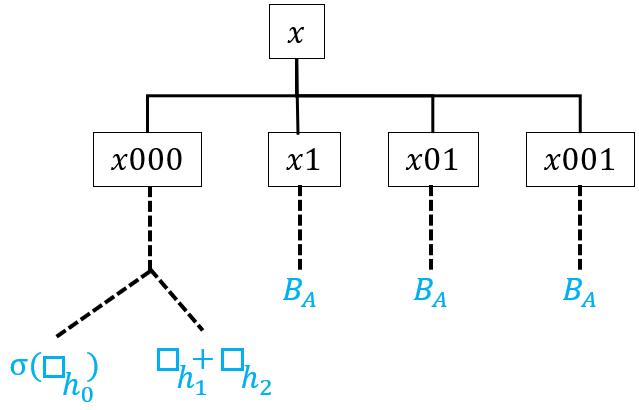}
            \caption{$\C$-node.}
            \label{fig:remove C node 0}
        \end{subfigure}
        \hfill
        \begin{subfigure}{0.3\textwidth}
            \centering
            \includegraphics[width=1\linewidth]{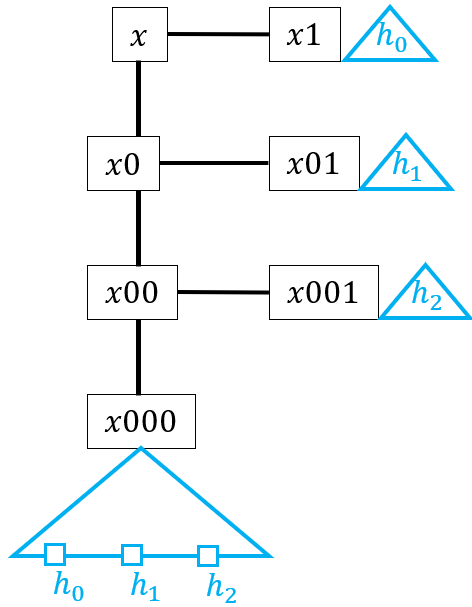}
            \caption{Centipede.}
            \label{fig:remove C node 1}
        \end{subfigure}
        \hfill
        \begin{subfigure}{0.2\textwidth}
            \centering
            \includegraphics[width=1\linewidth]{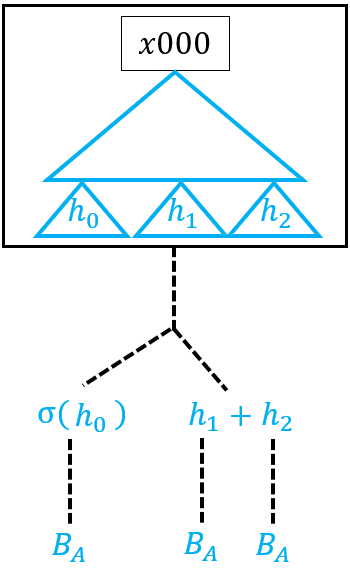}
            \caption{Decomposition}
            \label{fig:remove C node 2}
        \end{subfigure}
        \caption{A $\C$-node with index $x000$ (\cref{fig:remove C node 0}), and its corresponding Centipede (\cref{fig:remove C node 1}). In \cref{fig:remove C node 2} we compose the decompositions and eliminate the $\C$-node.}
        \label{fig:remove C node}
    \end{figure}

    Our approach for eliminating $\C$-nodes is now the following: consider a $\C$ node $x$ in $\tau$. We consider the decomposition subtree $\tau_x$ of $x0$ (i.e., of the index node). 
    We then obtain from $\tau_x$ an equivalent tree as described above: we replace each $\square_{h_i}$ with $h_i$, and follow the same decomposition as $\tau_x$ (which we can do by \cref{cor:general decomp closed substitution}). 
    Note the the new tree $\tau'_x$ has one fewer $\C$-node than $\tau[x]$, and depth \emph{lower} by $1$ (since we remove the $\C$-node and replace it with the child $x0$).
    However, this decomposition no longer ends in $B_A$, but rather has in its leaves the trees $h_i$. We therefore need to continue decomposing these trees to reach $B_A$.
    Note that for each $h_i$ we already have a decomposition of depth at most $\depth(\tau[x])-1$, from the corresponding child of the $\C$-node. Moreover, this decomposition has fewer $\C$-nodes than $\tau$, so we can inductively obtain for it a decomposition without $\C$-nodes (with a corresponding depth blowup).

    Therefore, we now consider each leaf $y$ of $\tau'_x$, and decompose it to $B_A$ (according to the cases of \cref{def:standard basis}), while accounting for the added depth. We denote by $D(n)$ the depth of a decomposition without $\C$ nodes equivalent to a given a decomposition of depth $n$. 
    \begin{itemize}
            \item If $y\in \bs$, then no change is necessary.
            \item If $y=\sigma(h_i)$ (for $\sigma\in A$), this case corresponds to a leaf labeled $\sigma(\square_{h_i})$ of $\tau$. 
            We then factor it using with the context $\sigma(\square)$ obtaining a bottom-child labeled $\sigma(\square_{h_i})\in \bs$ and a right-child labeled $h_i$. We then plug the decomposition of $h_i$ without $\C$-nodes, obtaining a full general decomposition. The depth of this decomposition adds at most $D(n-1)+1$.
            
            \item If $y=\tau.\frst(h_i)+\square_{h}$ for some $h$, this case corresponds to a leaf labeled $\square_{h_i}+\square_{h}$ of $\tau$. 
            We then factor it using the context $\square+\square_{h_i}$, yielding a bottom-child leaf labeled $\square_{h_i}+\square_h$ and a right-child labeled $h_i$, and again we plug the decomposition of $h_i$, adding depth at most $D(n-1)+1$.
            
            \item If $y=\square_h+\tau.\frst(h_i)$, this is analogous to the previous case.
            \item If $y=\square_{\tau.\frst(h_i)}+\square_{\tau.\frst(h_j)}$, this is again analogous to the above, only now we need two factors: first use $\square+\tau.\frst(h_i)$ to pluck out $\tau.\frst(h_i)$ and leave residue $\square_{h_i}+\tau.\frst(h_j)$, and then use $\square_{h_i}+\square$ to pluck out $\tau.\frst(h_j)$. 
            Overall this adds depth at most $D(n-1)+2$, due to the two binary nodes.
    \end{itemize}
    We then replace the original decomposition $\tau$, now with one fewer $\C$-node, by a decomposition of depth at most $D(n-1)$.
    Overall we obtain that a decomposition of depth $n$ can be transformed to a decomposition without $\C$ nodes of depth at most $D(n-1)+D(n-1)+2$. Thus, $D(n)\le 2D(n-1)+2$, and together with $D(0)=0$ we get $D(n)\le 2^{n+1}-2$, concluding the proof.
\end{proof}

\section{Preliminaries Part II -- Green's Relations, Ideals and Congruences}
\label{sec:green prelim}
In the remainder of the proof we make heavy use of tools from semigroup theory, specifically -- Green's relations and related concepts. We bring here several definitions and results. We remark that the results presented in this section are well known, but have several presentations. We include proofs to make the work self contained. We refer to e.g.,~\cite{howie1995fundamentals,colcombet2011green} for a more structured introductions.

For the remainder of this section, fix a finite semigroup $S$. We denote by $S^1$ the monoid obtained by adding a unit element to $S$, if there is not one already.
\subsection{Green's Relations}
\label{sec:Green relations}
A standard approach to studying a semigroup is by characterizing its \emph{Green's relations}, which we now recap.
\begin{definition}[Green's Relations]
    \label{def:green relations}
    Consider two elements $u,v\in S$.
    \emph{Green's pre-order relations} are the following: 
    \begin{itemize}
        \item $u\leq_\mj v$ if $u\in S^1vS^1$. 
        \item $u\leq_\ml v$ if $u\in S^1v$. 
        \item $u\leq_\mr v$ if $u\in vS^1$.
        \item $u\leq_\mh v$ if $u\leq_\ml v$ and $u\leq_\mr v$.
    \end{itemize}
    Every Green relation $G\in \{\mj,\ml,\mr,\mh\}$ induces an equivalence relation $\sim_G$ by $u\sim_G v\iff u\leq_Gv\wedge v\leq_G u$, and we denote by $[u]_G$ the equivalence class of $u$, dubbed the \emph{$G$-class of $u$}.
    We abbreviate and write $u\mj v$ for $u\sim_\mj v$ (and similarly for $\ml,\mr,\mh$).

    When the semigroup $S$ is not clear from context (in particular when several semigroups are involved) we write e.g., $\mj_S$ to emphasize which semigroup the relation is over.
\end{definition}

The results we present are corollaries and formulations of \emph{Green's lemma, the location lemma and the egg-box lemma}~\cite{colcombet2011green,howie1995fundamentals,clifford1961algebraic}. 
We bring the proofs for completeness, and since we tailor the formulations for our needs.
\begin{lemma}
\label{lem:J equivalence to one sided equivalence}
 For every $u,v\in S$ the following hold.
    \begin{enumerate}
        \item $u\mj uv \implies u \mr uv$ (i.e. if $S^1uS^1=S^1uvS^1$ then $uS^1=uvS^1$).
        \item $v \mj uv\implies uv \ml v$ (i.e. if $S^1uS^1=S^1vuS^1$ then $S^1u=S^1vu$).
    \end{enumerate}
\end{lemma}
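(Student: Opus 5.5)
The plan is the standard finite-semigroup argument, based on the fact that $\le_\mj$ cannot strictly decrease and then return to the same $\mj$-class along a chain that keeps multiplying on one side. We prove item 1; item 2 is its left–right dual, obtained by running the same argument in the opposite semigroup $S^{op}$ (where $\mr$ and $\ml$ swap and $\mj$ is unchanged).

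For item 1, assume $u\mj uv$. Trivially $uv\in uS^1$, so $uv\le_\mr u$. It remains to show $u\le_\mr uv$, i.e.\ $u\in uvS^1$. From $u\le_\mj uv$ we obtain $x,y\in S^1$ with $u=x\,uv\,y$. Iterating gives $u=x^n u (vy)^n$ for every $n\ge 1$. Since $S^1$ is finite, Lemma~\ref{lem:idm power} (applied in $S^1$) provides an $n\ge1$ such that $e=(vy)^n$ is idempotent. Then $u=x^n u e = x^n u e e = u e$. Consequently $u=u(vy)^n = uv\cdot y(vy)^{n-1}\in uvS^1$, which is exactly what is needed. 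A small detail to handle: if $n=1$, the factor $y(vy)^{0}$ should be read as $y\in S^1$, so it is still a legitimate element of $S^1$.

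For item 2, assume $v\mj uv$. Write $v=x\,uv\,y$ and iterate on the left to get $v=(xu)^n v y^n$. Choose $n$ so that $f=(xu)^n$ is idempotent; the same computation as above gives $v=fv$, hence $v=(xu)^{n-1}x\cdot uv\in S^1uv$. Together with the trivial $uv\le_\ml v$, this yields $uv\ml v$.

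There is no real obstacle here. The only point needing care is the bookkeeping of $S^1$ versus $S$. The idempotent-power lemma is applied in the finite monoid $S^1$, and all exhibited multipliers lie in $S^1$, which is what the definitions of $\le_\mr$ and $\le_\ml$ require.
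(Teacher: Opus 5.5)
Your proof is correct and follows essentially the same route as the paper: write $u = x\,uv\,y$, iterate to $u = x^n u (vy)^n$, and choose $n$ with $(vy)^n$ idempotent to get $u = u(vy)^n \in uvS^1$. The paper only says item 2 is analogous, and your explicit left-sided computation $v = (xu)^n v y^n$ is exactly that dual argument.
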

\begin{proof}
    We prove the first implication, the second is analogous.
    Assume $S^1uS^1=S^1uvS^1$, then there exist $x,y\in S^1$ such that $x(uv)y=u$. 
    By \cref{lem:idm power} there exists an idempotent power $n\in \bbN$ such that $(vy)^n=(vy)^{2n}$.
    We now replace $u$ in the expression $x(uv)y$ by $x(uv)y$, and repeat this $n$ times. That is, we have
    \[u=xuvy=x(xuvy)vy=\ldots =x^nu(vy)^n=x^nu(vy)^{2n}=x^nu(vy)^{n}(vy)^n=u(vy)^n\]
    We therefore get that $u=u(vy)^n=uv(y(vy)^{n-1})\in uv S^1$, so $uS^1\subseteq uvS^1$ (indeed, we now have $u=uvw$ for some $w\in S^1$, so for every $uz\in uS^1$ we have $uz=uvwz\in uvS^1$).

    The direction $uvS^1\subseteq uS^1$ is trivial, giving us $uS^1=uvS^1$, as required.
\end{proof}

\begin{lemma}
\label{lem:H class is a group}
    Every $\mh$-class that is closed under multiplication is a group.
\end{lemma}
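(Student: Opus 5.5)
Let $H\subseteq S$ be an $\mh$-class closed under multiplication. I would prove that $H$ is a group by showing that it contains an idempotent $e$, that $e$ is a two-sided identity on $H$, and that every element of $H$ has an inverse in $H$ relative to $e$. The main tool is \cref{lem:J equivalence to one sided equivalence}, together with the standard fact that right (\resp left) multiplication by a suitable element moves elements within an $\ml$-class (\resp $\mr$-class) bijectively (Green's lemma). I expect to reprove this fact briefly.

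\textbf{Step 1: an idempotent and its identity property.} Fix $h\in H$. Since $H$ is closed under multiplication, every power $h^i$ lies in $H$. By \cref{lem:idm power}, some power $e=h^n$ is idempotent, so $e\in H$. Now take any $u\in H$. Since $u\mh e$, we have $u\le_\mr e$, so $u=ex$ for some $x\in S^1$, and hence $eu=eex=ex=u$. Symmetrically, $u\le_\ml e$ gives $u=ye$ for some $y\in S^1$, hence $ue=u$. Thus $e$ is a two-sided identity for $H$, and $H$ is a monoid with identity $e$.

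\textbf{Step 2: inverses.} Let $u\in H$. From $e\mr u$ we get $e=uy$ for some $y\in S^1$. If $y=\unitel$, then $u=e$ and we are done, so assume $y\in S$. Set $y'=eye$. Then $uy'=ueye=uye=ee=e$, using $ue=u$ and $uy=e$.

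It remains to show $y'\in H$.
\begin{itemize}
\item \emph{Right class.} We have $y'\le_\mr e$ since $y'=e(ye)$. Conversely, $e=ee=u'uy'$ for a suitable $u'\in S^1$: since $e\le_\ml u$, write $e=u'u$, and then $e=ee=u'u\,uy'$. Hence $e\le_\mj y'$, and together with $y'\le_\mj e$ this gives $y'\mj e$. Since $y'=e(ye)$ and $y'\mj e$, \cref{lem:J equivalence to one sided equivalence}(1) gives $e\mr y'$.
\item \emph{Left class.} Dually, $y'=(ey)e$, so \cref{lem:J equivalence to one sided equivalence}(2) gives $y'\ml e$.
\end{itemize}
Therefore $y'\mh e$, so $y'\in H$, and $y'$ is a right inverse of $u$ in $H$.

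\textbf{Step 3: conclusion.} Every element of the monoid $H$ has a right inverse. By the standard argument, a monoid in which every element has a right inverse is a group: if $uv=e$ and $vw=e$, then $u=ue=uvw=ew=w$, so $vu=e$ as well.

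The main obstacle is the membership argument in Step 2, namely showing $y'\in H$. It needs a careful use of \cref{lem:J equivalence to one sided equivalence} in both directions, and the case $y=\unitel$ must be handled separately, as done above.
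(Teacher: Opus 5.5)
Your proof is correct, but it handles inverses differently from the paper. The paper first proves that the idempotent of $H$ is unique. It then shows, as in your Step~1, that this idempotent is the identity of $H$. Inverses then come directly from \cref{lem:idm power}: for $x\in H$ some power $x^i$ with $i>1$ is idempotent and lies in $H$ by closure, hence equals $e$, so $x^{i-1}\in H$ is a two-sided inverse.

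You skip uniqueness. Instead you build an explicit right inverse $y'=eye$ from the witness $e=uy$ of $e\le_\mr u$. You place $y'$ in $H$ via two applications of \cref{lem:J equivalence to one sided equivalence}, and close with the right-inverse-monoid argument. This is closer in spirit to the classical Green's-lemma proof. However, it still relies on finiteness, both for the idempotent in Step~1 and inside \cref{lem:J equivalence to one sided equivalence}. So it gains no generality and is longer than the paper's power argument; it only saves the uniqueness-of-idempotent step.

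Two minor simplifications are available. First, in the $\mr$ part, $e=uy'$ already gives $e\le_\ml y'$ and hence $e\le_\mj y'$, so the detour through $u'$ is unnecessary. Second, the case $y=\unitel$ need not be split off. Since $e\in S$, both $ye$ and $ey$ lie in $S$ for every $y\in S^1$, and that is all \cref{lem:J equivalence to one sided equivalence} requires.
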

\begin{proof}
    Consider an $\mh$-class $H$ of $S$ that is closed under multiplication, i.e., $H$ is a semigroup. 
    By \cref{lem:idm power} there is an idempotent element $e\in H$. 
    We start by claiming that it is unique. Indeed, assume $e_1,e_2\in H$ are idempotent. By the definition of $\mh$ equivalence, we have $e_2\leq_\mr e_1$ and $e_1\leq_\ml e_2$. 
    Thus, there exist $r_1,l_2\in S^1$ such that $e_2=e_1r_1$ and $e_1=l_2e_2$. We then have
    \[
    e_1=l_2e_2=l_2e_2e_2=e_1e_2=e_1e_1r_1=e_1r_1=e_2
    \]
    We can therefore denote this unique idempotent by $e$.

    We now claim that $e$ is the unit of $H$. Consider some $x\in H$, then $e\mh x$, so there are $y,z$ such that $ey=x=ze$. It follows that
    \[ex=eey=ey=x=ze=zee=xe\]

    Finally, we show that every element $x\in H$ has an inverse. Since every element has an idempotent power (\cref{lem:idm power}), there exists $i>1$ such that $x^i=e$. Denote $x^{-1}=x^{i-1}$, then $x^{-1}x=xx^{-1}=x^i=e$.

    We conclude that $H$ is a group.
\end{proof}

A common practice when reasoning about semigroups is to look at an ``atomic'' kind of semigroups~\cite{howie1995fundamentals}, as follows.
\begin{definition}[Simple, \zsimple and null semigroups]
    \label{def:simple zsimple null semigroup}
    A semigroup $S$ is:
    \begin{itemize}
        \item \emph{null} if there is a zero element $\zeroel\in S$ such that for every $a,b\in S$ we have $ab=\zeroel$.
        \item \emph{simple}\footnote{Despite the same terminology, this notion is different from simple \emph{groups}.} if it is a single $\mj$-class (i.e., $u\mj v$ for all $u,v\in S$).
        \item \zsimple if it is not null\footnote{Note that the non-null requirement merely excludes the degenerate case of the semigroup $\{\zeroel,a\}$ with $a^2=\zeroel$.}, and it has a zero element $\zeroel\in S$ such that $S\setminus \{\zeroel\}$ is a single $\mj$-class (note that $\{\zeroel\}$ is also a $\mj$-class).
    \end{itemize}    
\end{definition}

The following result conceptually states that in a simple or \zsimple semigroup $S$, the $\mh$ class of $s\cdot t$ is characterized by the $\mr$-class of $s$ and the $\ml$-class of $t$, for every $s,t\in S$.
\begin{lemma}
\label{lem:green lemma}
    Consider simple or \zsimple semigroup $S$ and elements $s,t\in S$.
    \begin{enumerate}
        \item If $st=\zeroel$ then $s't'=\zeroel$ for every $s',t'$ such that $s'\ml s$ and $t'\mr t$.
        \item If $st \neq \zeroel$ then $s\mr st$ and $st \ml t$.
    \end{enumerate}
\end{lemma}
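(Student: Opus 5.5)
We handle the two items separately. Both rest on \cref{lem:J equivalence to one sided equivalence} together with finiteness: in a finite semigroup, $s\mr st$ follows from $s\mj st$, and dually for $\ml$.

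We start with Item~2, since Item~1 will use it. Assume $st\neq\zeroel$. Then $st$ lies in the unique nonzero $\mj$-class; in the simple case this is all of $S$. In both cases $s$ and $t$ are also nonzero, because a zero factor would make the product zero. So $s\mj st\mj t$. Applying Item~1 of \cref{lem:J equivalence to one sided equivalence} with $u=s$ and $v=t$ gives $s\mr st$. Applying Item~2 of that lemma with $u=s$ and $v=t$ gives $st\ml t$. (That item is stated slightly garbled, but it says $v\mj uv\Rightarrow uv\ml v$.)

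For Item~1 we argue by contraposition. Assume $st=\zeroel$, and take $s'\ml s$ and $t'\mr t$. Suppose for contradiction that $s't'\neq\zeroel$. Item~1 is vacuous in a simple semigroup without zero, so we may take $S$ to be \zsimple.

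Write $s=as'$ and $t=t'b$ with $a,b\in S^1$, which is possible since $s\le_\ml s'$ and $t\le_\mr t'$. Then $st=a(s't')b$, so $\zeroel=a(s't')b$. This alone gives no contradiction, so we also use the reverse direction: $s'=cs$ and $t'=td$ with $c,d\in S^1$.

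By Item~2 applied to $s't'\neq\zeroel$, we have $s'\mr s't'$ and $s't'\ml t'$. The key step is to show $st\neq\zeroel$ via the classical Green's-lemma/location argument: $s't'\in R_{s'}\cap L_{t'}$ if and only if $L_{s'}\cap R_{t'}$ contains an idempotent. Since $L_{s'}=L_s$ and $R_{t'}=R_t$, the same idempotent criterion then gives $st\in R_s\cap L_t$, and in particular $st\neq\zeroel$, a contradiction.

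To prove the idempotent criterion directly, we proceed as follows. Since $s't'\mr s'$, there is $x\in S^1$ with $s't'x=s'$. Hence $t'x\,s'$ is nonzero, because $s'\,t'xs'=s's'$… but this route is getting messy. It is cleaner to argue with $e=(t'x)^\omega$. From $s'(t'x)=s'$ we get $s'e=s'$, so $e\neq\zeroel$, and therefore $e\mj t'$. Next, $e\le_\mr t'$, and $e\mj t'$ upgrades this to $e\mr t'$ by \cref{lem:J equivalence to one sided equivalence} (in the form: $e\le_\mr t'$ and $e\mj t'$ imply $e\mr t'$). Since $e\mr t\ (=t')$, we can write $t=ey$, so $s't=s'ey=s't$; this route is circular. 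So we work with $s$ directly instead: $st=s\,e\,y$. Since $s=as'$, we get $se=as'e=as'=s$. Hence $st=sey=sy'$, where $t=ey'$, and this equals $s(ey')$.

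We still need $s\,t\neq\zeroel$. Note that $st=s e y'$ and $se=s$, so $st=sy'$. Choose $z$ with $ey'z=e$, which exists since $t\mr e$. Then $st\,z=s\,ey'z=se=s\neq\zeroel$. Therefore $st\neq\zeroel$, contradicting $st=\zeroel$. The main obstacle is exactly this idempotent manipulation, in particular keeping track of which side each $\le_\ml$ or $\le_\mr$ witness multiplies on.
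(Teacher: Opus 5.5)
Your Item~2 is the paper's argument: a nonzero product forces $s\mj st\mj t$, and \cref{lem:J equivalence to one sided equivalence} then gives $s\mr st$ and $st\ml t$.

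For Item~1 you take a genuinely different route, and it is correct once the abandoned attempts are removed. The surviving chain is as follows.
\begin{itemize}
\item From $s't'\neq\zeroel$ you get $x$ with $s't'x=s'$, and you set $e=(t'x)^{\omega}$.
\item Then $s'e=s'$, so $e\neq\zeroel$. Hence $e\mj t'$, which upgrades to $e\mr t'$, and so $e\le_\mr t$.
\item Writing $s=as'$ gives $se=s$.
\item Choosing $z$ with $tz=e$ gives $stz=se=s$.
\end{itemize}
Two small repairs. First, $s\neq\zeroel$ needs a word of justification: $s'=cs$ and $s'\neq\zeroel$. Second, the detour through $y'$ is unnecessary, since $stz=s(tz)=se$ directly.

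Compare this with the paper's proof. You yourself wrote down the witnesses $s'=cs$ and $t'=td$ with $c,d\in S^1$, and they finish Item~1 in one line:
\[
s't'=c(st)d=c\,\zeroel\,d=\zeroel .
\]
That is exactly the paper's proof. It uses only $s'\le_\ml s$ and $t'\le_\mr t$, and it holds in any semigroup with a zero. It needs no finiteness, no simplicity, no appeal to Item~2, and no location (Clifford--Miller) theorem.

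Your argument is essentially a special case of that location theorem. It relies on the idempotent power, on simplicity (to get $e\mj t'$), and on Item~2. It buys nothing extra for this statement, so the direct computation is the one to write down. The exploratory branches (``getting messy'', ``this route is circular'') should not appear in a final proof.
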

\begin{proof}
    \begin{enumerate}
        \item By \cref{def:green relations} there are $l_s,r_t\in S$ such that $s'=l_s\cdot s$ and $t'=t\cdot r_t$. Therefore,
        \[s't' = (l_ss)(tr_t)=l_s\zeroel r_t=\zeroel\]
        \item Since $st\neq \zeroel$, then by simplicity we have $s\mj st \mj t$. By \cref{lem:J equivalence to one sided equivalence} we obtain
        $s\mr st$ and $st \ml t$.
    \end{enumerate}
\end{proof}
Intuitively, \cref{lem:green lemma} states that if $st=\zeroel$, then $[st]_\mh=\zeroel$. Otherwise, we have
\[[st]_\mh=\{v\mid st\mr v \wedge st\ml v\}=\{v\mid s\mr v \wedge t\ml v\}=[s]_{\mr}\cap [t]_\ml\]

\subsection{Congruences and Ideals}
\label{sec:congruences and ideals}
A \emph{congruence} on $S$ is an \emph{equivalence} relation $\cong\subseteq S\times S$ that preserves the semigroup action. 
That is, if $u_1\cong u_2$ and $v_1\cong v_2$ then $u_1\cdot v_1 \cong u_2\cdot v_2$.
For a congruence, the \emph{quotient} set $S/{\cong}$ is a semigroup, with the operation $[u]_{\cong}\cdot [v]_{\cong} = [u\cdot v]_{\cong}$. 
Then, the quotient map $\quo_{\cong}$ defined by 
$\quo_{\cong}(u)= [u]_{\cong}$ is a surjective semigroup morphism.

An \emph{ideal} of $S$ is a subset $I\subseteq S$ that satisfies 
$S^1IS^1=I$. 
Notice that every ideal $I\subseteq S$ is in particular a sub-semigroup of $S$.

We present several results and definitions pertaining to ideals and congruences, used throughout the proof.

\begin{definition}[$\zeroel$-minimal ideal]
\label{def:0min ideal}
An ideal $I$ is \emph{$\zeroel$-minimal} if for every ideal $I'$ with $I'\subsetneq I$ it holds that $I'=\{\zeroel\}$.
\end{definition}

\begin{lemma}
\label{lem:min ideal is simple}
Consider a semigroup $S$ and a $\zeroel$-minimal ideal $I\subseteq S$ such that $I^2=I$.
Then $I$ is a simple or \zsimple semigroup
(see \cite[3.1.3]{howie1995fundamentals}).
\end{lemma}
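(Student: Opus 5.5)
The plan is to show first that all nonzero elements of $I$ generate the same two-sided ideal of $S$. Then I transfer this $\mj$-equivalence from $S$ to $I$ itself, using $I^2=I$. Finally I split into the cases "$I$ contains no zero" and "$I$ contains the zero of $S$" to land in the simple or \zsimple alternative of \cref{def:simple zsimple null semigroup}.

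\textbf{Step 1 ($\mj$-equivalence in $S$).} Fix $a\in I$, with $a\neq\zeroel$ in case $S$ has a zero. The set $S^1aS^1$ is an ideal of $S$, since $S^1(S^1aS^1)S^1=S^1aS^1$. It is contained in $I$, because $I$ is an ideal. It contains $a$, so it is not $\{\zeroel\}$. By $\zeroel$-minimality (\cref{def:0min ideal}) we get $S^1aS^1=I$. Hence any two such elements $a,b\in I$ satisfy $a\mj_S b$.

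\textbf{Step 2 (transfer to $I$), which I expect to be the only real point.} The issue is that \cref{def:simple zsimple null semigroup} asks for $\mj$-equivalence inside $I$, where multipliers must come from $I^1$ rather than $S^1$. Iterating $I^2=I$ gives $I=I\cdot I\cdot I$. Substituting $I=S^1bS^1$ from Step 1 into the middle factor gives
\[
I=I\,(S^1bS^1)\,I\subseteq (IS^1)\,b\,(S^1I)\subseteq I\,b\,I,
\]
where the last inclusion uses $IS^1\subseteq I$ and $S^1I\subseteq I$. Thus every $a\in I$ lies in $IbI\subseteq I^1bI^1$, so $a\le_{\mj_I} b$. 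By symmetry, all such elements form a single $\mj$-class of $I$.

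\textbf{Step 3 (case split).} If $I$ contains no zero of $S$, then Step 1 applies to every element of $I$. Hence $I$ is a single $\mj_I$-class, i.e.\ simple. The same conclusion holds trivially if $I=\{\zeroel\}$.

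Otherwise $\zeroel\in I$ and $I\neq\{\zeroel\}$. Then $\zeroel$ is also a zero of $I$, and $\{\zeroel\}$ is its own $\mj_I$-class. By Step 2, $I\setminus\{\zeroel\}$ is a single $\mj_I$-class. It remains to exclude the null case. If $I$ were null we would have $I^2=\{\zeroel\}\neq I$, contradicting the hypothesis $I^2=I$. Hence $I$ is \zsimple.
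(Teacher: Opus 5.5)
Your proposal is correct and follows essentially the same route as the paper. You first get $S^1aS^1=I$ for every nonzero $a\in I$ by $\zeroel$-minimality, then use $I=I^3=I(S^1bS^1)I\subseteq IbI$ to transfer $\mj$-equivalence from $S$ to $I$. Your explicit final case split, including the use of $I^2=I$ to rule out the null case in the definition of $\zeroel$-simple, is slightly more careful than the paper, which leaves that step implicit.
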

\begin{proof}
    We show that all nonzero elements in $I$ are $\mj_I$ equivalent.
    Let $a\in I\setminus \{\zeroel\}$ (if $S$ has a $\zeroel$).
    Notice that $S^1aS^1$ is a non-zero ideal of $S$ and $S^1aS^1\subseteq I$ (since $a\in I$ and $I$ is an ideal).
    By the minimality of $I$ we have $S^1aS^1=I$. In addition, since $I^2=I$ we also have $I^3=I$, and note that $IS^1=S^1I=I$ since $I$ is an ideal.
    We therefore have:
    \[I^1aI^1\subseteq S^1aS^1= I = I^3 =I(S^1aS^1)I=(IS^1)a(S^1I)\subseteq I^1aI^1\]
    and therefore $IaI=I$. Since this holds for any non-zero $a\in I$, we conclude that for any non-zero $a,b\in I$ we have $I^1aI^1=I=I^1bI^1$, so $a\mj_I b$, meaning $I$ is simple or \zsimple.
\end{proof}

\begin{lemma}
\label{lem:0I}
    Consider a semigroup $S$ and an ideal $I\subseteq S$.
    Then $S$ has a zero $\zeroel_S$ if and only if $I$ has a zero $\zeroel_I$. If these exist, then $\zeroel_I=\zeroel_S$.
\end{lemma}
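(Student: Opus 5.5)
We prove the two directions separately, using only that $I$ is an ideal, i.e., $S^1IS^1=I$, so that $sa,as\in I$ whenever $s\in S$ and $a\in I$.

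First suppose $S$ has a zero $\zeroel_S$. Since $I$ is nonempty, pick any $a\in I$. Then $\zeroel_S=\zeroel_S\cdot a\in SI\subseteq I$, so $\zeroel_S\in I$. For every $b\in I\subseteq S$ we have $\zeroel_S b=b\zeroel_S=\zeroel_S$, because $\zeroel_S$ is a zero of $S$. Hence $\zeroel_S$ is a zero of $I$. Since a semigroup has at most one zero (as noted in \cref{sec:prelim}), we get $\zeroel_I=\zeroel_S$.

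Conversely, suppose $I$ has a zero $\zeroel_I$. We claim that $\zeroel_I$ is a zero of $S$. Let $s\in S$. Since $I$ is an ideal, $s\zeroel_I\in I$. Using that $\zeroel_I$ is idempotent (being a zero of $I$), we compute
\[
s\zeroel_I=s(\zeroel_I\zeroel_I)=(s\zeroel_I)\zeroel_I=\zeroel_I,
\]
where the last equality holds because $s\zeroel_I\in I$ and $\zeroel_I$ absorbs elements of $I$. Symmetrically, $\zeroel_I s=\zeroel_I(\zeroel_I s)=\zeroel_I$, since $\zeroel_I s\in I$. Thus $\zeroel_I$ is a zero of $S$, and by uniqueness of zeros $\zeroel_S=\zeroel_I$.

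The one point needing care is this converse direction. One cannot argue directly that $s\zeroel_I=\zeroel_I$ from $\zeroel_I$ being a zero of $I$, since $s$ need not lie in $I$. The fix is to first write $\zeroel_I=\zeroel_I\zeroel_I$, use associativity to regroup, and then invoke the ideal property so that the product $s\zeroel_I$ falls inside $I$, where $\zeroel_I$ does absorb it.
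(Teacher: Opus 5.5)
Your proof is correct and follows the same route as the paper: first show $\zeroel_S\in I$ via the ideal property and observe that it absorbs $I$, then show conversely that $\zeroel_I$ absorbs all of $S$ because $S^1IS^1=I$. Your regrouping $s\zeroel_I=(s\zeroel_I)\zeroel_I$ makes explicit a step the paper only asserts. The paper writes $a\zeroel_I=\zeroel_I$ directly from $a\zeroel_I\in I$ and $\zeroel_I$ being a zero of $I$, so your version is slightly more careful there.
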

\begin{proof}
Assume $S$ has a zero $\zeroel_S$, then in particular $\zeroel_S\in S^1IS^1=I$. Then, for every $a\in I$ we have $a\zeroel_S=\zeroel_Sa=\zeroel_S$, so $\zeroel_S$ is also a zero of $I$.

Conversely, assume $I$ has a zero $\zeroel_I$, and let some $a\in S$.
Since $I$ is an ideal, we have $a\zeroel_I,\zeroel_Ia\in I$. Since $\zeroel_I$ is a zero of $I$ we have $a\zeroel_I= \zeroel_Ia=\zeroel_I$. But this holds for every $a\in S$, so  $\zeroel_I$ is also a zero of $S$.
\end{proof}

An ideal $I\subseteq S$ naturally induces an equivalence relation $\sim_I$ by collapsing $I$ to a single element, as follows.
\begin{definition}
    \label{def:rees}
    Consider an ideal $I\subseteq S$, the \emph{Rees Congruence} induced by $I$ is $\sim_I$ where for every $u,v\in S$ we have $u\sim_I v$ if $u=v$ or $u,v\in I$.

    In particular, $S/\sim_I=\{I\}\cup \{\{v\}\mid v\in S\setminus I\}$ (see \cref{rmk:identify singletons with elements Rees}).
\end{definition}
The following lemma justifies calling this relation a congruence.
\begin{lemma}[Rees Congruence (\cite{Rees_1940,howie1995fundamentals})]
\label{lem:rees cong}
    Consider an ideal $I\subseteq S$, then $\sim_I$ is a congruence. 
\end{lemma}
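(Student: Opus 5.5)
To show that $\sim_I$ is a congruence, I will check the two defining requirements in turn: first that it is an equivalence relation, and then that it is compatible with the multiplication of $S$. The only fact about $I$ that is needed is the ideal property $S^1IS^1=I$, which implies $sI\subseteq I$ and $Is\subseteq I$ for every $s\in S$.

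\emph{Equivalence.} Reflexivity holds because $u=u$. Symmetry holds because both clauses ``$u=v$'' and ``$u,v\in I$'' are symmetric in $u$ and $v$. For transitivity, suppose $u\sim_I v$ and $v\sim_I w$. If either relation comes from equality, then $u\sim_I w$ follows directly from the other relation. Otherwise $u,v\in I$ and $v,w\in I$, so $u,w\in I$, which again gives $u\sim_I w$.

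\emph{Compatibility.} Suppose $u_1\sim_I u_2$ and $v_1\sim_I v_2$; we want $u_1v_1\sim_I u_2v_2$. I will first prove the one-sided versions and then combine them. For right multiplication, let $u_1\sim_I u_2$ and $s\in S$. If $u_1=u_2$, then $u_1s=u_2s$. Otherwise $u_1,u_2\in I$, and so $u_1s,u_2s\in IS^1\subseteq I$. In both cases $u_1s\sim_I u_2s$. The left version, $su_1\sim_I su_2$, follows in the same way from $S^1I\subseteq I$. Combining these with transitivity gives
\[u_1v_1\sim_I u_2v_1\sim_I u_2v_2.\]

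There is no real obstacle in this argument. The only point needing care is that compatibility must be reduced to the one-sided statements, and this reduction relies on the transitivity already established, so the equivalence part has to be done first. It also follows that $S/{\sim_I}$ is a semigroup in which $I$ is collapsed to a single class.
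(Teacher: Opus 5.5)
Your proof is correct and takes essentially the same approach as the paper's: both check that $\sim_I$ is an equivalence (the paper simply notes it is induced by the partition $\{\{v\}:v\notin I\}\cup\{I\}$), and both get compatibility from $S^1IS^1\subseteq I$. The only difference is organization: the paper splits directly into cases by whether $u_1,v_1$ lie in $I$, while you prove one-sided compatibility and chain the two steps with transitivity, which removes the need for a separate mixed case.
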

\begin{proof}
Clearly $\sim_I$ is an equivalence relation, as it is induced by the partition $\{\{v\}:v\notin I\}\cup \{I\}$ of $S$.
We turn to show that it is a congruence. Let $u_1\sim_I u_2$ and $v_1\sim_I v_2$.
We split to cases.
\begin{itemize}
    \item If $u_1,v_1\notin I$ then $u_2=u_1$ and $v_2=v_1$, so $u_1\cdot v_1=u_2\cdot v_2$ and in particular $u_1\cdot v_1\sim_I u_2\cdot v_2$.
    \item If $u_1,v_1 \in I$ then $u_2,v_2 \in I$. Since $I$ is an ideal, we get that $u_1\cdot v_1,u_2\cdot v_2\in I$, so $u_1\cdot v_1\sim_I u_2\cdot v_2$.
    \item Otherwise, without loss of generality $u_1\in I,v_1\notin I$ then $u_2\in I$ and $v_2=v_1$. Again since $I$ is an ideal we get that $u_1\cdot v_1,u_2\cdot v_2 \in I$ and $u_1\cdot v_1\sim_I u_2\cdot v_2$.
\end{itemize}
\end{proof}

\begin{remark}
\label{rmk:identify singletons with elements Rees}
When considering the quotient map $\quo_{\sim I}$, we often identify the quotient set with $\{I\}\cup S\setminus I$. That is, we identify singletons $\{v\}$ with $v$ for convenience.
We abbreviate $S/\sim_I$ to $S/I$ and $\quo_{\sim I}$ to $\quo_I$.
\end{remark}

\begin{lemma}
\label{lem:rees J}
    Consider an ideal $I\subseteq S$ and $a,b\in S\setminus I$, then (as per \cref{rmk:identify singletons with elements Rees})
    $a\mj_Sb \iff a\mj_{S/I}b$.
\end{lemma}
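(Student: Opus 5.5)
We prove the two implications separately, working directly from \cref{def:green relations}. Throughout, recall that $S/I$ is identified with $\{I\}\cup (S\setminus I)$ as in \cref{rmk:identify singletons with elements Rees}. Write $\ast$ for multiplication in $S/I$. Then for $x,y\in S$ we have $x\ast y=xy$ when $xy\notin I$, and $x\ast y=I$ otherwise, where $I$ is the zero of $S/I$.

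\textbf{Forward direction ($\Rightarrow$).} Assume $a\mj_S b$ with $a,b\notin I$. Then there are $x,y\in S^1$ with $a=xby$. Apply the quotient morphism $\quo_I$, extended to $S^1\to (S/I)^1$ by sending the adjoined unit to the adjoined unit. Since $\quo_I$ is a morphism, this gives $\quo_I(a)=\quo_I(x)\ast\quo_I(b)\ast\quo_I(y)$. Because $a,b\notin I$, we have $\quo_I(a)=a$ and $\quo_I(b)=b$. Hence $a\le_{\mj_{S/I}} b$, and the reverse inequality follows symmetrically.

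\textbf{Backward direction ($\Leftarrow$).} Assume $a=x\ast b\ast y$ for some $x,y\in (S/I)^1$. This is the step needing a little care. First, neither $x$ nor $y$ can equal $I$: otherwise the product would be the zero $I$, but $a\ne I$. So $x,y$ are (identified with) elements of $(S\setminus I)\cup\{\unitel\}\subseteq S^1$.

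Next, compute the product in two stages. If $xb\in I$, then $x\ast b=I$, and the whole product equals $I\ne a$, which is impossible. So $xb\notin I$ and $x\ast b=xb$. By the same argument, $(xb)y\notin I$ and $(xb)\ast y=xby$. Therefore $a=xby$ holds in $S$, that is, $a\le_{\mj_S} b$. Symmetrically $b\le_{\mj_S} a$, so $a\mj_S b$.

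\textbf{Main point.} The only genuine obstacle is ruling out a collapse to $I$ at an intermediate stage of the product. This is handled by $I$ being absorbing in $S/I$ together with $a\notin I$. A minor bookkeeping point is the treatment of the unit: if $S$ already has a unit lying outside $I$, that element is used directly; in all other cases the adjoined unit of $S^1$ and of $(S/I)^1$ is used.
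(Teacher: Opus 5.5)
Your proof is correct and follows the same route as the paper. In each direction you transport the witnessing equation $a=xby$ between $S$ and $S/I$, using that $I$ is the absorbing zero of $S/I$ and that $a\notin I$; your check that the intermediate product $xb$ cannot fall into $I$, and your use of $\quo_I$ in the forward direction, just make explicit what the paper states briefly.

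One small bookkeeping correction: $S/I$ can have an identity even when $S$ does not, so extend $\quo_I$ by sending the unit of $S^1$ to the identity of $(S/I)^1$, whichever element that is, rather than to an adjoined unit. Nothing else in your argument changes.
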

\begin{proof}
    We show that $a\leq_{\mj_S}b \iff a\leq_{\mj_{S/I}}b$ and obtain the result by symmetry.
    \begin{itemize}
        \item If $a\leq_{\mj_S}b$, then there exist $x,y\in S$ such that $a=xby$ (as elements of $S$).
        Since $I$ is an ideal and $a\notin I$ we have that $x,y\notin I$. Thus, we still have $a=xby$ as elements of $S/I$, so $a\leq_{\mj_{S/I}}b$.
        \item If $a\leq_{\mj_{S/I}} b$: There exists $x,y\in S/I$ such that $a=xby\in S/I$.
        Because $a\notin I$ (and in particular $a\neq \zeroel_{S/I}$) we get that $x,y\neq \zeroel_{S/I}$.
        Thus we have $x,y\in S\setminus I$, and we have $a=xby$ as elements in $S$, so  $a\leq_{\mj_S}b$.
    \end{itemize}
\end{proof}

\section{Decomposition Theorems}
\label{sec:decomposition theorems}
This section focuses on constructing full general decompositions for certain building-block cases, that are used for the general setting in \cref{sec:main theorem bound}. 
Specifically, in \cref{sec:unambiguity} we define our central restriction on morphisms, namely \emph{$\mr$-alignment}. 
We then turn to establish concrete bounds on decomposition depth for the cases of groups (\cref{sec:group decomposition}), simple and \zsimple semigroups (\cref{sec:simple semigroups}, which is our most elaborate result), and null groups (\cref{sec:null semigroups}).

\subsection{$\mr$-Alignment}
\label{sec:unambiguity}
We start by presenting the technical restriction on morphisms that enables our main result. We follow the precise definition with some discussion on its intuitive meaning.
\begin{definition}[$\mr$-aligned Forests and Morphisms]
\label{def:unamb property}
Consider a semigroup $S$, a stable context set $\wHole{V}$ and a morphism $\varphi\colon\wHole{V}\rightarrow S$.
A forest $f\in \tilH$ is \emph{$\mr$-aligned} over $\varphi$ if
for every $f_1,f_2\in\tilH$ and $C_1\neq C_2\in \wHole{V}$ such that $f=C_1(f_1)=C_2(f_2)$ and $\varphi(C_1)\mj \varphi(C_2)$ there exist $r\in S$ such that $r\mj \varphi(C_1)$ and $\varphi(C_2)=\varphi(C_1)\cdot r$. 

A morphism $\varphi$ is said to be \emph{$\mr$-aligned} if all forests $f\in \tilH$ are $\mr$-aligned.
\end{definition}
Note that since the requirement in the definition holds for every $C_1,C_2$, it is in fact symmetric: there is also $r'\mj \varphi(C_1)$ such that $\varphi(C_1)=\varphi(C_2)\cdot r'$. 

Intuitively, the definition states the following: suppose we decompose $f$ in two different ways (by plucking out different sub-forests), such that the remaining contexts are $\mj$-equivalent, then not only are these contexts also $\mr$-equivalent, but they are $\mr$-equivalent using multiplicands that are within the same $\mj$-class (hence the term $\mr$-aligned). 

The semigroup-savvy reader may treat this condition as $\varphi(C_1)\mj_S \varphi(C_2)\implies \varphi(C_1)\mr_{S_J}\varphi(C_2)$ where $S_J$ is the \emph{principal factor}~\cite{howie1995fundamentals} of $J$.

The applications of $\mr$-alignment become clear in \cref{sec:simple semigroups}. However, in order to apply it we need to ensure that it survives through some manipulations of morphisms.
The first such result shows that composition with a quotient morphism retains $\mr$-alignment.
\begin{lemma}
\label{lem:rees unambiguous}
    Consider an $\mr$-aligned morphism $\varphi\colon\wHole{V}\rightarrow S$ and an ideal $I\subseteq S$ with the quotient morphism $\quo_I\colon S\rightarrow S/I$.
    Then $\quo_I\circ \varphi\colon\wHole{V}\rightarrow S/I$ is $\mr$-aligned.
\end{lemma}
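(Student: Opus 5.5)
We must show that $\psi\coloneqq\quo_I\circ\varphi$ is $\mr$-aligned. Fix $f=C_1(f_1)=C_2(f_2)$ with $C_1\neq C_2\in\wHole{V}$ and $\psi(C_1)\mj_{S/I}\psi(C_2)$. We need some $r'\in S/I$ with $r'\mj_{S/I}\psi(C_1)$ and $\psi(C_2)=\psi(C_1)\cdot r'$. Write $a=\varphi(C_1)$ and $b=\varphi(C_2)$. The plan is a case split according to whether the images land in the ideal $I$, which collapses to the zero $\zeroel_{S/I}=I$ of the quotient.

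\emph{Case 1: $\psi(C_1)=I$.} Then $\psi(C_2)\mj_{S/I}\zeroel_{S/I}$. The $\mj$-class of a zero is a singleton, since $x\mj\zeroel$ forces $x=u\zeroel v=\zeroel$. Hence $\psi(C_2)=\zeroel_{S/I}$ as well. Take $r'=\zeroel_{S/I}$. Then $r'\mj\psi(C_1)$ trivially, and $\psi(C_1)\cdot r'=\zeroel=\psi(C_2)$. By the same argument, the case $\psi(C_2)=I$ forces $\psi(C_1)=I$, so it is covered here too.

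\emph{Case 2: $a,b\notin I$.} Here the quotient is faithful on $S\setminus I$, and \cref{lem:rees J} gives $a\mj_S b$. Since $\varphi$ is $\mr$-aligned and $f$ is $\mr$-aligned over $\varphi$, there is $r\in S$ with $r\mj_S a$ and $b=a\cdot r$. We have $r\notin I$: otherwise $a\in S^1rS^1\subseteq I$, because $I$ is an ideal. Set $r'=\quo_I(r)=r$. Applying \cref{lem:rees J} again, now to $r,a\in S\setminus I$, gives $r\mj_{S/I}a=\psi(C_1)$. Finally, $\psi(C_2)=\quo_I(ar)=\quo_I(a)\quo_I(r)=\psi(C_1)\cdot r'$ because $\quo_I$ is a morphism.

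The only subtle point is showing that a $\mj_{S/I}$-equivalence between an element of $I$ and one outside $I$ cannot occur. This is exactly the singleton-zero-class observation in Case 1, so the argument should be short. One should also note that $\psi$ is a morphism on the same stable context set $\wHole{V}$, so \cref{def:unamb property} applies verbatim with the same pairs $C_1\neq C_2$.
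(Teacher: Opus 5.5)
Your proposal is correct and matches the paper's proof: it uses the same split into the zero case (via the singleton $\mj$-class of $\zeroel_{S/I}$) and the case $\varphi(C_1),\varphi(C_2)\notin I$, applying \cref{lem:rees J} twice around the $\mr$-alignment of $\varphi$. The only cosmetic difference is that you derive $r\notin I$ from $a\in S^1rS^1$, whereas the paper derives it from $b=ar\notin I$; both arguments are valid.
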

\begin{proof}
    Following \cref{def:unamb property}, let $f\in \tilH$ and $C_1,C_2\in \wHole{V}$ such that $f=C_1(f_1)=C_2(f_2)$ for some $f_1,f_2$. 
    Let $u_1=\quo_I(\varphi(C_1)),u_2=\quo_I(\varphi(C_2))$ be the respective images under $\quo_I\circ \varphi$, and assume that $u_1\mj_{S/I}u_2$. 
    We prove that the condition holds by cases.
    \begin{itemize}
        \item If $u_1=u_2=\zeroel_{S/I}$ (i.e., the zero element of $S/I$), take $r=\zeroel$ then clearly $r\mj_{S/I}$ and we have $u_2=u_1\cdot r=\zeroel_{S/I}$).
        \item Otherwise, $u_1,u_2\neq \zeroel_{S/I}$ (since $[\zeroel]_\mj=\{\zeroel\}$ for every semigroup). Recall by \cref{def:rees} that $\zeroel_{S/I}=I$, and in light of \cref{rmk:identify singletons with elements Rees} we can therefore (abusively\footnote{Strictly speaking, we have $u_1=\{v_1\}, u_2=\{v_2\}$ with $v_1,v_2\in S\setminus I$ and $v_1\mj_S v_2$.}) write $u_1,u_2\in S\setminus I$ and by \cref{lem:rees J} we have $u_1\mj_S u_2$.
        Since $\varphi$ is $\mr$-aligned, there exists $x\mj_S u_1$ such that $u_2=u_1x$.
        Since $I$ is an ideal it holds that $x\notin I$ (otherwise we would have $u_2\in I$). Thus, we can write $u_2=u_1x$ as elements of $S/I$. 
        Similarly, since $x\notin I$ then by \cref{lem:rees J} we have $x\mj_{S/I}u_1$.
    \end{itemize}
    We conclude that $\quo_I\circ \varphi\colon\wHole{V}\rightarrow S/I$ is $\mr$-aligned.
\end{proof}

The second result shows that restriction to the inverse image of a ($\zeroel$-minimal) ideal retains $\mr$-alignment.
\begin{lemma}
    \label{lem:min ideal unambiguous}
    Consider an $\mr$-aligned morphism $\varphi\colon\wHole{V}\rightarrow S$, and some $\zeroel$-minimal ideal $I\subseteq S$ (\cref{def:0min ideal}).
    Denote $\wHole{V'}\coloneqq \varphi^{-1}[I]$ and $\varphi'\coloneqq \varphi|_{\wHole{V'}}\colon\wHole{V'}\rightarrow I$, then $\varphi'$ is $\mr$-aligned.
\end{lemma}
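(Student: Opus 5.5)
We unfold \cref{def:unamb property} for $\varphi'$ and reduce it to the $\mr$-alignment of $\varphi$, with one correction step at the end. Fix $f\in\tilH$, contexts $C_1\neq C_2\in\wHole{V'}$ and forests $f_1,f_2$ with $f=C_1(f_1)=C_2(f_2)$, and suppose $\varphi'(C_1)\mj_I\varphi'(C_2)$. Write $u_i=\varphi(C_i)\in I$. The difficulty is that the $\mj$-relation now lives in $I$, while the witness we need must satisfy $r\in I$ and $r\mj_I u_1$.

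\textbf{Step 1: the zero case.} If $u_1=\zeroel$, then $u_2\mj_I\zeroel$ forces $u_2=\zeroel$. Taking $r=\zeroel$ works, and $\zeroel\in I$ by \cref{lem:0I}. So from now on assume $u_1,u_2\neq\zeroel$.

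\textbf{Step 2: pass to $S$.} The relation $\mj_I$ implies $\mj_S$, since $I^1\subseteq S^1$. Hence $u_1\mj_S u_2$. Because $C_1,C_2\in\wHole{V'}\subseteq\wHole{V}$ and $\varphi$ is $\mr$-aligned, there is $x\in S$ with $x\mj_S u_1$ and $u_2=u_1x$. Since $I$ is an ideal and $u_1\in I$, we have $S^1u_1S^1\subseteq I$, so $x\in I$ as well.

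\textbf{Step 3: upgrade $\mj_S$ to $\mj_I$.} It remains to show $x\mj_I u_1$. Here I would use $\zeroel$-minimality, roughly along the lines of \cref{lem:min ideal is simple}. For a nonzero $a\in I$, the set $S^1aS^1$ is a nonzero ideal contained in $I$, so it equals $I$. Thus $J:=[u_1]_{\mj_S}$ consists of all nonzero elements of $I$, possibly together with $\zeroel$ if $S$ has no zero. I then split on whether $I^2=I$.
\begin{itemize}
\item If $I^2=I$, then \cref{lem:min ideal is simple} shows $I$ is simple or \zsimple. Both $x$ and $u_1$ are nonzero, so $x\mj_I u_1$.
\item If $I^2\neq I$, then $I^2$ is an ideal strictly contained in $I$, so $I^2=\{\zeroel\}$ and $I$ is null. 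Now $u_2=u_1x\in I^2=\{\zeroel\}$, contradicting $u_2\neq\zeroel$. So this case cannot occur once $u_2\neq\zeroel$.
\end{itemize}
If $S$ has no zero, then $I$ is the minimal ideal. The argument above still shows that $I$ is a single $\mj_S$-class, and the identity $IaI=I$ from \cref{lem:min ideal is simple} holds whenever $I^2=I$. For a minimal ideal, $I^2=I$ holds because $I^2$ is itself an ideal contained in $I$.

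\textbf{Main obstacle.} The main obstacle is Step 3, the transfer of $\mj$-equivalence from $S$ to the subsemigroup $I$. It fails in general and succeeds only thanks to $\zeroel$-minimality via \cref{lem:min ideal is simple}. The care needed is in handling the null case and the case where $S$ has no zero.
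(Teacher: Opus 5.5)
Your proof is correct and follows essentially the paper's argument. You pull the witness from the $\mr$-alignment of $\varphi$, observe that it lies in $I$ because $I$ is an ideal, rule out the null case via $u_2=u_1x\neq\zeroel$, and then use \cref{lem:min ideal is simple} with $I^2=I$ to upgrade $\mj_S$ to $\mj_I$; the only differences are cosmetic (you handle the zero case before invoking alignment, and your description of $[u_1]_{\mj_S}$ is not needed).
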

\begin{proof}
    Let $f\in \tilH$ and $C'_1,C'_2\in \wHole{V'}$ such that $f=C_1(f_1)=C_2(f_2)$ for some $f_1,f_2$. 
    Let $u_1=\varphi'(C_1),u_2=\varphi'(C_2)\in I$ and assume $u_1\mj_I u_2$.
    Since $I\subseteq S$ then in particular $u_1\mj_S u_2$, so we can use the $\mr$-alignment of $\varphi$ and obtain $r\in S$ with $r\mj_S u_1$ such that $u_2=u_1r$. 
    Moreover, we have $r\in S^1 u_1 S^1$, but since $u_1\in I$ and $I$ is an ideal, it follows that $r\in I$.

    It remains to show that $r\mj_I u_1$. Recall that $[\zeroel_S]_\mj =\{\zeroel_S\}$ and therefore if $\zeroel_S\in \{u_1,u_2,r\}$ then $r=u_1=u_2=\zeroel$ and we are done. Thus, assume they are not $\zeroel_S$, so by \cref{lem:0I} we have that $r,u_1,u_2\in I\setminus \{\zeroel_I\}$ (if $I$ has a zero).
   
    Observe that $I$ is not a null semigroup: indeed, if $I^2=\{\zeroel_I\}$, then since $u_2=u_1r$ it follows that $u_2=\zeroel_I$, contradicting our assumption above.
    It follows that $I^2\subseteq I$ is a non-zero ideal, and by the $\zeroel$-minimality of $I$ we have $I^2=I$. Thus, by \cref{lem:min ideal is simple} we have that $I$ is simple or \zsimple. In either case, since $u_1,u_2,r\neq \zeroel_I$ we have that $u_1\mj_I u_2\mj_I r$ (as $I\setminus \{\zeroel_I\}$ is a single $\mj$-class, by \cref{def:simple zsimple null semigroup}). 
    We conclude that $\varphi'$ is $\mr$-aligned.
\end{proof}

\subsection{Decomposition over a Group}
\label{sec:group decomposition}
Our first base case is when the semigroup under consideration is actually a group. The overall approach to obtain a bound on the decomposition depth is to consider a node $x$ in a decomposition $\tau$, and reason about $|\phi[\ns(\tau,\wHole{V},B,x)]|$, i.e., the number of possible images of the next decomposition step. 
Recall that in \cref{lem:decomposition decreasing,cor:decomp semantically decrease} we show that this set is, in a sense, decreasing. We now specialize the arguments of \cref{lem:decomposition decreasing} to groups to obtain a sharper result.
\begin{lemma}
\label{lem:group next ctx cardinality decreasing}
    Consider a group $V$, a \emph{binary} decomposition $\tau\in \pd(\wHole{V},B)$, and a morphism $\varphi\colon\wHole{V}\to V$.
    For every $x\leq y\in \tau.\nds$ we have:
    \[|\varphi[\ns(\tau,\wHole{V}, B, y)] \leq |\varphi[\ns(\tau,\wHole{V},B,x)]|\]
\end{lemma}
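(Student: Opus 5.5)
By induction on the length of the path from $x$ to $y$, it suffices to treat the case where $y$ is a child of $x$, i.e., $y=x0$ or $y=x1$; the general statement then follows by chaining the inequalities. Note that if $x$ is a leaf there is nothing to prove, so we may assume $x\in\tau.\type^{-1}[\B]$ and apply \cref{cor:decomp semantically decrease}.

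For the residue child $y=x0$, the first item of \cref{cor:decomp semantically decrease} gives the inclusion $\varphi[\ns(\tau,\wHole{V},B,x0)]\subseteq\varphi[\ns(\tau,\wHole{V},B,x)]$. The cardinality inequality follows immediately, and the group structure is not needed here.

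For the factor child $y=x1$, the second item of \cref{cor:decomp semantically decrease} gives $\varphi(\tau.\ctx(x1))\cdot\varphi[\ns(\tau,\wHole{V},B,x1)]\subseteq\varphi[\ns(\tau,\wHole{V},B,x)]$. This is where the group hypothesis enters. In a group, left multiplication by the fixed element $g=\varphi(\tau.\ctx(x1))$ is a bijection of $V$, since it has inverse $v\mapsto g^{-1}v$. Hence $|g\cdot\varphi[\ns(\tau,\wHole{V},B,x1)]|=|\varphi[\ns(\tau,\wHole{V},B,x1)]|$, and combined with the inclusion this yields the desired inequality.

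The main subtlety, rather than a true obstacle, is checking that \cref{cor:decomp semantically decrease} applies at every intermediate node of the path. Its hypotheses are only that $\tau\in\pd(\wHole{V},B)$ and that the node is binary, and both hold for every ancestor of $y$ on the path from $x$, since each has a child in $\tau.\nds$. One should also confirm that the relevant $\ns$ sets are computed with respect to the same $\tau$ throughout, which is the case because no modification of $\tau$ occurs in the argument.
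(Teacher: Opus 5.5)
Your proof is correct and matches the paper's argument: you reduce to a single child step, apply the first inclusion of \cref{cor:decomp semantically decrease} for $x0$, and for $x1$ combine the second inclusion with the fact that left multiplication by $\varphi(\tau.\ctx(x1))$ is a bijection of the group $V$. Your checks that the corollary's hypotheses hold at each node along the path, and that $\tau$ is never modified, are appropriate.
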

\begin{proof}
    We prove the claim for $y=x\sigma$ (i.e., $y$ is a direct child of $x$), and the claim follows by induction.
    Most of the hard work is already done in \cref{cor:decomp semantically decrease}. We use it and split to cases based on $\sigma\in \{0,1\}$.
    \begin{itemize}
        \item If $y=x0$ then by \cref{cor:decomp semantically decrease} we have $\varphi[\ns(\tau,\tilde V, B, y)]\subseteq \varphi[\ns(\tau,\tilde V,B,x)]$. Thus,  
        $|\varphi[\ns(\tau,\tilde V, B, y)]|\leq |\varphi[\ns(\tau,\tilde V,B,y')]|$.
        \item If $y=x1$ then by \cref{cor:decomp semantically decrease} we have $\varphi(\tau.\ctx(y))\cdot \varphi[\ns(\tau,\wHole{V}, B, y)] \subseteq \varphi[\ns(\tau, \wHole{V}, B, x)]$
        
        Since $V$ is a group, the function $v\mapsto \phi(\tau.\ctx(y))\cdot v$ is a bijection, and therefore
        \[
        |\varphi[\ns(\tau,\tilde V, B, y)]|=|\varphi(\tau.\ctx(y))\cdot \varphi[\ns(\tau,\wHole{V}, B, y)]|\le |\varphi[\ns(\tau, \wHole{V}, B, x)]|
        \]
    \end{itemize}
    In both cases we have $|\varphi[\ns(\tau,\tilde V,B,y)]|\leq |\varphi[\ns(\tau,\tilde V,B,x)]|$, and we conclude the lemma by induction.
\end{proof}

We proceed with the main technical part of group decomposition. Here, we bound the depth of a minimal decomposition starting from some node $x$ of an existing binary partial decomposition. The bound is given in terms of $|\varphi[\ns(\tau, \wHole{V}, B, x)]|$. 
\begin{lemma}[Decomposition Over a Group]
\label{lem:group decomp technical}
Consider a partial decomposition $\tau\in \pd(\wHole{V}, B)$ and a morphism $\varphi\colon\wHole{V}\rightarrow V$ where $V$ is a group.
Let $x\in \tau.\nds$ and denote $n=|\varphi[\ns(\tau,\wHole{V}, B,x)]|$, then there exists a full general decomposition $\tau_x\in\dec(\wHole{V},B, \tau.\frst(x),\varphi)$ such that $\depth(\tau_x)\leq 2n$.
Moreover, if $\unitel\in \varphi[\ns(\tau,\wHole{V}, B,x)]$ then $\depth(\tau_x)\leq 2n-1$.
\end{lemma}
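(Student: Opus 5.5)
We argue by strong induction on $n=|\varphi[\ns(\tau,\wHole{V},B,x)]|$, proving both claims at once. If $n=0$, then \cref{lem:no next contexts implies B} gives $\tau.\frst(x)\in B$, and the singleton decomposition has depth $0$. For $n\ge 1$ we split on whether $\unitel\in\varphi[\ns(\tau,x)]$.

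\emph{Case $\unitel\in\varphi[\ns(\tau,x)]$.} Apply \cref{lem:min full dec} with the stable context set $S=\varphi^{-1}[\unitel]$: extend $\tau$ below $x$ so that it is minimal with respect to $S$, and keep decomposing as long as $\ns\cap S\neq\emptyset$. Call the result $\tau'$. By \cref{lem:min dec no references contexts} this region has no inherited references, so it is stable by \cref{lem:no inherited decomposition is stable}. All its contexts and linking contexts lie in $\varphi^{-1}[\unitel]$, so by \cref{cor:min to I} it collapses to a single $\I$-node. Every leaf $w$ of this region satisfies $\unitel\notin\varphi[\ns(\tau',w)]$.

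Next we show $|\varphi[\ns(\tau',w)]|\le n-1$. Along the path from $x$ to $w$, every right step uses a context of value $\unitel$, so repeated application of \cref{cor:decomp semantically decrease} (both items, since multiplication by $\unitel$ is the identity) gives $\varphi[\ns(\tau',w)]\subseteq\varphi[\ns(\tau,x)]$. The inclusion is strict because $\unitel$ has been lost. By induction, without the unit, each leaf has a decomposition of depth at most $2(n-1)$. The total depth is therefore at most $1+2(n-1)=2n-1$.

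\emph{Case $\unitel\notin\varphi[\ns(\tau,x)]$.} Build a centipede below $x$: repeatedly decompose the down-child along $x0^*$, using $\cleq$-minimal next contexts by \cref{lem:min full dec} applied to the spine, and continue until the index node $x0^k$ has no next contexts. Minimality removes inherited references by \cref{lem:min dec no references contexts}, so this yields a legitimate $\C$-node.

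The spine node $x0^k$ has $\ns=\emptyset$, hence lies in $B$. Each spine node $x0^i$ satisfies $\varphi[\ns(x0^i)]\subseteq\varphi[\ns(x)]$, so it has at most $n$ values and still does not contain $\unitel$.

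For a leg $x0^i1$ with context $c=\varphi(\tau.\ctx(x0^i1))$, item 2 of \cref{cor:decomp semantically decrease} gives $c\cdot\varphi[\ns(x0^i1)]\subseteq\varphi[\ns(x0^i)]$. Since left multiplication by $c$ is a bijection in a group, $|\varphi[\ns(x0^i1)]|\le n$. If $\unitel\in\varphi[\ns(x0^i1)]$, the first case bounds the leg's depth by $2n-1$. Otherwise we would like a strict decrease.

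This strict decrease is the step I expect to be the main obstacle. The intended argument: $c\in\varphi[\ns(x0^i)]$ and $c\cdot\unitel=c$, so if $c\cdot\varphi[\ns(x0^i1)]$ had all $n$ values it would be the entire set $\varphi[\ns(x0^i)]$. Then $c\cdot v=c$ for some $v$ in the leg's set, forcing $v=\unitel$, a contradiction. So the leg has at most $n-1$ values and depth at most $2n-2$ by induction.

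This argument needs the sets at spine nodes to remain the same size $n$, rather than merely being contained in the set at $x$. I would try to restructure the argument so the inequality is stated relative to the node where the leg is plucked, and bound each leg by $\max(2n-1,2(n-1))=2n-1$. Adding the $\C$-node then gives total depth at most $2n$.
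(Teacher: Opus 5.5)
Your proposal is correct and follows the paper's proof essentially step for step. It uses induction on $n$; when $\unitel\in\varphi[\ns(\tau,x)]$ it folds a maximal $\varphi^{-1}[\unitel]$-minimal region into an $\I$-node; otherwise it folds a maximal minimal centipede into a $\C$-node, whose leaves split into the index leaf (in $B$), legs containing $\unitel$ (handled by the first case at the same $n$), and legs without $\unitel$.

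The step you flag as the main obstacle is not actually one. Write $X=\varphi[\ns(\tau',x0^i1)]$ and $Y=\varphi[\ns(\tau',x0^i)]$. Then $c\in Y$, and $c\notin cX$ because $\unitel\notin X$, so $cX\subseteq Y\setminus\{c\}$. Hence $|X|=|cX|\le |Y|-1\le n-1$, with no need for $|Y|=n$. Equivalently, your contradiction hypothesis $|X|=n$ already forces $|Y|=n$, since $|Y|\le n$. This is exactly the inequality chain the paper uses.
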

\begin{proof}
We start with an intuitive overview of the proof.
If we can decompose $x$ with a context mapped to the neutral group element $\unitel$, then we decompose using $\unitel$ ``as much as possible'', and fold this structure to a single $\I$-node (i.e., of depth 1).
We show that in this case, the remaining trees have a smaller image under $\varphi$, and we can apply induction. 

If we cannot decompose $x$ with $\unitel$, we decompose it instead to a centipede of maximal depth, which folds to a single $\C$-node. Using our framework of minimal decompositions (\cref{def:min dec}) and the group structure, we show that we can apply the induction hypothesis to the children of this $\C$-node.
We now turn to the precise details.

We prove the claim by induction on $n=|\varphi[\ns(\tau,\wHole{V}, B,x)]|$. 
The base case is $n=0$, so $\varphi[\ns(\tau,\wHole{V}, B,x)]=\emptyset$. By \cref{lem:no next contexts implies B} this implies $\tau.\frst(x)\in B$. In particular, the singleton decomposition $\tau_x$ satisfies $\tau_x= \sing(\tau.\frst(x))\in \dec(\wHole{V}, B, \tau.\frst(x),\varphi)$ and $\delta(\tau_x)=0\le 2\cdot 0$, as required. Note that the important part here is that this decomposition indeed has its only leaf in $B$.

Assume correctness for all $m<n$, we prove for $n$. We split to two cases, depending on whether $\unitel\in \varphi[\ns(\tau,\wHole{V},B,x)]$, where $\unitel$ is the neutral element of the group $V$. 
Intuitively, in both cases we consider a decomposition $\tau'$ with two properties: first, that $\tau'$ is \emph{minimal below $x$} in the sense of \cref{def:min dec}. 

Second, that $\tau'$ is maximal in the sense that none of its leaves can be further decomposed while maintaining the minimality requirements.
We then use the case assumptions to fold the node at $x$ to either an $\I$-node or a $\C$-node, while using the properties of groups in order to apply the induction hypothesis below these new nodes.

\subparagraph*{First case: $\unitel\in \varphi[\ns(\tau,\wHole{V},B,x)]$.}
Consider the context set $S=\varphi^{-1}[\unitel]$. Intuitively, we try to decompose $\tau.\frst(x)$ as much as possible using only contexts in $S$, i.e., that map to $\unitel$. We then ``fold'' this partial decomposition into a single $\I$-node, and show that we can apply the induction hypothesis to its leaves.

Let $\tau'$ be a decomposition such that $\equpto{\tau}{x}{\tau'}$ and $\tau'$ is minimal below $x$ with respect to $\wHole{V},B$ and $S$. Note that such a decomposition exists, since if we fold $x$ to a singleton in $\tau$ (namely take $\tau[x\mapsto\sing(x)]$), then $\tau[x\mapsto\sing(x)]\leq \tau$ and thus it is also a partial decomposition, and it is trivially minimal below $x$ and $\equpto{\tau}{x}{\tau'}$.

Consider some node $y\in \tau'.\nds$ below $x$ (i.e., $x<y$) and write $y=y'\sigma$. By \cref{lem:decomposition decreasing} we have the following:
\begin{itemize}
    \item If $y=y'0$ then $\varphi[\ns(\tau',\wHole{V}, B,  y)]\subseteq \varphi[\ns(\tau',\wHole{V}, B,  y')]$.
    \item If $y=y'1$ then $\overbrace{\underbrace{\varphi(\tau'.\ctx(y'1))}_{=\unitel}\cdot \varphi[\ns(\tau',\wHole{V}, B, y'1)]}^{\varphi[\ns(\tau',\wHole{V}, B,  y)]}\subseteq \varphi[\ns(\tau',\wHole{V}, B, y')]$ where $\phi(\tau'.\ctx(y'1))=\unitel$ since all the decomposition below $x$ is mapped to $\unitel$.
\end{itemize}
By induction over $y$ this shows that $\varphi[\ns(\tau',\wHole{V}, B,  y)]\subseteq \varphi[\ns(\tau,\wHole{V}, B,  x)]$ for every $y$ in $\tau'.\nds$ below $x$.

Now consider the set $L=\tau'.\type^{-1}[\lf]\cap (x\cdot \bns)$ of leaves in $\tau'$ below $x$. Since $\tau'$ is assumed to be maximal to extension, for every leaf $z\in L$ we have $\unitel\notin \varphi[\ns(\tau',\wHole{V},B,z)]$ (otherwise we could extend $\tau'$).
Combining this with the containment showed above, we conclude that for every $z\in L$ we have
$|\varphi[\ns(\tau',\wHole{V},B,z)]|\leq|\varphi[\ns(\tau',\wHole{V},B,x)]\setminus\{\unitel\}|=n-1$.
Again, we rely here on the fact that $\ns(\tau',\wHole{V},B,x)=\ns(\tau,\wHole{V},B,x)$ since $\equpto{\tau}{x}{\tau'}$.

We can therefore apply the induction hypothesis to every leaf $z$ and obtain a full general decomposition $\tau_z$ of $\tau.\frst(z)$ such that $\depth(\tau_z)\le 2(n-1)$.

By the definition of an $\I$-node (\cref{def:general decomposition}) and since $\tau'$ is minimal by assumption (\cref{def:min dec}) we can apply \cref{cor:min to I} and ``fold'' $\tau'[x]$ to a single $\I$-node (since $\unitel$ is idempotent). We can then plug each decomposition $\tau_z$ to each child of the $\I$-node as per \cref{cor:general decomp closed substitution} and obtain a full general decomposition $\tau_x$ of $\tau.\frst(x)$ of height at most $1+2(n-1)=2n-1$, as required.

\subparagraph*{Second case: $\unitel\notin \varphi[\ns(\tau,\wHole{V},B,x)]$.}
Intuitively, in this case we decompose $\tau.\frst(x)$ as much as possible using a centipede structure, that we can then ``fold'' into a single $\C$-node. The challenge is then to show that we can apply the induction hypothesis to its leaves, which requires some case analysis.

Let $\tau'$ be a decomposition with the following properties:
\begin{itemize}
    \item $\equpto{\tau}{x}{\tau'}$,
    \item $\tau'$ is minimal below $x$,
    \item $\tau'[x]$ is a centipede (i.e., $\tau'[x].\nds\subseteq 0^*\cdot (\epsilon+1)$), and
    \item $\tau'[x]$ is maximal with these properties with respect to extension.
\end{itemize}
As in the previous case, such a decomposition exists, since $\tau[x\mapsto\sing(x)]$ satisfies the first three properties.

Consider the set $L=\tau'.\type^{-1}[\lf]\cap (x\cdot \bns)$ of leaves in $\tau'$ below $x$. By the definition of a $\C$ node (\cref{def:general decomposition}) we can fold $\tau'[x]$ so that its root is a $\C$ node whose children are bijectively matched with $L$. 
We now claim that every leaf $y\in L$ has a full general decomposition of depth at most $2n-1$.
To show this, we split to cases.
\begin{itemize}
    \item If $y\in x\cdot 0^*$ (i.e., $y$ is the ``down-most'' leaf), then by the maximality of $\tau$ we have that $\ns(\tau',\wHole{V},B, y)=\emptyset$ (otherwise we could extend $\tau'[x]$). By \cref{lem:no next contexts implies B} we have $\tau'.\frst(y)\in B$, so we can use the depth $0\le 2n-1$ singleton decomposition $\sing(\tau'.\frst(y))\in \dec(\wHole{V},B,\tau'.\frst(y),\varphi)$.
    \item If $y=x\cdot 0^k1$ for some $k$ and $\unitel\in \varphi[\ns(\tau',\wHole{V},B,y)]$, then by \cref{lem:group next ctx cardinality decreasing} we have $|\varphi[\ns(\tau',\wHole{V},B,y)]|\le n$. We can now apply the first case of the induction (which is already proved), and obtain a full general decomposition $\tau_y\in \dec(\wHole{V},B,\tau'.\frst(y),\varphi)$ with $\depth(\tau_y)\le 2n-1$, as required.
    \item If $y=x\cdot 0^k1$ for some $k$ and $\unitel\notin \varphi[\ns(\tau',\wHole{V},B,y)]$, we use the centipede structure as follows. By \cref{cor:decomp semantically decrease} we have
    $\varphi[\ns(\tau',\wHole{V}, B, x0^k)]\subseteq\varphi[\ns(\tau', \wHole{V}, B, x)]$. By the assumption that $\unitel\notin \varphi[\ns(\tau,\wHole{V},B,x)]$ and since $\ns(\tau',\wHole{V},B,x)=\ns(\tau,\wHole{V},B,x)$ we have $\unitel \notin \varphi[\ns(\tau',\wHole{V},B,x)]$ and by the containment above we get $\unitel \notin \varphi[\ns(\tau',\wHole{V},B,x0^k)]$. Denote $c_y=\tau'.\ctx(y)$, then we have $c_y\neq \unitel$.

    Again by \cref{cor:decomp semantically decrease} we have $c_y \cdot \varphi[\ns(\tau',\wHole{V},B,y)]\subseteq \varphi[\ns(\tau',\wHole{V},B,x0^k)]$. We claim that $c_y\notin c_y\cdot \varphi[\ns(\tau',\wHole{V},B,y)]$. Indeed, this follows from the fact that the action of a non-unit group element is a \emph{derangement}. 
    More precisely, if (by way of contradiction) we have $c_y\in c_y\cdot \varphi[\ns(\tau',\wHole{V},B,y)]$, then since $V$ is a group, by multiplying with $c_y^{-1}$ from the left, we have $\unitel\in \varphi[\ns(\tau',\wHole{V},B,y)]$, contradicting the assumption of this subcase. On the other hand, group action is bijective, so 
    $|\varphi[\ns(\tau',\wHole{V},B,y)]|=|c_y\cdot \varphi[\ns(\tau',\wHole{V},B,y)]|$.
    Combining the results above, we have:
    \[
    \begin{split}
        &|\varphi[\ns(\tau',\wHole{V},B,y)]|=|c_y\cdot \varphi[\ns(\tau',\wHole{V},B,y)]|
        \le |\varphi[\ns(\tau',\wHole{V},B,x0^k)]\setminus\{c_y\}|\\
        &=|\varphi[\ns(\tau',\wHole{V},B,x0^k)]|-1
        \le |\varphi[\ns(\tau', \wHole{V}, B, x)]|-1=n-1
    \end{split}
    \]
    Thus, by the induction hypothesis, there is a full general decomposition $\tau_y\in \dec(\wHole{V},B,\tau'.\frst(y),\varphi)$ with $\depth(\tau_y)\le 2(n-1)\le 2n-1$, as required.
\end{itemize}
We can then plug each decomposition $\tau_y$ to each child of the $\C$-node as per \cref{cor:general decomp closed substitution} and obtain a full general decomposition $\tau_x$ of $\tau.\frst(x)$ of height at most $1+2n-1=2n$, as required.
\end{proof}

We can now conclude our decomposition bound for groups.
\begin{lemma}
    \label{lem:group decomposition bound}
    For every finite group $V$ it holds that $\norm V\leq 2|V|-1$.
\end{lemma}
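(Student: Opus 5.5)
The plan is to derive this directly from \cref{lem:group decomp technical} by starting at the root of a trivial partial decomposition. Fix a stable context set $\wHole{V}$, a morphism $\varphi\colon\wHole{V}\to V$, a basis $B$, and a forest $f$ that is decomposable by $\wHole{V}$ to $B$. We must exhibit some $\tau\in\dec(\wHole{V},B,f,\varphi)$ with $\depth(\tau)\le 2|V|-1$. Taking suprema over $f$, over $B$ and over $\varphi$ then gives $\norm{V}\le 2|V|-1$ via \cref{def:semigroup decomp bound}.

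The first step is to set up the starting point. Consider the singleton decomposition $\sing(f)$. Because $f$ is decomposable, there is a full decomposition $\tau_1\in\fd(\wHole{V},B)$ with $\tau_1.\frst(\epsilon)=f$. Clearly $\sing(f)\le\tau_1$, so $\sing(f)\in\pd(\wHole{V},B)$ by \cref{def:partial decomposition}. Let $n=|\varphi[\ns(\sing(f),\wHole{V},B,\epsilon)]|$. Since this is a subset of $V$, we have $n\le|V|$.

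The second step is a case split according to whether $\unitel\in\varphi[\ns(\sing(f),\wHole{V},B,\epsilon)]$, applying \cref{lem:group decomp technical} at $x=\epsilon$ in each case.
\begin{itemize}
\item If $\unitel$ is in this set, the ``moreover'' clause yields a decomposition $\tau_\epsilon\in\dec(\wHole{V},B,f,\varphi)$ of depth at most $2n-1\le 2|V|-1$.
\item If $\unitel$ is not in this set, then the image set misses $\unitel$, so $n\le|V|-1$. The lemma then gives depth at most $2n\le 2|V|-2\le 2|V|-1$.
\end{itemize}
In both cases the bound holds.

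The only subtle point, which I expect to be the main (though minor) obstacle, is the second case. The crude estimate $2n\le 2|V|$ would lose one unit, and the sharper bound comes from noticing that the image set excludes $\unitel$. I also need to check the degenerate case $n=0$: there the lemma gives depth $0$, which is fine. Finally, $\norm{\varphi,B}$ is a supremum only over decomposable forests, and the proof above covers exactly those forests.
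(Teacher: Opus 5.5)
Your proposal is correct and follows essentially the same route as the paper. You start from the singleton partial decomposition $\sing(f)$, split on whether $\unitel\in\varphi[\ns(\sing(f),\wHole{V},B,\epsilon)]$, and apply \cref{lem:group decomp technical} at the root, using in the second case that the image misses $\unitel$ to get $2(|V|-1)\le 2|V|-1$. Your bookkeeping (bounding the size of the image $\varphi[\ns(\cdot)]$ rather than of $\ns(\cdot)$ itself, and justifying $\sing(f)\in\pd(\wHole{V},B)$ through a full decomposition extending it) is, if anything, slightly more careful than the paper's write-up.
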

\begin{proof}
    Fix $\wHole{V}$, $B$ and a morphism $\varphi\colon\wHole{V}\to V$. 
    We show that every forest $f$ that is decomposable by $\wHole{V}$ to $B$ over $\varphi$ has a decomposition tree $\tau$ of depth at most $2|V|-1$.

    Since $f$ is decomposable, in particular its singleton decomposition satisfies $\sing(f)\in \pd(\wHole{V}, B)$. We now split to cases.

    \begin{itemize}
        \item If $\unitel\in \varphi[\ns(\sing(f), \wHole{V}, B, \epsilon)]$, then by \cref{lem:group decomp technical} there is a decomposition tree $\tau$ of $f$ with 
        $\depth(\tau)\le 2|\ns(\sing(f), \wHole{V}, B, \epsilon)|-1\le 2|V|-1$, where the last inequality is because $\ns(\sing(f), \wHole{V}, B, \epsilon)\subseteq V$.
        \item If $\unitel\notin \varphi[\ns(\sing(f), \wHole{V}, B, \epsilon)]$ then again by \cref{lem:group decomp technical} there is a decomposition tree $\tau$ of $f$ with 
        $\depth(\tau)\le 2|\ns(\sing(f), \wHole{V}, B, \epsilon)|\le 2(|V|-1)\le 2|V|-1$, where the last inequality is because 
        $\ns(\sing(f), \wHole{V}, B, \epsilon)\subseteq V\setminus \{\unitel\}$.
    \end{itemize}
\end{proof}

\subsection{Decomposition over Simple Semigroups}
\label{sec:simple semigroups}
In the proof of Simon's factorization theorem for words, a central part of the proof proceeds by splitting a word $w=\sigma_1\cdots \sigma_n$ into sections where each section starts in some fixed $\mr$-class and ends in a fixed $\ml$-class. Intuitively, the way to obtain such a decomposition is to fix a pair $(L,R)$ of an $\mr$-class and an $\ml$-class, respectively, and find all occurrences in the word of the immediate consecutive pair $LR$, 

For forests, this approach becomes much more complicated. In fact, it generally breaks down (as we show in \cref{sec:lower bound}). This is the technical reason for the introduction of $\mr$-aligned morphisms (\cref{def:unamb property}). 

The analogue of the $LR$ pairs for forests is to consider two consecutive plucked subforests. This is a natural generalization of $\ns$ (\cref{def:next contexts}), where we consider the available \emph{two} contexts to be plucked from a node $x$ (i.e., to $x1$ and $x11$).
\begin{definition}
    \label{def:nts}
    \label{def: next two contexts}
    Consider a decomposition tree $\tau$ over $\wHole{V}$ and $B$, and let $x\in \tau.\nds$. We define
    \[\begin{split}
    \nts(\tau, \wHole{V}, B, x)=\{(\tau'.\ctx(x1),\tau'.\ctx(x11)):& \equpto{\tau}{x}{\tau'},
    \tau'\in \pd(\wHole{V}, B),\\
    &x11\in \tau'.\nds\}
    \end{split}
    \]
    For a morphism $\varphi$, we denote
    \[\varphi[\nts(\tau,\wHole{V},B,x)]=\{(\varphi(C_1),\varphi(C_2)): (C_1,C_2)\in \nts(\tau,\wHole{V},B,x)\}\]
    When $\wHole{V}$ and $B$ are clear for context, we write e.g., $\nts(\tau,x)$.
\end{definition}
Using our framework of rotations (\cref{sec:rotations}), we can show that $\nts$ is well-behaved (``decreasing'', in a sense), similarly to \cref{cor:decomp semantically decrease}.
\begin{lemma}
\label{lem:decomp two semantically decreasing}
    Consider a decomposition $\tau$, a binary node $x\in \tau.\type^{-1}[\B]$ and a morphism $\varphi\colon\wHole{V}\rightarrow V$, then the following hold.
    \begin{enumerate}
        \item $\varphi[\nts(\tau,\wHole{V},B,x0)]\subseteq \varphi[\nts(\tau,\wHole{V},B,x)]$
        \item If $(u,v)\in \varphi[\nts(\tau,\wHole{V},B,x1)]$ then $(\varphi(\tau.\ctx(x1))\cdot u,v)\in \varphi[\nts(\tau,\wHole{V},B,x)]$
    \end{enumerate}
\end{lemma}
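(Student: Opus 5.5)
We prove both items by mirroring the proof of \cref{lem:decomposition decreasing}. We take a witness pair from the child node, lift it to an extension $\tau'$, and apply a rotation at $x$ (\cref{lem:rotation BT to TB}, \cref{lem:rotation LR to RL}, or \cref{lem:rotation TB to BT}). The goal is a partial decomposition $\tau''$ with $\equpto{\tau''}{x}{\tau}$ whose contexts at $x1$ and $x11$ have the required images. Since $\varphi$ factors through $\unr$, unravel-equivalent contexts have equal images, so we only need to track contexts up to $\eqUnr$.

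\textbf{Item 2.} Let $(C_1,C_2)\in\nts(\tau,x1)$, witnessed by $\tau'\in\pd(\wHole{V},B)$ with $\equpto{\tau'}{x1}{\tau}$, $\tau'.\ctx(x11)=C_1$ and $\tau'.\ctx(x111)=C_2$. Then $x$, $x0$, $x1$, $x10$, $x11$ form a TB formation. Applying \cref{lem:rotation TB to BT} gives $\tau''$ with:
\begin{itemize}
\item $\tau''.\ctx(x1)=\tau.\ctx(x1)\cdot C_1$;
\item $\tau''[x1]=\tau'[x11]$.
\end{itemize}
Since $x111$ exists in $\tau'$, we get $\tau''.\ctx(x11)=\tau'.\ctx(x111)=C_2$. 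Hence $(\tau.\ctx(x1)C_1,C_2)\in\nts(\tau,x)$, and applying $\varphi$ gives the claim.

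\textbf{Item 1.} Let $(C_1,C_2)\in\nts(\tau,x0)$, witnessed by $\tau'$ with $\tau'.\ctx(x01)=C_1$ and $\tau'.\ctx(x011)=C_2$. As in \cref{lem:decomposition decreasing}, either $x01@*\to x1$ (BT) or $x00@*\to x1$ (LR).
\begin{itemize}
\item \emph{BT case.} \cref{lem:rotation BT to TB} gives $\tau''.\ctx(x1)=C_1$ and $\tau''[x10]=\tau'[x01]$. The subtree formerly at $x01$ now sits at $x10$, so the next context below $x1$ is $\tau''.\ctx(x11)=\clnk(\tau',x01,x1)$. This is not $C_2$, so the second coordinate fails.
\item \emph{LR case.} \cref{lem:rotation LR to RL} gives $\tau''[x1]=\tau'[x01]$ and $\tau''.\ctx(x1)\eqUnr C_1$. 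Since $\tau''[x1]=\tau'[x01]$, we get $\tau''.\ctx(x11)=\tau'.\ctx(x011)=C_2$. This case succeeds.
\end{itemize}

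So the BT case is the main obstacle. There I would perform a second rotation inside the subtree of $x1$ in $\tau''$. After the first rotation the nodes $x1$, $x10$, $x11$ carry the factor structure, and $x101$ exists with context $C_2$ (inherited from $x011$). The node $x1$ then has a BT or LR formation with the factor $\tau'.\frst(x011)$, whose context is $C_2$. Rotating it to the top at $x1$ gives $\ctx(x11)\eqUnr C_2$ in the LR sub-case, or exactly $C_2$ in the BT sub-case via \cref{lem:rotation BT to TB}. This second rotation occurs below $x1$, so it leaves $\tau''.\ctx(x1)=C_1$ unchanged and preserves $\equpto{}{x}{\tau}$.

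I expect the reference bookkeeping here to be the delicate part. Specifically, I need to check that a reference from $x101$'s residue or factor to $x11$ exists in one of the two required forms. I would argue this by noting that the forest at $x1$ is $\clnk(\tau',x01,x1)\cdot\tau'.\frst(x1)$, and reading the references off \cref{lem:ancestor embedding respects holes and references}.
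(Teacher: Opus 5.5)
Your proof is correct and follows the paper's route. Item 2 and the LR case of Item 1 are identical to the paper's argument, and in the BT case the paper likewise rotates BT to TB at $x$ and then lifts $C_2$ from $x10$ to $x1$ without changing the context at $x1$. The only difference is that the paper does this second step by citing \cref{cor:decomp semantically decrease} (Item 1) at node $x1$; that citation is exactly your second rotation, already carried out in \cref{lem:decomposition decreasing}. So the reference bookkeeping you flag is already discharged: the BT/LR dichotomy at $x1$ holds because the base reference $x10@*\to x11$ propagates to exactly one child of $x10$.
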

\begin{proof}
We prove each part of the lemma separately.
\subparagraph*{Item 1.}
Let $(C_1,C_2)\in \nts(\tau,x0)$, we show that $(\varphi(C_1),\varphi(C_2))\in \varphi(\nts(\tau,x))$.
We assume without loss of generality that $\tau$ attains these contexts, i.e., $\tau.\ctx(x01)=C_1$ and $\tau.\ctx(x011)=C_2$ (otherwise replace $\tau$ with an extension below $x$ that does attain them).
Intuitively, in this case we want to ``raise'' the branch $x01\to x011$ so that it starts from $x$ instead of $x0$. We do this by applying either an LR to RL rotation or a BT to TB rotation, depending on the references to $x1$ (similarly to \cref{lem:decomposition decreasing}).
        \begin{itemize}
            \item If $\tau.x00@*\to x1$, then we are under the reference structure of LR to RL rotation (\cref{lem:rotation LR to RL}). Specifically,  there is a decomposition $\tau'\in \pd(\wHole{V},B)$ with the following properties:
            \begin{enumerate}
                \item $\equpto{\tau}{x}{\tau'}$
                \item $\tau.\ctx(x01)\eqUnr \tau'.\ctx(x1)$, and in particular $\varphi(C_1)=\varphi(\tau'.\ctx(x1))$.
                \item $\tau'[x1]=\tau[x01]$, and in particular $C_2\in \ns(\tau'[x01])$ (since $\tau'[x1]$ is decomposed identically to $\tau[x01]$, i.e., with $C_2$):
                \[\tau'.\ctx(x11)=\tau'[x1].\ctx(1)=\tau[x01].\ctx(1)=\tau.\ctx(x011)=C_2\]
            \end{enumerate}
            We readily obtain $(\varphi(C_1),\varphi(C_2))\in \nts(\tau',x)$, and since $\equpto{\tau'}{x}{\tau}$ it follows that $(\varphi(C_1),\varphi(C_2))\in \nts(\tau,x)$, as required.

            \item If $\tau.x01@*\to x1$, then we are under the reference structure of BT to TB rotation. We follow the same logic as above.
            By \cref{lem:rotation BT to TB} there is a decomposition $\tau'\in \pd(\wHole{V},B)$ with the following properties:
            \begin{enumerate}
                \item $\equpto{\tau}{x}{\tau'}$
                \item $\tau'.\ctx(x1)=\tau.\ctx(x01)=C_1$
6                \item $\tau'[x10]=\tau[x01]$.
            \end{enumerate}
            We therefore have that $C_2=\tau.\ctx(x011)\in \ns(\tau',x10)$, so $\varphi(C_2)\in \varphi[\ns(\tau',x10)]$.
            Recall that by \cref{cor:decomp semantically decrease} that $\varphi[\ns(\tau',x10)]\subseteq \varphi[\ns(\tau,x)]$, so $\varphi(C_2)\in \varphi[\ns(\tau',x10)]$.
            Combining this with the second property above gives the desired
            $(\varphi(C_1),\varphi(C_2))\in \varphi[\nts(\tau,\wHole{V},B,x)]$.
        \end{itemize}

\subparagraph*{Item 2.}
        Let $(C_1,C_2)\in \nts(\tau,x1)$, we show that $(\varphi(\tau.\ctx(x1))\cdot C_1,C_2)\in \varphi[\nts(\tau,x)]$.
        We assume without loss of generality that $x111\in \tau.\nds$ (otherwise we can extend $\tau$ by factoring $x1$ by $C_1$ and then $x11$ by $C_2$).
        We therefore have $C_1=\tau.\ctx(x11)$ and $C_2=\tau.\ctx(x111)$.

        We now apply a TB to BT rotation. By \cref{lem:rotation TB to BT} there is$\tau'\in \pd(\wHole{V},B)$ with the following properties:
        \begin{enumerate}
            \item $\equpto{\tau}{x}{\tau'}$
            \item $\tau'.\ctx(x1) = \tau.\ctx(x1)\cdot \tau.\ctx(x11)=\tau.\ctx(x1)\cdot C_1$.
            \item $\tau'[x1]=\tau[x11]$.
        \end{enumerate}
        By Property 3, we have that $\tau'.\ctx(x11)=\tau.\ctx(x111)=C_2$.
        We therefore have $(\tau.\ctx(x1)\cdot C_1,C_2)\in \nts(\tau',x)$, and we have the desired $(\varphi(\tau.\ctx(x1))\cdot \varphi(C_1),\varphi(C_2))\in \varphi[\nts(\tau,x)]$ (again, since $\equpto{\tau}{x}{\tau'}$).
\end{proof}

Recall that our main interest in two consecutive plucked subforests is when these form a specific pair of $\mr$- and $\ml$-classes under the morphism $\varphi$. We now define this notion, as a sort of projection of $\nts$.

\begin{definition}[LR]
\label{def:lr}
Consider a partial decomposition $\tau$ over $\wHole{V}, B$, a node $x\in \tau.\nds$ and a morphism $\varphi\colon\wHole{V}\to S$. 
We define $\lr(\tau,\wHole{V}, B, x,\varphi)\coloneqq \{([u]_\ml,[v]_\mr): (u,v)\in \varphi[\nts(\tau,\wHole{V},B,x)]\}$.

We write $\lr(\tau,x)$ or $\lr(x)$ when the remaining elements are clear from context.
\end{definition}
\begin{remark}
    \label{rmk:lr}
    Notice that by \cref{def:green relations} for every $(l,r)\in \lr(\cdot)$ $l\cap r$ is an $\mh$-class of $S$.
    In addition, because the quotient sets $S/\ml,S/\mr$ are partitions, we get that for $(l_1,r_1),\neq(l_2,r_2)\in \lr(\cdot)$ we have that $l_1\cap r_1$ and $l_2\cap r_2$ are mutually exclusive.
\end{remark}

We now show that $\lr$ is ``decreasing'', in the spirit of \cref{cor:decomp semantically decrease}, but specialized for the case of simple semigroups.
\begin{lemma}[$\lr$ is decreasing]
    \label{lem:lr decreasing}
    Consider a simple or \zsimple semigroup $S$, a stable context set $\wHole{V}$ and a morphism  $\varphi\colon\wHole{V}\rightarrow S$. Let $\tau\in \pd(\wHole{V},B)$ and consider two nodes $x\leq y$ in $\tau.\nds$. 
    If $\zeroel\notin \varphi[\ns(\tau,\wHole{V},B,x)]$, then $\lr(y)\subseteq \lr(x)$.
\end{lemma}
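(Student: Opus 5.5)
The plan is to reduce to the case where $y$ is a direct child of $x$ and then induct along the path from $x$ to $y$. For the induction to go through, the hypothesis $\zeroel\notin\varphi[\ns(\tau,\wHole{V},B,x)]$ must propagate down the path. By \cref{cor:top zeros}, if $\zeroel\in\varphi[\ns(\tau,z)]$ for some $x\le z$, then $\zeroel\in\varphi[\ns(\tau,x)]$. Contrapositively, every node $z$ with $x\le z\le y$ satisfies the hypothesis. So it suffices to prove $\lr(x\sigma)\subseteq\lr(x)$ for $\sigma\in\{0,1\}$ under the hypothesis at $x$, and then chain the inclusions.

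\textbf{Case $y=x0$.} This is immediate from item~1 of \cref{lem:decomp two semantically decreasing}: $\varphi[\nts(\tau,x0)]\subseteq\varphi[\nts(\tau,x)]$. Taking the $(\ml,\mr)$-classes of each pair preserves the inclusion, so $\lr(x0)\subseteq\lr(x)$.

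\textbf{Case $y=x1$.} Fix $(L,R)\in\lr(x1)$, witnessed by $(u,v)\in\varphi[\nts(\tau,x1)]$ with $[u]_\ml=L$ and $[v]_\mr=R$. Let $c=\varphi(\tau.\ctx(x1))$. Item~2 of \cref{lem:decomp two semantically decreasing} gives $(cu,v)\in\varphi[\nts(\tau,x)]$. It then remains to show $cu\ml u$, so that $([cu]_\ml,[v]_\mr)=(L,R)$.

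\textbf{Key step: showing $cu\ml u$.} First, $cu\neq\zeroel$. Since $u\in\varphi[\ns(\tau,x1)]$ (it is the first component of a next-two pair), item~2 of \cref{cor:decomp semantically decrease} gives $cu\in\varphi[\ns(\tau,x)]$. That set does not contain $\zeroel$ by hypothesis. Now \cref{lem:green lemma}(2), applied to $s=c$ and $t=u$ in the simple or \zsimple semigroup $S$, gives $cu\ml u$, as needed.

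\textbf{Main obstacle.} The delicate point is this nonzero argument. The hypothesis is stated for next contexts, not for next-two pairs, so we have to check that the first component of a pair in $\nts(\tau,x1)$ really is a next context of $x1$. This holds because any extension realizing $x11$ is in particular an extension in which $x1$ is a binary node, so the definitions match directly. A second small check is the simple (zero-free) case. There $\zeroel$ does not exist, the hypothesis is vacuous, and $cu\ml u$ follows directly from $c\mj cu\mj u$ via \cref{lem:green lemma}. Both cases can therefore be handled uniformly.
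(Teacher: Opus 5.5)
Your proposal is correct and follows essentially the same route as the paper. You reduce to a single child, handle $x0$ by item~1 of \cref{lem:decomp two semantically decreasing}, and for $x1$ combine item~2 with $\varphi(\tau.\ctx(x1))\cdot u\neq\zeroel$ and \cref{lem:green lemma} to obtain $\varphi(\tau.\ctx(x1))\cdot u\ml u$. Two of your steps fill in points the paper leaves implicit in ``follows by induction'' and ``due to the assumption'': propagating the $\zeroel$-free hypothesis down the path via \cref{cor:top zeros}, and justifying $cu\neq\zeroel$ through \cref{cor:decomp semantically decrease}.
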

\begin{proof}
    Let $\tau,x$ as in the statement of the lemma, with $\tau.\type(x)=\B$.
    We show that the inclusion $\lr(y)\subseteq \lr(x)$ holds for each of the two children of $x$ (i.e., $y\in \{x0,x1\}$). The result then follows by induction for any descendant $y$.
    \begin{enumerate}
        \item $\lr(x0)\subseteq \lr(x)$: follows directly from \cref{lem:decomp two semantically decreasing}.
        \item $\lr(x1)\subseteq \lr(x)$:
        Let $(u,v)\in \varphi[\nts(\tau, \wHole{V},B,x1)]$. By \cref{lem:decomp two semantically decreasing} we have $(\varphi(\tau.\ctx(x1))\cdot u,v)\in \varphi[\nts(\tau.\wHole{V},B,x)]$.
        Due to the assumption that $\zeroel\notin \varphi[\ns(\tau,\wHole{V},B,x)]$ we know that $\varphi(\tau.\ctx(x1))\cdot u \neq \zeroel$.
        Now, by \cref{lem:green lemma}, we have that $\varphi(\tau.\ctx(x1))\cdot u\ml u$, and we have that $([u]_\ml,[v]_\mr)\in \lr(x)$ as required.
    \end{enumerate}
\end{proof}

Before proceeding with the decomposition bound, we present a technical lemma that essentially lifts the ``egg-box'' structure of the non-zero $\mj$-class (\cref{lem:green lemma}) in a simple semigroup to the specific setting of $\nts(\cdot)$.
\begin{lemma}
\label{lem:nts props}
    Let $S$ be simple or \zsimple and consider a stable context set $\wHole{V}$ and a morphism $\varphi\colon\wHole{V}\rightarrow S$ and $\tau\in \pd(\wHole{V},B)$ such that $\zeroel\notin \varphi[\ns(\tau,\epsilon)]$. For every $x\in \tau.\nds$ and $(u_1,u_2)\in \varphi[\nts(\tau,x)]$ the following hold.
    \begin{enumerate}
        \item $u_1\neq \zeroel$, $u_2\neq \zeroel$, and $u_1u_2\neq \zeroel$.
        \item $u_1u_2\in [u_1]_\mr\cap [u_2]_\ml$
        \item $([u_1]_\ml \cap [u_2]_\mr) \cdot ([u_1]_\ml \cap [u_2]_\mr) = ([u_1]_\ml \cap [u_2]_\mr)$ (note the difference in $\ml$ and $\mr$ and $u_1,u_2$ from Item 2).
    \end{enumerate}
\end{lemma}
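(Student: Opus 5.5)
The plan is to reduce everything to the statement that $u_1u_2$ is itself (the image of) a next context at $x$, and then apply \cref{lem:green lemma}. Fix $(u_1,u_2)\in\varphi[\nts(\tau,x)]$, realized by an extension $\tau'\in\pd(\wHole{V},B)$ with $\equpto{\tau}{x}{\tau'}$, $\varphi(\tau'.\ctx(x1))=u_1$ and $\varphi(\tau'.\ctx(x11))=u_2$.

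\textbf{Item 1.} First, $u_1\in\varphi[\ns(\tau,x)]$ by definition. Second, $u_2\in\varphi[\ns(\tau',x1)]$, so by \cref{cor:decomp semantically decrease}(2) we get $u_1u_2\in\varphi[\ns(\tau,x)]$; equivalently, apply the TB to BT rotation (\cref{lem:rotation TB to BT}) to $\tau'$ at $x$, which yields the context $\tau'.\ctx(x1)\cdot\tau'.\ctx(x11)$ at $x1$. By \cref{cor:top zeros} with $\epsilon\le x$, the assumption $\zeroel\notin\varphi[\ns(\tau,\epsilon)]$ gives $\zeroel\notin\varphi[\ns(\tau,x)]$. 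Hence $u_1\ne\zeroel$ and $u_1u_2\ne\zeroel$, and then $u_2\ne\zeroel$, since otherwise $u_1u_2=\zeroel$. (If $S$ is simple and has no zero, these claims are vacuous.)

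\textbf{Item 2.} Since $u_1u_2\ne\zeroel$, \cref{lem:green lemma}(2) gives $u_1\mr u_1u_2$ and $u_1u_2\ml u_2$. This is exactly $u_1u_2\in[u_1]_\mr\cap[u_2]_\ml$.

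\textbf{Item 3.} Let $H=[u_1]_\ml\cap[u_2]_\mr$, which is an $\mh$-class by \cref{rmk:lr}. The nonzero elements lie in one $\mj$-class. Take $a,b\in H$. Then $b\mr u_2$ and $a\ml u_1$. By \cref{lem:green lemma}(1), contrapositively, $u_2u_1\ne\zeroel$ would imply $ba\ne\zeroel$; we need $ab$, however, so we argue as follows.

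Note that $u_1u_2u_1u_2\ne\zeroel$ should hold. Applying the argument of Item 1 twice, we would want $(u_1u_2)^2$ to be a next context, but this is not available directly. Instead we use the structure of the nonzero $\mj$-class: $u_1u_2\ne\zeroel$ together with Item 2 gives $u_1u_2\mj u_1$, and since $u_1\mj u_2$ we need $u_2u_1\ne\zeroel$.

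The standard fact is that for $s\ml u_1$ and $t\mr u_2$, we have $st\ne\zeroel$ iff $[s]_\mr\cap[t]_\ml$ contains an idempotent (the Clifford--Miller location lemma). Here the relevant instance is $s=b\in[u_2]_\mr$ and $t=a\in[u_1]_\ml$.

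I would then prove directly that $ab\in H$ whenever $ab\ne\zeroel$, since $a\mr ab\ml b$ places $ab$ in $[a]_\mr\cap[b]_\ml=[u_2]_\mr\cap[u_1]_\ml=H$. So the key step, and the main obstacle, is showing $ab\ne\zeroel$, i.e.\ that $H$ contains an idempotent. For this I would use $u_1u_2\ne\zeroel$ with $u_1\in[u_1]_\ml$, $u_2\in[u_2]_\mr$: by Clifford--Miller (which I would prove from \cref{lem:green lemma} and \cref{lem:J equivalence to one sided equivalence} via Green's lemma), $[u_2]_\mr\cap[u_1]_\ml=H$ contains an idempotent $e$. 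Then $a=ea'$-style arguments give $ab\ne\zeroel$: writing $a=ex$ and $b=ye$, we get $e\cdot e=e\ne\zeroel$ and hence $ab\ne\zeroel$ by \cref{lem:green lemma}(1), applied contrapositively with $a\ml e$ and $b\mr e$. This shows $H\cdot H\subseteq H$. For the reverse inclusion, $H$ is a group by \cref{lem:H class is a group}, so every $h\in H$ factors as $h=he$ with $h,e\in H$, giving $H\subseteq H\cdot H$.
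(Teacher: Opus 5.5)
Items 1 and 2 follow the paper's argument. The only difference is cosmetic: you get $u_2\neq\zeroel$ from $u_1u_2\neq\zeroel$ rather than from \cref{cor:top zeros}, which is equally valid.

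For Item 3 your final conclusion is right, but the route is a needless detour. It also contains two misstatements.

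\textbf{The idempotent is unnecessary.} The ``main obstacle'' you identify, showing $ab\neq\zeroel$, is settled in one line by the very tool you end up using. For $a,b\in H=[u_1]_\ml\cap[u_2]_\mr$ you have $u_1\ml a$ and $u_2\mr b$. Apply \cref{lem:green lemma}(1) with $s=a$, $t=b$, $s'=u_1$, $t'=u_2$: then $ab=\zeroel$ implies $u_1u_2=\zeroel$, contradicting Item 1. This is exactly your last step with $(u_1,u_2)$ in place of $(e,e)$, and it is what the paper does. No idempotent is needed. Closure then gives that $H$ is a group, so the idempotent comes for free afterwards via \cref{lem:H class is a group}.

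\textbf{First misstatement.} Your remark that \cref{lem:green lemma}(1) relates $ba$ to $u_2u_1$ misreads the lemma. It needs the $\ml$-relation on the left factor and the $\mr$-relation on the right factor. Since $a,b$ both lie in $H$, both $ab$ and $ba$ are controlled by $u_1u_2$.

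\textbf{Second misstatement.} The Clifford--Miller fact you quote has the classes swapped. The correct form is: for $s\mj t$ in a finite semigroup, $st\in[s]_\mr\cap[t]_\ml$ iff $[s]_\ml\cap[t]_\mr$ contains an idempotent. Your stated version, applied to $u_1u_2\neq\zeroel$, would yield an idempotent in $[u_1]_\mr\cap[u_2]_\ml$, not in $H$. The conclusion you actually write, an idempotent in $H$, matches the correct theorem, so the argument survives.

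Even so, it imports a result the paper never proves; it needs Green's lemma on translations. The direct argument above is self-contained and shorter.
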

\begin{proof}
    \begin{enumerate}
        \item By the assumption that $\zeroel\notin \varphi[\ns(\tau,\epsilon)]$ and \cref{cor:top zeros} we immediately get $u_1\neq \zeroel$ and $u_2\neq \zeroel$.
        By \cref{cor:decomp semantically decrease} we have $u_1\cdot u_2\in \varphi[\ns(\tau,x)]$, thus also $u_1\cdot u_2\neq \zeroel$.
        \item By \cref{lem:green lemma} we now have $u_1u_2\mr u_1$ and $u_1u_2\ml u_2$, so $u_1u_2\in [u_1]_\mr\cap [u_2]_\ml$.
        \item Let some $a,b\in [u_1]_\ml\cap [u_2]_\mr$ then by \cref{lem:green lemma} and because $u_1u_2\neq \zeroel$ we have $ab\neq \zeroel$. Identically to Item 2 above, this implies (by \cref{lem:green lemma}) that $ab\in [b]_\ml\cap [a]_\mr$, but $[a]_\mr=[u_2]_\mr$ and $[b]_\ml=[u_1]_\ml$, so $ab\in [u_1]_\ml\cap [u_2]_\mr$. 
        We therefore have that $[u_1]_\ml\cap [u_2]_\mr$ is an $\mh$-class that is closed under multiplication. In particular, we have the inclusion $([u_1]_\ml \cap [u_2]_\mr) \cdot ([u_1]_\ml \cap [u_2]_\mr) \subseteq ([u_1]_\ml \cap [u_2]_\mr)$. 
        
        By \cref{lem:H class is a group} it follows that this class is a group, and therefore has an identity element, giving us the converse inclusion and concluding equality.
    \end{enumerate}
\end{proof}

We now proceed to prove the existence (and construct) a bounded general decomposition over simple (or $0$-simple) semigroup. Technically, this is one of the most involved proofs in this paper.

\begin{lemma}
\label{lem:simple decomp}
    Consider a simple or \zsimple semigroup $S$ 
    and an $\mr$-aligned morphism $\varphi\colon\wHole{V}\rightarrow S$, Then $\norm \varphi \leq 2|S|+1$.
    
    Moreover, every decomposable forest $f$ admits a full general decomposition $\tau$ (with $\depth(\tau)\leq 2|S|+1$) that has at most one $\zeroel$-valued $\I$-node and no other $\zeroel$ contexts.
\end{lemma}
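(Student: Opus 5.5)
The plan is to first remove the zero, then run an induction on the size of $\lr$, following the proof sketch of \cref{lem:abs:simple decomposition}.

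\textbf{Removing the zero.} Suppose $S$ is \zsimple and $\zeroel\in\varphi[\ns(\sing(f),\epsilon)]$. Since $\zeroel$ is idempotent, I take a decomposition that is minimal below the root with respect to the stable context set $\varphi^{-1}[\zeroel]$ and maximal under extension. By \cref{cor:min to I} this folds into a single $\zeroel$-valued $\I$-node. For every leaf $z$ of this node, maximality gives $\zeroel\notin\varphi[\ns(\cdot,z)]$. By \cref{cor:top zeros}, no $\zeroel$-context can appear anywhere below these leaves. This produces the unique $\zeroel$-valued $\I$-node claimed in the ``moreover'' part and costs depth $1$. So it suffices to show: if $\zeroel\notin\varphi[\ns(\tau,x)]$, then $\tau.\frst(x)$ has a decomposition of depth at most $2|S|$ with no $\zeroel$-contexts.

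\textbf{Induction on $|\lr(\tau,x)|$.} I aim for the bound $2|\lr(\tau,x)|+c$ for a small constant $c$. Since $\lr$ pairs determine disjoint $\mh$-classes (\cref{rmk:lr}), $|\lr|$ is at most the number of $\mh$-classes. If $\lr(\tau,x)=\emptyset$, every extension at $x$ has depth at most one binary step before reaching $B$. The base case then follows from \cref{lem:no next contexts implies B} together with one centipede.

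Otherwise I fix $(L,R)\in\lr(\tau,x)$ and build a maximal minimal centipede $\tau_L$ under $x$. Its legs $z1$ use contexts in $\varphi^{-1}[L]$ and satisfy $R\cap\varphi[\ns(\tau_L,z1)]\neq\emptyset$; the bottom leaf of the centipede has $(L,R)$ unavailable. Below each leg I build a decomposition $\tau_z$ that is minimal and maximal with respect to the stable set $\varphi^{-1}[L\cap R]$. By \cref{lem:nts props}, $L\cap R$ is a group $\mh$-class. Its contexts are all inside one $\mh$-class that is a group, so by \cref{lem:group decomp technical} (applied to the restriction into the group $L\cap R$) I can replace $\tau_z$ with a general decomposition of the same leaves of depth about $2|L\cap R|$. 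Alternatively, one could fold it into a single $\I$-node when all its contexts hit the identity.

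The key step, and the main obstacle, is to show that every leaf $w$ of $\tau_z$ and the bottom leaf of $\tau_L$ satisfy $(L,R)\notin\lr(w)$. Combined with \cref{lem:lr decreasing}, this gives $|\lr(w)|<|\lr(x)|$ and lets the induction apply. The argument I would try uses $\mr$-alignment. Take any way $\tau_z.\frst(w)=C'(f')$ of factoring $w$, and compare it with the factoring that plucks the current leg. Both residues lie in the single nonzero $\mj$-class, so alignment forces them to be $\mr$-equivalent, hence $\varphi(C')\in R$. If now $(L,R)\in\lr(w)$, then the first context $u_1$ has $[u_1]_\ml=L$ and lies in $R$, so $u_1\in L\cap R$. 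This would contradict the maximality of $\tau_z$ after rotating (via \cref{lem:decomposition decreasing}) so that the $L\cap R$ step happens at $w$.

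This contradiction is delicate for down-children $w'0$, where the relevant alignment comparison needs two uses of \cref{def:unamb property}. I would try to handle it by transporting the factorization through the LR-to-RL rotation of \cref{lem:rotation LR to RL}.

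\textbf{Depth accounting.} Adding one for the $\C$-node, the group part, and the induction gives a recursion whose total depth I expect to telescope to at most $2|S|$, since the $\mh$-classes of the various pairs partition $S$. The bound $2|S|+1$ then includes the zero-removal node.
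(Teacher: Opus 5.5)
Your outline follows the paper's architecture. That covers the zero-removal $\I$-node, the induction on $|\lr|$, the $L$-centipede whose legs admit an $R$-context, the group decomposition over $L\cap R$, and folding into a $\C$-node. However, the construction of $\tau_z$ has a genuine gap. You make $\tau_z$ minimal and maximal only for the context set $\varphi^{-1}[L\cap R]$.

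Your alignment argument needs, for each leaf $w$, a reference factoring of the \emph{same} forest $\tau_z.\frst(w)$ whose context lies in $R$. Such a factoring exists in two places: at the root of $\tau_z$ (the leg condition inherited from $\tau_L$), and at a down-child $w'0$ (namely $\tau_z.\ctx(w'1)$). It does not exist at a right-child leaf $w'1$. Plucking with a context in $L\cap R$ says nothing about the $\mr$-class of contexts that factor the plucked forest.

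For instance, map unary trees (words) to the rectangular band $\{p,q\}^2$ by $u\mapsto(\text{first letter},\text{last letter})$; this is $\mr$-aligned. Take $L$ to be ``ends in $q$'' and $R$ to be ``starts with $p$''.
\begin{itemize}
\item On $pqqp\cdots$, your $\cleq$-minimal $L\cap R$ step uses the context $pq$ and plucks $qp\cdots$.
\item Every factoring of $qp\cdots$ starts with $q$, so it has no $L\cap R$ step and is a leaf of $\tau_z$.
\item Yet the split $q\mid p\cdots$ witnesses $(L,R)\in\lr$ at that leaf.
\end{itemize}
So $|\lr|$ does not drop and the induction stalls.

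The missing idea is to restrict $\tau_z$ by a stable \emph{factorization} set $F$ rather than a context set. The paper plucks $(C,f_1)$ only if $\varphi(C)\in L\cap R$ \emph{and} every nonzero context factoring $\unr(f_1)$ maps into $R$. In the example this plucks $pqq$, cutting exactly at an $L$/$R$ boundary as in Simon's proof. With $F$ in place:
\begin{itemize}
\item Right-child leaves have the $R$-property by construction.
\item The root and down-children get it from a single application of \cref{def:unamb property}, with no rotation needed. You have the difficulty reversed: down-children are the easy case, right-children the problematic one.
\item The second use of alignment is in the contradiction step. Suppose $(C_L,C_R)\in\nts(w)$ witnesses $(L,R)$. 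Then $\varphi(C_L)\in L\cap R$. The forest $f''$ plucked by $C_L$ is factored by $C_R$ with $\varphi(C_R)\in R$, so alignment puts every nonzero factoring of $f''$ in $R$. Hence $(C_L,f'')\in F$, contradicting maximality of $\tau_z$.
\item Stability over $\varphi^{-1}[L\cap R]$, needed for the group step, then follows from \cref{lem:min factorization dec no references} applied to $F$.
\end{itemize}

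Minor points:
\begin{itemize}
\item The inductive bound should be $1+\sum_{(l,r)\in\lr(\tau,y)}2|l\cap r|$, not $2|\lr|+c$.
\item In the base case the residue chain below $x$ can be long (only $x11$-type nodes are excluded). Your centipede handles this, but the sentence claiming ``at most one binary step'' is false.
\item For simple $S$ (no zero), the zero-free part costs $2|S|+1$, not $2|S|$.
\end{itemize}
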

\begin{proof}
Various elements of the proof are depicted in \cref{fig:simple semigroup proof}.
\begin{figure}[ht]
    \centering
    \includegraphics[width=0.4\linewidth]{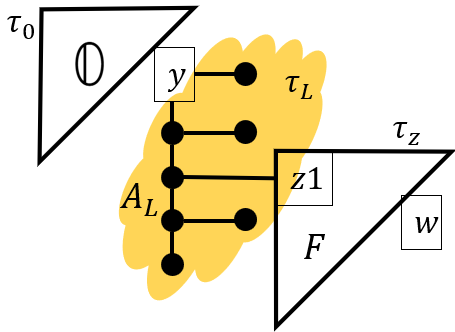}
    \caption{Illustration of the decomposition constructed in \cref{lem:simple decomp}.}
    \label{fig:simple semigroup proof}
\end{figure}

Consider a forest $f\in \tilH$ that is decomposable by $\wHole{V}$ to $B$. 
We start by getting rid of $\zeroel$, in case $S$ is \zsimple. To this end, we first decompose $f$ as much as we can with contexts mapped to $\zeroel$, so that the remaining leaves cannot be decomposed with $\zeroel$. We then fold this decomposition to a single $\I$-node.

Formally, let $\tau_0$ be a maximal-depth binary decomposition that is minimal (below $\epsilon\in \tau_0.\nds$, as per \cref{def:min dec}) with respect to the stable context set $\varphi^{-1}[\zeroel]$ (see the top triangle $\tau_0$ in \cref{fig:simple semigroup proof}).
Notice that $\sing(f)$ is such a minimal decomposition, so a maximal-depth one exists.
By \cref{cor:top zeros} we have that every leaf $x\in \tau_0.\type^{-1}[\lf]$ cannot be decomposed further by zero contexts (that is, $\zeroel\notin \varphi[\ns(\tau_0,x)]$).
Since $\tau_0$ is minimal below $\epsilon$, then by \cref{cor:min to I} we can collapse $\tau_0$ to a single $\I$-node. We refer to this collapsed tree (of depth $1$) as $\tau'_0$.

Now consider some leaf $x$ of $\tau_0$, and recall that $\zeroel\notin \varphi[\ns(\tau_0,x)]$, we construct a bounded general decomposition for $\tau_0.\frst(x)$ by induction over $|\lr(\tau,y)|$ (where $\tau$ is some extension of $\tau_0$ and $y\in \tau.\nds$). In \cref{fig:simple semigroup proof} the node $x$ is depicted as a node $y$, which conceptually can occur further below $\tau_0$, in an extension $\tau$.
We prove that for every such $\tau,y$ there is a full general decomposition for $\tau.\frst(y)$ with no $\zeroel_S$ valued $\I$-nodes, and of depth at most 
$1+\sum_{(l,r)\in \lr(\tau,y))}2\cdot|l\cap r|$
\subparagraph*{Base case: $\lr(\tau,y)=\emptyset$.} In this case, we have in particular that $\nts(\tau,y)=\emptyset$, i.e., $y$ cannot be decomposed by plucking two consecutive subforests to obtain a $y11$ node. By \cref{lem:decomp two semantically decreasing} it follows that $\nts(\tau,y0)=\emptyset$ as well, and by induction this holds for any descendant $y0^k$. We can therefore take a maximal down-centipede $\tau_y$ rooted at $y$, and we have that all its leaves are in $B$ (as per \cref{lem:no next contexts implies B}, otherwise we could obtain a $11$ child). 
    We can therefore fold $\tau_y$ to a single $\C$-node, and plug it into $\tau[y]$ (as per \cref{cor:general decomp closed substitution}), thus obtaining a decomposition of depth $1$.

In \cref{fig:simple semigroup proof} the base case would end after the highlighted centipede rooted at $y$.
\subparagraph*{Induction Step: $\lr(\tau,y)\neq \emptyset$.} 
    Fix some $(A_L, A_R)\in \lr(\tau,y)$. 
    Initially, we decompose $y$ with a down-centipede $\tau_L$ of maximal depth under several constraints (see the highlighted centipede in \cref{fig:simple semigroup proof}):
    \begin{description}
        \item[$\tau_L$ Constraint 1] Denote $\tau_1\coloneqq \tau[y\mapsto \tau_L]\in \pd(\wHole{V},B)$, then $\tau_1\in \pd(\wHole{V},B)$,
        That is, we must keep the decomposition partial even when plugged into $\tau$, so as not to eliminate the ability to extend it to a full decomposition.
        \item[$\tau_L$ Constraint 2] $\tau_1$ is minimal below $y$ with respect to the set $\varphi^{-1}[A_L]$. That is, we only use elements in $A_L$ for the decomposition. This is an analogue of marking the prefixes in a certain $\ml$-class in Simon's original setting of words. 
        \item[$\tau_L$ Constraint 3] For every $z1\in \tau_L.\nds$ we have $A_R\cap \varphi[\ns(\tau_1,yz1)]\neq \emptyset$. That is, every right child can be further decomposed by $A_R$. Again, this is analogous to the setting of words, where we look for consecutive $(A_L,A_R)$ prefixes.
    \end{description}
    Notice that there exists such a decomposition, since the singleton decomposition $\sing(\tau.\frst(y))$ satisfies the constraints. 
    In addition, observe that for the down-most leaf $y0^k$ of $\tau_L$ it holds that $(A_L,A_R)\notin \lr(\tau_1,y0^k)$, by the maximality of $\tau_L$.
    
    We proceed to the $A_R$ part of the decomposition, as follows.
    For every node $z1\in \tau_L.\nds$, we obtain a maximal-depth decomposition $\tau_z$ under the following constraints (see $\tau_z$ rooted at $z1$ in \cref{fig:simple semigroup proof}):
    \begin{description}
        \item[$\tau_z$ Constraint 1] Denote $\tau_2\coloneqq \tau_1[yz1\mapsto \tau_z]$, then $\tau_2\in \pd(\wHole{V},B)$. As above, we need to maintain decomposability.
        \item[$\tau_z$ Constraint 2] $\tau_2$ is minimal below $yz1$ with respect to the following stable \emph{factorizations set} (see \cref{def:min dec}):
        \[
        \begin{split}
        F=\{(C,f_1): & \varphi(C)\in A_L\cap A_R, \text{ and }\\
        &\forall C'\in\wHole{V} \text{ s.t }\varphi(C')\neq \zeroel, \text{ if } \unr(f_1)=C'(f'') \text{ for some $f''$, then }\varphi(C')\in A_R\}
        \end{split}
        \]
        Intuitively, the condition about $f_1$ in $F$ states that any decomposition of $f_1$ (or rather, its unraveling) into $C'(f'')$ must map $C'$ to $A_R$ (unless it is $\zeroel$).
        Notice that by \cref{lem:unr mor} and the fact that $F$ is defined with a condition on $\unr(f_1)$, it follows that $F$ is indeed a \emph{stable} factorizations set.
    \end{description}
    Again we have that $\sing(\tau_L.\frst(z1))$ is such a decomposition, and therefore a maximal-depth decomposition exists.
    
    We now turn to reason about the leaves of $\tau_z$. Consider a leaf $w$ of $\tau_z$ and let $f=\tau_z.\frst(w)$ be its forest. We claim that the following holds:
    \begin{equation}
    \label{eq:condition simple factorization}
    \forall C'\in\wHole{V} \text{ s.t }\varphi(C')\neq \zeroel, \text{ if }     \unr(f)=C'(f'') \text{ for some $f''$, then }\varphi(C')\in A_R
    \end{equation}
    That is, we claim that all the leaves satisfy the condition in Constraint 2 of $\tau_z$. We separate to cases.
    \begin{itemize}
        \item If $w=w'1$, i.e., $w$ is a right-child, the condition follows immediately by Constraint 2. Indeed, the factorization of $w'$ satisfies the constraint, and therefore $(\tau_z.\ctx(w'1),\tau_z.\frst(w'1))\in F$, but $\tau_z.\frst(w'1)=f$.
        \item If $w=w'0$, i.e., $w$ is a down-child, we start by showing that there exists \emph{some} context $C$ with $\varphi(C)\in A_R$ and $f'\in \tilH$ such that $f=C(f')$. 
        Since $\tau_z$ is a binary decomposition, then (by \cref{def:bin decomp}) $f=\tau_z.\frst(w'0)=\tau_z.\ctx(w'1)(\square_{\tau_z.\frst(w'1)})$. Denote $C=\tau_z.\ctx(w'1)$, then by Constraint 2 we have $\varphi(C)\in A_L\cap A_R$, and thus $\varphi(C)\in A_R$.
        
        Denote $f'=\square_{\tau_z.\frst(w'1)}$, we conclude that $f=C(f')$ and $\varphi(C)\in A_R$ (and $\varphi(C)\neq \zeroel$). 

        We now proceed to show \cref{eq:condition simple factorization}, i.e., that the above holds for \emph{every} context. 
        By \cref{lem:unr mor} we have $\unr(f)=\unr(C)(\unr(f'))$. Consider some context $C'$ and $f''\in \tilH$ such that $\varphi(C')\neq \zeroel$ and $\unr(f)=C'(f'')$. 
        Since $S$ is simple or \zsimple then by definition we have $\varphi(C')\mj \varphi(C)$. We now use the $\mr$-alignment of $\varphi$ (\cref{def:unamb property}) to conclude that $\varphi(C')\mr \varphi(C)$, so $\varphi(C')\in A_R$. Since this is true for every such $C'$, we conclude the property in \cref{eq:condition simple factorization}.

        \item If $w=\epsilon$, i.e., is the root (and only node) of $\tau_z$, then observe that $f=\tau_z.\frst(w)=\tau_1.\frst(yz1)$ (recall that $\tau_z$ is a decomposition of $\tau_1.\frst(yz1)$, in this case --  a singleton decomposition).
        The constraints on $\tau_L$ give us that $A_R\cap \varphi[\ns(\tau_1,yz1)]\neq \emptyset$, and in particular there exists some context $C$ with $\varphi(C)\in A_R$ such that $f=C(f')$ for some forest $f$. By the same logic as the previous case, we have that $\varphi(C)\neq \zeroel$. We can now repeat the same arguments as the previous case, and conclude the property in \cref{eq:condition simple factorization}.
    \end{itemize}

    Having established \cref{eq:condition simple factorization} for the leaf $w$, we now show that $(A_L,A_R)\notin \lr(\tau_2,yz1w)$ (recall that $\tau_2$ is obtained by plugging in $\tau_z$ in $\tau_1$ at $yz1$).
    Assume by way of contradiction that $(A_L,A_R)\in \lr(\tau_2,yz1w)$, so there are $(C_L,C_R)\in \nts(\tau_2,yz1w)$ with $\varphi(C_L)\in A_L$ and $\varphi(C_R)\in A_R$. By \cref{lem:nts props} we also have $\varphi(C_L)\neq \zeroel$ and $\varphi(C_R)\neq \zeroel$ (since $\zeroel$ was decomposed maximally already in $\tau_1$). In addition, by \cref{lem:unr mor} we have $\unr(f)=\unr(C_L)(f')$ for some $f'\in \tilH$. 
    This places us under the conditions of \cref{eq:condition simple factorization}, and therefore $\varphi(C_L)\in A_R$. 
    Thus, $\varphi(C_L)\in A_L\cap A_R$.

    We claim that we can now factor $w$ by the context $C_L$ while maintaining Constraint 2 of $\tau_z$, in contradiction to the maximality of $\tau_z$. 
    Indeed, $C_L\in \ns(\tau_2,yz1w)$, and factoring with $C_L$ yields $f=C_L(f'')$ for some forest $f''\in \tilH$ that is decomposable by $C_R$ where $\varphi(C_R)\in A_R$ (due to the assumption that $(C_L,C_R)\in \nts(\tau_2,yz1w)$).
    We can therefore write $\unr(f'')=\unr(C_R)(f''')$ with $\varphi(C_R)\in A_R$. 
    We now follow an identical argument as above: since $S$ is simple, every context $C'_R$ that satisfies $\unr(f'')=\unr(C'_R)(g)$ satisfies $\varphi(C_R)\mj \varphi(C'_R)$. Since $\varphi$ is $\mr$-aligned (\cref{def:unamb property}), we have $\varphi(C_R)\mr \varphi(C'_R)$, so $\varphi(C'_R)\in A_R$, and we conclude Constraint 2, leading to a contradiction.
    Thus, $(A_L,A_R)\notin \lr(\tau_2,w)$ for every leaf $w$.

    We now obtain a new decomposition $\tau_3$ by plugging in $\tau_z$ at $yz1$ for every right-child $z1\in \tau_L.\nds$. That is, $\tau_3=\tau_1[yz1\mapsto \tau_z]$ (for every $z1$). By the above, every leaf $w$ of $\tau_3$ below $y$ (i.e., $y\le w$) satisfies that $(A_L,A_R)\notin \lr(\tau_3,w)$. 
    Combining this with \cref{lem:lr decreasing}, we have that $|\lr(\tau_3,w)|<|\lr(\tau,y)|$ (recall that our induction is on $|\lr(\tau,y)|$). By the induction hypothesis we obtain for every leaf $w$ a general decomposition $\tau'_w\in \dec(\wHole{V},B,\tau_3.\frst(w),\varphi)$ of depth at most $\sum_{(l,r)\in \lr(\tau_3,w))}2|l\cap r|+1$.

    Intuitively, all the work thus far was in order to apply the induction hypothesis. In the remainder of the proof, we fold the prefix of the tree to a lower depth. This is done separately for $\tau_z$ and for $\tau_L$, where the latter is simply folding $\tau_L$ to a $\C$-node. We start with the former.

    Fix $z1\in \tau_L.\nds$ and consider $\tau_z$. Recall that by Constraint 2 of $\tau_z$ we have that all contexts in $\tau_z$ map to $A_L\cap A_R$.
    By \cref{lem:nts props} we have that $(A_L\cap A_R)\cdot (A_L\cap A_R) = (A_L\cap A_R)$, so $A_L\cap A_R$ is a sub-semigroup of $S$, with the following implications:
    \begin{itemize}
        \item The set $\wHole{V'}\coloneqq \varphi^{-1}[A_L\cap A_R]$ is a sub-semigroup of $\tilV$, and therefore a stable context set.
        \item By \cref{rmk:lr} $A_L\cap A_R$ is an $\mh$-class of $S$. Therefore, by \cref{lem:H class is a group} we have that $A_L\cap A_R$ is a group.
    \end{itemize}
    We now replace $\tau_z$ with a (possibly very different) decomposition $\tau'_z$ over the same \emph{leaves}, using the fact that all the contexts in $\tau_z$ come from a group, giving us the bound from \cref{lem:group decomposition bound}.
    
    Denote by $B' = \{\tau_z.\frst(w):w\in \tau_z.\type^{-1}[\lf]\}$ the forests in the leaves of $\tau_z$. We claim that $\tau_L.\frst(z1) = \tau_z.\frst(\epsilon)$ is decomposable by the (stable) context set $\wHole{V'}$ to the basis $B'$. 
    The nontrivial part required to establish decomposability is ensuring that all the contexts in $\tau_z$ are in $\wHole{V'}$, but also all the \emph{linking contexts} (\cref{def:linking context}). While the former holds by definition, the latter requires \cref{lem:min factorization dec no references}: since $\tau_z$ is minimal with respect to the factorization set $F$, it follows that it does not have inherited references. Hence, all the linking contexts are in $\wHole{V'}$ by the fact that $\wHole{V'}$ is stable, so we conclude decomposability.
    
    We restrict $\varphi$ to this context set by defining $\varphi'= \varphi|_{\wHole{V'}}\colon\wHole{V'}\rightarrow A_L\cap A_R$. Then, by \cref{lem:group decomposition bound} there exists a general decomposition $\tau'_z$ of $\tau_z.\frst(\epsilon)$ with leaves in $B'$, contexts in $\wHole{V'}$, according to $\varphi'$, and with depth at most $2|A_L\cap A_R|-1$.
    
    Notice that by the definitions of $\varphi',\wHole{V'}$ as restrictions of $\varphi$ and $\wHole{V}$ we have that $\tau'_z$ is also a general decomposition with respect to $\wHole{V}$ and $\varphi\colon\wHole{V}\rightarrow S$.
    Moreover, since $\zeroel_S\notin A_L\cap A_R$, then $\tau'_z$ has no $\zeroel$-valued $\I$-nodes.
    
    We are finally ready to complete the induction step, namely constructing a full general decomposition for $\tau,y$:
    \begin{itemize}
        \item For every leaf $w$ of $\tau_z$, let $\tau'_w$ be the decomposition for $w$ obtained by the induction hypothesis.
        Notice that $\tau'_w$ has no $\zeroel$ valued $\I$ node (by induction).
        \item For every leaf $z1$ of $\tau_L$, let $\tau''_z$ be the corresponding decomposition above. To each leaf of $\tau''_z$ with forest $f$ we plug a corresponding $\tau'_w$ that decomposes $f$ (note that each leaf of $\tau''_z$ indeed has such $\tau'_w$, but there may be leaves in $\tau'_w$ that do not appear in $\tau''_z$, since these two decompositions may differ, but share the basis $B'$).
        \item We fold $\tau_L$ to a single $\C$-node $\tau_C$, and plug each $\tau_z$ to its corresponding leaf in $\tau_C$ (again, there is a bijection between the leaves of $\tau_L$ and $\tau_C$).
    \end{itemize}
    The resulting resulting decomposition has no $\zeroel$ valued $\I$ nodes, and has depth bounded by:
    \[\begin{split}
    &\underbrace{1+\sum_{(l,r)\in \lr(\tau,y)\setminus\{(A_L,A_R)\}}(2\cdot|l\cap r|)}_{\text{inductive $\tau'_w$}} + \underbrace{2\cdot|l\cap r|)-1}_{\tau''_z} + \underbrace{1}_{\tau_C}\\
    &=1+\sum_{(l,r)\in \lr(\tau,y)}(2\cdot|l\cap r|)
    \end{split}
    \]
    as required to conclude the induction.

    We can now complete the proof of the lemma by bounding the overall depth of the decomposition.
    Observe that by \cref{rmk:lr} we have that $\{l\cap r: (l,r)\in \lr(\tau_0,x)\}$ are mutually exclusive, and thus \[1+\sum_{(l,r)\in \lr(\tau_0,x)}2|l\cap r| =1+ 2\sum_{(l,r)\in \lr(\tau_0,x)}|l\cap r| = 1+ 2\left|\bigcup_{(l,r)\in \lr(\tau_0,x)}l\cap r\right|\leq 2|S\setminus\{\zeroel\}|+1\] 
    We therefore have the following:
    \begin{itemize}
        \item If $S$ is \zsimple, the above is bounded by $2(|S|-1)+1=2|S|-1$. Then, the full decomposition includes the initial single $\I$-node $\tau'_0$, so the depth is at most $2|S|$.
        \item If $S$ is simple, the above is bounded by $2|S|+1$ (and there is no initial $\I$-node).
    \end{itemize}

    We conclude that for $\mr$-aligned $\varphi$ and simple or \zsimple $S$ we have $\norm{\varphi}\le 2|S|+1$. Moreover, in both cases we have at most a single $\zeroel$-valued $\I$-node as the initial node (namely $\tau'_0$).
\end{proof}

\subsection{Decomposition over null semigroups}
\label{sec:null semigroups}
The final ``atomic'' base case in our decomposition is that of \emph{null semigroups} (\cref{def:simple zsimple null semigroup}), i.e., those where $st=\zeroel$ for every $s,t\in S$. We show that such semigroups always allow decompositions of depth at most 2.
\begin{lemma}
    \label{lem:null semigroup decomposable depth 2}
    Consider a null semigroup $S$, then $\norm S \leq 2$.

    Moreover, every decomposable forest $f$ admits a full general decomposition $\tau$ (with $\depth(\tau)\leq 2$) that has at most one $\zeroel$-valued $\I$-node and no other $\zeroel$ contexts.
\end{lemma}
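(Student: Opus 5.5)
The plan mirrors the opening step of \cref{lem:simple decomp}. Fix $\wHole{V}$, a basis $B$, a morphism $\varphi\colon\wHole{V}\to S$, and a forest $f$ decomposable by $\wHole{V}$ to $B$. Since $S$ is null, $\{\zeroel\}$ is a subsemigroup, so $\varphi^{-1}[\zeroel]$ is a stable context set by \cref{prop:inverse image is unravelling stable}. Moreover $\zeroel$ is idempotent.

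First layer. Take a maximal-depth binary decomposition $\tau_0$ of $f$ that is minimal below $\epsilon$ with respect to $\wHole{V},B$ and the set $\varphi^{-1}[\zeroel]$. It exists because $\sing(f)$ qualifies and depth is bounded by \cref{lem:max dec}, and the completion argument of \cref{lem:min full dec} applies. By \cref{cor:min to I}, $\tau_0$ collapses to a single $\zeroel$-valued $\I$-node whose children are the leaves of $\tau_0$. If $\zeroel\notin\varphi[\ns(\sing(f),\epsilon)]$, this layer is simply omitted.

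Second layer. Let $x$ be a leaf of $\tau_0$. By maximality, $\zeroel\notin\varphi[\ns(\tau_0,x)]$. The key observation is that no two consecutive plucks are possible below $x$. Suppose $(C_1,C_2)\in\nts(\tau_0,x)$. Extend $\tau_0$ so that $x1$ and $x11$ are realised, then apply the TB to BT rotation (\cref{lem:rotation TB to BT}, equivalently \cref{cor:decomp semantically decrease}(2)). This gives $C_1C_2\in\ns(\tau_0,x)$ up to $\eqUnr$. But $\varphi(C_1C_2)=\varphi(C_1)\varphi(C_2)=\zeroel$ because $S$ is null, which is a contradiction. Hence $\nts(\tau_0,x)=\emptyset$, and by \cref{lem:decomp two semantically decreasing}(1) the same holds at every $x0^k$.

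Now take a maximal down-centipede below $x$, kept partial and minimal so that it has no inherited references (\cref{lem:min dec no references contexts}). Every right child $x0^k1$ has $\ns=\emptyset$. If it did not, then $x0^k$ would admit two consecutive plucks, so its $\nts$ would be nonempty, contradicting the previous paragraph. The down-most leaf also has $\ns=\emptyset$ by maximality. By \cref{lem:no next contexts implies B}, all leaves of the centipede therefore lie in $B$. Folding the centipede gives a single $\C$-node whose children are leaves, so it has depth $1$.

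Plugging these $\C$-nodes under the $\I$-node using \cref{cor:general decomp closed substitution} yields depth at most $2$. There is at most one $\zeroel$-valued $\I$-node. The centipede contexts are not mapped to $\zeroel$, because \cref{cor:decomp semantically decrease}(1) propagates $\zeroel\notin\varphi[\ns]$ down the $0$-spine.

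The step I expect to be delicate is making the ``no two consecutive plucks'' argument rigorous inside partial decompositions, because $x11$ must be realisable in an extension that stays in $\pd(\wHole{V},B)$. Here I rely directly on \cref{def:nts} together with the TB to BT rotation, which preserves partiality.
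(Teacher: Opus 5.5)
Your proposal is correct and follows essentially the same route as the paper. You collapse a maximal $\varphi^{-1}[\zeroel]$-minimal decomposition into one $\zeroel$-valued $\I$-node, and then use nullity to get a single $\C$-node below each leaf: two consecutive plucks would put $\zeroel=\varphi(C_1)\varphi(C_2)$ back into $\varphi[\ns]$ at a leaf of $\tau_0$. The only cosmetic difference is that the paper takes a minimal full decomposition below each leaf via \cref{lem:min full dec} and shows it is already a down-centipede, whereas you build a maximal minimal centipede and show its leaves lie in $B$ via $\nts=\emptyset$.
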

\begin{proof}
    Let $\varphi\colon\wHole{V}\rightarrow S$ be a morphism and consider a forest $f\in \tilH$ that is decomposable by $\wHole{V}$ to $B$.
    Let $\tau_0$ a maximal-depth decomposition rooted with $f$, which is minimal (as per \cref{def:min dec}) with respect to the context set $\varphi^{-1}[\zeroel]$, and notice that this is a stable context set (\cref{def:stable context set}).
    By \cref{cor:min to I} we can fold $\tau_0$ to a single $\I$ node $\tau'_0$.
    Moreover, by \cref{cor:top zeros} and the maximality of $\tau_0$, we have that $\zeroel\notin \ns(\tau_0,y)$ for every leaf $y\in \tau_0.\type^{-1}[\lf]$.
    
    We now show that each leaf $y$ can be extended to a full decomposition of depth 1 using a single $\C$-node. Consider a leaf $y$. 
    By \cref{lem:min full dec} there is $\tau_{0,y}\in \fd(\wHole{V},B)$ that is minimal below $y$ and $\equpto{\tau_{0,y}}{y}{\tau_0}$. 
    Let $\tau_{1,y}=\tau_{0,y}[y]$ by the subtree of $\tau_{0,y}$ rooted at $y$; we claim that $\tau_{1,y}$ is a down centipede.
    Indeed, assume by way of contradiction that some node $z\in \tau_{1,y}.\nds$ has $z11\in \tau.\nds$, then $\tau_{1,y}.\ctx(z11)\in \ns(\tau_{1,y},z1)$. 
    Since $S$ is a null semigroup, then by \cref{cor:decomp semantically decrease} we have that 
    $\zeroel=\varphi(\tau_{1,y}.\ctx(z1))\cdot \varphi(\tau_{1,y}.\ctx(z11))\in \varphi[\ns(\tau_{1,y},z)]$, but (again by \cref{cor:decomp semantically decrease}) this contradicts the maximality of $\tau_{0}$ (which guaranteed that $\zeroel$ is not in $\ns$ of any leaf).

    We can therefore fold each $\tau_{1,y}$ to a single $\C$-node, yielding an overall decomposition that consists of a singe $\I$-nodes whose leaves are single $\C$-nodes, hence depth at most $2$, and indeed has at most single $\zeroel$ valued $\I$ node.
\end{proof}

\section{A Bound on Decomposition Depth}
\label{sec:main theorem bound}
We are now ready to prove our main result, namely that every $\mr$-aligned morphism admits decompositions of bounded depth.

\begin{theorem}
\label{thm:main}
    Consider a finite semigroup $S$ and an $\mr$-aligned morphism $\varphi\colon\wHole{V}\rightarrow S$, then $\norm{\varphi}\leq 4|S| - 3$.
\end{theorem}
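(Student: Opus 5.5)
We argue by strong induction on $|S|$, following the outline of \cref{sec:abs:main thm}, and in fact prove the stronger statement: for every $\mr$-aligned $\varphi\colon\wHole{V}\to S$ and every decomposable forest there is a full general decomposition of depth at most $4|S|-3$ in which, if $S$ has a zero, there is at most one $\zeroel$-valued $\I$-node, located at the root, and no other $\zeroel$ contexts. This strengthening is what the base cases already provide (\cref{lem:simple decomp,lem:null semigroup decomposable depth 2}), and it is what makes the gluing in the inductive step possible. The base cases are as follows. If $|S|=1$, $S$ is a trivial group and \cref{lem:group decomposition bound} gives depth $1=4-3$. If $S$ is null, \cref{lem:null semigroup decomposable depth 2} gives depth $2\le 4|S|-3$ for $|S|\ge 2$. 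If $S$ is simple or \zsimple, \cref{lem:simple decomp} gives depth $2|S|+1\le 4|S|-3$ once $|S|\ge 2$.

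For the inductive step, pick a nonzero $\mj$-minimal element $a\in S$. Such an element exists because $S$ is not the trivial semigroup $\{\zeroel\}$. Set $M=S^1aS^1$. We check that $M$ is a $\zeroel$-minimal ideal: any ideal $I'\subsetneq M$ other than $\{\zeroel\}$ contains a nonzero element $b\le_\mj a$, and $\mj$-minimality forces $b\mj a$, hence $I'\supseteq M$. We also check that $|M|\ge 2$ if $S$ has a zero. If $M=S$, then $S$ is either null or (by \cref{lem:min ideal is simple}, when $M^2=M$) simple or \zsimple; both are base cases. If $M^2\ne M$, then $M^2$ is an ideal inside $M$, so $M^2=\{\zeroel\}$ and $M$ is null.

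So assume $M\subsetneq S$. By \cref{lem:rees unambiguous}, $\psi=\quo_M\circ\varphi$ is $\mr$-aligned into $S/M$, and $|S/M|=|S|-|M|+1<|S|$. By the induction hypothesis, $f$ has a decomposition $\tau$ over $\psi$ of depth at most $4|S/M|-3$, with at most one zero-valued $\I$-node, at the root, and no other zero contexts.

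The key observation is that every node of $\tau$ other than that root $\I$-node uses only contexts in $\psi^{-1}[S/M\setminus\{M\}]=\varphi^{-1}[S\setminus M]$. On these nodes $\psi$ and $\varphi$ agree, so $\tau$ is still a valid general decomposition over $\varphi$ everywhere except possibly at the root. I expect the main obstacle to be verifying this claim for $\I$-nodes: an $\I$-node's idempotent $v\ne M$ in $S/M$ must lift to an idempotent of $S$, and its hidden stable decomposition must map to that single element under $\varphi$. This holds because $\quo_M$ is injective off $M$. For $\C$-nodes and $\B$-nodes the claim is immediate, since they do not reference the morphism.

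It remains to replace the root $\I$-node, whose hidden binary decomposition uses contexts in $\varphi^{-1}[M]$ (note that $\varphi^{-1}[M]$ is stable by \cref{prop:inverse image is unravelling stable}). Its leaves are exactly the children of the root. By \cref{rmk:I node full decomp vs stable}, the root forest is decomposable over $\varphi^{-1}[M]$ to the basis of those children. By \cref{lem:min ideal unambiguous}, $\varphi|_{\varphi^{-1}[M]}\colon\varphi^{-1}[M]\to M$ is $\mr$-aligned. If $M$ is null, \cref{lem:null semigroup decomposable depth 2} gives depth at most $2$. Otherwise $|M|<|S|$, and the induction hypothesis gives depth at most $4|M|-3$, with at most one zero $\I$-node, at the root. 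We then plug the subtrees of $\tau$ below the children back in at the matching leaves via \cref{cor:general decomp closed substitution}, which is legitimate because leaf forests agree.

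The total depth is at most $\max(2,4|M|-3)+(4|S/M|-3-1)$. Using $|S/M|=|S|-|M|+1$, this is at most $4|M|-3+4|S|-4|M|+4-4=4|S|-3$ when $|M|\ge 2$; for $|M|\le 1$ the null bound $2$ is absorbed similarly. The resulting decomposition again has at most one zero-valued $\I$-node, at its root, which maintains the strengthened invariant for the induction.
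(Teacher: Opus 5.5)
Your proof is correct and follows the paper's argument. It uses the same strengthened invariant (at most one $\zeroel$-valued $\I$-node and no other $\zeroel$ contexts), the same ideal $M=S^1aS^1$ for a $\mj$-minimal nonzero $a$, the same Rees-quotient induction, the same replacement of the single $M$-valued $\I$-node, and identical arithmetic.

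The differences are cosmetic:
\begin{itemize}
\item For non-null $M$ you apply the induction hypothesis via \cref{lem:min ideal unambiguous}, whereas the paper applies \cref{lem:simple decomp} directly. Both give depth at most $4|M|-3$, and both need the restricted morphism to be $\mr$-aligned, which you state explicitly.
\item Your ``at the root'' refinement is harmless but unnecessary. Substituting a depth-$D$ tree for a node at any position raises the total depth by at most $D-1$.
\item You are more explicit than the paper about why $\I$-nodes valued outside $M$ remain legal over $\varphi$: $\quo_M$ is injective off $M$, so such a value is idempotent in $S$.
\end{itemize}

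There is one slip to fix. The inequality $|S/M|<|S|$ needs $|M|\ge 2$, and you justify this only when $S$ has a zero. In the zero-free case, $M=\{a\}$ would force $sa=as=a$ for every $s\in S$, which makes $a$ a zero, a contradiction. So $|M|\ge 2$ always holds. Replace your closing remark that ``$|M|\le 1$ is absorbed similarly'' with this observation. That remark is not correct as stated: if $|M|=1$ could occur, the induction hypothesis would not apply to $S/M$ at all, and the resulting bound $2+4|S|-4$ would exceed $4|S|-3$.

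A minor point on the base case $|S|=1$: \cref{lem:group decomposition bound} does not itself assert the $\zeroel$-$\I$-node invariant. Its construction does satisfy it (a single $\I$-node over $\varphi^{-1}[\zeroel]$ with leaf children). Alternatively, argue directly as the paper does, folding a minimal decomposition into one $\I$-node.
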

\begin{proof}
    Let $f$ be a decomposable forest.
    The proof is by induction on the semigroup size $|S|$. More precisely, our induction hypothesis is that every decomposable forest has a full general decomposition over $\varphi$ of depth at most $4|S|-3$ such that the decomposition has at most one $\zeroel$-valued $\I$-node and no other $\zeroel$ contexts. The proof has several base cases, corresponding to the decompositions of \cref{sec:decomposition theorems}.
\subparagraph*{Base case: $|S|=1$.}
In this case, $S=\{\unitel\}$ where $\unitel$ is idempotent and is also the zero element of $S$. Therefore, every full decomposition of $f$ that is minimal below $\epsilon$ can be transformed to a single $\I$-node. Since $f$ is decomposable, then such a minimal decomposition exists, so we have $\norm{\varphi}\le \norm{S}\le 1= 4|S|-3$, and the single $\zeroel$-valued $\I$-node requirement holds. 

\subparagraph*{Base case: $S$ is a null semigroup and $|S|>1$.}
By \cref{lem:null semigroup decomposable depth 2} we immediately have $\norm \varphi \leq \norm S \leq 2 \leq 4|S| - 3$ (as $|S| \geq 2$), and that the single $\zeroel$-valued $\I$-node requirement holds.

\subparagraph*{Base case: $S$ is simple or \zsimple and $|S|>1$.}
By \cref{lem:simple decomp} we immediately have $\norm \varphi \leq 2|S|+1 \leq 4|S| - 3$ (as $|S| \geq 2$), and that the single $\zeroel$-valued $\I$-node requirement holds.

\subparagraph*{Induction Step: general $S$ with $|S|>1$.}
In the following, we use subscripts to denote the zero element of different semigroups (e.g., $\zeroel_S$ for the zero of $S$).

Since $|S|>1$, there exists some $a\in S$ with\footnote{$S$ might not have a zero at all.} $a\neq \zeroel_S$ such that $a$ is minimal with respect to $\le_\mj$ (among the non-zero elements). 

Conceptually, we follow here a similar scheme to the proof of Simon's factorization for words: we inductively ``climb up'' the hierarchy of $\mj$-classes, each time considering the ideal generated by a $\mj$-minimal element, and taking the quotient over it.

Consider the ideal $M=S^1aS^1$. We show that $M$ is a minimal non-zero ideal of $S$ (with respect to containment). To this end, notice that $a\in M$, so $M\neq \{\zeroel_S\}$ (if such a zero even exists), and that by definition $M$ is an ideal. It remains to show that it is minimal. 
Consider some non-zero ideal $I$ such that $I\subseteq M$, we show that $M=I$. Let $b\in I\setminus\{\zeroel_S\}$, then $b\in S^1a^S$, so $b\le_\mj a$. 
By the minimality of $a$ with respect to $\le_\mj$ and since $b\neq \zeroel$, we have $a\mj b$. We can therefore write $a=xby$, but since $I$ is an ideal and $b\in I$ we have
\[M=S^1aS^1=S^1xbyS^1\subseteq S^1bS^1\subseteq S^1IS^1=I\]
as required. 

Next, we notice that $|M|>1$. Indeed, if $S$ has a zero $\zeroel_S$, then $\zeroel_S\in M$ and since $M$ is non-zero it follows that $|M|>1$. Otherwise, $S$ does not have a zero. If $M=\{a\}$ (by way of contradiction) then for every $b\in S$ we have $S^1abS^1\subseteq M$ and similarly $S^1baS^1\subseteq M$ (since $M$ is an ideal). This implies $ab=ba=a$, so $a$ is a zero of $S$, which is a contradiction.

We conclude that $M$ is a minimal non-zero ideal of $S$ and $|M|>1$.
In particular, by \cref{lem:rees cong} we can consider the quotient morphism $\quo_M\colon S\to S/M$.  
Furthermore, we can apply \cref{lem:rees unambiguous} and get that $\quo_M\circ \varphi$ is $\mr$-aligned. 

Now, since $|M|>1$, then $|S/M|=|S|-|M|+1<|S|$ (by the definition of Rees Congruence in \cref{def:rees}). 
We can therefore apply the induction hypothesis, and obtain a decomposition $\tau$ of $f$ with respect to $\quo_M\circ \varphi$ such that $\depth(\tau)\le \norm{\quo_M\circ \varphi}\le 4|S/M|-3$ and $\tau$ has at most one $\zeroel_{S/M}$-valued $\I$-node. 

Our goal is now to convert $\tau$ to a decomposition with respect to $\varphi$. To achieve this, we first notice that there is very little to change: in the spirit of \cref{rmk:identify singletons with elements Rees}, the set $S/M$ can be viewed as $\{M\}\cup S\setminus M$. In particular, for every $a\in S\setminus M$ we have $\quo_M(a)=a$, so the decomposition $\tau$ is already essentially with respect to $\varphi$, with one important exception: for all $C\in \wHole{V}$ such that $\varphi(C)\in M$ we have $\quo_M\circ \varphi(C)=M\in S/M$. 

Thus, the parts of $\tau$ that might need to be adapted are those whose contexts map to $M$.
Fortunately, note that $M=\zeroel_{S/M}$ by \cref{def:rees}, and by the induction hypothesis there is at most one $\I$-node in $\tau$ that is mapped to $\zeroel_{S/M}$ and no other $\zeroel_{S/M}$ contexts. 
Our task is therefore to convert the single $\zeroel_{S/M}$-valued $\I$-node of $\tau$, denoted $x$, to a decomposition over the same leaves with respect to $\varphi$. Let $g=\tau.\frst(x)$ and let $B'=\{\tau.\frst(xi)\mid i\in \bbN \wedge xi\in \tau.\nds\}$
be the set of forests in the children of the $\I$-node $x$. By \cref{def:full general decomposition}, there is a decomposition of $g$ to the basis $B'$ whose contexts are mapped by $\varphi$ to $M=\zeroel_{S/M}$. We wish to establish a bound on the depth of such a decomposition. We split to cases.

\begin{itemize}
    \item If $M$ has a zero $\zeroel_M$ and $M^2=\{\zeroel_M\}$, then $M$ is a null semigroup (\cref{def:simple zsimple null semigroup}). We can then use \cref{lem:null semigroup decomposable depth 2} to obtain a decomposition $\tau'$ of $g$ to the basis $B'$ over $\varphi$ of depth at most $2$. 

    Plugging $\tau'$ instead of the node $x$ in $\tau$, and connecting each of its leaves to an appropriate subtree of $\tau$ (as per \cref{cor:general decomp closed substitution}) we obtain a decomposition of $f$ with respect to $\varphi$ of depth at most
    \[\depth(\tau)\underbrace{-1+2}_{\text{replace $x$ with $\tau'$}}\le 4|S/M|-3+1\le 4(|S|-1)-3+1\le 4|S|-3\]
    
    \item If $M^2\neq \{\zeroel_M\}$, or $M$ does not have a zero, observe that $M^2$ is a non-zero ideal of $S$ (since $M$ is an ideal) and moreover $M^2\subseteq M$. 
    Since $M$ is minimal by assumption, we get $M^2=M$. Therefore, $M$ satisfies the conditions of \cref{lem:min ideal is simple}, so it is simple or \zsimple. 

    Recall that we are under the assumption that $S$ is not simple or \zsimple, and therefore $M\neq S$, so $|M|<|S|$. Then, using \cref{lem:simple decomp} we obtain a decomposition $\tau'$ of $g$ to the basis $B'$ over $\varphi$ of depth at most $4|M|-3$. 
    Plugging $\tau'$ in as the case above gives us a decomposition of $f$ with respect to $\varphi$ of depth at most
    \[\depth(\tau)-1+4|M|-3\le 4|S/M|-3-1+4|M|-3=4(|S|-|M|+1)-4+4|M|-3\le 4|S|-3\]
\end{itemize}
We therefore obtain a full general decomposition of $f$ over $\varphi$ of depth at most $4|S|-3$, as required. 
In order to complete the induction, it remains to show that there is at most one $\zeroel_S$-value $\I$-node. To this end, observe that $\zeroel_S$ does not appear in $\tau$ (before our change), since $\zeroel_S\in M$ (if it exists). Thus, the only possible place for $\zeroel_S$ to occur in the modified decomposition is within $\tau'$.

Since $M$ is an ideal of $S$, by \cref{lem:0I} we have $\zeroel_M=\zeroel_S$ and therefore by either \cref{lem:min ideal is simple} or \cref{lem:simple decomp} (depending on how $\tau'$ is obtained) we have that there is at most one $\zeroel_M$-valued $\I$-node in $\tau'$, concluding the induction and the proof.
\end{proof}

\begin{corollary}
\label{cor:main forest thm}
    Consider a forest language $L$ over some alphabet $A$ such that its syntactic forest morphism $\tup{\alpha_L,\beta_L}\colon\tup{\hf,\vf}\rightarrow \tup{H_L,V_L}$ satisfy that $\beta_L$ is $\mr$-aligned, then every forest $f\in \hf$ admits a general decomposition $\tau\in \dec(\vf,\bs,f,\beta_L)$ of bounded depth $4|V_L|-3$.
\end{corollary}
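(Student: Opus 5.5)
This corollary should follow by instantiating \cref{thm:main} with $\varphi=\beta_L$ (lifted to $\tilV$ as in \cref{def:lifted morph}, i.e., $\wHole{\beta_L}=\beta_L\circ\unr$), taking $S=V_L$, the stable context set $\wHole{V}=\tilV$, and the basis $B=\bs$. The key steps are as follows.

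First, check that the hypotheses of \cref{thm:main} hold. $V_L$ is a finite monoid, hence a finite semigroup, because $L$ is regular and its syntactic forest algebra is finite. $\tilV$ is stable: it is a subsemigroup of itself and is trivially closed under $\eqUnr$. The morphism $\wHole{\beta_L}\colon\tilV\to V_L$ is a semigroup morphism, since $\unr$ is a monoid morphism by \cref{def:unraveling}. The $\mr$-alignment assumption on $\beta_L$ is naturally read for the lifted morphism. If the assumption is stated only on $\vf$, it transfers directly, because $\mr$-alignment quantifies over factorizations $f=C_1(f_1)=C_2(f_2)$. By \cref{lem:unr mor}, applying $\unr$ turns such a factorization into $\unr(f)=\unr(C_1)(\unr(f_1))=\unr(C_2)(\unr(f_2))$ with the same images under $\wHole{\beta_L}$. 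The case $\unr(C_1)=\unr(C_2)$ is harmless, since the required $r$ can then be taken as a $\mj$-equivalent right unit of $\varphi(C_1)$ obtained via \cref{lem:J equivalence to one sided equivalence}. Finally, the statement writes $\vf$ where the paper's framework uses $\tilV$, and I would read $\dec(\vf,\ldots)$ as $\dec(\tilV,\ldots)$.

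Second, check decomposability. By \cref{lem:every forest is decomposable to standard basis}, every $f\in\hf$ admits a full decomposition in $\fd(\tilV,\bs)$, so $f$ is decomposable by $\tilV$ to $\bs$. By \cref{def:semigroup decomp bound}, $\min_{\tau\in\dec(\tilV,\bs,f,\wHole{\beta_L})}\depth(\tau)\le\norm{\wHole{\beta_L},\bs}\le\norm{\wHole{\beta_L}}$, and \cref{thm:main} bounds the latter by $4|V_L|-3$. The minimum is attained, since depths are natural numbers and the set of decompositions is nonempty. This yields the required $\tau$.

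The main obstacle is only the bookkeeping in the first step: making sure the hypothesis on the syntactic morphism over $\vf$ matches \cref{def:unamb property} over $\tilV$. I expect this to reduce cleanly via \cref{lem:unr mor}; the remaining steps are direct citations.
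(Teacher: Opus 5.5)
Your main line is correct and matches how the paper intends this corollary, which it states without a separate proof right after \cref{thm:main}. You apply \cref{thm:main} to the lifted morphism $\beta_L\circ\unr\colon\tilV\to V_L$ (as in \cref{def:lifted morph}) over the stable set $\tilV$, and use \cref{lem:every forest is decomposable to standard basis} to see that every $f\in\hf$ is decomposable to $\bs$. Reading $\dec(\vf,\ldots)$ as $\dec(\tilV,\ldots)$ is also right, since $\vf$ is not stable and decompositions need contexts containing default holes.

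The fallback transfer in your first step, however, rests on a false claim, so drop it rather than rely on it. When $\unr(C_1)=\unr(C_2)$ you need $r\mj s$ with $s=sr$, where $s=\varphi(C_1)$.

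\begin{itemize}
\item Such an $r$ exists only if the $\mj$-class of $s$ contains an idempotent: from $s=sr$ you get $s=sr^{\omega}$, which puts the idempotent $r^{\omega}$ in the $\mj$-class of $s$.
\item \Cref{lem:J equivalence to one sided equivalence} cannot supply such an $r$, because its hypothesis $u\mj uv$ is not available here.
\item A concrete failure: in the monoid $\{\unitel,a,\zeroel\}$ with $a^2=\zeroel$, the $\mj$-class of $a$ is $\{a\}$ and $aa\neq a$.
\end{itemize}

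This case is not vacuous in the paper's framework. For any $D\in\vf$ and $\sigma\in A$, the contexts $C_1=D\cdot(\square+\square_{\az})$ and $C_2=D$ are distinct and both unravel to $D$. Moreover $C_1(\sigma)=C_2(\sigma+\square_{\az})$. So your argument does not show that alignment over $\vf$ transfers to $\tilV$.

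Simply take the hypothesis to mean alignment of $\beta_L\circ\unr$, which is the paper's intent. Alternatively, one can check that the proof of \cref{thm:main} applies the alignment condition only to unraveled forests $\unr(f)\in\hf$ (inside \cref{lem:simple decomp}), so alignment over $\vf$ would suffice. That, however, requires re-inspecting the proof rather than citing the theorem.
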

This concludes the proof of our main result. In the following we present an additional tool for obtaining decomposition depth bounds that is independent from $\mr$-alignment, but assumes that there are already-known bounds.

\subsection{A Reduction Scheme}
\label{sec:reduction}
The inductive step in the proof of \cref{thm:main} constructs a decomposition over a composition of morphism by ``expanding'' an $\I$-node with a decomposition over its leaves. This technique does not rely on $\mr$-alignment, and is a way to obtain general bounds for (possibly not $\mr$-aligned) morphisms. In this section we formalize this result.

Consider two finite semigroups $V_1,V_2$ and a semigroup morphism $\varphi\colon V_1\to V_2$. The following lemma gives an upper bound on $\norm{V_1}$ by means of $\norm{V_2}$ and the bound on sub-semigroups of $V_1$. Specifically, observe that for every idempotent element $e\in V_2$, we have that $\varphi^{-1}(e)$ is a sub-semigroup of $V_1$. We then have the following.
\begin{lemma}[Reduction Lemma]
\label{lem:reduction}
Consider two finite semigroups $V_1,V_2$ and a semigroup morphism $\varphi\colon V_1\to V_2$, then
for every context set $\wHole{V}$ and a morphism $\beta\colon\wHole{V}\to V_1$ we have 
 \[\norm{\beta} \leq \norm {\phi\circ \beta} \cdot \max_{\overset{e\in V_2}{e^2=e}}\norm{\varphi^{-1}[e]}\]
 In particular,
 \[\norm{V_1} \leq \norm {V_2} \cdot \max_{\overset{e\in V_2}{e^2=e}}\norm{\varphi^{-1}[e]}\]
\end{lemma}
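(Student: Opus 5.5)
The plan is to take a decomposable forest $f$ (over $\wHole{V}$ and basis $B$) and fix a full general decomposition $\tau\in\dec(\wHole{V},B,f,\varphi\circ\beta)$ of depth at most $\norm{\varphi\circ\beta}$. We then convert $\tau$ into a decomposition with respect to $\beta$ node by node, keeping the tree structure and expanding only the $\I$-nodes. This mirrors the inductive step of \cref{thm:main}.

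Leaves and $\B$-nodes satisfy purely syntactic conditions (\cref{def:bin decomp}), so they remain valid with respect to $\beta$. The same holds for $\C$-nodes: their defining centipede has no inherited references and is independent of the morphism, so they remain valid verbatim. An $\I$-node $x$ of $\tau$ has value $e=\val(\tau,x,\varphi\circ\beta)$, an idempotent of $V_2$. It comes with a full binary decomposition $\tau'_x\in\fd\big((\varphi\circ\beta)^{-1}[e]\cup\{\square\},\tau.\frst[S_x]\big)$, where $S_x$ is the set of children of $x$. Set $\wHole{V_e}=\beta^{-1}[\varphi^{-1}[e]]\cup\{\square\}$. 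This is a stable context set by \cref{prop:inverse image is unravelling stable}, since $\varphi^{-1}[e]$ is a sub-semigroup of $V_1$. Hence $\tau.\frst(x)$ is decomposable by $\wHole{V_e}$ to the basis $B_x=\tau.\frst[S_x]$, as witnessed by $\tau'_x$. Applying the definition of $\norm{\cdot}$ to the restricted morphism $\beta_e=\beta|_{\wHole{V_e}}\colon\wHole{V_e}\to\varphi^{-1}[e]$ gives a full general decomposition $\sigma_x\in\dec(\wHole{V_e},B_x,\tau.\frst(x),\beta_e)$ of depth at most $\norm{\beta_e}\le\norm{\varphi^{-1}[e]}$. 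Every $\I$-node of $\sigma_x$ is witnessed by an idempotent of $\varphi^{-1}[e]\subseteq V_1$, so $\sigma_x$ is also a general decomposition with respect to $\beta$ over $\wHole{V}$. Each leaf of $\sigma_x$ has a forest in $B_x$, i.e., equal to $\tau.\frst(y)$ for some child $y\in S_x$. We plug $\tau[y]$, recursively converted, at that leaf using \cref{cor:general decomp closed substitution}. Leaves of $\sigma_x$ may repeat forests; we simply reuse the subtree.

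For the depth count, define the converted tree top-down, replacing each node of $\tau$ by a block of depth at most $K=\max(1,\max_{e}\norm{\varphi^{-1}[e]})$: non-$\I$ nodes contribute depth $1$, and $\I$-nodes contribute at most $\norm{\varphi^{-1}[e]}$. A root-to-leaf path in the result passes through at most $\depth(\tau)$ blocks, so the depth is at most $\depth(\tau)\cdot K$. The main obstacle is the degenerate case where the max is $0$, and likewise the case where some $\sigma_x$ has depth $0$ (when $\tau.\frst(x)\in B_x$ already). A depth-$0$ block merely shortens paths, so it is harmless. To get the stated product rather than a product with $\max(1,\cdot)$, one checks that every nontrivial semigroup has a bound of at least $1$. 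The idempotent sub-semigroups can only have bound $0$ when no forest needs decomposing, and in that case the surviving non-$\I$ nodes still need depth-$1$ blocks. We handle this by interpreting $\norm{\cdot}\ge1$ whenever any decomposable forest is not already in the basis, which is the only case where $\tau$ has depth at least $1$.

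The second inequality then follows by taking suprema over $\beta$, $\wHole{V}$ and $B$, using $\norm{\varphi\circ\beta}\le\norm{V_2}$.
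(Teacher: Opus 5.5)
Your proposal is correct and follows essentially the same route as the paper: keep $\lf$-, $\B$- and $\C$-nodes verbatim, replace each $\I$-node of value $e$ by a decomposition over $\beta^{-1}[\varphi^{-1}[e]]$ (to the basis of its children's forests) of depth at most $\norm{\varphi^{-1}[e]}$, recursively plug in the converted subtrees, and count depth per original node. Your treatment of the degenerate case, where you propose to ``interpret'' $\norm{\cdot}\ge 1$, is unnecessary and should be dropped. As the paper notes, $\varphi[V_1]$ contains an idempotent $e$ whose preimage is a nonempty subsemigroup, and since $\norm{\varphi^{-1}[e]}$ is a supremum over all context sets and bases it is automatically at least $1$, so no reinterpretation of the bound is needed.
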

\begin{proof}
    Assume 
    $\norm{\varphi\circ \beta}<\infty$
    (otherwise there is nothing to prove).
    In order to prove the claim, we need to show that for every $\wHole{V}$, $B$ and morphism $\beta\colon\wHole{V}\to V_1$, and for every forest $f$ that is decomposable by $\wHole{V}$ to $B$, there exists a decomposition $\tau_1\in \dec(\wHole{V},B,f,\beta)$ whose depth is at most 
    $\norm {\phi\circ \beta} \cdot \max_{\overset{e\in V_2}{e^2=e}}\norm{\varphi^{-1}[e]}$.
    
    Thus, fix such a context set $\wHole V$, a basis $B$, a morphism $\beta\colon\wHole{V}\to V_1$, and a decomposable forest $f$. 
    Observe that $\varphi\circ \beta\colon \wHole{V}\to V_2$ is also a morphism, and recall from \cref{rmk:decomposability} that $f$ is decomposable to $B$ regardless of the morphism. 
    Therefore, there exists a decomposition $\tau_2\in \dec(\wHole{V},B,f,\varphi\circ \beta)$ with 
    $\depth(\tau_2)\le \norm {\phi\circ \beta}$. 
    We now construct the desired $\tau_1$, starting with a high-level overview.

    A naive approach is to ask whether $\tau_2$ itself is in $\dec(\wHole{V}, B, f, \varphi)$.
    Indeed, this seems to almost work, since all the ``syntactic requirements'' of \cref{def:general decomposition,def:full general decomposition} hold: the contexts in $\tau_2$ are from $\wHole{V}$, its leaves are from $B$, and all $\C$-nodes have the required underlying binary decomposition structure with no inherited  references.
    
    However, a problem arises when considering $\I$-nodes. Specifically, there may be an $\I$-node $x\in \tau_2.\nds$ with idempotent value $e\in V_2$. 
    By \cref{def:general decomposition} this means that there exists a binary decomposition $\tau_x$ whose contexts are from $(\varphi\circ \beta)^{-1}[e]=\beta^{-1}[\varphi^{-1}[e]]$. 
    However, in order for $x$ to be an $\I$ node over $\beta$ alone, we need the contexts to be from $\beta^{-1}[e']$ for some idempotent $e'\in V_1$). Unfortunately, it is not guaranteed that $\varphi^{-1}[e]$ is a single idempotent element (it may contain several elements, and they are not necessarily idempotent). Moreover, it is not even guaranteed that there is another binary decomposition that corresponds to $x$ with an idempotent value in $V_1$.

    Intuitively, this already hints at the multiplicative factor in the statement of the lemma: we construct $\tau_1$ by following the structure of $\tau_2$ until encountering an $\I$-node $x\in \tau_2.\nds$. By \cref{def:general decomposition} we have that $\tau_2.\frst(x)$ is decomposable by the semigroup $\beta^{-1}[\varphi^{-1}[e]]$ to $\{\tau_2.\frst(y): \text{$y$ is a child of $x$ in $\tau_2.\nds$}\}$.
    Then, we replace the single node $x$ in $\tau_2$ with a general decomposition of depth at most $\norm{\varphi^{-1}[e]}$ whose leaves are children of $x$ in $\tau_2$. We then continue with the decomposition from these leaves. 
    Thus, each $\I$-node causes a blowup of depth at most $\norm{\varphi^{-1}[e]}$ for some idempotent $e\in V_2$, as required. 

    We proceed with the formal details.
    Define a function $\trec$ such that for every context set $\wHole{V}$, a basis $B$ and a morphism $\beta\colon\wHole{V}\rightarrow V_1$ transforms a decomposition $\tau\in \dec(\wHole{V},B,f,\varphi\circ \beta)$ to a decomposition $\trec(\tau)\in \dec(\wHole{V}, B, f, \beta)$. We define the function inductively on the depth of $\tau$, and prove that it satisfies the following:
    \begin{itemize}
        \item If $\tau\in \dec(\wHole{V},B, f, \varphi\circ \beta)$ then $\trec(\tau)\in \dec(\wHole{V}, B, f, \beta)$.
        \item If $\tau.\type(\epsilon)\neq \I$ then $\depth(\trec(\tau))\leq 1+\max_i\{\depth(\trec(\tau[i]))\}$ for  $i\in \tau.\nds\cap \bbN$.
        \item If $\tau.\type(\epsilon)=\I$ with idempotent value $v$ then $\depth(\trec(\tau))\leq \norm{\varphi^{-1}[v]}+\max_i\{\depth(\trec(\tau[i]))\}$ for $i\in \tau.\nds\cap \bbN$.
    \end{itemize}

    Let $\tau \in \dec(\wHole{V},B,f,\varphi\circ \beta$) for some $f\in \tilH$. We define $\trec(\tau)$ by induction on $|\depth(\tau)|$. The base case is $|\depth(\tau)|=0$, so $f\in B$ and $\tau=\sing(f)$. We define $\trec(\tau)\coloneqq \tau$ and the requirements trivially hold. 
    The inductive case is $|\depth(\tau)|>0$. 
    In this case $\tau.\type(\epsilon)\neq \lf$, and we split to cases according to $\tau.\type(\epsilon)$.
        \begin{itemize}
            \item $\B$: We define $\trec(\tau)\coloneqq \tau\left[\begin{array}{c}
                0\mapsto \trec(\tau[0])\\
                1\mapsto \trec(\tau[1])
            \end{array}\right]$.
            That is, we define  $\trec(\tau).\frst(\epsilon)=\tau.\frst(\epsilon)$ and $\trec(\tau).\ctx(1)=\tau.\ctx(1)$, which results in $\trec(\tau).\frst(0)=\tau.\frst(0)$ and $\trec(\tau).\frst(1)=\tau.\frst(1)$. We then inductively use the decompositions for the children. This results in a decomposition by \cref{cor:general decomp closed substitution}, so $\trec(\tau)\in  \dec(\tilde V,B,\tau_2.\frst(x), \beta)$. 
            By the inductive hypothesis and that because $\varphi[V_1]$ contains an idempotent i.e. $\max_{\overset{e\in V_2}{e^2=e}}\norm{\varphi^{-1}[e]}\geq 1$, we obtain the bound 
            \[\begin{split}
            &\depth(\trec(\tau))\leq 1+\max\{\depth(\trec(\tau[0])),\depth(\trec(\tau[1]))\} \leq \\
            &1+\max_{\overset{e\in V_2}{e^2=e}}\norm{\varphi^{-1}[e]}\cdot (\depth(\tau)-1) \leq \max_{\overset{e\in V_2}{e^2=e}}\norm{\varphi^{-1}[e]}\cdot \depth(\tau)
            \end{split}
            \]
            
            \item $\C$: This case is analogous to the previous case: 
            let $S\coloneqq \tau.\nds\cap \bbN$. We now define $\trec(\tau)\coloneqq \tau[y\mapsto \trec(\tau[y])]$ (for $y\in S$), so that $\trec(\tau).\type(\epsilon)=\C$ as well. The bound computation is identical to the above, with the maximum taken over $S$.
            
            \item $\I$: This is the interesting case. Let $S\coloneqq \tau.\nds \cap \bbN$ (i.e., the children of the $\I$-node at the root) and let $v\in V_2$ be the idempotent value of $\tau[\epsilon]$.
            By the definition of $\I$-node (\cref{def:general decomposition}), there exists a \emph{binary} decomposition $\tau'$ such that:
            \begin{itemize}
                \item $\tau'\in \fd((\varphi\circ \beta)^{-1}[v], \tau.\frst(S))$.
                \item $\tau'.\frst(\epsilon) = \tau.\frst(\epsilon)$.
            \end{itemize}
            Let $\wHole {V'}\coloneqq \beta^{-1}[\varphi^{-1}[v]]$, then
            
            $\tau.\frst(\epsilon)$ is decomposable by $\wHole{V'}$ to $\tau.\frst(S)$.
            Denote $\beta' = \beta|_{\wHole{V'}}\colon\wHole{V'}\rightarrow \varphi^{-1}[v]$.
    
            By \cref{def:semigroup decomp bound} there exists some decomposition $\tau''\in \dec(\wHole{V'}, \tau.\frst(S), \tau.\frst(\epsilon),\beta')$ such that $\depth(\tau'')\le \norm{\varphi^{-1}[v]}$.

            Since all the leaves of $\tau''$ have labels in $\tau.\frst(S)$, we can define a mapping $g\colon\tau''.\type^{-1}[\lf]\to S$ such that for every leaf $y$ in $\tau''$ it holds that $\tau''.\frst(y)=\tau.\frst(g(y))$. Note that $g$ is not necessarily bijective -- it is possible that $\tau''$ factors differently to $\tau'$. The only requirement is that all the leaves have forests in $\tau.\frst(S)$, as they do in $\tau'$.

            We can now define $\trec(\tau)\coloneqq \tau''[y\mapsto \trec(\tau_{g(y)})]$ (for $y\in \tau''.\type^{-1}[\lf]$).
            By the inductive hypothesis (on each $\trec(\tau_{g(y)}))$), we have that $\trec(\tau)\in \dec(\wHole V, B, \tau.\frst(\epsilon), \beta)$. Moreover, since $v$ is idempotent, the depth bounds hold as follows:
            \[
            \begin{split}
            &\depth(\trec(\tau)) \leq \depth(\tau'')+\max\left\{\depth(\trec(\tau[y])):y\in S\right\} \\
            &\le \norm{\varphi^{-1}[v]}+\max\left\{\depth(\trec(\tau[y])):y\in S\right\} \\
            &\le \max_{\overset{e\in V_2}{e^2=e}}\norm{\varphi^{-1}[e]}+ (\depth(\tau)-1)\cdot \max_{\overset{e\in V_2}{e^2=e}}\norm{\varphi^{-1}[e]}\\
            &=\depth(\tau)\cdot \max_{\overset{e\in V_2}{e^2=e}}\norm{\varphi^{-1}[e]}
            \end{split}
            \]
            as required.
        \end{itemize}
    Finally, since we start with $\tau$ with 
    $\depth(\tau)\le \norm{\phi\circ \beta}$, 
    we conclude that 
    $\depth(\trec(\tau))\le \norm {\phi\circ \beta}\cdot \max_{\overset{e\in V_2}{e^2=e}}\norm{\varphi^{-1}[e]}$. 
    Since this holds for every basis and context set, we conclude the lemma with
    \[\norm{\beta} \leq \norm {\phi\circ \beta} \cdot \max_{\overset{e\in V_2}{e^2=e}}\norm{\varphi^{-1}[e]}\]
\end{proof}

\section{Lower Bound without $\mr$-Alignment}
\label{sec:lower bound}
In this section we show that without the sufficient condition of $\mr$-alignment, there is a morphism that does not have a bounded decomposition depth. 
We briefly sketch the intuition for this counterexample, before giving the precise details. 

We restrict attention to forests that are binary trees. We then tailor a specific morphism to a forest algebra, with the following property: each idempotent element in the algebra can only be used to factor only one ``side'' of a subtree (i.e., either left or right). Ensuring this property is the crux of the example.  
Once this property is established, it follows that even $\I$-nodes can only, at best, decompose one side of a subtree. Then, if we start with a complete tree of depth $d$, each node decreases its depth by at most $\frac12$. This shows that any decomposition has depth at least $\log d$, which is unbounded.

We start by describing the forest language we use. The language can be simply described as a set of labeled trees, but unfortunately we must also formalize it as a morphism to a forest algebra, which is significantly more complicated.

We work over the alphabet $A=\{a,b,\ela,\elb,\ell\}$ and consider binary trees whose leaves are labeled by $\ela,\elb,\ell$ and internal nodes labeled $a,b$. We also assume leaves are the only children of their parent.
Our forest language includes forests $f$ that satisfy the following (see \cref{fig:counter example tree}):
\begin{itemize}
    \item For all binary nodes $x$, the left child ($x[0]$) is labeled $a$ and the right child ($x[1]$) is labeled $b$.
    \item All the leaves below the left child of the root (i.e., below $f[0]$) have labels in $\{\ela,\ell\}$.
    \item All the leaves below the right child of the root (i.e., below $f[1]$) have labels in $\{\elb,\ell\}$.
\end{itemize}

\begin{figure}[ht]
    \centering
    \begin{forest}
        [$b$
            [$a$
                [$a$
                    [$a$ [$\ell_a$]]
                    [$b$ [$\ell_a$]]
                ]
                [$b$
                    [$a$ [$\ell$]] 
                    [$b$ [$\ell_a$]]
                ]
            ]
            [$b$
                [$a$
                    [$a$ [$\ell$]]
                    [$b$ [$\ell_b$]]
                ]
                [$b$ [$\ell$]
                ]
            ]
        ]
    \end{forest}
    \caption{An tree in the language of \cref{sec:lower bound}. Note that in the left subtree of the root, all leaves are labeled in $\{\ela,\ell\}$, and in the right subtree in $\{\elb,\ell\}$.}
    \label{fig:counter example tree}
\end{figure}

Intuitively, a subtree whose leaves are labeled only with $\ell$ can be plugged anywhere in the tree. In contrast, subtrees that have a leaf labeled $\ela$ (respectively $\elb$) can only be plugged in the left (respectively right) subtrees of the root. We call the latter trees \emph{marked}.
This already suggests some inherent asymmetry in deciding how to decompose a node. 

In order to proceed with reasoning about decompositions of this language, we must formulate it as a morphism to a forest algebra.
We start by presenting the monoid $H$, with intuitive explanations about its elements. For each element we also bring a ``canonical'' example of a tree that is mapped to it. To distinguish elements in the free forest algebra from elements in the finite forest algebra, we denote the latter with bold red symbols.

\[H=\underbrace{\left\{\begin{array}{c}
     \sela,  \\
     \selb, \\
     \sell,
\end{array}\right\}}_{\text{Leaves}}
\cup 
\underbrace{\left\{\begin{array}{c c}
\alaT, & \blaT, \\
\alaF, & \blaF, \\
\albT, & \blbT, \\
\albF, & \blbF, \\
\alT,  & \blT,  
\end{array}\right\}}_{\text{One Root}}
\cup
\underbrace{\left\{\begin{array}{c}
     \latwoT,  \\
     \latwoF, \\
     \lbtwoT,\\
     \lbtwoF,
\end{array}\right\}}_{\text{Two Roots}}
\cup
\underbrace{\left\{\begin{array}{c}
     \TwRoot, \\
    \TwoRoot, \\
    \allell, \\
    \sbot, \\
    \sunit, \\ 
\end{array}\right\}}_{\text{Special}}
\]

The intuition behind the naming scheme and the types of elements (as is captured later by the morphism) is the following:
\begin{description}
    \item[Leaves:] each leaf in $\ela,\elb,\ell$  is mapped to its corresponding element in $\sela,\selb,\sell$.
    \item[One Root:] these map from single-root subtrees where either $\ela$ or $\elb$ occur (or neither, but not both). The first letter is the root of the subtree, the second marks whether the corresponding leaf exists in the subtree, and the third whether the subtree is in the language ($T$) or not ($F$). Thus, for example:
    \begin{itemize}
        \item $\alaT$ captures trees whose root is labeled with $a$, their left-subtree contains at least one $\ela$, and their right subtree contains only $\ell$ (and therefore they are in the language).

        An example of such a tree is $a(a(\ela)+ b(\ell))$.
        \item $\blaF$ captures trees whose root is labeled with $b$, and their right-subtree contains at least one $\ela$, and their right subtree contains only $\ell$ (and therefore they are not in the language).

        An example of such a tree is $b(a(\ell)+ b(\ela))$.
        \item $\alT$ captures subtree whose root is labeled with $a$, and all its leaves are labeled $\ell$ (and therefore they are in the language, which is always the case if all leaves are labeled $\ell$).

        An example of such a tree is $b(a(\ell)+ b(\ell))$.
    \end{itemize}
    \item[Two Roots:] these map from two-rooted subtrees where either $\ela$ or $\elb$ occur (but not both). 
    The first letter indicates a marking (either $\ela$ or $\elb$), $2$ indicates there are two roots, and $T$ and $F$ indicate whether the marking is on the correct side. Thus, for example:
    \begin{itemize}
        \item $\latwoT$ is mapped from forests whose left root has a leaf labeled $\ela$, and the right only has $\ell$.

        An example of such a tree is $a(\ela)+ b(\ell)$.
        \item $\lbtwoF$ is mapped from forests whose left root has $\elb$.

        Examples of such trees are $a(\elb)+ b(\elb)$, or $a(\elb)+ b(\ell)$.
    \end{itemize}

    \item[Special:] these map from some edge-cases:
    \begin{itemize}
        \item $\sbot$ is mapped from sub-forests that cannot be part of a tree in the language due to ill-formation, or violating the condition, e.g., $a(\elb)+b(\ela)$.

        \item $\sunit$ is an ad-hoc unit element, mapped to by the empty forest.

        \item $\allell$ is mapped from two-rooted forests whose leaves are all $\ell$, e.g., $a(\ell)+b(\ell)$.

        \item $\TwoRoot$ is mapped to from two-rooted forests that are fully marked, and therefore can become accepting only with a single root (but cannot serve as ``deeper'' subtrees), e.g., $a(\ela)+b(\elb)$.

        \item $\TwRoot$ is mapped to from fully marked trees that are in the language, but cannot serve as ``deeper'' subtrees, e.g., $a(a(\ela)+b(\elb))$.
    \end{itemize}
\end{description}

With these elements in mind, we now proceed to define $H$ as a monoid, and define the semantic context set $V$ and the morphism $\varphi$. While technically painful, the definitions follow immediately from the intuition of $H$.

The addition operation on $H$ is given by the following (where $\semsymb{?}$ is a wildcard):
\[h_1+h_2 = \begin{cases}
        \allell & h_1=\alT \wedge h_2 = \blT\\
        \TwoRoot & h_1=\semsymb{a\ela ?}\wedge h_2=\semsymb{b\elb?}\\
        
        \lbtwoT & h_1=\alT \wedge h_2=\semsymb{b\elb?}\\
        \latwoT & h_1=\semsymb{a\ela ?}\wedge h_2=\blT\\
        
        \latwoF & h_1\in \{\alT,\semsymb{a\ela ?}\}\wedge h_2=\semsymb{b\ela ?}\\
        \lbtwoF & h_1=\semsymb{a\elb ?}\wedge h_2\in\{\blT,\semsymb{b\elb ?}\}\\
        
        \sbot & h_1=\semsymb{a\elb ?}\wedge h_2=\semsymb{b\ela ?}\\
        \sbot & else 
    \end{cases}\]
We also define $\sunit$ defined as a neutral element.
One can verify that the intuition of the elements of $H$ is maintained through this addition. Note in particular that this already enforces trees that are at most binary, as all other additions immediately go to $\sbot$.

We proceed to define the monoid $V$ of contexts. To this end, we identify $V\subseteq H^H$, i.e., each context is an action on $H$. Specifically, we have $V$ to be the span of $V_{\text{letters}}\cup V_{\text{left}}\cup V_{\text{right}}\cup \{\textrm{id}\}$ within the semigroup $H^H$, with the following components.

\begin{itemize}
    \item The ``main'' elements of $V$ are $V_{\text{letters}}=\{\conta,\contb,\contela,\contelb,\contel\}$, each corresponding intuitively to the composition of its respective letter above a subtree. They are defined as follows.
    
\begin{align*}
\conta(h) &= \left\{
\begin{array}{ll}
\alaT & h=\sela\\
\albF & h=\selb\\
\alT    & h=\sell \vee h=\allell\\
\alaF & h=\latwoF\\
\alaT & h=\latwoT\\
\albF & h=\lbtwoF\\
\albT & h=\lbtwoT\\
\TwRoot   & h=\TwoRoot\\
\sbot & \text{otherwise}
\end{array}
\right.
\qquad
\contb(h) &= \left\{
\begin{array}{ll}
\blaF & h=\sela\\
\blbT & h=\selb\\
\blT    & h=\sell \vee h=\allell\\
\blaF & h=\latwoF\\
\blaT & h=\latwoT\\
\blbF & h=\lbtwoF\\
\blbT & h=\lbtwoT\\
\TwoRoot   & h=\TwoRoot\\
\sbot & \text{otherwise}
\end{array}
\right.
\end{align*}
and 
\[\contela(h)=\begin{cases}
    \sela & h= \sunit\\
    \sbot & \text{otherwise}
\end{cases}\quad 
\contelb(h)=\begin{cases}
    \selb & h= \sunit\\
    \sbot & \text{otherwise}
\end{cases} \quad 
\contel(h)=\begin{cases}
    \sell & h= \sunit\\
    \sbot & \text{otherwise}
\end{cases}\]
\item The elements $V_{\text{left}}$ and $V_{\text{right}}$ capture the addition from left and right for every element in $H$ (i.e., enable the functions $\inl$ and $\inr$). For every $h\in H$ define $\semsymb{r_h}(u)=u+h$ and $\semsymb{l_h}(u)=h+u$, then $V_{\text{left}}=\{\semsymb{l_h}\mid h\in H\}$ and $V_{\text{right}}=\{\semsymb{r_h}\mid h\in H\}$. We also introduce the convenient shorthands $\plusa=\semsymb{r_{\alT}},\plusb=\semsymb{r_{\blT}},\aplus=\semsymb{l_{\alT}},\bplus=\semsymb{l_{\blT}}$.

\item Finally, $\textrm{id}\in H^H$ is the identity function, and is the unit of $V$.
And we denote by $\contbot\in V$ such that for every $h\in H$ $\contbot(h)=\sbot$. (which is serves as a \emph{zero} of $V$, and it exists as for example $\contbot=(\aplus)^3\in V$)
\end{itemize}

We can finally define the morphism $\varphi\colon A^\triangle \to \tup{H,V}$ as the morphism lifted uniquely 
from the mapping $\sigma\mapsto \semsymb{f_{\sigma}}$ for $\sigma\in A=\{a,b,\ela,\elb,\ell\}$.

We now start analyzing the structure and idempotent elements of $V$, denoted $E(V)$. 
Recall that a forest $f$ is \emph{unmarked} if all its leaves are $\ell$. 
The following result studies prefix-contexts of unmarked forests. 
\begin{lemma}
\label{lem:ab idem calculation}
    Let $f$ be an unmarked forest and $C\neq \square$ a context such that $f=C\cdot f'$ for some $f'$. If $\varphi(C)$ is idempotent, then $\varphi(C)$ is one of the following:
    \[\contbot,\alT+\contb, \conta+\blT, \conta(\plusb),\contb(\aplus), \conta(\alT+\contb(\plusb)), \contb(\conta(\aplus)+\blT)\]
    We denote this set (without $\contbot$) by $E'(V)\subseteq E(V)\subseteq V$, and note these are all idempotent.

    In addition, for every $e\in E'(V)$ the following property holds (``absorbing from the left over $E'(V)$''):
    \[\forall e\in E'(V), v\in V\ e\cdot v\in E'(V)\rightarrow e\cdot v = e\]
    Moreover, the following properties hold for every $v\in E'(V)$ and $v'\in V$:    
    \[\begin{array}{l l c l}
        (P1) & v=\aplus \cdot v' \in E'(V) & \implies & v = \alT+\contb\\
        (P2) &v=\conta(\aplus)\cdot v' \in E'(V) &\implies &v=\conta(\alT+\contb(\plusb))\\
        (P3) & v=(\plusb)\cdot v'\in E'(V)& \implies& v=\conta+\blT\\
        (P4) & v=\contb(\plusb)\cdot v'\in E'(V)& \implies& v=\contb(\conta(\aplus)+\blT)
    \end{array}\]
\end{lemma}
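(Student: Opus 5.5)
The plan is a finite case analysis driven by the structure of the unmarked forest $f$. Since $f$ has only $\ell$-leaves, every nontrivial prefix context $C$ with $f=C\cdot f'$ is unmarked outside its hole. Its image therefore depends only on two pieces of data: the sequence of letters $a,b$ along the path from the roots of $C$ down to $\spos(C)$, and at each step the unmarked sibling added to the left or right. Any such sibling forest maps to $\alT$, $\blT$, $\allell$ or $\sbot$. I would first normalize: every $\varphi(C)$ is a product of ``steps'' drawn from the finite alphabet
\[\{\conta,\contb,\aplus,\bplus,\plusa,\plusb\}\]
together with $\contbot$-producing steps (ill-formed siblings, non-binary nodes, $\ell$ above the hole, and so on). Any product containing a $\sbot$-producing step, or an ill-typed adjacency such as $\conta\cdot\conta$ directly, is $\contbot$. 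This is because $\conta,\contb$ expect a two-root input (or a leaf), while $\plusb,\aplus$ expect a one-root input of the correct letter.

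The second step is to compute the action of each well-formed product on $H$ and to characterize when it is idempotent. The key observation is that a well-typed product alternates between ``one-root'' and ``two-root'' sorts. It is idempotent only if its input sort equals its output sort and its action fixes its own range. I would show that a nonzero idempotent must have length at most four in the step alphabet. Along the way, I would check that any path going two levels deep changes the marking side relative to the hole position. For example, an $a$ node that is the left child of a $b$ node sends a $\latwoT$ input to a $\semsymb{b\ela ?}$ value, and iterating then lands in $\latwoF$ or $\sbot$, so the square differs from the element. This explicit check yields exactly the listed seven elements: the two-root loops $\alT+\contb$ and $\conta+\blT$, the one-root loops $\conta(\plusb)$ and $\contb(\aplus)$, and the two depth-two loops. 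I would verify idempotency of each listed element by direct evaluation on the handful of inputs not sent to $\sbot$.

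For the left-absorption property, I would take $e\in E'(V)$ and $v\in V$ with $ev\in E'(V)$. Because the listed elements have pairwise distinct ``first step'' and sort signatures, $ev$ must begin with the same leading step as $e$ and have the same sort. Comparing the finitely many remaining candidates, the only one consistent with this is $ev=e$. Properties (P1)--(P4) follow the same way: the prefix $\aplus$, $\conta(\aplus)$, $\plusb$ or $\contb(\plusb)$ pins down the leading steps, and inspecting $E'(V)$ leaves exactly one candidate in each case.

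The main obstacle is the normalization and idempotency classification in the second step, namely rigorously excluding long products as idempotents without enumerating all of $V$. I would handle it by tracking, for a product, the set of inputs not sent to $\sbot$. For a non-$\contbot$ product, this set is a single orbit of size at most two under marking type. Idempotency then forces the product to map that set into itself identically. Since only short loops preserve the marking side, this restricts the candidates to the finite list above.
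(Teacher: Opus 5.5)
The setup is sound. $\conta,\contb$ accept leaves and two-root inputs, and $\aplus,\plusb$ accept one-root inputs of the matching letter, so a nonzero idempotent must come from a closed walk through the one-root and two-root sorts. (One slip: a hole below an $\ell$, or a sibling $\ell$ or $a(\ell)+b(\ell)$, is not $\contbot$-producing. Such a step is defined only on $\sunit$, which no step outputs, so it still cannot occur in such a walk.)

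\textbf{The gap is in the step you flag as the main obstacle.} You plan to exclude long walks by showing their squares differ from them, but long walks are idempotent.
\begin{itemize}
\item Take $C=a(a(\ell)+b(a(\square+b(\ell))+b(\ell)))$. It is a prefix of an unmarked forest, and its step word $\varphi(C)=\conta\cdot\aplus\cdot\contb\cdot\plusb\cdot\conta\cdot\plusb$ has length six. Yet $\varphi(C)=\conta(\alT+\contb(\plusb))$, which is idempotent.
\item Your own ``$a$ below $b$'' configuration gives $\varphi(a(\ell)+b(a(\square)+b(\ell)))=\aplus\contb\plusb\conta=\alT+\contb$. It sends $\latwoT$ to $\latwoF$, and so does its square.
\end{itemize}
So ``going deeper changes the marking side, hence the square differs'' and ``only short loops preserve the marking side'' are false. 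Your orbit invariant also fails. $\conta+\blT=\plusb\cdot\conta$ is defined on the eight inputs $\sell,\allell,\sela,\latwoT,\latwoF,\selb,\lbtwoT,\lbtwoF$, which span all three marking types. It is not the identity on them, since it sends $\latwoF$ to $\latwoT$. Idempotency only forces the identity on the range.

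\textbf{What is missing is a collapse, not an exclusion.} Put $X=\plusb\cdot\conta$ and $Y=\aplus\cdot\contb$. Both are defined exactly on those eight inputs and send each marking class (unmarked, $\ela$-marked, $\elb$-marked) to a single element of the same class. Hence $ZW=Z$ for all $Z,W\in\{X,Y\}$.
\begin{itemize}
\item A closed walk through a two-root element is a word in $\{X,Y\}^+$, so it equals its leftmost factor, $X$ or $Y$.
\item A closed walk through a one-root $a$-element (resp.\ $b$-element) is $\conta w\plusb$ (resp.\ $\contb w\aplus$) with $w\in\{X,Y\}^*$. Since $\conta X\plusb=(\conta\plusb)^2$ and $\contb Y\aplus=(\contb\aplus)^2$, this collapses to $\conta\plusb$ or $\conta Y\plusb$ (resp.\ $\contb\aplus$ or $\contb X\aplus$).
\end{itemize}
This yields exactly the six elements of $E'(V)$. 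The paper avoids the structural argument altogether: it computes the finite subsemigroup generated by $\aplus,\plusb,\conta,\contb,\contel$ and reads off its idempotents.

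\textbf{Absorption and (P1)--(P4) have a related flaw.} The leading steps are not pairwise distinct: $\conta(\plusb)$ and $\conta(\alT+\contb(\plusb))$ share leading step and sort, as do the two $\contb$-elements. Also, a leading step is not an invariant of an element of $V$ when $v$ ranges over all of $V$. Use value sets instead. If $e$ and $ev$ both lie in $E'(V)$, the non-$\sbot$ values of $ev$ are among those of $e$. The six idempotents have pairwise distinct three-element value sets, for example $\{\alT,\alaT,\albF\}$ for $\conta(\plusb)$ versus $\{\alT,\alaF,\albT\}$ for $\conta(\alT+\contb(\plusb))$. Hence $ev=e$. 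Properties (P1)--(P4) follow the same way by comparing with the value sets of $\aplus$, $\conta(\aplus)$, $\plusb$ and $\contb(\plusb)$.

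When checking by hand, read the table entry $\contb(\TwoRoot)$ as $\TwRoot$. As printed, $\contb\cdot\contb=\varphi(b(b(\square)))$ would be an extra nonzero idempotent.
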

\begin{proof}
    The proof of the lemma is by direct computation, following the definition of $\tup{H,V}$. 
    
    First, the set $E'(V)$ is computed as follows. Observe that $\{\aplus,\plusb,\conta,\contb, \contel\}$ is the set of all possible values for $\varphi(C)$, by the assumption on $f$ and $C$.
    Denote by $V_l$ the subsemigroup spanned by this set (which is finite and can be directly computed). We then compute $ E(V_l) = \{e\in V_l: e^2=e\}$ and we have $E'(V)= E(V_l)\setminus\{\contbot\}$.

    To show the absorbing property, consider for example the element $\conta(\plusb)$. We can compute the set $\conta(\plusb)\cdot V=\{\conta(\plusb)\cdot v:v\in V\}$, and then intersected it with $V_l$ to get $\{\contbot, \plusa,\conta(\plusb),\conta(\plusb(\conta)), \conta(\plusb(\conta(\aplus)))\}$. Notice that the only non $\contbot$ idempotent in this set is $\conta(\plusb)$, as required.

    For $(P1)$, we present the computation for $\plusa$. The computations for $(P2)-(P4)$ are similar. 
    By direct computation, notice that
    \[(\plusa\cdot V) \cap V_l =
    \{\contbot,\aplus,\allellplus,\aplus(\contb), \aplus(\contb(\aplus)),\aplus(\contb(\plusb))\}\] 
    and the only non $\contbot$ idempotent here is $\aplus(\contb)$ as required.
\end{proof}
The latter property shown in \cref{lem:ab idem calculation} is at the heart of our reasoning. Intuitively, it shows that if $v\in E'(V)$ has a prefix from a certain list, then we know exactly what $v$ is. This is used in the following to show that each context always decomposes a forest on a fixed ``side''.

We can now establish that $\varphi$ is indeed not $\mr$-aligned (\cref{def:unamb property}). For example, the forest $f=a(\ell)+b(\ell)$ can be factored by $C_1=a(\ell)+b(\square)$ as well as by $C_2=a(\square)+b(\ell)$. 
We have $\varphi(C_1)=\alT+\contb$ and $\varphi(C_2)=\conta+\blT$, and we claim $\alT+\contb\mj \conta+\blT$. Indeed, by \cref{lem:ab idem calculation} we have 
\[
\conta+\blT=(\conta+\blT)\cdot (a+\contb) \qquad \text{ and } \alT+\contb=(\alT+\contb)\cdot (\conta +\blT)
\]
However, again by \cref{lem:ab idem calculation}, there is no $r\mj \alT+\contb$ such that $\alT+\contb=(\conta+\blT)\cdot r$, since $\conta+\blT$ is absorbing from the left.

We now turn to formalize the intuition that every idempotent $e\in E'(V)$ has to ``pick a side'' of forests to decompose. Notice that we do not claim this in complete generality, but rather only to unmarked forests. The counterexample we construct is ultimately unmarked, and therefore this is sufficient.
\begin{lemma}
\label{lem:ab idem preserving}
    For every $e\in E'(V)$ there is an address $u\in \bbN\cup \bbN^2$ (i.e., of depth at most $2$) such that for every $f,f_1,f_0\in \tilH$ with $f\in \varphi^{-1}[\alT,\blT,\allell]$ and $C\in \varphi^{-1}[e]$ such that $f=C(f_1)$ and $f_0=C(\square_{f_1})$ we have that $f_0[u]=f[u]$ (i.e., $f$ and $f_0$ have identical subtrees at address $u$).
    
    In particular, factoring by $C$ does not pluck out any subforest of $f$ below $u$.
\end{lemma}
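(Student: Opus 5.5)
The proof goes by case analysis on the six elements of $E'(V)$. For each $e$ we fix an explicit address $u_e$ and show that every context $C$ with $\varphi(C)=e$ that factors such an $f$ has its hole outside the subtree at $u_e$. The addresses are these: for $\alT+\contb$ and $\conta(\alT+\contb(\plusb))$ we take the ``$a$-side'' address ($u=0$, resp.\ $u=00$); for $\conta+\blT$ and $\contb(\conta(\aplus)+\blT)$ we take the ``$b$-side'' ($u=1$, resp.\ $u=01$); for $\conta(\plusb)$ and $\contb(\aplus)$ we take the root-level sibling sides analogously ($u=01$ and $u=00$, or simply $1$ and $0$ after checking widths). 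Since $\spos(C)$ determines exactly which nodes of $f$ are plucked into $f_1$, we get $f_0[u]=f[u]$ as soon as $\spos(C)\not\lcrs u$ and $u\not\le \spos(C)$. The last requirement holds if the hole lies strictly outside the subtree rooted at $u$, or above $u$ but on the other branch.

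The key step decomposes $C$ along the path from the root to $\spos(C)$. Because $f$ maps to $\alT$, $\blT$ or $\allell$, it is unmarked, and every node on this path is an $a$ or $b$ node with at most two children. We can therefore write $\varphi(C)$ as a product of generators from $\{\aplus,\plusb,\conta,\contb\}$, read from the root downward, exactly as in the proof of \cref{lem:ab idem calculation}. Let $p$ be the first one or two factors, which record whether the hole descends into the left or the right root (or child). Properties (P1)--(P4) of \cref{lem:ab idem calculation} then say that if the prefix is $\aplus$ (hole in the right root, the left root being a plucked-free $a$-tree), the idempotent is forced to be $\alT+\contb$. Symmetrically, prefix $\plusb$ forces $\conta+\blT$, and the two-level prefixes $\conta(\aplus)$ and $\contb(\plusb)$ force the nested idempotents. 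Taking the contrapositive, once $e$ is fixed the first branching direction of $\spos(C)$ is determined, and the other side (our $u_e$) is untouched. For $e=\conta(\plusb)$ or $\contb(\aplus)$ the first factor is a letter and the second is $\plusb$ (resp.\ $\aplus$), which again pins down the side; here the absorbing-from-the-left property of \cref{lem:ab idem calculation} is used to exclude the alternative second factor.

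Prefixes coming from default holes also have to be handled. $C$ may contain default holes $\square_h$, but $\varphi$ works through $\unr$, and by \cref{rmk:spos and unravel} addresses may shift. To deal with this we compare $\unr(C)$ and $\unr(f)$ and note that a leaf $\square_h$ of $C$ on the off-path side still lies in that side's subtree, because default holes only replace whole subforests. The condition $f_0[u]=f[u]$ is read in $f$'s own addressing, and nodes on the non-hole side of $C$ are copied verbatim into $f_0$. The width bookkeeping from \cref{def:ancestor embedding formal} appears only when the hole is a root sibling, and there the other root keeps its index.

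I expect the main obstacle to be making the reduction to prefixes precise, together with the claim that the listed prefixes are exhaustive. Specifically, I need to show that every nontrivial $C$ factoring an unmarked binary tree has $\varphi(C)$ beginning with one of $\aplus$, $\plusb$, $\conta(\aplus)$, $\conta(\plusb)$, $\contb(\aplus)$, $\contb(\plusb)$, or a single-child letter path. The degenerate paths, such as a hole directly under a leaf-parent or a letter-only chain, I would dispatch by showing that they cannot yield an element of $E'(V)$, using the finite computation of $V_l$ in \cref{lem:ab idem calculation}. The remaining case checks are then routine.
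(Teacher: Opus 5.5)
Your proposal follows the paper's proof: the same case split over $E'(V)$ with the same addresses $u$. As in the paper, you rule out a hole on the wrong side by writing $C$ as a root-level prefix times a remainder and invoking (P1)--(P4) or left-absorption from \cref{lem:ab idem calculation}.

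Two small fixes are needed. First, you have the roles of (P2) and left-absorption swapped in the letter-first cases:
\begin{itemize}
\item For $e=\conta(\plusb)$, it is (P2) that excludes a hole under $01$.
\item Left-absorption of $\conta(\plusb)$ is what handles the nested idempotent $e=\conta(\alT+\contb(\plusb))$, by excluding a hole under $00$. Your (P1)--(P4) contrapositive does not cover this case.
\end{itemize}
Second, drop the fallback addresses $1$ and $0$ for $\conta(\plusb)$ and $\contb(\aplus)$. There $f$ is a single tree rooted at $0$, so only $01$ and $00$ respectively work.
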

\begin{proof}
    We prove the claim by separating to cases according to $e\in E'(V)\setminus\{\mathrm{id}\}$. 
    \begin{itemize}
        \item $e=\alT+\contb$: notice first that $f$ has the structure $a(f_1)+b(f_2)\in \tilH$ (otherwise it cannot be factored by $e$). 
        Moreover, since $\varphi(f)\in \{\alT,\blT,\allell\}$) we actually have $\varphi(f)=\allell$, as this is the only mapping of such a structure to this set.
        Thus, $f$ has all its leaves labeled $\ell$, and in particular $\varphi(b(f_2))=\blT$.
        
        We claim that in this case we can choose $u=0$, i.e., that $f_0[u]=f[u]$. Assume by way of contradiction that $f_0[0]\neq f[0]$. Therefore, we have  $0\leq \spos(C)$, i.e., $\spos(C)$ occurs below $0$. 
        Since $C$ itself is nontrivial, then $C\neq \square$, so we can denote $C=(\square+b(f_2))\cdot C'$ for some $C'$. 
        Note that $\varphi(\square+b(f_2))= \plusb$, and since $\varphi$ is a morphism, then by $(P3)$ of \cref{lem:ab idem calculation} we get  
        \[\varphi(C)=\varphi(\square+b(f_2))\cdot \varphi(C')= \conta+\blT\neq \alT+\contb\]
        which is a contradiction. 
        Thus $f_0[0]=f[0]$ as required, and we conclude that we can only use $e=a+f_b$ to pluck out a subforest below the right root.
        \item $e=\conta+\blT$: this case is symmetric to $e=\alT+\contb$, but uses $(P1)$ of \cref{lem:ab idem calculation}.
        \item $e=\conta(\plusb)$: we follow a similar structure to the previous cases. Notice first that $f=a(a(f_1)+b(f_2))$. We claim that for $u=01$ it holds that $f_0[u]=f[u]$. 
        Indeed, otherwise we get either $C=a(\square)$ (which is not possible as $\varphi(a(\square))\neq e$), or that $C=a(a+\square)\cdot C'$ for some $C'$. Observe that $\varphi(a(a+\square))= \conta(\aplus)$, and 
        by using $(P2)$ of \cref{lem:ab idem calculation} we get 
         $\varphi(C)= \conta(\alT+\contb(\plusb))\neq e$, which is a contradiction. 
        \item $e=\contb(\aplus)$: symmetric to the previous case, by using $(P4)$ of \cref{lem:ab idem calculation}.
        \item 
        $e=\conta(\alT + \contb(\plusb))$: in this case we have $f=a(a(f_1)+b(a(f_3)+b(f_4)))$, and we claim that for $u=00$ we have $f_0[00]=f[00]$. Indeed, otherwise we get that $C=a(\square+b)\cdot C'$ for some $C'$ and because $\varphi(a(\square+b))\mapsto \conta(\plusb)$ we can use the absorption property in \cref{lem:ab idem calculation} and we get $\varphi(C)= \conta(\plusb)\neq e$, which is a contradiction.
        \item $e=\contb(\conta(\aplus)+\blT)$: symmetric to the previous case. 
    \end{itemize}
    We thus conclude the claim for every $e\in E'(V)$.
\end{proof}

The counterexample we construct utilizes a large complete binary tree, and shows that at each node in a decomposition, there remains a deep-enough complete binary subtree. To capture this we denote, for a given forest $f$ and an address $u$, the maximal depth of a complete binary subtree rooted at $u$ by $\mc(f,u)$, as well as $\mc(\epsilon)$ the maximal complete binary rooted at some root of $f$, and $\mc(f,\infty)$ as the maximal complete binary tree anywhere at $f$. Formally, we have the following.
\begin{definition}
\label{def: max comp bin tree counterexample}
Given a forest $f\in \hf$ and an address $u\in \dom(f)$ define
\[\mc(f,u)\coloneqq \min\{|v|-|u| + 1: v\in \dom(f)\wedge u\leq v\wedge v \text{ is maximal }\}\]

Define $\mc(f,\epsilon)=\min\{\mc(f,n):n\in \bbN\}$ and $\mc(f)=\mc(f,\infty)=\max\{\mc(f,u): u\in\dom(f)\}$.
\end{definition}

The following lemma relates the size of a maximal binary subtree in an unmarked forest $f$ with that of its children's in a node of a binary decomposition (\cref{def:bin decomp}).
\begin{lemma}\label{lem:mc bin}
    Consider $f,f_0,f_1\in \tilH$ with $f\in \varphi_H^{-1}[\alT,\blT,\alT+\blT]$ and $C\in \tilV$ such that $f=C(f_1)$ and $f_0=C(\square_{f_1})$.
    Then $\mc(f_0)+\mc(f_1)\geq \mc(f)$.
    Moreover, one of the following holds:
    \begin{enumerate}
        \item $\mc(f_0)=\mc(f)$, or
        \item $\mc(f_1)=\mc(f)$, or
        \item $\mc(f_0)+\mc(f_1,\epsilon) \geq \mc(f)$.
    \end{enumerate}
\end{lemma}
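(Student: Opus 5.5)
The plan is to track a single witness: a node $w\in\dom(f)$ at which a complete binary subtree of depth $m=\mc(f)$ is rooted, so $\mc(f,w)=m$. Let $s=\spos(C)$, the address in $f$ where $f_1$ is plugged; the roots of $f_1$ occupy the addresses $s,\dots,s\spa \width(f_1)$ of $f$. I would split into three cases according to how $s$ sits relative to $w$ in the tree $f$, using the prefix order on addresses together with the sibling structure. Two facts keep the case analysis small. First, $f$ maps to $\alT,\blT$ or $\alT+\blT$, so by the definition of $H$ every node is binary or a leaf-parent; consequently $f_1$ has at most two roots, and when it has two they are siblings in $f$. Second, $f_0$ coincides with $f$ outside the plucked region, and that region is replaced by a single leaf $\square_{f_1}$.

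The first case is that the plucked region is disjoint from the witness subtree, meaning that no root of $f_1$ lies in the subtree of $f$ at $w$. The subtree at $w$ is then copied unchanged into $f_0$, possibly at a shifted address as described by $\AncEmb$ (\cref{def:ancestor embedding formal}). Hence $\mc(f_0)\ge m$. I expect $\mc(f_0)\le \mc(f)$ to hold as well, since replacing a subforest by a leaf cannot create a deeper complete binary subtree, so this gives item 1.

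The second case is that the plucked region lies inside the witness subtree, so every root of $f_1$ is a descendant of $w$. If all of $f_1$'s roots together equal the whole forest of children of some node $v$ with $w\le v$, then $f_0$ contains a copy of the subtree at $w$ truncated at $v$. Along every branch through $v$, the part above $v$ gives complete depth at least $|v|-|w|$ in $f_0$, and the part below gives complete depth at least $m-(|v|-|w|)$ in $f_1$ measured from its roots, i.e.\ in $\mc(f_1,\epsilon)$. Adding these yields item 3. If instead only one of two sibling children of $v$ is plucked, the same computation applies with the unplucked sibling remaining in $f_0$.

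The third case is that the plucked region contains the witness subtree, so $w$ lies inside $f_1$. By \cref{prop:lcrs spa} we can write $w=s\spa w'$ with $w'\in\dom(f_1)$, and $f_1[w']$ is the witness subtree; hence $\mc(f_1)\ge m$, which gives item 2, again with equality since plucking cannot increase $\mc$.

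The main obstacle is the address bookkeeping in the second case. When $f_1$ has two roots, the leaf $\square_{f_1}$ in $f_0$ replaces two siblings by one, so node $v$ becomes unary in $f_0$ and its complete depth there is exactly $|v|-|w|+1$. I would handle this carefully with $\spa$ and $\lcrs$ as in \cref{def:decomp references}, using $\width(f_1)\le 1$. The off-by-one in the definition of $\mc$ (which counts $|v|-|u|+1$) has to be checked against this, possibly absorbing the one level contributed by $\square_{f_1}$ itself. Finally, the first claim $\mc(f_0)+\mc(f_1)\ge\mc(f)$ follows from each of items 1–3, since $\mc(f_1)\ge\mc(f_1,\epsilon)$ and both quantities are nonnegative.
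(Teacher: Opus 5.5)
Your proposal is correct and is essentially the paper's proof. The paper also fixes a witness $u$ with $\mc(f,u)=\mc(f)$, sets $C'=f[u\mapsto\square]$, and splits according to whether $C,C'$ are $\cleq$-incomparable (item 1), $C\cleq C'$ (item 2), or $C'\cleq C$ (item 3), which is exactly your disjoint / witness-inside-$f_1$ / $f_1$-inside-witness split; the $\cleq$ comparison absorbs the sibling and address-shift bookkeeping you planned to do via $\spa$ and $\lcrs$.

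Your second case needs one correction, however. Use the paper's form of the estimate, namely $\mc(f_0,w)\geq\min\{m,\,|v|-|w|+2\}$ together with $\mc(f_1,\epsilon)\geq m-(|v|-|w|)-1$. The first holds because the leaf $\square_{f_1}$ is a child of $v$, and every other leaf below $w$ in $f_0$ is an unchanged leaf of $f$. The bounds you wrote are not individually valid. Your $f_0$ bound ignores short branches below $w$ that avoid $v$. Your $f_1$ bound already fails for $f=a(a(\ell)+b(\ell))$ with $w=v=0$, where $\mc(f_1,\epsilon)=2<3=m$. The corrected bounds still sum to at least $m$ in both branches of the minimum.
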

\begin{proof}
    Consider $u\in \dom(f)$ such that $\mc(f)=\mc(f,u)$ (i.e., $u$ witnesses the location of the maximal completely binary subtree), and denote $C'=f[u\mapsto \square]$.
    Let $f'=\tofd^{-1}(\tofd(f)[u])$ be the subforest of $f$ below $u$, so that $f=C'(f')$. 
    By the choice of $u$, we have $\mc(f',\epsilon)=\mc(f)$.
    Now split to cases according to the order of $C$ and $C'$ with respect to $\cleq$:
    \begin{enumerate}
        \item If $C$ and $C'$ are incomparable, then $f'$ is a subtree of $f_0$, so $\mc(f_0)=\mc(f)$.
        \item If $C \cleq C'$, then $f'$ is a subtree of $f_1$, hence $\mc(f_1)=\mc(f)$.
        \item If $C'\cleq C$, then by the definition of $C'$ and the fact that $f'$ is a tree (rather than a proper forest), we have that $\spos(C)=u\cdot v$ (for $v\neq \epsilon$).
        Notice that $\mc(f_0,u)=\min\{\mc(f),|v|+1\}$ and that $\mc(f_1,\epsilon)\geq \mc(f)-|v|$ (since at least the subtree starting from $u\cdot v$ is a complete binary tree in $f_1$). Overall we get that $\mc(f_0)+\mc(f_1,\epsilon)\geq \mc(f)$
    \end{enumerate}
\end{proof}

We are now ready for our main argument, showing that every node in a general decomposition of an unmarked tree decreases the depth of a maximal complete binary tree by at most $\frac12$.
\begin{theorem}
\label{thm:counter example}
    Consider a forest $f\in\varphi^{-1}[\alT]$ and a general decomposition $\tau$ such that $\tau.\frst(\epsilon)=f$. For every inner node $x\in \tau.\nds$ there is a child $y$ of $x$ such that $\mc(\tau.\frst(y)) \geq \frac12 \mc(\tau.\frst(x))$.
\end{theorem}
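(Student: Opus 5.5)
The plan is to split on the type of the inner node $x$. First I will check that every forest appearing in the decomposition stays within the hypotheses of \cref{lem:mc bin} and \cref{lem:ab idem preserving}. The root forest is in $\varphi^{-1}[\alT]$ and hence unmarked. Every factor plucked is a subforest of an unmarked forest, and every residue is unravel-equivalent to its parent, so all nodes are (up to unraveling) unmarked. Their images lie in $\{\alT,\blT,\allell\}$ or are $\bot$-free sub-pieces. The measure $\mc$ is naturally read on the unraveled forest, or on the forest with default holes treated as leaves; I would fix the latter convention, since then residues genuinely lose the plucked part.

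\textbf{$\B$-nodes.} Let $x$ be a binary node with children $x0,x1$, where $\tau.\frst(x)=C(f_1)$ and $\tau.\frst(x0)=C(\square_{f_1})$. \Cref{lem:mc bin} gives $\mc(f_0)+\mc(f_1)\ge \mc(f)$, so one of the two summands is at least $\frac12\mc(f)$. This handles binary nodes directly.

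\textbf{$\C$-nodes.} A $\C$-node hides a centipede $\tau'$ whose nodes lie in $0^*(\epsilon+1)$ and which has no inherited references. I will walk down the spine. At each spine node, apply the trichotomy of \cref{lem:mc bin}: if some leg $x0^k1$ satisfies $\mc(\text{leg})\ge\frac12\mc(f)$, I take it. Otherwise I argue that the witnessing complete binary subtree of $f$ is never cut by more than half: either it survives intact in the residue (case 1), or it is cut at depth $|v|$ with $|v|<\frac12\mc(f)$, leaving a complete binary subtree of depth at least $\frac12\mc(f)$ in the index leaf $x0^k$. Because the centipede has no inherited references, the plucked legs are pairwise non-nested, so only one leg can cut into the witnessing subtree. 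Hence either that leg or the index leaf retains $\frac12\mc(f)$.

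\textbf{$\I$-nodes.} These are the crux. Such a node has an idempotent value $e$. I will first note $e\neq\contbot$, since otherwise images would be $\bot$, contradicting membership of the node's forest in the language-image set. So $e\in E'(V)$ by \cref{lem:ab idem calculation}. \Cref{lem:ab idem preserving} then supplies an address $u$ of depth at most $2$ such that every step of the hidden binary decomposition $\tau'$ (all contexts in $\varphi^{-1}[e]$) leaves the subtree at $u$ untouched in the residue. Iterating down the residue spine $0^*$ of $\tau'$ requires the residues to remain in $\varphi^{-1}[\alT,\blT,\allell]$, which holds by unravel-invariance of $\varphi$. It follows that the bottom leaf $0^k$ of $\tau'$ still contains $f[u]$ intact. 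A complete binary tree of depth $d=\mc(f)$ inside $f$ contains, at relative depth at most $2$ (and in particular at the specific side $u$ when the witness sits at a root), a complete binary subtree of depth at least $d-2$. This is not quite $\frac12 d$ for small $d$, and it needs care when the witness is not at the root. I would therefore combine it with the $\B$-node bound: along the $0$-spine, either some plucked factor already takes half, or the residue keeps the side $u$ of the witness, whose depth is at least $d-1\ge\frac12 d$ when $u$ is a child of the witness root.

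I expect the $\I$-node case to be the main obstacle. The difficulty is aligning the address $u$ from \cref{lem:ab idem preserving}, which is relative to the root of the node's forest, with the location of the maximal complete subtree. The first attempt would be to reduce to the situation where the witness is at the root by invoking case 1 of \cref{lem:mc bin} otherwise.
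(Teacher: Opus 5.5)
\textbf{Gap in the $\I$-node case.} Your $\B$-node case is the paper's. Your $\C$-node case uses the same ``sum over the cut'' idea, but two details need fixing.

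\emph{The $\C$-node details.} Centipede legs are pairwise non-nested, yet several of them \emph{can} be plucked from inside the witnessing complete subtree, so ``only one leg cuts into it'' is false. Instead, take a leaf of minimal depth below the image of the witness root in the index node $0^k$. Either the index node already keeps the full depth, or this leaf is a default hole referencing some leg $0^i1$. Since there are no inherited references, that leg receives the whole complete subtree below it. This gives $\mc(\tau'.\frst(0^k))+\mc(\tau'.\frst(0^i1),\epsilon)\ge \mc(\tau.\frst(x))+1$. Note the \emph{rooted} measure on the leg and the one unit of slack; both are needed below. Your sentence about a cut at depth $|v|<\frac12\mc(f)$ leaving at least $\frac12\mc(f)$ in the index leaf has the inequality the wrong way round.

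\emph{The $\I$-node gap.} The children of $x$ correspond only to the \emph{leaves} of the hidden decomposition $\tau'$. So your fallback ``some plucked factor already takes half'' proves nothing: a right child $0^i1$ of $\tau'$ is an inner node that is decomposed further. Iterating the $\B$-bound down the unboundedly deep $\tau'$ halves the depth at every step. Applying \cref{lem:ab idem preserving} only at the root of $\tau'$ does not help either, because nothing ties its address $u$ to the location of the witness. Case~1 of \cref{lem:mc bin} does not move the witness to the root.

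\emph{What is missing.} The paper's route has three steps.
(i) Replace $\tau'$ by a minimal full decomposition over $\varphi^{-1}[e]\cup\{\square\}$ with basis $\tau.\frst[S]$ (\cref{lem:min full dec}). It has no inherited references by \cref{lem:min dec no references contexts}, and each of its leaves still carries the forest of some child of $x$.
(ii) Run the $\C$-node argument on its leftmost centipede $0^*(\epsilon+1)$. Either the spine leaf keeps at least half, or some leg satisfies $\mc(\tau'.\frst(0^i1),\epsilon)>\frac12\mc(\tau.\frst(x))+1$. If the witness falls wholly inside one leg, recurse into that leg's subtree of $\tau'$, where $\mc$ is unchanged.
(iii) Re-apply \cref{lem:ab idem preserving} \emph{at $0^i1$}, descending its $0$-spine to the leaf $0^i10^k$ of $\tau'$. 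The subtree at $u$ is never touched along this spine. Since $u$ is a root or a child of a root, and $\mc(\cdot,\epsilon)$ is a minimum over all roots, that leaf retains depth at least $\mc(\tau'.\frst(0^i1),\epsilon)-1\ge\frac12\mc(\tau.\frst(x))$.

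Anchoring \cref{lem:ab idem preserving} at the heavy leg, where the large complete tree is rooted by construction, is exactly the alignment you flagged as the main obstacle but did not supply.
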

\begin{proof}
We prove the theorem by cases according to $\tau.\type(x)$.
\subparagraph*{If $\tau.\type(x)=\B$} If $\tau.\type(x)=\B$ then by \cref{lem:mc bin} we readily have $\mc(\tau.\frst(x0))+\mc(\tau.\frst(x1))\geq \mc(\tau.\frst(x))$, so at least one of the summands is $\ge \frac12 \mc(\tau.\frst(x))$. 

\subparagraph*{If $\tau.\type(x)=\C$\footnote{By \cref{lem:elimination of C nodes} we can actually first eliminate $\C$-nodes, but we rely on this argument in the $\I$-node case, and therefore prove it.}:}
    Denote by $\tau'$ the corresponding centipede binary decomposition (as per \cref{def:general decomposition}).
    Assume that $x0\in \tau.\nds$ corresponds to the down-most leaf in $\tau'$ (i.e., the ``index'' node), and that $xi\in \tau.\nds$ corresponds to $0^i1\in \tau'.\nds$. 
    Intuitively, we now repeat a similar argument to the proof of \cref{lem:mc bin} to show that one of these nodes has a large maximal complete binary subtree.
    
    Let $u\in \dom(\tau'.\frst(\epsilon))$ such that $\mc(\tau.\frst(x))=\mc(\tau'.\frst(\epsilon),u)$.
    \begin{itemize}
        \item If there exists some child $y=0^i1$ such that $\AncEmb(\tau',y,\epsilon)(0)\leq u$, then the entire complete subtree rooted at $u$ appears in $y$. Therefore, $\mc(\tau.\frst(xi))=\mc(\tau.\frst(x))$, and we are done.
        \item Otherwise, let $0^k$ be the down-most leaf of $\tau'$ (the index node), and notice there is an address $u'$ such that $\AncEmb(\tau,0^k,\epsilon)(u')=u$. Indeed, if none of the $0^i1$ are mapped above $u$, then $u$ remains in the residue $y$.
        If $\mc(\tau.\frst(0^k))=\mc(\tau.\frst(x))$ then we are done. Otherwise, the subtree rooted at $u'$ in $0^k$ is factored to some child $0^i1$, and we proceed.
        
        Consider some leaf of $\tau'.\frst(0^k)$ at address $v'$ below $u'$ (i.e., $u'\le v'$) of minimal depth. Thus, $\mc(\tau'.\frst(0^k),u')\ge |v'|-|u'|+1$. 
        Denote $v=\AncEmb(\tau',0^k,\epsilon)(v')$.  Intuitively, we now consider the leaf $0^i1$ where the subtree rooted at $v$ is factored. Since a centipede has no inherited references, then the \emph{entire} complete binary subtree rooted at $v'$ is factored at $0^i1$. Therefore $\mc(\tau'.\frst(0^i1),\epsilon)\ge \mc(\tau.\frst(x))-(|v'|-|u'|)$, from which we conclude the desired inequality
        \[
        \mc(\tau'.\frst(0^k),\epsilon)+\mc(\tau'.\frst(0^i1),\epsilon)\ge \mc(\tau.\frst(x))+1
        \]
        We proceed to give the formal argument.
        Note that  $v$ cannot be a leaf of $\tau'.\frst(\epsilon)$, since we would then have $\mc(\tau.\frst(\epsilon),u)= |v|-|u|+1=|v'|-|u'|+1=\mc(\tau'.\frst(0^k),v')\leq \mc(\tau.\frst(x0))$, which we already assume is not the case (otherwise we are done as above).
            
        Thus, we have that $\tau'.0^k@v\to0^i1$ for some $i<k$, and assume that $xi$ is the corresponding node in $\tau$.
        By the properties of $\AncEmb$ (\cref{lem:ancestor embedding respects holes and references} and \cref{lem:ancestor embedding not affected by ancestor}) we have that $\AncEmb(\tau',0^i1,\epsilon)(0)=v$. 
        More precisely, by \cref{lem:ancestor embedding respects holes and references} and \cref{def:ancestor embedding formal} we have $\AncEmb(\tau',0^k,0^i)(v')=\spos(\tau'.\ctx(0^i1))=\AncEmb(\tau',0^i1,0^i)(0)$ and thus by \cref{def:ancestor embedding formal} we have $\AncEmb(\tau',0^i1,\epsilon)(0)=\AncEmb(\tau'.0^i,\epsilon)\circ \AncEmb(\tau',0^i1,0^i)(0)=\AncEmb(\tau'.0^i,\epsilon)\circ \AncEmb(\tau',0^k,0^i)(v')=\AncEmb(\tau',0^k,\epsilon)(v')=v$.
        
        Now let $n,w\in \dom(\tau'.\frst(0^i1))$ such that $n\in \bbN$, and $\mc(\tau'.\frst(0^i1),\epsilon)=|w|-|n|+1=|w|$. Notice that $w$ is a leaf of $\tau'.\frst(0^i1)$, and that $\AncEmb(\tau',0^i1,\epsilon)(w)=v\spa w$.
            Because $\tau'$ has no inherited references (and $w$ is a leaf of $\tau'.\frst(0^i1)$) we have that $v\spa w$ is leaf of $\tau'.\frst(\epsilon)$, and notice that $|v\spa w| = |v|+|w|-1$.
            Overall we conclude that $\mc(\tau'.\frst(\epsilon),u)\leq |v\spa w|-|u|+1=|v|+|w|-|u| < (|v'|-|u'|+1)+|w|= \mc(\tau'.\frst(0^k),u')+\mc(\tau'.\frst(0^i1),\epsilon)$, and we conclude that $\mc(\tau.\frst(x))+1\leq \mc(\tau.\frst(x0))+\mc(\tau.\frst(xi),\epsilon)$.

    \end{itemize}
\subparagraph*{If $\tau.\type(x)=\I$:} let $\tau'$ a corresponding binary decomposition (as per \cref{def:general decomposition}).  
    We can further assume that $\tau'$ has no inherited references, by requiring that it is minimal (\cref{lem:min dec no references contexts,lem:min full dec}).

    We now focus on the left-most ``centipede'' structure induced by the nodes $(\{0\}^*\cup\{0\}^*\cdot\{1\})\cap \tau'.\nds$ (i.e, the left-most branch and it immediate right children).
    Since there are no inherited references, this is a centipede, and we can apply the same reasoning as the case of a $\C$-node. 

    This allows us to conclude that either the down-most leaf $y$ in $\tau'$ satisfies $\mc(\tau'.\frst(y))=\mc(\tau'.\frst(\epsilon))$ (in which case we are done), or there is some $i$ such that $\mc(\tau'.\frst(y))+\mc(\tau'.\frst(0^i1),\epsilon)\geq \mc(\tau.\frst(x))$.
    If $\mc(\tau'.\frst(y))\ge \frac12 \mc(\tau.\frst(x))$, then we are done. We therefore assume that this is not the case, so $\mc(\tau'.\frst(0^i1),\epsilon) > \frac 1 2 \mc(\tau.\frst(x))+1$. 
    Notice that unlike $y$ (which is a leaf, and therefore corresponds to a child of $\tau[x]$), the node $0^i1$ in $\tau'$ does not directly correspond to a child of $\tau[x]$. 
    
    We focus on the case that $\mc(\tau'.\frst(y))< \frac 12 \mc(\tau.\frst(x))$ and show that $\tau'$ has some leaf $z$ (in fact, the down-most leaf below $0^i1$) such that $\mc(\tau'.\frst(z))\ge \frac12 \mc(\tau.\frst(x))$. 
    Intuitively, this follows from \cref{lem:ab idem preserving} -- the idempotent value of the $\I$-node has to ``pick a side'' to factor by, which is maintained down the decomposition. Thus, when starting from a complete binary tree, at least half the tree survives all the way to the leaf (namely the side that is not picked). We make this intuition formal.
    
    Recall that we assume that $f\in \varphi^{-1}[\alT]$. By the definition of $\tup{V,H}$, it is easy to verify that all its subforest are therefore mapped to $\{\alT,\blT,\alT+\blT,\sell\}$. Moreover, forests mapped to $\sell$ are not decomposable at all, as the only forest mapped to $\sell$ is the letter $\ell\in A$).
    We can therefore apply \cref{lem:ab idem preserving}, so there exists an address $u\in \bbN\cup \bbN^2$ such that for every internal node $w$ in $\tau'$ we have $\tau'.\frst(w0)[u]=\tau'.\frst(w)[u]$. Applying this inductively from $0^i1$ down to the down-most leaf $0^i10^k$ yields $\tau'.\frst(0^i10^k)[u]=\tau'.\frst(0^i1)[u]$.
    
    Nest, since $|u|\leq 2$ (i.e., it is either at the root or at a child of the root) we have $\mc(\tau'.\frst(0^i10^k))\geq \mc(\tau'.\frst(0^i1,u))\geq \mc(\tau'.\frst(0^i1),\epsilon)-1$. That is, we lose at most depth $1$ between $0^i1$ and the leaf $0^i10^k$.
    
    Recall that we are under the assumption that $\mc(\tau'.\frst(y))< \frac 12 \mc(\tau.\frst(x))$  but we have 
    \[
    \mc(\tau'.\frst(y))+ \mc(\tau'.\frst(0^i1))\ge \mc(\tau.\frst(x))
    \]
    Therefore, we have $\mc(\tau'.\frst(0^i1))> \frac 12 \mc(\tau.\frst(x))$, and since these are integers we have $\mc(\tau'.\frst(0^i1))\ge  \frac 12 \mc(\tau.\frst(x))+1$.
    Combining this with the observation above, we have
    \[\mc(\tau'.\frst(0^i10^k))\geq \mc(\tau'.\frst(0^i1,u))-1\geq \frac12 \mc(\tau.\frst(x))\]
    as required.
\end{proof}

Equipped with\cref{thm:counter example}, we now have that the depth of a minimal general decomposition of an unmarked forest $f$ is at least $\log(\mc(f))$. Therefore, in order to complete our counterexample, we only need to show that there is a family of unmarked trees whose maximal complete binary subtree grows unboundedly. 
The latter is easy to construct: for every $n$, take a complete binary tree of depth $n$ where each left-child is labeled $a$ and each right-child labeled $b$. Then, connect to every leaf a child labeled $\ell$.
\begin{corollary}
    \label{cor: counter example}
    There is a morphism $\varphi$ that is not $\mr$-aligned such that for every $n\in \bbN$ there is a decomposable forest $f_n$ whose minimal decomposition depth is at least $n$.
\end{corollary}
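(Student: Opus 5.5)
The plan is to combine \cref{thm:counter example} with an explicit family of unmarked trees. Fix the morphism $\varphi$ constructed in this section. Its failure of $\mr$-alignment is already witnessed above by the forest $a(\ell)+b(\ell)$. There the contexts $C_1=a(\ell)+b(\square)$ and $C_2=a(\square)+b(\ell)$ have $\mj$-equivalent images, but $\conta+\blT$ is absorbing from the left over $E'(V)$ by \cref{lem:ab idem calculation}. So the remaining work is the lower bound on depth.

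For each $n$, let $f_n$ be the complete binary tree of depth $n$ in which every left child is labeled $a$, every right child is labeled $b$, and the root is labeled $a$. Below every leaf we attach a single child labeled $\ell$. First, I will check by induction on $n$, using the definitions of $\conta,\contb$ and the addition table on $H$, that $\varphi(f_n)=\alT$. The key step is that $\alT+\blT=\allell$ and that $\conta(\allell)=\alT$. Hence $f_n\in\varphi^{-1}[\alT]$ and $f_n$ is unmarked. Second, $f_n$ is decomposable to $\bs$ by \cref{lem:every forest is decomposable to standard basis}, so $\dec(\wHole{V},\bs,f_n,\varphi)\neq\emptyset$. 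Third, directly from \cref{def: max comp bin tree counterexample}, $\mc(f_n)\ge n$, since the minimal root-to-leaf length from the root is at least $n+1$.

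Now take any $\tau\in\dec(\wHole{V},\bs,f_n,\varphi)$. Starting at the root, I repeatedly apply \cref{thm:counter example} to move to a child $y$ with $\mc(\tau.\frst(y))\ge\frac12\mc(\tau.\frst(x))$, until a leaf is reached. If the path has length $d$, then the leaf forest satisfies $\mc\ge n/2^{d}$. Every element of $\bs$ has $\mc\le 2$, because $\sigma\square_h$ has depth $2$ and the other basis elements have depth at most $1$. Hence $n/2^d\le 2$, so $\depth(\tau)\ge d\ge\log_2 n-1$. Replacing $n$ by $2^{n+1}$ gives minimal decomposition depth at least $n$, as claimed.

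The main obstacle is that \cref{thm:counter example} is stated for $f\in\varphi^{-1}[\alT]$, but along the path we apply it at internal nodes whose forests are subforests. These may contain default holes and may map to $\blT$, $\allell$, and so on. I would first argue that the proof of \cref{thm:counter example} uses only that the forest at $x$ unravels to a subforest of an unmarked tree. Consequently its image lies in $\{\alT,\blT,\allell,\sell\}$, which is the hypothesis of \cref{lem:mc bin} and \cref{lem:ab idem preserving}. I would also need the convention that $\square_h$ leaves count as leaves for $\mc$, so that $\mc$ is measured on $\tau.\frst(\cdot)$ itself. If that generalization turns out to be delicate, the fallback is to strengthen the invariant in the induction, namely that every node forest unravels to a subforest of $f_n$, and to verify it node type by node type.
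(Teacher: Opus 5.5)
Your proposal is correct and follows the paper's argument: apply \cref{thm:counter example} repeatedly along a root-to-leaf path, to the family of complete $a/b$-labeled binary trees capped by $\ell$-leaves. Your explicit choices fill in details the paper leaves implicit: the root is labeled $a$ so that $\varphi(f_n)=\alT$, the basis is $\bs$ (whose elements have $\mc\le 2$), and you carry out the $\log_2$ calculation. The obstacle you flag is not actually one, because \cref{thm:counter example} is already stated for every inner node $x$ of any general decomposition whose root forest lies in $\varphi^{-1}[\alT]$, so you can apply it along the whole path without generalizing it.
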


\end{document}